\documentclass[10pt,aps,pra,superscriptaddress,nofootinbib,longbibliography,table,xcdraw]{revtex4-2}
\usepackage[utf8]{inputenc}
\usepackage[english]{babel}
\usepackage[T1]{fontenc}
\usepackage{amsmath}
\usepackage{amsfonts}
\usepackage{dsfont}
\usepackage{amssymb}
\usepackage{tikz}
\usepackage{courier}
\usepackage{listings}
\usepackage{natbib}
\usepackage{enumerate}
\usepackage{physics}
\usepackage{mdframed}
\usepackage{tcolorbox}
\tcbuselibrary{breakable}
\usepackage{graphicx}
\newcommand{\caphead}[1]{\textbf{#1}}
\usepackage[linktocpage=true, colorlinks=true, linkcolor=blue, urlcolor=blue, citecolor=blue, hyperfootnotes=true]{hyperref}
\usepackage{cleveref}
\usepackage{csquotes}
\usepackage[normalem]{ulem}
\usepackage{multirow}
\usepackage{tikz}
\def\checkmark{\tikz\fill[scale=0.4](0,.35) -- (.25,0) -- (1,.7) -- (.25,.15) -- cycle;} 
\usetikzlibrary{shapes.geometric}

\usepackage{bm}
\input{preamble}
\usetikzlibrary{shapes.multipart}

\tikzstyle{bwSpider}=[
       rectangle split,
       rectangle split parts=2,
       rectangle split part fill={black,white},
 minimum size=3.6 mm, inner sep=-2mm, draw=black,scale=0.5,rounded corners=0.8 mm
       ]
 \tikzstyle{wbSpider}=[
       rectangle split,
       rectangle split parts=2,
       rectangle split part fill={white,black},
 minimum size=3.6 mm, inner sep=-2mm, draw=black,scale=0.5,rounded corners=0.8 mm
       ]
\tikzstyle{qWire}=[line width = 1pt, color=quantumviolet]
\tikzstyle{hWire}=[line width = 1pt, color=higherorderred]
\tikzstyle{sWire}=[line width = 1pt, color=higherordergreen]
\tikzstyle{cWire}=[color=quantumgray,thin]
\tikzstyle{rWire}=[color=black,thin]

\tikzstyle{env}=[copoint,regular polygon rotate=0,minimum width=0.2cm, fill=black]

\tikzstyle{probs}=[shape=semicircle,fill=white,draw=black,shape border rotate=180,minimum width=1.2cm]

\tikzstyle{every picture}=[baseline=-0.25em,scale=.5]
\tikzstyle{dotpic}=[] 
\tikzstyle{diredges}=[every to/.style={diredge}]
\tikzstyle{math matrix}=[matrix of math nodes,left delimiter=(,right delimiter=),inner sep=2pt,column sep=1em,row sep=0.5em,nodes={inner sep=0pt},text height=1.5ex, text depth=0.25ex]

\tikzstyle{inline text}=[text height=1.5ex, text depth=0.25ex,yshift=0.5mm]
\tikzstyle{label}=[font=\scriptsize,text height=1.5ex, text depth=0.25ex,yshift=0.0mm]
\tikzstyle{left label}=[label,anchor=east,xshift=1.5mm]
\tikzstyle{right label}=[label,anchor=west,xshift=-1mm]

\tikzstyle{braceedge}=[decorate,decoration={brace,amplitude=2mm,raise=-1mm}]
\tikzstyle{small braceedge}=[decorate,decoration={brace,amplitude=1mm,raise=-1mm}]

\tikzstyle{doubled}=[line width=1.6pt] 
\tikzstyle{boldedge}=[doubled,shorten <=-0.17mm,shorten >=-0.17mm]
\tikzstyle{boldedgegray}=[doubled,gray,shorten <=-0.17mm,shorten >=-0.17mm]
\tikzstyle{singleedgegray}=[gray]

\tikzstyle{semidoubled}=[line width=1.4pt] 
\tikzstyle{semiboldedgegray}=[semidoubled,gray,shorten <=-0.17mm,shorten >=-0.17mm]

\tikzstyle{boxedge}=[semiboldedgegray]

\tikzstyle{boldedgedashed}=[very thick,dashed,shorten <=-0.17mm,shorten >=-0.17mm]
\tikzstyle{vboldedgedashed}=[doubled,dashed,shorten <=-0.17mm,shorten >=-0.17mm]
\tikzstyle{left hook arrow}=[left hook-latex]
\tikzstyle{right hook arrow}=[right hook-latex]
\tikzstyle{sembracket}=[line width=0.5pt,shorten <=-0.07mm,shorten >=-0.07mm]

\tikzstyle{causal edge}=[->,thick,gray]
\tikzstyle{causal nondir}=[thick,gray]
\tikzstyle{timeline}=[thick,gray, dashed]

\tikzstyle{cedge}=[<->,thick,gray!70!white]

\tikzstyle{empty diagram}=[draw=gray!40!white,dashed,shape=rectangle,minimum width=1cm,minimum height=1cm]
\tikzstyle{empty circle}=[draw=black,dashed,shape=circle,minimum width=.2cm,minimum height=.2cm, inner sep=1pt]
\tikzstyle{empty diagram small}=[draw=gray!50!white,dashed,shape=rectangle,minimum width=0.4cm,minimum height=0.3cm]

\tikzstyle{dot}=[inner sep=0mm,minimum width=2mm,minimum height=2mm,draw,shape=circle]
\tikzstyle{proj}=[trapezium, trapezium angle=67.5, draw, inner sep=1pt, outer sep=0pt, minimum height=6pt, minimum width=2pt,rotate=0,trapezium stretches=true]
\tikzstyle{inc}=[trapezium, trapezium angle=-67.5, draw, inner sep=1pt, outer sep=0pt, minimum height=6pt, minimum width=2pt,rotate=0,trapezium stretches=true]
\tikzstyle{coproj}=[trapezium, trapezium angle=-67.5, draw, inner sep=1pt, outer sep=0pt, minimum height=6pt, minimum width=2pt,rotate=0,trapezium stretches=true]
\tikzstyle{coinc}=[trapezium, trapezium angle=67.5, draw, inner sep=1pt, outer sep=0pt, minimum height=6pt, minimum width=2pt,rotate=0,trapezium stretches=true]
\tikzstyle{leak}=[white dot, shape=regular polygon, minimum size=3.3 mm, regular polygon sides=3, outer sep=-0.2mm, regular polygon rotate=270]
\tikzstyle{wide proj}=[draw,fill=white,chamfered rectangle,chamfered rectangle angle=30, minimum width=15mm,minimum height=1mm,scale=0.5, outer sep=-0.2mm]
\tikzstyle{very wide proj}=[draw,fill=white,chamfered rectangle,chamfered rectangle angle=30, minimum width=25mm,minimum height=1mm,scale=0.5, outer sep=-0.2mm]
\tikzstyle{very very wide proj}=[draw,fill=white,chamfered rectangle,chamfered rectangle angle=30, minimum width=35mm,minimum height=1mm,scale=0.5, outer sep=-0.2mm]
\tikzstyle{preleak}=[proj]

\tikzstyle{split proj out}=[regular polygon,regular polygon sides=3,draw,scale=0.75,inner sep=-0.5pt,minimum width=3.3mm,fill=white,regular polygon rotate=180]
\tikzstyle{split proj in}=[regular polygon,regular polygon sides=3,draw,scale=0.75,inner sep=-0.5pt,minimum width=3.3mm,fill=white]
\tikzstyle{Vleak}=[white dot, shape=regular polygon, minimum size=3.3 mm, regular polygon sides=3, outer sep=-0.2mm, regular polygon rotate=90]
\tikzstyle{dleak}=[white dot, line width=1.6pt, shape=regular polygon, minimum size=3.3 mm, regular polygon sides=3, outer sep=-0.2mm, regular polygon rotate=270]

\tikzstyle{Wsquare}=[white dot, shape=regular polygon, minimum size=3.3 mm, regular polygon sides=3, outer sep=-0.2mm]
\tikzstyle{Wsquareadj}=[white dot, shape=regular polygon, minimum size=3.3 mm, regular polygon sides=3, outer sep=-0.2mm, regular polygon rotate=180]
\tikzstyle{ddot}=[inner sep=0mm, doubled, minimum width=2.5mm,minimum height=2.5mm,draw,shape=circle]

\tikzstyle{black dot}=[dot,fill=black]
\tikzstyle{white dot}=[dot,fill=white,,text depth=-0.2mm]
\tikzstyle{white Wsquare}=[Wsquare,fill=gray,,text depth=-0.2mm]
\tikzstyle{white Wsquareadj}=[Wsquareadj,fill=white,,text depth=-0.2mm]
\tikzstyle{green dot}=[white dot] 
\tikzstyle{gray dot}=[dot,fill=gray!40!white,,text depth=-0.2mm]
\tikzstyle{red dot}=[gray dot] 

\tikzstyle{black ddot}=[ddot,fill=black]
\tikzstyle{white ddot}=[ddot,fill=white]
\tikzstyle{gray ddot}=[ddot,fill=gray!40!white]

\tikzstyle{gray edge}=[gray!60!white]

\tikzstyle{small dot}=[inner sep=0.5mm,minimum width=0pt,minimum height=0pt,draw,shape=circle]

\tikzstyle{small black dot}=[small dot,fill=black]
\tikzstyle{small white dot}=[small dot,fill=white]
\tikzstyle{small gray dot}=[small dot,fill=gray!40!white]

\tikzstyle{causal dot}=[inner sep=0.4mm,minimum width=0pt,minimum height=0pt,draw=white,shape=circle,fill=gray!40!white]

\tikzstyle{phase dimensions}=[minimum size=5mm,font=\footnotesize,rectangle,rounded corners=2.5mm,inner sep=0.2mm,outer sep=-2mm]
\tikzstyle{dphase dimensions}=[minimum size=5mm,font=\footnotesize,rectangle,rounded corners=2.5mm,inner sep=0.2mm,outer sep=-2mm]

\tikzstyle{white phase dot}=[dot,fill=white,phase dimensions]
\tikzstyle{white phase ddot}=[ddot,fill=white,dphase dimensions]

\tikzstyle{white rect ddot}=[draw=black,fill=white,doubled,minimum size=5mm,font=\footnotesize,rectangle,rounded corners=2.5mm,inner sep=0.2mm]
\tikzstyle{gray rect ddot}=[draw=black,fill=gray!40!white,doubled,minimum size=6mm,font=\footnotesize,rectangle,rounded corners=3mm]

\tikzstyle{gray phase dot}=[dot,fill=gray!40!white,phase dimensions]
\tikzstyle{gray phase ddot}=[ddot,fill=gray!40!white,dphase dimensions]
\tikzstyle{grey phase dot}=[gray phase dot]
\tikzstyle{grey phase ddot}=[gray phase ddot]

\tikzstyle{small phase dimensions}=[minimum size=4mm,font=\tiny,rectangle,rounded corners=2mm,inner sep=0.2mm,outer sep=-2mm]
\tikzstyle{small dphase dimensions}=[minimum size=4mm,font=\tiny,rectangle,rounded corners=2mm,inner sep=0.2mm,outer sep=-2mm]

\tikzstyle{small gray phase dot}=[dot,fill=gray!40!white,small phase dimensions]
\tikzstyle{small gray phase ddot}=[ddot,fill=gray!40!white,small dphase dimensions]

\tikzstyle{small map}=[draw,shape=rectangle,minimum height=4mm,minimum width=4mm,fill=white]
\tikzstyle{math map}=[draw,shape=rectangle,minimum height=4mm,minimum width=4mm,fill=black, font=\color{white}]

\tikzstyle{cnot}=[fill=white,shape=circle,inner sep=-1.4pt]

\tikzstyle{asym hadamard}=[fill=white,draw,shape=NEbox,inner sep=0.6mm,font=\footnotesize,minimum height=4mm]
\tikzstyle{asym hadamard conj}=[fill=white,draw,shape=NWbox,inner sep=0.6mm,font=\footnotesize,minimum height=4mm]
\tikzstyle{asym hadamard dag}=[fill=white,draw,shape=SEbox,inner sep=0.6mm,font=\footnotesize,minimum height=4mm]

\tikzstyle{hadamard}=[fill=white,draw,inner sep=0.6mm,font=\footnotesize,minimum height=4mm,minimum width=4mm]
\tikzstyle{small hadamard}=[fill=white,draw,inner sep=0.6mm,minimum height=1.5mm,minimum width=1.5mm]
\tikzstyle{small hadamard rotate}=[small hadamard,rotate=45]
\tikzstyle{dhadamard}=[hadamard,doubled]
\tikzstyle{small dhadamard}=[small hadamard,doubled]
\tikzstyle{small dhadamard rotate}=[small hadamard rotate,doubled]
\tikzstyle{antipode}=[white dot,inner sep=0.3mm,font=\footnotesize]

\tikzstyle{scalar}=[diamond,draw,inner sep=0.5pt,font=\small]
\tikzstyle{dscalar}=[diamond,doubled, draw,inner sep=0.5pt,font=\small]

\tikzstyle{small box}=[rectangle,inline text,fill=white,draw,minimum height=0.5mm,yshift=-0.5mm,minimum width=5mm,font=\small]
\tikzstyle{small gray box}=[small box,fill=gray!30]
\tikzstyle{medium box}=[rectangle,inline text,fill=white,draw,minimum height=5mm,yshift=-0.5mm,minimum width=10mm,font=\small]
\tikzstyle{square box}=[small box] 
\tikzstyle{medium gray box}=[small box,fill=gray!30]
\tikzstyle{semilarge box}=[rectangle,inline text,fill=white,draw,minimum height=5mm,yshift=-0.5mm,minimum width=12.5mm,font=\small]
\tikzstyle{large box}=[rectangle,inline text,fill=white,draw,minimum height=5mm,yshift=-0.5mm,minimum width=15mm,font=\small]
\tikzstyle{large gray box}=[small box,fill=gray!30]

\tikzstyle{Bayes box}=[rectangle,fill=black,draw, minimum height=3mm, minimum width=3mm]

\tikzstyle{gray square point}=[small box,fill=gray!50]

\tikzstyle{dphase box white}=[dhadamard]
\tikzstyle{dphase box gray}=[dhadamard,fill=gray!50!white]
\tikzstyle{phase box white}=[hadamard]
\tikzstyle{phase box gray}=[hadamard,fill=gray!50!white]

\tikzstyle{point}=[regular polygon,regular polygon sides=3,draw,scale=0.75,inner sep=-0.5pt,minimum width=9mm,fill=white,regular polygon rotate=180]
\tikzstyle{point nosep}=[regular polygon,regular polygon sides=3,draw,scale=0.75,inner sep=-2pt,minimum width=9mm,fill=white,regular polygon rotate=180]
\tikzstyle{copoint}=[regular polygon,regular polygon sides=3,draw,scale=0.75,inner sep=-0.5pt,minimum width=9mm,fill=white]
\tikzstyle{dpoint}=[point,doubled]
\tikzstyle{dcopoint}=[copoint,doubled]

\tikzstyle{pointgrow}=[shape=cornerpoint,kpoint common,scale=0.75,inner sep=3pt]
\tikzstyle{pointgrow dag}=[shape=cornercopoint,kpoint common,scale=0.75,inner sep=3pt]

\tikzstyle{wide copoint}=[fill=white,draw,shape=isosceles triangle,shape border rotate=90,isosceles triangle stretches=true,inner sep=0pt,minimum width=1.5cm,minimum height=6.12mm]
\tikzstyle{wide point}=[fill=white,draw,shape=isosceles triangle,shape border rotate=-90,isosceles triangle stretches=true,inner sep=0pt,minimum width=1.5cm,minimum height=6.12mm,yshift=-0.0mm]
\tikzstyle{wide point plus}=[fill=white,draw,shape=isosceles triangle,shape border rotate=-90,isosceles triangle stretches=true,inner sep=0pt,minimum width=1.74cm,minimum height=7mm,yshift=-0.0mm]

\tikzstyle{wide dpoint}=[fill=white,doubled,draw,shape=isosceles triangle,shape border rotate=-90,isosceles triangle stretches=true,inner sep=0pt,minimum width=1.5cm,minimum height=6.12mm,yshift=-0.0mm]

\tikzstyle{tinypoint}=[regular polygon,regular polygon sides=3,draw,scale=0.55,inner sep=-0.15pt,minimum width=6mm,fill=white,regular polygon rotate=180]

\tikzstyle{white point}=[point]
\tikzstyle{white dpoint}=[dpoint]
\tikzstyle{green point}=[white point] 
\tikzstyle{white copoint}=[copoint]
\tikzstyle{gray point}=[point,fill=gray!40!white]
\tikzstyle{gray dpoint}=[gray point,doubled]
\tikzstyle{red point}=[gray point] 
\tikzstyle{gray copoint}=[copoint,fill=gray!40!white]
\tikzstyle{gray dcopoint}=[gray copoint,doubled]

\tikzstyle{white point guide}=[regular polygon,regular polygon sides=3,font=\scriptsize,draw,scale=0.65,inner sep=-0.5pt,minimum width=9mm,fill=white,regular polygon rotate=180]

\tikzstyle{black point}=[point,fill=black,font=\color{white}]
\tikzstyle{black copoint}=[copoint,fill=black,font=\color{white}]

\tikzstyle{tiny gray point}=[tinypoint,fill=gray!40!white]

\tikzstyle{diredge}=[->]
\tikzstyle{ddiredge}=[<->]
\tikzstyle{rdiredge}=[<-]
\tikzstyle{thickdiredge}=[->, very thick]
\tikzstyle{pointer edge}=[->,very thick,gray]
\tikzstyle{pointer edge part}=[very thick,gray]
\tikzstyle{dashed edge}=[dashed]
\tikzstyle{thick dashed edge}=[very thick,dashed]
\tikzstyle{thick gray dashed edge}=[thick dashed edge,gray!40]
\tikzstyle{thick map edge}=[very thick,|->]

\makeatletter
\newcommand{\boxshape}[3]{%
\pgfdeclareshape{#1}{
\inheritsavedanchors[from=rectangle] 
\inheritanchorborder[from=rectangle]
\inheritanchor[from=rectangle]{center}
\inheritanchor[from=rectangle]{north}
\inheritanchor[from=rectangle]{south}
\inheritanchor[from=rectangle]{west}
\inheritanchor[from=rectangle]{east}
\backgroundpath{
\southwest \pgf@xa=\pgf@x \pgf@ya=\pgf@y
\northeast \pgf@xb=\pgf@x \pgf@yb=\pgf@y

\@tempdima=#2
\@tempdimb=#3

\pgfpathmoveto{\pgfpoint{\pgf@xa - 5pt + \@tempdima}{\pgf@ya}}
\pgfpathlineto{\pgfpoint{\pgf@xa - 5pt - \@tempdima}{\pgf@yb}}
\pgfpathlineto{\pgfpoint{\pgf@xb + 5pt + \@tempdimb}{\pgf@yb}}
\pgfpathlineto{\pgfpoint{\pgf@xb + 5pt - \@tempdimb}{\pgf@ya}}
\pgfpathlineto{\pgfpoint{\pgf@xa - 5pt + \@tempdima}{\pgf@ya}}
\pgfpathclose
}
}}

\boxshape{NEbox}{0pt}{3pt}
\boxshape{SEbox}{0pt}{-3pt}
\boxshape{NWbox}{5pt}{0pt}
\boxshape{SWbox}{-5pt}{0pt}
\boxshape{EBox}{-3pt}{3pt}
\boxshape{WBox}{3pt}{-3pt}
\makeatother

\tikzstyle{cloud}=[shape=cloud,draw,minimum width=1.5cm,minimum height=1.5cm]

\tikzstyle{map}=[draw,shape=NEbox,inner sep=1pt,minimum height=4mm,fill=white]
\tikzstyle{dashedmap}=[draw,dashed,shape=NEbox,inner sep=2pt,minimum height=6mm,fill=white]
\tikzstyle{mapdag}=[draw,shape=SEbox,inner sep=1pt,minimum height=4mm,fill=white]
\tikzstyle{mapadj}=[draw,shape=SEbox,inner sep=2pt,minimum height=6mm,fill=white]
\tikzstyle{maptrans}=[draw,shape=SWbox,inner sep=2pt,minimum height=6mm,fill=white]
\tikzstyle{mapconj}=[draw,shape=NWbox,inner sep=2pt,minimum height=6mm,fill=white]

\tikzstyle{medium map}=[draw,shape=NEbox,inner sep=2pt,minimum height=6mm,fill=white,minimum width=7mm]
\tikzstyle{medium map dag}=[draw,shape=SEbox,inner sep=2pt,minimum height=6mm,fill=white,minimum width=7mm]
\tikzstyle{medium map adj}=[draw,shape=SEbox,inner sep=2pt,minimum height=6mm,fill=white,minimum width=7mm]
\tikzstyle{medium map trans}=[draw,shape=SWbox,inner sep=2pt,minimum height=6mm,fill=white,minimum width=7mm]
\tikzstyle{medium map conj}=[draw,shape=NWbox,inner sep=2pt,minimum height=6mm,fill=white,minimum width=7mm]
\tikzstyle{semilarge map}=[draw,shape=NEbox,inner sep=2pt,minimum height=6mm,fill=white,minimum width=9.5mm]
\tikzstyle{semilarge map trans}=[draw,shape=SWbox,inner sep=2pt,minimum height=6mm,fill=white,minimum width=9.5mm]
\tikzstyle{semilarge map adj}=[draw,shape=SEbox,inner sep=2pt,minimum height=6mm,fill=white,minimum width=9.5mm]
\tikzstyle{semilarge map dag}=[draw,shape=SEbox,inner sep=2pt,minimum height=6mm,fill=white,minimum width=9.5mm]
\tikzstyle{semilarge map conj}=[draw,shape=NWbox,inner sep=2pt,minimum height=6mm,fill=white,minimum width=9.5mm]
\tikzstyle{large map}=[draw,shape=NEbox,inner sep=2pt,minimum height=6mm,fill=white,minimum width=12mm]
\tikzstyle{large map conj}=[draw,shape=NWbox,inner sep=2pt,minimum height=6mm,fill=white,minimum width=12mm]
\tikzstyle{very large map}=[draw,shape=NEbox,inner sep=2pt,minimum height=6mm,fill=white,minimum width=17mm]

\tikzstyle{medium dmap}=[draw,doubled,shape=NEbox,inner sep=2pt,minimum height=6mm,fill=white,minimum width=7mm]
\tikzstyle{medium dmap dag}=[draw,doubled,shape=SEbox,inner sep=2pt,minimum height=6mm,fill=white,minimum width=7mm]
\tikzstyle{medium dmap adj}=[draw,doubled,shape=SEbox,inner sep=2pt,minimum height=6mm,fill=white,minimum width=7mm]
\tikzstyle{medium dmap trans}=[draw,doubled,shape=SWbox,inner sep=2pt,minimum height=6mm,fill=white,minimum width=7mm]
\tikzstyle{medium dmap conj}=[draw,doubled,shape=NWbox,inner sep=2pt,minimum height=6mm,fill=white,minimum width=7mm]
\tikzstyle{semilarge dmap}=[draw,doubled,shape=NEbox,inner sep=2pt,minimum height=6mm,fill=white,minimum width=9.5mm]
\tikzstyle{semilarge dmap trans}=[draw,doubled,shape=SWbox,inner sep=2pt,minimum height=6mm,fill=white,minimum width=9.5mm]
\tikzstyle{semilarge dmap adj}=[draw,doubled,shape=SEbox,inner sep=2pt,minimum height=6mm,fill=white,minimum width=9.5mm]
\tikzstyle{semilarge dmap dag}=[draw,doubled,shape=SEbox,inner sep=2pt,minimum height=6mm,fill=white,minimum width=9.5mm]
\tikzstyle{semilarge dmap conj}=[draw,doubled,shape=NWbox,inner sep=2pt,minimum height=6mm,fill=white,minimum width=9.5mm]
\tikzstyle{large dmap}=[draw,doubled,shape=NEbox,inner sep=2pt,minimum height=6mm,fill=white,minimum width=12mm]
\tikzstyle{large dmap conj}=[draw,doubled,shape=NWbox,inner sep=2pt,minimum height=6mm,fill=white,minimum width=12mm]
\tikzstyle{large dmap trans}=[draw,doubled,shape=SWbox,inner sep=2pt,minimum height=6mm,fill=white,minimum width=12mm]
\tikzstyle{large dmap adj}=[draw,doubled,shape=SEbox,inner sep=2pt,minimum height=6mm,fill=white,minimum width=12mm]
\tikzstyle{large dmap dag}=[draw,doubled,shape=SEbox,inner sep=2pt,minimum height=6mm,fill=white,minimum width=12mm]
\tikzstyle{very large dmap}=[draw,doubled,shape=NEbox,inner sep=2pt,minimum height=6mm,fill=white,minimum width=19.5mm]

\tikzstyle{muxbox}=[draw,shape=rectangle,minimum height=3mm,minimum width=3mm,fill=white]
\tikzstyle{dmuxbox}=[muxbox,doubled]

\tikzstyle{box}=[draw,shape=rectangle,inner sep=2pt,minimum height=6mm,minimum width=6mm,fill=white]
\tikzstyle{dbox}=[draw,doubled,shape=rectangle,inner sep=2pt,minimum height=6mm,minimum width=6mm,fill=white]
\tikzstyle{dmap}=[draw,doubled,shape=NEbox,inner sep=2pt,minimum height=6mm,fill=white]
\tikzstyle{dmapdag}=[draw,doubled,shape=SEbox,inner sep=2pt,minimum height=6mm,fill=white]
\tikzstyle{dmapadj}=[draw,doubled,shape=SEbox,inner sep=2pt,minimum height=6mm,fill=white]
\tikzstyle{dmaptrans}=[draw,doubled,shape=SWbox,inner sep=2pt,minimum height=6mm,fill=white]
\tikzstyle{dmapconj}=[draw,doubled,shape=NWbox,inner sep=2pt,minimum height=6mm,fill=white]

\tikzstyle{ddmap}=[draw,doubled,dashed,shape=NEbox,inner sep=2pt,minimum height=6mm,fill=white]
\tikzstyle{ddmapdag}=[draw,doubled,dashed,shape=SEbox,inner sep=2pt,minimum height=6mm,fill=white]
\tikzstyle{ddmapadj}=[draw,doubled,dashed,shape=SEbox,inner sep=2pt,minimum height=6mm,fill=white]
\tikzstyle{ddmaptrans}=[draw,doubled,dashed,shape=SWbox,inner sep=2pt,minimum height=6mm,fill=white]
\tikzstyle{ddmapconj}=[draw,doubled,dashed,shape=NWbox,inner sep=2pt,minimum height=6mm,fill=white]

\boxshape{sNEbox}{0pt}{3pt}
\boxshape{sSEbox}{0pt}{-3pt}
\boxshape{sNWbox}{3pt}{0pt}
\boxshape{sSWbox}{-3pt}{0pt}
\tikzstyle{smap}=[draw,shape=sNEbox,fill=white]
\tikzstyle{smapdag}=[draw,shape=sSEbox,fill=white]
\tikzstyle{smapadj}=[draw,shape=sSEbox,fill=white]
\tikzstyle{smaptrans}=[draw,shape=sSWbox,fill=white]
\tikzstyle{smapconj}=[draw,shape=sNWbox,fill=white]

\tikzstyle{dsmap}=[draw,dashed,shape=sNEbox,fill=white]
\tikzstyle{dsmapdag}=[draw,dashed,shape=sSEbox,fill=white]
\tikzstyle{dsmaptrans}=[draw,dashed,shape=sSWbox,fill=white]
\tikzstyle{dsmapconj}=[draw,dashed,shape=sNWbox,fill=white]

\boxshape{mNEbox}{0pt}{10pt}
\boxshape{mSEbox}{0pt}{-10pt}
\boxshape{mNWbox}{10pt}{0pt}
\boxshape{mSWbox}{-10pt}{0pt}
\tikzstyle{mmap}=[draw,shape=mNEbox]
\tikzstyle{mmapdag}=[draw,shape=mSEbox]
\tikzstyle{mmaptrans}=[draw,shape=mSWbox]
\tikzstyle{mmapconj}=[draw,shape=mNWbox]

\tikzstyle{mmapgray}=[draw,fill=gray!40!white,shape=mNEbox]
\tikzstyle{smapgray}=[draw,fill=gray!40!white,shape=sNEbox]

\makeatletter

\pgfdeclareshape{cornerpoint}{
\inheritsavedanchors[from=rectangle] 
\inheritanchorborder[from=rectangle]
\inheritanchor[from=rectangle]{center}
\inheritanchor[from=rectangle]{north}
\inheritanchor[from=rectangle]{south}
\inheritanchor[from=rectangle]{west}
\inheritanchor[from=rectangle]{east}
\backgroundpath{
\southwest \pgf@xa=\pgf@x \pgf@ya=\pgf@y
\northeast \pgf@xb=\pgf@x \pgf@yb=\pgf@y

\pgfmathsetmacro{\pgf@shorten@left}{\pgfkeysvalueof{/tikz/shorten left}}
\pgfmathsetmacro{\pgf@shorten@right}{\pgfkeysvalueof{/tikz/shorten right}}

\pgfpathmoveto{\pgfpoint{0.5 * (\pgf@xa + \pgf@xb)}{\pgf@ya - 5pt}}
\pgfpathlineto{\pgfpoint{\pgf@xa - 8pt + \pgf@shorten@left}{\pgf@yb - 1.5 * \pgf@shorten@left}}
\pgfpathlineto{\pgfpoint{\pgf@xa - 8pt + \pgf@shorten@left}{\pgf@yb}}
\pgfpathlineto{\pgfpoint{\pgf@xb + 8pt - \pgf@shorten@right}{\pgf@yb}}
\pgfpathlineto{\pgfpoint{\pgf@xb + 8pt - \pgf@shorten@right}{\pgf@yb - 1.5 * \pgf@shorten@right}}
\pgfpathclose
}
}

\pgfdeclareshape{cornercopoint}{
\inheritsavedanchors[from=rectangle] 
\inheritanchorborder[from=rectangle]
\inheritanchor[from=rectangle]{center}
\inheritanchor[from=rectangle]{north}
\inheritanchor[from=rectangle]{south}
\inheritanchor[from=rectangle]{west}
\inheritanchor[from=rectangle]{east}
\backgroundpath{
\southwest \pgf@xa=\pgf@x \pgf@ya=\pgf@y
\northeast \pgf@xb=\pgf@x \pgf@yb=\pgf@y

\pgfmathsetmacro{\pgf@shorten@left}{\pgfkeysvalueof{/tikz/shorten left}}
\pgfmathsetmacro{\pgf@shorten@right}{\pgfkeysvalueof{/tikz/shorten right}}

\pgfpathmoveto{\pgfpoint{0.5 * (\pgf@xa + \pgf@xb)}{\pgf@yb + 5pt}}
\pgfpathlineto{\pgfpoint{\pgf@xa - 8pt + \pgf@shorten@left}{\pgf@ya + 1.5 * \pgf@shorten@left}}
\pgfpathlineto{\pgfpoint{\pgf@xa - 8pt + \pgf@shorten@left}{\pgf@ya}}
\pgfpathlineto{\pgfpoint{\pgf@xb + 8pt - \pgf@shorten@right}{\pgf@ya}}
\pgfpathlineto{\pgfpoint{\pgf@xb + 8pt - \pgf@shorten@right}{\pgf@ya + 1.5 * \pgf@shorten@right}}
\pgfpathclose
}
}

\makeatother

\pgfkeyssetvalue{/tikz/shorten left}{0pt}
\pgfkeyssetvalue{/tikz/shorten right}{0pt}

\tikzstyle{kpoint common}=[draw,fill=white,inner sep=1pt,minimum height=4mm]
\tikzstyle{kpoint sc}=[shape=cornerpoint,kpoint common]
\tikzstyle{kpoint adjoint sc}=[shape=cornercopoint,kpoint common]
\tikzstyle{kpoint}=[shape=cornerpoint,shorten left=5pt,kpoint common]
\tikzstyle{kpoint adjoint}=[shape=cornercopoint,shorten left=5pt,kpoint common]
\tikzstyle{kpoint conjugate}=[shape=cornerpoint,shorten right=5pt,kpoint common]
\tikzstyle{kpoint transpose}=[shape=cornercopoint,shorten right=5pt,kpoint common]
\tikzstyle{kpoint symm}=[shape=cornerpoint,shorten left=5pt,shorten right=5pt,kpoint common]

\tikzstyle{wide kpoint sc}=[shape=cornerpoint,kpoint common, minimum width=1 cm]
\tikzstyle{wide kpointdag sc}=[shape=cornercopoint,kpoint common, minimum width=1 cm]

\tikzstyle{black kpoint}=[shape=cornerpoint,shorten left=5pt,kpoint common,fill=black,font=\color{white}]

\tikzstyle{black kpoint sm}=[shape=cornerpoint,shorten left=5pt,kpoint common,fill=black,font=\color{white},scale=0.75]

\tikzstyle{black kpoint adjoint}=[shape=cornercopoint,shorten left=5pt,kpoint common,fill=black,font=\color{white}]
\tikzstyle{black kpointadj}=[shape=cornercopoint,shorten left=5pt,kpoint common,fill=black,font=\color{white}]

\tikzstyle{black kpointadj sm}=[shape=cornercopoint,shorten left=5pt,kpoint common,fill=black,font=\color{white},scale=0.75]

\tikzstyle{black dkpoint}=[shape=cornerpoint,shorten left=5pt,kpoint common,fill=black, doubled,font=\color{white}]
\tikzstyle{black dkpoint adjoint}=[shape=cornercopoint,shorten left=5pt,kpoint common,fill=black, doubled,font=\color{white}]
\tikzstyle{black dkpointadj}=[shape=cornercopoint,shorten left=5pt,kpoint common,fill=black, doubled,font=\color{white}]

\tikzstyle{black dkpoint sm}=[shape=cornerpoint,shorten left=5pt,kpoint common,fill=black, doubled,font=\color{white},scale=0.75]
\tikzstyle{black dkpointadj sm}=[shape=cornercopoint,shorten left=5pt,kpoint common,fill=black, doubled,font=\color{white},scale=0.75]

\tikzstyle{kpointdag}=[kpoint adjoint]
\tikzstyle{kpointadj}=[kpoint adjoint]
\tikzstyle{kpointconj}=[kpoint conjugate]
\tikzstyle{kpointtrans}=[kpoint transpose]

\tikzstyle{big kpoint}=[kpoint, minimum width=1.2 cm, minimum height=8mm, inner sep=4pt, text depth=3mm]

\tikzstyle{wide kpoint}=[kpoint, minimum width=1 cm, inner sep=2pt]
\tikzstyle{wide kpointdag}=[kpointdag, minimum width=1 cm, inner sep=2pt]
\tikzstyle{wide kpointconj}=[kpointconj, minimum width=1 cm, inner sep=2pt]
\tikzstyle{wide kpointtrans}=[kpointtrans, minimum width=1 cm, inner sep=2pt]

\tikzstyle{wider kpoint}=[kpoint, minimum width=1.25 cm, inner sep=2pt]
\tikzstyle{wider kpointdag}=[kpointdag, minimum width=1.25 cm, inner sep=2pt]
\tikzstyle{wider kpointconj}=[kpointconj, minimum width=1.25 cm, inner sep=2pt]
\tikzstyle{wider kpointtrans}=[kpointtrans, minimum width=1.25 cm, inner sep=2pt]

\tikzstyle{gray kpoint}=[kpoint,fill=gray!50!white]
\tikzstyle{gray kpointdag}=[kpointdag,fill=gray!50!white]
\tikzstyle{gray kpointadj}=[kpointadj,fill=gray!50!white]
\tikzstyle{gray kpointconj}=[kpointconj,fill=gray!50!white]
\tikzstyle{gray kpointtrans}=[kpointtrans,fill=gray!50!white]

\tikzstyle{gray dkpoint}=[kpoint,fill=gray!50!white,doubled]
\tikzstyle{gray dkpointdag}=[kpointdag,fill=gray!50!white,doubled]
\tikzstyle{gray dkpointadj}=[kpointadj,fill=gray!50!white,doubled]
\tikzstyle{gray dkpointconj}=[kpointconj,fill=gray!50!white,doubled]
\tikzstyle{gray dkpointtrans}=[kpointtrans,fill=gray!50!white,doubled]

\tikzstyle{white label}=[draw,fill=white,rectangle,inner sep=0.7 mm]
\tikzstyle{gray label}=[draw,fill=gray!50!white,rectangle,inner sep=0.7 mm]
\tikzstyle{black label}=[draw,fill=black,rectangle,inner sep=0.7 mm]

\tikzstyle{dkpoint}=[kpoint,doubled]
\tikzstyle{wide dkpoint}=[wide kpoint,doubled]
\tikzstyle{dkpointdag}=[kpoint adjoint,doubled]
\tikzstyle{wide dkpointdag}=[wide kpointdag,doubled]
\tikzstyle{dkcopoint}=[kpoint adjoint,doubled]
\tikzstyle{dkpointadj}=[kpoint adjoint,doubled]
\tikzstyle{dkpointconj}=[kpoint conjugate,doubled]
\tikzstyle{dkpointtrans}=[kpoint transpose,doubled]

\tikzstyle{kscalar}=[kpoint common, shape=EBox, inner xsep=-1pt, inner ysep=3pt,font=\small]
\tikzstyle{kscalarconj}=[kpoint common, shape=WBox, inner xsep=-1pt, inner ysep=3pt,font=\small]

\tikzstyle{spekpoint}=[kpoint sc,minimum height=5mm,inner sep=3pt]
\tikzstyle{spekcopoint}=[kpoint adjoint sc,minimum height=5mm,inner sep=3pt]

\tikzstyle{dspekpoint}=[spekpoint,doubled]
\tikzstyle{dspekcopoint}=[spekcopoint,doubled]

 \tikzstyle{upground}=[circuit ee IEC,thick,ground,rotate=90,scale=2.5]
 \tikzstyle{downground}=[circuit ee IEC,thick,ground,rotate=-90,scale=2.5]
 \tikzstyle{bigground}=[regular polygon,regular polygon sides=3,draw=gray,scale=0.50,inner sep=-0.5pt,minimum width=10mm,fill=gray]

\tikzstyle{arrs}=[-latex,font=\small,auto]
\tikzstyle{arrow plain}=[arrs]
\tikzstyle{arrow dashed}=[dashed,arrs]
\tikzstyle{arrow bold}=[very thick,arrs]
\tikzstyle{arrow hide}=[draw=white!0,-]
\tikzstyle{arrow reverse}=[latex-]
\tikzstyle{cdnode}=[]

\newcommand{\Csa}[1]{\mathbf{H}_{#1}(\mathbb{C})}
\newcommand{\Csap}[1]{\mathbf{H}_{#1}^{+}(\mathbb{C})}

\newcommand{\m}[1]{\mathcal{#1}}

\DeclareMathOperator{\conv}{conv}
\DeclareMathOperator{\diag}{diag}
\DeclareMathOperator{\linspan}{span}

\usepackage{amsthm}
\usepackage{mathtools}
\theoremstyle{plain}
\newtheorem{theorem}{Theorem}
\newtheorem{lemma}[theorem]{Lemma}
\newtheorem{corollary}[theorem]{Corollary}
\newtheorem{definition}{Definition}

\newtheorem{example}{Example}

\newtheorem{conjecture}{Conjecture}
\newtheorem*{remark*}{Remark}
\newtheorem*{theorem*}{Theorem}
\newtheorem{proposition}{Proposition}

\newcommand{\red}{\color{red}}
\newcommand{\david}{\color{blue}}
\newcommand{\baldi}{\color{magenta}}
\newcommand{\baldiComment}[1]{{\color{magenta}[Baldi: #1]}}
\newcommand{\marco}[1]{{\color{teal}#1}}
\newcommand{\john}{\color{red}}
\newcommand{\jnote}[1]{{\color{red}[JHS: #1]}}
\newcommand{\blk}{\color{black}}
\newcommand{\blu}{\color{blue}}
\newcommand{\bel}{\color{orange}}

\usepackage{soul}
\usepackage{setspace}
\usepackage{framed}
    \definecolor{shadecolor}{gray}{0.8}
\newcommand{\rdbaldi}[1]{{\color{purple}{[#1]}}}

\begin{document}

\title{Tomographically-nonlocal entanglement}

\author{Roberto D. Baldij\~ao}
\email{rbaldijao@perimeterinstitute.ca}
\affiliation{Perimeter Institute for Theoretical Physics, 31 Caroline Street North, Waterloo, Ontario Canada N2L 2Y5}
\affiliation{International Centre for Theory of Quantum Technologies, Uniwersytet Gdański, ul.~Jana Bażyńskiego 1A, 80-309 Gdańsk, Poland}
\author{Marco Erba}
\affiliation{International Centre for Theory of Quantum Technologies, Uniwersytet Gdański, ul.~Jana Bażyńskiego 1A, 80-309 Gdańsk, Poland}
\author{David Schmid}
\thanks{These authors share last authorship}
\affiliation{Perimeter Institute for Theoretical Physics, 31 Caroline Street North, Waterloo, Ontario Canada N2L 2Y5}
\author{John H. Selby}
\thanks{These authors share last authorship}
\affiliation{International Centre for Theory of Quantum Technologies, Uniwersytet Gdański, ul.~Jana Bażyńskiego 1A, 80-309 Gdańsk, Poland}
\affiliation{Theoretical Sciences Visiting Program, Okinawa Institute of Science and Technology Graduate University, Onna, 904-0495, Japan}
\author{Ana Bel\'{e}n Sainz}
\thanks{These authors share last authorship}
\affiliation{International Centre for Theory of Quantum Technologies, Uniwersytet Gdański, ul.~Jana Bażyńskiego 1A, 80-309 Gdańsk, Poland}
\affiliation{Theoretical Sciences Visiting Program, Okinawa Institute of Science and Technology Graduate University, Onna, 904-0495, Japan}
\affiliation{Basic Research Community for Physics e.V., Germany}

\begin{abstract}
Entanglement is a central and subtle feature of quantum theory, whose structure and operational behavior can change dramatically when additional physical constraints, such as symmetries or superselection rules, are imposed. Such constraints can give rise to striking and counter-intuitive phenomena, including local broadcasting of entangled states and failures of entanglement monogamy. These effects naturally arise in tomographically nonlocal theories (like real quantum theory, twirled worlds, or fermionic quantum theory), where composite systems possess holistic degrees of freedom that are inaccessible to local measurements. In this work, we study entanglement in such theories within the framework of generalized probabilistic theories. We show that the failure of tomographic locality leads to two qualitatively distinct forms of entanglement, which we term \emph{tomographically-local} entanglement and \emph{tomographically-nonlocal} entanglement. We analyze the operational consequences of this distinction, proving that tomographically-nonlocal entanglement is useless for Bell nonlocality, steering, and teleportation, but sufficient for dense coding and perfectly secure data hiding. This framework clarifies the origin of several previously puzzling features of entanglement that arise when tomographic locality fails, as can happen even in quantum theory when one considers fermions or fundamental superselection rules.

\end{abstract}

\maketitle

\tableofcontents

\section{Introduction}

The framework of generalized probabilistic theories (GPTs)~\cite{barrett_informationGPTs_2006,mullerGPTnotes,Plavala_2023_GPTsIntro, hardy2011reformulatingreconstructingquantumtheory,hardy2001quantum,Chiribella_QuantumFromPrinciples2016} gives a rigorous approach to studying physical theories in terms of their operational features---how systems are prepared, transformed, and measured, and how probabilities are assigned to outcomes in circuits composed of such processes. GPTs encompass both classical and quantum theories while abstracting away from specific mathematical structures such as Hilbert spaces or phase spaces. This flexible framework allows us to compare different physical theories on common ground, to explore structural alternatives to quantum theory, and to better understand which features of physical theories are genuinely quantum. By doing so, the GPT framework provides insights into the foundations of physics~\cite{Mazurek_2021,Schmid2024structuretheorem}, sheds light on the nature of information and computation~\cite{barrett_informationGPTs_2006,Mueller_2012}, and helps clarify which physical principles might explain why our world is quantum rather than something else~\cite{hardy2011reformulatingreconstructingquantumtheory,Chiribella_2011InfoDerivationQT}.

Many prior works have focused on GPTs satisfying the property of \emph{tomographic locality}—the property that one can tomographically characterize every multipartite process in the GPT via local states and effects. This is a strong assumption that is often made to simplify the framework and facilitate the derivation of results. Well-known GPTs satisfying this property include classical theory (and hence macroscopically realist theories~\cite{Schmid2024reviewreformulation}), Spekkens toy theory~\cite{toytheory}, the stabilizer subtheory~\cite{gottesman1998heisenbergrepresentationquantumcomputers}, Boxworld~\cite{barrett_informationGPTs_2006}, and—perhaps most famously—unrestricted (or \emph{bare}) quantum theory. By unrestricted (or bare) quantum theory, we mean the operational theory of quantum systems without superselection rules or symmetry-induced restrictions. However, once such restrictions are imposed, tomographic locality generally fails~\cite{Dariano2014feynmanproblem}. This is the case, for example, in real quantum theory~\cite{Chiribella_QuantumFromPrinciples2016,Wooter_1990} (RQT) and in fermionic quantum theory~\cite{darianoFermionic2014,Dariano2014feynmanproblem}, both of which arise as restricted versions of bare quantum theory. In this sense, if superselection rules are fundamental, understanding the resulting failure of tomographic locality becomes essential. Indeed, there has been growing interest in understanding tomographically-nonlocal theories more generally. Some examples of tomographically-nonlocal theories include twirled worlds~\cite{centeno2024twirledworldssymmetryinducedfailures}, swirled worlds~\cite{ying2025quantumtheoryneedscomplex}, quaternionic quantum theory~\cite{hardy2001quantum}, modifications of classical theories~\cite{PhysRevA.101.042118,d2020classicality,scandolo2019information,Chiribella2024,Rolino_MOPTs_2025,soltani2025decouplinglocalclassicalityclassical} and of quantum theory~\cite{erba2025compositionrulequantumsystems}, and a beyond-quantum theory constructed from Euclidean Jordan algebras~\cite{barnum2020composites}.

Tomographically-nonlocal theories exhibit a wide array of distinctive features, even in comparison to unrestricted quantum theory. In such theories, composite systems possess \emph{holistic} degrees of freedom. Operationally, failures of local tomography imply the existence of distinct multipartite states (and, more generally, processes) that cannot be distinguished by any choice of local measurements (dual processes).

There are several motivations for studying such tomographically-nonlocal (TNL) theories. First, there is no consensus on whether the fundamental theory describing our world is tomographically local. In particular, even within quantum theory, there remains debate about whether certain superselection rules are fundamental~\cite{Wick_SymmetrisFundamental,giulini2009superselectionrulesFundamental}. As discussed above, if such restrictions are fundamental, then the resulting restricted quantum theory will generally violate tomographic locality~\cite{centeno2024twirledworldssymmetryinducedfailures}. Second, even if constraints such as superselection rules are merely effective or emergent, they can nonetheless lead to effective failures of tomographic locality (and hence effective holism), making it important to understand the operational consequences of such constraints. Third, tomographically-nonlocal theories serve as valuable \emph{foil theories} for tomographically-local ones, allowing us to sharpen our understanding of which operational and informational features genuinely rely on tomographic locality~\cite{Chiribella_QuantumFromPrinciples2016,toytheory}.

In this work, we study how the tomographic nonlocality of a theory affects the forms of entanglement that can exist within it. In tomographically-nonlocal theories, entanglement behaves differently from what is familiar from unrestricted quantum theory, exhibiting features that are often considered counter-intuitive both mathematically and operationally. For example, in some TNL theories (including RQT and fermionic quantum theory) there exist maximally entangled states that are non-monogamous and locally broadcastable~\cite{PianiNoLocalBroadcasting_2008,Weilenmann_2025,Dariano2014feynmanproblem}. There are also maximally entangled states that do not violate any Bell inequality, and some tomographically-nonlocal theories display entanglement despite the fact that every system has a state space identical to that of a classical system~\cite{PhysRevA.101.042118}.

We show that the failure of tomographic locality naturally leads to two qualitatively distinct forms of entanglement. Concretely, we distinguish between \emph{tomographically-local entanglement} and \emph{tomographically-nonlocal entanglement}. Tomographically-nonlocal entanglement is only possible in tomographically-nonlocal theories, while tomographically-local entanglement is possible in both tomographically-local and tomographically-nonlocal theories. Both forms may be present within a single state. This distinction allows us to separate the part of entanglement that coincides with that found in unrestricted quantum theory (i.e., the part accessible to local tomography) from the genuinely holistic contribution that resides entirely in the tomographically-nonlocal sector.

To study these ideas, we introduce a set of mathematical tools for analyzing tomographic locality more broadly. In particular, we introduce projectors that single out the holistic and non-holistic degrees of freedom of composite systems, and show how these projectors can be used to decompose states and effects into locally tomographic and purely holistic components. This provides a systematic way to identify which part of a state’s entanglement is accessible to local tomography, and to isolate the degrees of freedom responsible for tomographic nonlocality.

We then analyze when and how tomographically-nonlocal entanglement is useful for information-processing tasks. On the one hand, we prove that tomographically-nonlocal entanglement is \emph{entirely useless} for generating nonclassical correlations in Bell scenarios~\cite{Bell1964,Bellreview2014}, for steering~\cite{steeringwiseman}, and for teleportation~\cite{PhysRevLett.119.110501}. This is in stark contrast to tomographically-local entanglement, which is useful for all three tasks in unrestricted quantum theory. On the other hand, we show that tomographically-nonlocal entanglement \emph{is} sufficient for dense coding and data hiding in a broad class of GPTs, including the canonical example of real quantum theory. In particular, within RQT, we show that tomographically-nonlocal entanglement enables dense coding and data hiding even when one demands that secrets be encoded and decoded locally, and we discuss how these results generalize beyond RQT.

These distinct notions of entanglement clarify several previously puzzling observations regarding entanglement in real quantum theory. For example, it has been noted that \emph{some}, but not \emph{all}, entangled states in real quantum theory can be locally broadcast~\cite{Weilenmann_2025}, and similarly that \emph{some}, but not \emph{all}, can be infinitely shared (i.e., violate entanglement monogamy)~\cite{Wootters_2010}. We show that, at least in the two-rebits case, the states exhibiting these features are precisely those that lack tomographically-local entanglement—the form of entanglement that exists in unrestricted quantum theory—and are instead entangled entirely in the tomographically-nonlocal sector. We illustrate these ideas with concrete examples, primarily in rebit theory.

The mathematical tools introduced here provide a first step toward a systematic study of this under-appreciated class of generalized probabilistic theories. As emphasized above, this framework is useful not only within the methodology of foil theories, but also for understanding situations in which superselection rules or symmetry constraints lead quantum theory to be genuinely or effectively tomographically nonlocal.

\section{Preliminaries}

We give a brief introduction to the framework of GPTs, primarily to set up notation. A reader with no  background in \blk GPTs is encouraged to read Refs.~\cite{barrett_informationGPTs_2006,mullerGPTnotes} for a more comprehensive introduction. This work also focuses on state and effect spaces rather than fully compositional theories (including transformations), and focuses on full GPT state and effect spaces rather than on fragments thereof~\cite{PhysRevA.107.062203}.

We will consider the GPT description of a subsystem $\mathcal{A}$, a subsystem $\mathcal{B}$, and their composite $\mathcal{AB}$. Each system is characterized by a state space and an effect space, defined with respect to a real- and finite-dimensional vector space.\footnote{In fact, it is possible to show that one can start with the state and effect spaces and obtain the vector space from a universal construction, which is  a consequence of the possibility of combining them to get probabilities (see, eg, Ref.~\cite[\S1.1.2\&1.7.1]{lami2018nonclassicalcorrelationsquantummechanics}, and Refs \cite{Schmid2024structuretheorem,erba2026categorical}).} For instance, for subsystem $\mathcal{A}$, we associate a vector space $A$. Then, states are vectors in $A$, that together form a state space $S_{\m{A}}$ contained in $A$, $S_\mathcal{A}\subset A$. The effect space $E_{\mathcal A}$ are vectors that live in the dual of $A$, $E_{\m{A}}\subset A^*$. They must assign probabilities when acting on states, so we require $0\leq e(\omega)\leq 1$ for all $e\in E_{\mathcal{A}}$ and $\omega\in S_{\mathcal{A}}$. Both $S_{\mathcal{A}}$ and $E_{\mathcal A}$ must carry some extra structure. First, both $S_{\mathcal A}$ and $E_{\mathcal A}$ are convex sets and contain the zero vector of their respective vector spaces. Secondly, $E_{\mathcal{A}}$ has a unique {\em unit} effect $u^{\m A}\in E_{\m A}$ such that, for all $e\in E_{\m A}$,
$u^{\m A}\geq e$. (Here, $e\geq f$ iff there exists $g\in E^{\m A}$ such that $e=f+g$. This captures the fact that $e(\omega)\geq f(\omega), \forall \omega\in S_{\m A}$.) The element $u^{\m A}\in E_{\m A}$ gives us the definition of normalized states and of measurements: 

    \begin{itemize}
        \item Normalized states are those $\omega^{\m{A}}\in S_{\m A}$ that satisfy $u^{\m A}(\omega)=1$; we denote the convex set of normalized states by $\Omega^{\m A}$. The uniqueness of $u^{\m A}$ implies that every state in $S_{\m A}$ can be written as $p\omega$ for some $\omega\in\Omega^{\m A}$ and $p\in[0,1]$~\cite{Chiribella_2010} . Therefore,  the full state space $S_{\m A}$, that contains both normalized and subnormalized states, is given by $\mathcal{S}^\mathcal{A}=\mathsf{ConvHull}[0,\Omega^\mathcal{A}]$.
        \item Measurements are collections $\{e_i\}_i$ such that $\sum_i e_i = u^{\m A}$; we assume that all effects are part of a measurement; that is, for all valid effect $e\in E_{\m A}$, $(u^{\m A}-e)\in E_{\m A}$.
    \end{itemize}

    Finally, a GPT system must obey \emph{tomography}, i.e.: given any pair of distinct states $s,s'$, there exists an effect $e$ which assign different probabilities to them, $e(s)\neq e(s')$; and  given any pair of distinct effects $e,e'$, there exists a state $s$ which assign different probabilities to them, $e(s)\neq e'(s)$.

The theory is said to obey the non-restriction hypothesis iff $E^{\m A}$ contains all elements $e\in A^*$ such that $0\leq e(\omega)\leq 1$ for all $\omega\in\Omega$.  In this manuscript, we will not assume that theories satisfy this property, and we will mention it explicitly if ever relevant.
We will follow the standard convention that ${\rm Span}(E_{\m{A}})=A^*$ (which applies even if the theory does not satisfy no-restriction), although this can be relaxed~\cite{schmid2024shadowssubsystemsgeneralizedprobabilistic}.

In this work, we are going to use a diagrammatic notation, in which we represent a state $\omega$ and an effect $e$ of system $\m A$ (respectively) as 

\begin{equation}
\label{eq:ExampleStatesEffects}
\vcenter{\hbox{\begin{tikzpicture}
  \begin{pgfonlayer}{nodelayer}

    \node [style=none] (L0) at (-2.2,0) {};      
    \node [style=none] (L1) at (-1.0,0) {};      
    \node [style=none] (L2) at (-1.6,-0.9) {};   

    \node [style=none] (Lomega) at (-1.6,-0.35) {$\omega$};

    \node [style=none] (LTwL) at (-1.66,0) {};
    \node [style=none] (LTwR) at (-1.54,0) {};

    \node [style=none] (LA) at (-1.30,0.45) {$\scriptstyle \m{A}$};

    \node [style=none] (R0) at (1.0,0) {};       
    \node [style=none] (R1) at (2.2,0) {};       
    \node [style=none] (R2) at (1.6,0.9) {};     

    \node [style=none] (Re) at (1.6,0.45) {$e$};

    \node [style=none] (RTwL) at (1.54,0) {};
    \node [style=none] (RTwR) at (1.66,0) {};

    \node [style=none] (RA) at (1.90,-0.45) {$\scriptstyle \m{A}$};

  \end{pgfonlayer}

  \begin{pgfonlayer}{edgelayer}

    \draw (L0.center) to (L1.center);
    \draw (L1.center) to (L2.center);
    \draw (L2.center) to (L0.center);

    \draw[line width=0.6pt] (LTwL.center) to +(0,0.9);
    \draw[line width=0.6pt] (LTwR.center) to +(0,0.9);

    \draw (R0.center) to (R1.center);
    \draw (R1.center) to (R2.center);
    \draw (R2.center) to (R0.center);

    \draw[line width=0.6pt] (RTwL.center) to +(0,-0.9);
    \draw[line width=0.6pt] (RTwR.center) to +(0,-0.9);

  \end{pgfonlayer}
\end{tikzpicture}
}}
\end{equation}

As a familiar example of a GPT system, consider a quantum system $\mathcal{Q}$ with $d$ levels (and no superselection rules). It admits a GPT description in which the set of normalized states $\Omega_{\m Q}$ consists of $d\times d$ density matrices, i.e., positive semidefinite operators with unit trace. The full state space $\mathcal{S}^{\m Q}$ is then given by the positive semidefinite operators with trace less than or equal to~$1$. Effects are linear functionals of the form ${\rm Tr}[M\,\cdot]$, where $M$ is a POVM element, and the unit effect is $u^{\m Q}:={\rm Tr}[\mathds{1}\,\cdot]$. One can directly verify that these objects satisfy the GPT requirements above: the state and effect spaces are convex, $\mathcal{S}^{\m Q}=\mathsf{ConvHull}[0,\Omega^{\m Q}]$, the unit effect is unique, and the complement of any effect is again a valid effect.

It is worth highlighting that, while quantum theory is usually formulated in terms of a $d$-dimensional complex Hilbert space $\m H_d$, the GPT description does not make direct reference to this structure. Instead, the relevant GPT vector space ${A}^{\m{Q}}$ is (isomorphic to) the $d^2$-dimensional real vector space $\mathbf{H}_d$ of Hermitian operators acting on $\m H_d$. Thus, the abstract GPT vector space $A:=\mathsf{Span}[S_{\m A}]$ should be understood as analogous to $\mathbf{H}_d$, rather than to the underlying Hilbert space itself as in general GPTs, no underlying Hilbert space is assumed or required.

In summary, the GPT description of a system $\mathcal{A}$ is a tuple $\left(A,S_{\mathcal A},E_{\mathcal A},u^{\mathcal A}\right)$. We will often refer to subsystems and their composite, say
subsystem $\mathcal{B}$ and composite system $\mathcal{AB}$, and those are associated with analogous tuples. However, the tuples for the composite system $\mathcal{AB}$ must be related to those of the constituent subsystems $\mathcal{A}$ and $\mathcal{B}$, as we now describe.

\subsection{Composite GPT systems}

Consider the composite system formed by a pair of subsystems $\mathcal{A}$ and $\mathcal{B}$. One can describe the composite as one global system $\mathcal{AB}$. Since $\mathcal{AB}$ is itself a system, it is defined by a tuple $(AB,S_{\m{AB}},E_{\m{AB}},u^{\m{AB}})$. For $\mathcal{AB}$ to be a composite of $\mathcal{A}$ and $\mathcal{B}$, the elements of this tuple must satisfy some conditions (relative to the tuples defining $\mathcal{A}$ and $\mathcal{B}$).
For instance, one can always view two independent preparations described by $\omega^{\m A}\in S_{\m A}$ and by $\nu^{\m B}\in S_{\m B}$ as a state in $S_{\m{AB}}$, and similarly, a pair of independent measurement effects should correspond to some effect in $E_{\m{AB}}$.

With this in mind, we can define a bipartite \blk composite system $\mathcal{AB}$ as follows.

\begin{definition}[Composite GPT systems] 
\label{def: CompositionRequirements}
The composition of a pair of (sub)systems $\mathcal{A}$ and $\mathcal{B}$ is defined by a system $\mathcal{AB}=(AB,S_{\m{AB}},E_{\m{AB}},u^{\m{AB}})$, together with two bilinear maps $A\times B\to AB$ and $ A^*\times B^*\to (AB)^*$. We denote the action of these maps on $(s^A,s^B)\in A\times B$ and $(e^{\m{A}},e^{\m{B}})\in A^*\times B^*$ via $s^A\boxtimes s^B$ and $e^{\m{A}} \boxtimes e^{\m{B}}$, respectively. $\mathcal{AB}$ and the linear maps denoted by $\boxtimes$ should satisfy the following:
\begin{enumerate}
    \item Independent states (effects), represented by product states (effects), are valid composite states (effects): \begin{itemize}
        \item $\omega^{\m{A}}\boxtimes\nu^{\m{B}}\in S_{\m{AB}}$ for all $\omega^{\m{A}}\in S_{A}$ and $\nu^{\m{B}}\in S_{B}$,
        \item $e^{\m{A}}\boxtimes f^{\m{B}}\in E_{\m{AB}}$ for all $e^{\m{A}}\in E_{\m{A}}$ and $f^{\m{B}}\in E_{\m{B}}$;
    \end{itemize}
    Moreover, independent experiments lead to independent statistics: $(e^{\m{A}}\boxtimes f^{\m{B}})[\omega^{\m{A}}\boxtimes\nu^{\m{B}}]=e^{\m{A}}(\omega^{\m{A}})f^{\m{B}}(\nu^{\m{B}})$.
    \item The unit effect of the composite is the composite of the local unit effects: $u^{\m{A}}\boxtimes u^B=u^{\m{AB}}$;
    \item Closure under steering (or validity of conditional states/effects): 
    
    \begin{itemize}
        
\item For every local effect $\bar{e}^B\in E_{\m{B}}$ and global state $\omega^{\m{AB}}$, the bilinear map  $E_{\m{B}}\times S_{\m{AB}}\to A$ defined by ${\tilde{\omega}^A:= (\bullet)\boxtimes\bar{e}^B[\omega^{\m{AB}}]}$ outputs a local state of $A$, i.e., $\tilde{\omega}\in S_{A}$. A similar condition is valid with the change $A\leftrightarrow B$;
        \item For every local state $\bar{\nu}^B\in E_{\m{B}}$ and global effect $e^{\m{AB}}$, the bilinear map $S_{B}\times E_{\m{AB}}\to A$ defined by ${\tilde{e}^A:= e^{\m{AB}}[(\bullet\boxtimes \bar{\nu}^B)]}$ outputs a local effect of $A$, i.e., $\tilde{e}^A\in E_{\m{A}}$. A similar condition is valid with the change $A\leftrightarrow B$;
    \end{itemize}
\end{enumerate}

\end{definition}

Diagrammatically, we represent independent state preparations (effects) by 

\begin{equation}
\label{eq:ExampleProductStatesEffects}
\vcenter{\hbox{
\begin{tikzpicture}
  \begin{pgfonlayer}{nodelayer}


    \node [style=none] (LoBL) at (0.70,0.45) {};    
    \node [style=none] (LoBR) at (2.20,0.45) {};    
    \node [style=none] (LoT)  at (1.45,-0.50) {};   
    \node [style=none] (LoLab) at (1.45,0.00) {$\omega$};

    \node [style=none] (LoWL) at (1.38,0.45) {};
    \node [style=none] (LoWR) at (1.52,0.45) {};

    \node [style=none] (LoWireLab) at (1.75,1.00) {$\scriptstyle \m{A}$};

    \node [style=none] (LnBL) at (2.40,0.45) {};
    \node [style=none] (LnBR) at (3.90,0.45) {};
    \node [style=none] (LnT)  at (3.15,-0.50) {};   
    \node [style=none] (LnLab) at (3.15,0.00) {$\nu$};

    \node [style=none] (LnWL) at (3.08,0.45) {};
    \node [style=none] (LnWR) at (3.22,0.45) {};

    \node [style=none] (LnWireLab) at (3.45,1.00) {$\scriptstyle \m{B}$};

    \node [style=none] (LinL) at (4.65,0.00) {$\in$};
    \node [style=none] (Lset) at (6.05,0.00) {$\m{S_{AB}}$};


    \node [style=none] (ReBL) at (8.10,0.60) {};
    \node [style=none] (ReBR) at (9.60,0.60) {};
    \node [style=none] (ReT)  at (8.85,1.55) {};
    \node [style=none] (ReLab) at (8.85,1.05) {$e$};

    \node [style=none] (ReWL) at (8.78,0.60) {};
    \node [style=none] (ReWR) at (8.92,0.60) {};

    \node [style=none] (ReWireLab) at (9.15,0.00) {$\scriptstyle \m{A}$};

    \node [style=none] (RfBL) at (9.80,0.60) {};
    \node [style=none] (RfBR) at (11.30,0.60) {};
    \node [style=none] (RfT)  at (10.55,1.55) {};
    \node [style=none] (RfLab) at (10.55,1.05) {$f$};

    \node [style=none] (RfWL) at (10.48,0.60) {};
    \node [style=none] (RfWR) at (10.62,0.60) {};

    \node [style=none] (RfWireLab) at (10.85,0.00) {$\scriptstyle \m{B}$};

    \node [style=none] (Rin)  at (12.05,0.00) {$\in$};
    \node [style=none] (Rset) at (13.70,0.00) {$E_{\m{AB}}$};

  \end{pgfonlayer}

  \begin{pgfonlayer}{edgelayer}

    \draw (LoBL.center) to (LoBR.center);
    \draw (LoBR.center) to (LoT.center);
    \draw (LoT.center)  to (LoBL.center);

    \draw[line width=0.6pt] (LoWL.center) to +(0,1.10);
    \draw[line width=0.6pt] (LoWR.center) to +(0,1.10);

    \draw (LnBL.center) to (LnBR.center);
    \draw (LnBR.center) to (LnT.center);
    \draw (LnT.center)  to (LnBL.center);

    \draw[line width=0.6pt] (LnWL.center) to +(0,1.10);
    \draw[line width=0.6pt] (LnWR.center) to +(0,1.10);

    \draw (ReBL.center) to (ReBR.center);
    \draw (ReBR.center) to (ReT.center);
    \draw (ReT.center)  to (ReBL.center);

    \draw[line width=0.6pt] (ReWL.center) to +(0,-1.10);
    \draw[line width=0.6pt] (ReWR.center) to +(0,-1.10);

    \draw (RfBL.center) to (RfBR.center);
    \draw (RfBR.center) to (RfT.center);
    \draw (RfT.center)  to (RfBL.center);

    \draw[line width=0.6pt] (RfWL.center) to +(0,-1.10);
    \draw[line width=0.6pt] (RfWR.center) to +(0,-1.10);

 \node[style=none] (,)   at (6.8, -0.26) {,};
    
  \end{pgfonlayer}
\end{tikzpicture},
}}
\end{equation}
and the conditions on composites translate to:

    \begin{enumerate}
        \item Condition on independent states, effects and their probabilities: \begin{equation}
\label{eq:CompositionProductStuff}
\vcenter{\hbox{\begin{tikzpicture}
  \begin{pgfonlayer}{nodelayer}


    \node [style=none] (LoBL) at (0.70,0.45) {};    
    \node [style=none] (LoBR) at (2.20,0.45) {};    
    \node [style=none] (LoT)  at (1.45,-0.50) {};   
    \node [style=none] (LoLab) at (1.45,0.00) {$\omega$};

    \node [style=none] (LoWL) at (1.38,0.45) {};
    \node [style=none] (LoWR) at (1.52,0.45) {};

    \node [style=none] (LoWireLab) at (1.75,1.00) {$\scriptstyle \m{A}$};

    \node [style=none] (LnBL) at (2.40,0.45) {};
    \node [style=none] (LnBR) at (3.90,0.45) {};
    \node [style=none] (LnT)  at (3.15,-0.50) {};   
    \node [style=none] (LnLab) at (3.15,0.00) {$\nu$};

    \node [style=none] (LnWL) at (3.08,0.45) {};
    \node [style=none] (LnWR) at (3.22,0.45) {};

    \node [style=none] (LnWireLab) at (3.45,1.00) {$\scriptstyle \m{B}$};

    \node [style=none] (LinL) at (4.65,0.00) {$\in$};
    \node [style=none] (Lset) at (6.05,0.00) {$\m{S_{AB}}$};


    \node [style=none] (ReBL) at (8.10,0.60) {};
    \node [style=none] (ReBR) at (9.60,0.60) {};
    \node [style=none] (ReT)  at (8.85,1.55) {};
    \node [style=none] (ReLab) at (8.85,1.05) {$e$};

    \node [style=none] (ReWL) at (8.78,0.60) {};
    \node [style=none] (ReWR) at (8.92,0.60) {};

    \node [style=none] (ReWireLab) at (9.15,0.00) {$\scriptstyle \m{A}$};

    \node [style=none] (RfBL) at (9.80,0.60) {};
    \node [style=none] (RfBR) at (11.30,0.60) {};
    \node [style=none] (RfT)  at (10.55,1.55) {};
    \node [style=none] (RfLab) at (10.55,1.05) {$f$};

    \node [style=none] (RfWL) at (10.48,0.60) {};
    \node [style=none] (RfWR) at (10.62,0.60) {};

    \node [style=none] (RfWireLab) at (10.85,0.00) {$\scriptstyle \m{B}$};

    \node [style=none] (Rin)  at (12.05,0.00) {$\in$};
    \node [style=none] (Rset) at (13.70,0.00) {$E_{\m{AB}}$};

  \end{pgfonlayer}

  \begin{pgfonlayer}{edgelayer}

    \draw (LoBL.center) to (LoBR.center);
    \draw (LoBR.center) to (LoT.center);
    \draw (LoT.center)  to (LoBL.center);

    \draw[line width=0.6pt] (LoWL.center) to +(0,1.10);
    \draw[line width=0.6pt] (LoWR.center) to +(0,1.10);

    \draw (LnBL.center) to (LnBR.center);
    \draw (LnBR.center) to (LnT.center);
    \draw (LnT.center)  to (LnBL.center);

    \draw[line width=0.6pt] (LnWL.center) to +(0,1.10);
    \draw[line width=0.6pt] (LnWR.center) to +(0,1.10);

    \draw (ReBL.center) to (ReBR.center);
    \draw (ReBR.center) to (ReT.center);
    \draw (ReT.center)  to (ReBL.center);

    \draw[line width=0.6pt] (ReWL.center) to +(0,-1.10);
    \draw[line width=0.6pt] (ReWR.center) to +(0,-1.10);

    \draw (RfBL.center) to (RfBR.center);
    \draw (RfBR.center) to (RfT.center);
    \draw (RfT.center)  to (RfBL.center);

    \draw[line width=0.6pt] (RfWL.center) to +(0,-1.10);
    \draw[line width=0.6pt] (RfWR.center) to +(0,-1.10);

 \node[style=none] (,)   at (6.8, -0.26) {,};
    
  \end{pgfonlayer}
\end{tikzpicture}, $\,\,$ with$\,\,$\begin{tikzpicture}
  \begin{pgfonlayer}{nodelayer}

    \node[style=none] (eq)   at ( 1.10, 0.55) {$=$};


    \node[style=none] (LeBL) at (-2.20,1.00) {};
    \node[style=none] (LeBR) at (-1.00,1.00) {};
    \node[style=none] (LeT)  at (-1.60,1.95) {};
    \node[style=none] (Le)   at (-1.60,1.40) {$e$};

    \node[style=none] (LwTL) at (-2.20,0.20) {};
    \node[style=none] (LwTR) at (-1.00,0.20) {};
    \node[style=none] (LwB)  at (-1.60,-0.75) {};
    \node[style=none] (Lw)   at (-1.60,-0.20) {$\omega$};

    \node[style=none] (LAuL) at (-1.67,1.00) {};
    \node[style=none] (LAuR) at (-1.53,1.00) {};
    \node[style=none] (LAdL) at (-1.67,0.20) {};
    \node[style=none] (LAdR) at (-1.53,0.20) {};

    \node[style=none] (LA)   at (-1.25,0.60) {$\scriptstyle \m{A}$};

    \node[style=none] (LfBL) at (-1.05,1.00) {};
    \node[style=none] (LfBR) at ( 0.15,1.00) {};
    \node[style=none] (LfT)  at (-0.45,1.95) {};
    \node[style=none] (Lf)   at (-0.45,1.40) {$f$};

    \node[style=none] (LnTL) at (-1.05,0.20) {};
    \node[style=none] (LnTR) at ( 0.15,0.20) {};
    \node[style=none] (LnB)  at (-0.45,-0.75) {};
    \node[style=none] (Ln)   at (-0.45,-0.20) {$\nu$};

    \node[style=none] (LBuL) at (-0.52,1.00) {};
    \node[style=none] (LBuR) at (-0.38,1.00) {};
    \node[style=none] (LBdL) at (-0.52,0.20) {};
    \node[style=none] (LBdR) at (-0.38,0.20) {};

    \node[style=none] (LB)   at (-0.05,0.60) {$\scriptstyle \m{B}$};


    \node[style=none] (ReBL) at (2.00,1.00) {};
    \node[style=none] (ReBR) at (3.20,1.00) {};
    \node[style=none] (ReT)  at (2.60,1.95) {};
    \node[style=none] (Re)   at (2.60,1.40) {$e$};

    \node[style=none] (RwTL) at (2.00,0.20) {};
    \node[style=none] (RwTR) at (3.20,0.20) {};
    \node[style=none] (RwB)  at (2.60,-0.75) {};
    \node[style=none] (Rw)   at (2.60,-0.20) {$\omega$};

    \node[style=none] (RAuL) at (2.53,1.00) {};
    \node[style=none] (RAuR) at (2.67,1.00) {};
    \node[style=none] (RAdL) at (2.53,0.20) {};
    \node[style=none] (RAdR) at (2.67,0.20) {};

    \node[style=none] (RA)   at (2.95,0.60) {$\scriptstyle \m{A}$};

    \node[style=none] (RfBL) at (3.45,1.00) {};
    \node[style=none] (RfBR) at (4.65,1.00) {};
    \node[style=none] (RfT)  at (4.05,1.95) {};
    \node[style=none] (Rf)   at (4.05,1.40) {$f$};

    \node[style=none] (RnuTL) at (3.45,0.20) {};
    \node[style=none] (RnuTR) at (4.65,0.20) {};
    \node[style=none] (RnuB)  at (4.05,-0.75) {};
    \node[style=none] (Rnu)   at (4.05,-0.20) {$\nu$};

    \node[style=none] (RBuL) at (3.98,1.00) {};
    \node[style=none] (RBuR) at (4.12,1.00) {};
    \node[style=none] (RBdL) at (3.98,0.20) {};
    \node[style=none] (RBdR) at (4.12,0.20) {};

    \node[style=none] (RB)   at (4.40,0.60) {$\scriptstyle \m{B}$};

  \end{pgfonlayer}

  \begin{pgfonlayer}{edgelayer}

    \draw (LeBL.center) to (LeBR.center);
    \draw (LeBR.center) to (LeT.center);
    \draw (LeT.center)  to (LeBL.center);
    \draw (LwTL.center) to (LwTR.center);
    \draw (LwTR.center) to (LwB.center);
    \draw (LwB.center)  to (LwTL.center);

    \draw (LfBL.center) to (LfBR.center);
    \draw (LfBR.center) to (LfT.center);
    \draw (LfT.center)  to (LfBL.center);
    \draw (LnTL.center) to (LnTR.center);
    \draw (LnTR.center) to (LnB.center);
    \draw (LnB.center)  to (LnTL.center);

    \draw[line width=0.6pt] (LAuL.center) to (LAdL.center);
    \draw[line width=0.6pt] (LAuR.center) to (LAdR.center);
    \draw[line width=0.6pt] (LBuL.center) to (LBdL.center);
    \draw[line width=0.6pt] (LBuR.center) to (LBdR.center);

    \draw[draw=blue!70!black, dashed, line width=0.8pt, rounded corners=0pt]
      (-2.35,0.90) rectangle (0.30,2.05);
    \draw[draw=blue!70!black, dashed, line width=0.8pt, rounded corners=0pt]
      (-2.35,-0.85) rectangle (0.30,0.35);

    \draw (ReBL.center) to (ReBR.center);
    \draw (ReBR.center) to (ReT.center);
    \draw (ReT.center)  to (ReBL.center);
    \draw (RwTL.center) to (RwTR.center);
    \draw (RwTR.center) to (RwB.center);
    \draw (RwB.center)  to (RwTL.center);

    \draw (RfBL.center) to (RfBR.center);
    \draw (RfBR.center) to (RfT.center);
    \draw (RfT.center)  to (RfBL.center);
    \draw (RnuTL.center) to (RnuTR.center);
    \draw (RnuTR.center) to (RnuB.center);
    \draw (RnuB.center)  to (RnuTL.center);

    \draw[line width=0.6pt] (RAuL.center) to (RAdL.center);
    \draw[line width=0.6pt] (RAuR.center) to (RAdR.center);
    \draw[line width=0.6pt] (RBuL.center) to (RBdL.center);
    \draw[line width=0.6pt] (RBuR.center) to (RBdR.center);

    \draw[draw=orange!80!black, dotted, line width=0.8pt, rounded corners=0.1pt]
      (1.85,-0.85) rectangle (3.34,2.05);
    \draw[draw=orange!80!black, dotted, line width=0.8pt, rounded corners=0.1pt]
      (3.30,-0.85) rectangle (4.81,2.05);

  \end{pgfonlayer}
\end{tikzpicture}
}}
\end{equation}
\item Condition on global unit effect: \begin{equation}
\label{eq:GlobalUnitEffect}
\vcenter{\hbox{\begin{tikzpicture}[baseline={(eq.base)}]
  \begin{pgfonlayer}{nodelayer}


    \node[style=none] (LA_L)  at (-2.10,0.18) {};
    \node[style=none] (LA_R)  at (-1.98,0.18) {};
    \node[style=none] (LA_Lb) at (-2.10,-1.20) {};
    \node[style=none] (LA_Rb) at (-1.98,-1.20) {};

    \node[style=none] (LB_L)  at (-1.20,0.18) {};
    \node[style=none] (LB_R)  at (-1.08,0.18) {};
    \node[style=none] (LB_Lb) at (-1.20,-1.20) {};
    \node[style=none] (LB_Rb) at (-1.08,-1.20) {};

    \node[style=none] (LabA) at (-1.78,-0.55) {$\scriptstyle \m{A}$};
    \node[style=none] (LabB) at (-0.88,-0.55) {$\scriptstyle \m{B}$};

    \node[style=none] (eq) at (0.00,-0.55) {$=$};


    \node[style=none] (RA_L)  at (1.10,0.18) {};
    \node[style=none] (RA_R)  at (1.22,0.18) {};
    \node[style=none] (RA_Lb) at (1.10,-1.20) {};
    \node[style=none] (RA_Rb) at (1.22,-1.20) {};
    \node[style=none] (RLabA) at (1.40,-0.55) {$\scriptstyle \m{A}$};

    \node[style=none] (RB_L)  at (2.15,0.18) {};
    \node[style=none] (RB_R)  at (2.27,0.18) {};
    \node[style=none] (RB_Lb) at (2.15,-1.20) {};
    \node[style=none] (RB_Rb) at (2.27,-1.20) {};
    \node[style=none] (RLabB) at (2.45,-0.55) {$\scriptstyle \m{B}$};

  \end{pgfonlayer}

  \begin{pgfonlayer}{edgelayer}

    \draw[line width=0.6pt] (LA_L.center) to (LA_Lb.center);
    \draw[line width=0.6pt] (LA_R.center) to (LA_Rb.center);

    \draw[line width=0.6pt] (LB_L.center) to (LB_Lb.center);
    \draw[line width=0.6pt] (LB_R.center) to (LB_Rb.center);

    \draw[line width=0.6pt] (-2.35,0.18) -- (-0.83,0.18);
    \draw[line width=0.6pt] (-2.15,0.30) -- (-1.03,0.30);
    \draw[line width=0.6pt] (-1.95,0.42) -- (-1.23,0.42);

    \draw[line width=0.6pt] (RA_L.center) to (RA_Lb.center);
    \draw[line width=0.6pt] (RA_R.center) to (RA_Rb.center);

    \draw[line width=0.6pt] (RB_L.center) to (RB_Lb.center);
    \draw[line width=0.6pt] (RB_R.center) to (RB_Rb.center);

    \draw[line width=0.6pt] (0.80,0.18) -- (1.52,0.18);
    \draw[line width=0.6pt] (0.91,0.30) -- (1.41,0.30);
    \draw[line width=0.6pt] (1.02,0.42) -- (1.30,0.42);

    \draw[line width=0.6pt] (1.85,0.18) -- (2.57,0.18);
    \draw[line width=0.6pt] (1.96,0.30) -- (2.46,0.30);
    \draw[line width=0.6pt] (2.07,0.42) -- (2.35,0.42);

  \end{pgfonlayer}
\end{tikzpicture}
}}
\end{equation}
\item Validity of conditional states and effects: \begin{equation}
\label{eq:ConditionalStatesAndEffects}
\vcenter{\hbox{\begin{tikzpicture}[baseline={(Lin.base)}]
  \begin{pgfonlayer}{nodelayer}


    \node[style=none] (LeBL)  at (-3.80, 1.20) {};
    \node[style=none] (LeBR)  at (-2.00, 1.20) {};
    \node[style=none] (LeT)   at (-2.90, 2.25) {};
    \node[style=none] (LeLab) at (-2.90, 1.70) {$e$};

    \node[style=none] (LAuL)  at (-3.35, 1.20) {};
    \node[style=none] (LAuR)  at (-3.23, 1.20) {};
    \node[style=none] (LAdL)  at (-3.35, 0.05) {};
    \node[style=none] (LAdR)  at (-3.23, 0.05) {};
    \node[style=none] (LALab) at (-3.55, 0.65) {$\scriptstyle \m{A}$};

    \node[style=none] (LBuL)  at (-2.55, 1.20) {};
    \node[style=none] (LBuR)  at (-2.43, 1.20) {};
    \node[style=none] (LBdL)  at (-2.55, 0.55) {};  
    \node[style=none] (LBdR)  at (-2.43, 0.55) {};
    \node[style=none] (LBLab) at (-2.20, 0.85) {$\scriptstyle \m{B}$};

    \node[style=none] (LnuTL)  at (-2.90, 0.55) {};
    \node[style=none] (LnuTR)  at (-2.08, 0.55) {};
    \node[style=none] (LnuB)   at (-2.49,-0.10) {};
    \node[style=none] (LnuLab) at (-2.45, 0.26) {$\nu$};

    \node[style=none] (Lin)  at (-1.10, 0.95) {$\in$};
    \node[style=none] (Lset) at ( 0.35, 0.95) {$E_{\m{A}}$};


    \node[style=none] (RAuL)  at ( 2.05, 1.30) {};
    \node[style=none] (RAuR)  at ( 2.17, 1.30) {};
    \node[style=none] (RAdL)  at ( 2.05, 0.35) {};
    \node[style=none] (RAdR)  at ( 2.17, 0.35) {};
    \node[style=none] (RALab) at ( 1.78, 0.85) {$\scriptstyle \m{A}$};

    \node[style=none] (RBuL)  at ( 3.05, 1.30) {};
    \node[style=none] (RBuR)  at ( 3.17, 1.30) {};
    \node[style=none] (RBdL)  at ( 3.05, 0.35) {};
    \node[style=none] (RBdR)  at ( 3.17, 0.35) {};
    \node[style=none] (RBLab) at ( 3.35, 0.85) {$\scriptstyle \m{B}$};

    \node[style=none] (RheBL)  at ( 2.65, 1.30) {};
    \node[style=none] (RheBR)  at ( 3.55, 1.30) {};
    \node[style=none] (RheT)   at ( 3.10, 2.05) {};
    \node[style=none] (RheLab) at ( 3.10, 1.58) {$\bar e$};

    \node[style=none] (RwTL)  at ( 1.55, 0.35) {};
    \node[style=none] (RwTR)  at ( 3.75, 0.35) {};
    \node[style=none] (RwB)   at ( 2.65,-0.70) {};
    \node[style=none] (RwLab) at ( 2.65,-0.18) {$\omega$};

    \node[style=none] (Rin)  at ( 4.55, 0.95) {$\in$};
    \node[style=none] (Rset) at ( 5.85, 0.95) {$\m{S_{A}}$};
    

   \node[style=none] (,) at ( 1.13, 0.65) {,};

  \end{pgfonlayer}

  \begin{pgfonlayer}{edgelayer}

    \draw (LeBL.center) to (LeBR.center);
    \draw (LeBR.center) to (LeT.center);
    \draw (LeT.center)  to (LeBL.center);

    \draw[line width=0.6pt] (LAuL.center) to (LAdL.center);
    \draw[line width=0.6pt] (LAuR.center) to (LAdR.center);

    \draw (LnuTL.center) to (LnuTR.center);
    \draw (LnuTR.center) to (LnuB.center);
    \draw (LnuB.center)  to (LnuTL.center);

    \draw[line width=0.6pt] (LBuL.center) to (LBdL.center);
    \draw[line width=0.6pt] (LBuR.center) to (LBdR.center);

    \draw (RheBL.center) to (RheBR.center);
    \draw (RheBR.center) to (RheT.center);
    \draw (RheT.center)  to (RheBL.center);

    \draw (RwTL.center) to (RwTR.center);
    \draw (RwTR.center) to (RwB.center);
    \draw (RwB.center)  to (RwTL.center);

    \draw[line width=0.6pt] (RAuL.center) to (RAdL.center);
    \draw[line width=0.6pt] (RAuR.center) to (RAdR.center);

    \draw[line width=0.6pt] (RBuL.center) to (RBdL.center);
    \draw[line width=0.6pt] (RBuR.center) to (RBdR.center);

  \end{pgfonlayer}
\end{tikzpicture}
}}
\end{equation}

    \end{enumerate}

Once we consider composite systems, there is a particularly important feature a theory might exhibit or not: tomographic locality. In tomographically local theories, any process can be distinguished by probing it locally, i.e., processes are separated by the statistics they generate when acting on local preparations and effects. Both classical and quantum theory are tomographically local, and this feature seemed so natural that it took some time for it to be explicitly formalized and treated within the GPT framework. Nonetheless, one can define theories in which it fails. For instance, quantum theory over real Hilbert spaces is not tomographically local~\cite{Wootters_2010,Hardy_2011,Chiribella_QuantumFromPrinciples2016}, the theory that describes fermions also fails to be tomographically local~\cite{darianoFermionic2014} -- as is many theories that emerge by restricting a given system with respect to some fixed symmetries ~\cite{centeno2024twirledworldssymmetryinducedfailures,ying2025quantumtheoryneedscomplex}.  

\subsection{Tomographic locality and its failure}

A theory satisfies tomographic locality if and only if product effects are sufficient for characterizing the states, and product states are sufficient for characterizing the effects. 

\begin{definition}[Tomographic Locality (TL)]
\label{def: TL}
    A composition of subsystems $\mathcal{A}$ and $\mathcal{B}$ into system $\mathcal{AB}$ obeys tomographic locality iff for any pair $\omega^{\m{AB}},\nu^{\m{AB}}\in \mathcal{S}$ \blk 
\begin{equation}
\label{eq:TLStates}
\vcenter{\hbox{
\begin{tikzpicture}[baseline={(iff.base)}]
  \begin{pgfonlayer}{nodelayer}


    \node[style=none] (LeBL) at (-6.45,1.30) {};
    \node[style=none] (LeBR) at (-5.45,1.30) {};
    \node[style=none] (LeT)  at (-5.95,2.25) {};
    \node[style=none] (Le)   at (-5.95,1.72) {$e$};

    \node[style=none] (LfBL) at (-5.25,1.30) {};
    \node[style=none] (LfBR) at (-4.25,1.30) {};
    \node[style=none] (LfT)  at (-4.75,2.25) {};
    \node[style=none] (Lf)   at (-4.78,1.72) {$f$};

    \node[style=none] (LowTL) at (-6.75,0.40) {};
    \node[style=none] (LowTR) at (-4.15,0.40) {};
    \node[style=none] (LowB)  at (-5.45,-1.05) {};
    \node[style=none] (Low)   at (-5.45,-0.25) {$\omega$};

    \node[style=none] (LAu1) at (-6.02,1.30) {};
    \node[style=none] (LAu2) at (-5.88,1.30) {};
    \node[style=none] (LAd1) at (-6.02,0.40) {};
    \node[style=none] (LAd2) at (-5.88,0.40) {};
    \node[style=none] (LAlab) at (-5.65,0.95) {$\scriptstyle \m{A}$};

    \node[style=none] (LBu1) at (-4.82,1.30) {};
    \node[style=none] (LBu2) at (-4.68,1.30) {};
    \node[style=none] (LBd1) at (-4.82,0.40) {};
    \node[style=none] (LBd2) at (-4.68,0.40) {};
    \node[style=none] (LBlab) at (-4.45,0.95) {$\scriptstyle \m{B}$};

    \node[style=none] (eqL) at (-3.78,0.55) {$=$};

    \node[style=none] (MeBL) at (-3.05,1.30) {};
    \node[style=none] (MeBR) at (-2.05,1.30) {};
    \node[style=none] (MeT)  at (-2.55,2.25) {};
    \node[style=none] (Me)   at (-2.55,1.72) {$e$};

    \node[style=none] (MfBL) at (-1.85,1.30) {};
    \node[style=none] (MfBR) at (-0.85,1.30) {};
    \node[style=none] (MfT)  at (-1.35,2.25) {};
    \node[style=none] (Mf)   at (-1.38,1.72) {$f$};

    \node[style=none] (MnuTL) at (-3.35,0.40) {};
    \node[style=none] (MnuTR) at (-0.55,0.40) {};
    \node[style=none] (MnuB)  at (-1.95,-1.05) {};
    \node[style=none] (Mnu)   at (-1.95,-0.25) {$\nu$};

    \node[style=none] (MAu1) at (-2.62,1.30) {};
    \node[style=none] (MAu2) at (-2.48,1.30) {};
    \node[style=none] (MAd1) at (-2.62,0.40) {};
    \node[style=none] (MAd2) at (-2.48,0.40) {};
    \node[style=none] (MAlab) at (-2.25,0.95) {$\scriptstyle \m{A}$};

    \node[style=none] (MBu1) at (-1.42,1.30) {};
    \node[style=none] (MBu2) at (-1.28,1.30) {};
    \node[style=none] (MBd1) at (-1.42,0.40) {};
    \node[style=none] (MBd2) at (-1.28,0.40) {};
    \node[style=none] (MBlab) at (-1.05,0.95) {$\scriptstyle \m{B}$};

    \node[style=none] (forall) at (0.18,0.55) {$\forall$};

    \node[style=none] (SeBL) at (0.55,1.30) {};
    \node[style=none] (SeBR) at (1.65,1.30) {};
    \node[style=none] (SeT)  at (1.10,2.30) {};
    \node[style=none] (Se)   at (1.10,1.72) {$e$};

    \node[style=none] (SfBL) at (1.85,1.30) {};
    \node[style=none] (SfBR) at (2.95,1.30) {};
    \node[style=none] (SfT)  at (2.40,2.30) {};
    \node[style=none] (Sf)   at (2.40,1.72) {$f$};

    \node[style=none] (SAu1) at (1.01,1.30) {};
    \node[style=none] (SAu2) at (1.19,1.30) {};
    \node[style=none] (SAd1) at (1.01,0.00) {};
    \node[style=none] (SAd2) at (1.19,0.00) {};
    \node[style=none] (SAlab) at (1.42,0.70) {$\scriptstyle \m{A}$};

    \node[style=none] (SBu1) at (2.31,1.30) {};
    \node[style=none] (SBu2) at (2.49,1.30) {};
    \node[style=none] (SBd1) at (2.31,0.00) {};
    \node[style=none] (SBd2) at (2.49,0.00) {};
    \node[style=none] (SBlab) at (2.72,0.70) {$\scriptstyle \m{B}$};

    \node[style=none] (iff) at (4.15,0.55) {$\Longleftrightarrow$};


    \node[style=none] (RowTL) at (5.35,0.40) {};
    \node[style=none] (RowTR) at (7.75,0.40) {};
    \node[style=none] (RowB)  at (6.55,-1.05) {};
    \node[style=none] (Row)   at (6.55,-0.25) {$\omega$};

    \node[style=none] (RAu1) at (6.08,0.40) {};
    \node[style=none] (RAu2) at (6.22,0.40) {};
    \node[style=none] (RAt1) at (6.08,1.55) {};
    \node[style=none] (RAt2) at (6.22,1.55) {};
    \node[style=none] (RAlab) at (6.42,1.05) {$\scriptstyle \m{A}$};

    \node[style=none] (RBu1) at (6.88,0.40) {};
    \node[style=none] (RBu2) at (7.02,0.40) {};
    \node[style=none] (RBt1) at (6.88,1.55) {};
    \node[style=none] (RBt2) at (7.02,1.55) {};
    \node[style=none] (RBlab) at (7.22,1.05) {$\scriptstyle \m{B}$};

    \node[style=none] (eqR) at (8.45,0.55) {$=$};

    \node[style=none] (RnuTL) at (9.25,0.40) {};
    \node[style=none] (RnuTR) at (11.65,0.40) {};
    \node[style=none] (RnuB)  at (10.45,-1.05) {};
    \node[style=none] (Rnu)   at (10.45,-0.25) {$\nu$};

    \node[style=none] (NAu1) at (9.98,0.40) {};
    \node[style=none] (NAu2) at (10.12,0.40) {};
    \node[style=none] (NAt1) at (9.98,1.55) {};
    \node[style=none] (NAt2) at (10.12,1.55) {};
    \node[style=none] (NAlab) at (10.32,1.05) {$\scriptstyle \m{A}$};

    \node[style=none] (NBu1) at (10.78,0.40) {};
    \node[style=none] (NBu2) at (10.92,0.40) {};
    \node[style=none] (NBt1) at (10.78,1.55) {};
    \node[style=none] (NBt2) at (10.92,1.55) {};
    \node[style=none] (NBlab) at (11.12,1.05) {$\scriptstyle \m{B}$};

  \end{pgfonlayer}

  \begin{pgfonlayer}{edgelayer}

    \draw (LeBL.center) to (LeBR.center);
    \draw (LeBR.center) to (LeT.center);
    \draw (LeT.center)  to (LeBL.center);

    \draw (LfBL.center) to (LfBR.center);
    \draw (LfBR.center) to (LfT.center);
    \draw (LfT.center)  to (LfBL.center);

    \draw (LowTL.center) to (LowTR.center);
    \draw (LowTR.center) to (LowB.center);
    \draw (LowB.center)  to (LowTL.center);

    \draw[line width=0.6pt] (LAd1.center) to (LAu1.center);
    \draw[line width=0.6pt] (LAd2.center) to (LAu2.center);
    \draw[line width=0.6pt] (LBd1.center) to (LBu1.center);
    \draw[line width=0.6pt] (LBd2.center) to (LBu2.center);

    \draw (MeBL.center) to (MeBR.center);
    \draw (MeBR.center) to (MeT.center);
    \draw (MeT.center)  to (MeBL.center);

    \draw (MfBL.center) to (MfBR.center);
    \draw (MfBR.center) to (MfT.center);
    \draw (MfT.center)  to (MfBL.center);

    \draw (MnuTL.center) to (MnuTR.center);
    \draw (MnuTR.center) to (MnuB.center);
    \draw (MnuB.center)  to (MnuTL.center);

    \draw[line width=0.6pt] (MAd1.center) to (MAu1.center);
    \draw[line width=0.6pt] (MAd2.center) to (MAu2.center);
    \draw[line width=0.6pt] (MBd1.center) to (MBu1.center);
    \draw[line width=0.6pt] (MBd2.center) to (MBu2.center);

    \draw (SeBL.center) to (SeBR.center);
    \draw (SeBR.center) to (SeT.center);
    \draw (SeT.center)  to (SeBL.center);

    \draw (SfBL.center) to (SfBR.center);
    \draw (SfBR.center) to (SfT.center);
    \draw (SfT.center)  to (SfBL.center);

    \draw[line width=0.6pt] (SAu1.center) to (SAd1.center);
    \draw[line width=0.6pt] (SAu2.center) to (SAd2.center);
    \draw[line width=0.6pt] (SBu1.center) to (SBd1.center);
    \draw[line width=0.6pt] (SBu2.center) to (SBd2.center);

    \draw (RowTL.center) to (RowTR.center);
    \draw (RowTR.center) to (RowB.center);
    \draw (RowB.center)  to (RowTL.center);

    \draw[line width=0.6pt] (RAu1.center) to (RAt1.center);
    \draw[line width=0.6pt] (RAu2.center) to (RAt2.center);
    \draw[line width=0.6pt] (RBu1.center) to (RBt1.center);
    \draw[line width=0.6pt] (RBu2.center) to (RBt2.center);

    \draw (RnuTL.center) to (RnuTR.center);
    \draw (RnuTR.center) to (RnuB.center);
    \draw (RnuB.center)  to (RnuTL.center);

    \draw[line width=0.6pt] (NAu1.center) to (NAt1.center);
    \draw[line width=0.6pt] (NAu2.center) to (NAt2.center);
    \draw[line width=0.6pt] (NBu1.center) to (NBt1.center);
    \draw[line width=0.6pt] (NBu2.center) to (NBt2.center);

  \end{pgfonlayer}
\end{tikzpicture}
}}
\end{equation}

and for any $e^{\m{AB}},f^{\m{AB}}\in E_{\m{AB}}$

\begin{equation}
\label{eq:TLEffects}
\vcenter{\hbox{
\begin{tikzpicture}[baseline={(iff.base)}]
  \begin{pgfonlayer}{nodelayer}


    \node[style=none] (LeBL) at (-8.10, 1.40) {};
    \node[style=none] (LeBR) at (-6.10, 1.40) {};
    \node[style=none] (LeT)  at (-7.10, 2.90) {};
    \node[style=none] (Le)   at (-7.10, 2.15) {$e$};

    \node[style=none] (LAu1) at (-7.82, 1.40) {};
    \node[style=none] (LAu2) at (-7.68, 1.40) {};
    \node[style=none] (LAd1) at (-7.82, 0.55) {};
    \node[style=none] (LAd2) at (-7.68, 0.55) {};
    \node[style=none] (LAlab) at (-8.15, 1.00) {$\scriptstyle \m{A}$};

    \node[style=none] (LBu1) at (-6.52, 1.40) {};
    \node[style=none] (LBu2) at (-6.38, 1.40) {};
    \node[style=none] (LBd1) at (-6.52, 0.55) {};
    \node[style=none] (LBd2) at (-6.38, 0.55) {};
    \node[style=none] (LBlab) at (-6.12, 1.00) {$\scriptstyle \m{B}$};

    \node[style=none] (LwTL) at (-8.25, 0.55) {};
    \node[style=none] (LwTR) at (-7.25, 0.55) {};
    \node[style=none] (LwB)  at (-7.75,-0.25) {};
    \node[style=none] (Lw)   at (-7.73, 0.21) {$\omega$};

    \node[style=none] (LnuTL) at (-6.95, 0.55) {};
    \node[style=none] (LnuTR) at (-5.95, 0.55) {};
    \node[style=none] (LnuB)  at (-6.45,-0.25) {};
    \node[style=none] (Lnu)   at (-6.42, 0.21) {$\nu$};

    \node[style=none] (eqL) at (-5.35, 1.00) {$=$};

    \node[style=none] (LePBL) at (-4.70, 1.40) {};
    \node[style=none] (LePBR) at (-2.70, 1.40) {};
    \node[style=none] (LePT)  at (-3.70, 2.90) {};
    \node[style=none] (LeP)   at (-3.70, 2.15) {$e'$};

    \node[style=none] (PAu1) at (-4.42, 1.40) {};
    \node[style=none] (PAu2) at (-4.28, 1.40) {};
    \node[style=none] (PAd1) at (-4.42, 0.55) {};
    \node[style=none] (PAd2) at (-4.28, 0.55) {};
    \node[style=none] (PAlab) at (-4.78, 1.00) {$\scriptstyle \m{A}$};

    \node[style=none] (PBu1) at (-3.12, 1.40) {};
    \node[style=none] (PBu2) at (-2.98, 1.40) {};
    \node[style=none] (PBd1) at (-3.12, 0.55) {};
    \node[style=none] (PBd2) at (-2.98, 0.55) {};
    \node[style=none] (PBlab) at (-2.72, 1.00) {$\scriptstyle \m{B}$};

    \node[style=none] (PwTL) at (-4.85, 0.55) {};
    \node[style=none] (PwTR) at (-3.85, 0.55) {};
    \node[style=none] (PwB)  at (-4.35,-0.25) {};
    \node[style=none] (Pw)   at (-4.32, 0.21) {$\omega$};

    \node[style=none] (PnuTL) at (-3.55, 0.55) {};
    \node[style=none] (PnuTR) at (-2.55, 0.55) {};
    \node[style=none] (PnuB)  at (-3.05,-0.25) {};
    \node[style=none] (Pnu)   at (-3.02, 0.21) {$\nu$};

    \node[style=none] (forall) at (-1.65, 1.00) {$\forall$};

    \node[style=none] (TwTL) at (-1.15, 0.55) {};
    \node[style=none] (TwTR) at (-0.15, 0.55) {};
    \node[style=none] (TwB)  at (-0.65,-0.25) {};
    \node[style=none] (Tw)   at (-0.62, 0.21) {$\omega$};

    \node[style=none] (TnTL) at ( 0.15, 0.55) {};
    \node[style=none] (TnTR) at ( 1.15, 0.55) {};
    \node[style=none] (TnB)  at ( 0.65,-0.25) {};
    \node[style=none] (Tn)   at ( 0.62, 0.21) {$\nu$};

    \node[style=none] (TAu1) at (-0.72, 0.55) {};
    \node[style=none] (TAu2) at (-0.58, 0.55) {};
    \node[style=none] (TAt1) at (-0.72, 1.95) {};
    \node[style=none] (TAt2) at (-0.58, 1.95) {};
    \node[style=none] (TAlab) at (-0.98, 1.00) {$\scriptstyle \m{A}$};

    \node[style=none] (TBu1) at ( 0.58, 0.55) {};
    \node[style=none] (TBu2) at ( 0.72, 0.55) {};
    \node[style=none] (TBt1) at ( 0.58, 1.95) {};
    \node[style=none] (TBt2) at ( 0.72, 1.95) {};
    \node[style=none] (TBlab) at ( 0.98, 1.00) {$\scriptstyle \m{B}$};

    \node[style=none] (iff) at (2.45, 1.00) {$\Longleftrightarrow$};


    \node[style=none] (ReBL) at (3.75, 1.40) {};
    \node[style=none] (ReBR) at (5.75, 1.40) {};
    \node[style=none] (ReT)  at (4.75, 2.90) {};
    \node[style=none] (Re)   at (4.75, 2.15) {$e$};

    \node[style=none] (RAu1) at (4.23, 1.40) {};
    \node[style=none] (RAu2) at (4.37, 1.40) {};
    \node[style=none] (RAd1) at (4.23, 0.05) {};
    \node[style=none] (RAd2) at (4.37, 0.05) {};
    \node[style=none] (RAlab) at (3.98, 1.00) {$\scriptstyle \m{A}$};

    \node[style=none] (RBu1) at (5.13, 1.40) {};
    \node[style=none] (RBu2) at (5.27, 1.40) {};
    \node[style=none] (RBd1) at (5.13, 0.05) {};
    \node[style=none] (RBd2) at (5.27, 0.05) {};
    \node[style=none] (RBlab) at (5.55, 1.00) {$\scriptstyle \m{B}$};

    \node[style=none] (eqR) at (6.55, 1.00) {$=$};

    \node[style=none] (RePBL) at (7.45, 1.40) {};
    \node[style=none] (RePBR) at (9.45, 1.40) {};
    \node[style=none] (RePT)  at (8.45, 2.90) {};
    \node[style=none] (ReP)   at (8.45, 2.15) {$e'$};

    \node[style=none] (NAu1) at (7.93, 1.40) {};
    \node[style=none] (NAu2) at (8.07, 1.40) {};
    \node[style=none] (NAd1) at (7.93, 0.05) {};
    \node[style=none] (NAd2) at (8.07, 0.05) {};
    \node[style=none] (NAlab) at (7.68, 1.00) {$\scriptstyle \m{A}$};

    \node[style=none] (NBu1) at (8.83, 1.40) {};
    \node[style=none] (NBu2) at (8.97, 1.40) {};
    \node[style=none] (NBd1) at (8.83, 0.05) {};
    \node[style=none] (NBd2) at (8.97, 0.05) {};
    \node[style=none] (NBlab) at (9.35, 1.00) {$\scriptstyle \m{B}$};

  \end{pgfonlayer}

  \begin{pgfonlayer}{edgelayer}

    \draw (LeBL.center) to (LeBR.center);
    \draw (LeBR.center) to (LeT.center);
    \draw (LeT.center)  to (LeBL.center);

    \draw[line width=0.6pt] (LAu1.center) to (LAd1.center);
    \draw[line width=0.6pt] (LAu2.center) to (LAd2.center);
    \draw[line width=0.6pt] (LBu1.center) to (LBd1.center);
    \draw[line width=0.6pt] (LBu2.center) to (LBd2.center);

    \draw (LwTL.center) to (LwTR.center);
    \draw (LwTR.center) to (LwB.center);
    \draw (LwB.center)  to (LwTL.center);

    \draw (LnuTL.center) to (LnuTR.center);
    \draw (LnuTR.center) to (LnuB.center);
    \draw (LnuB.center)  to (LnuTL.center);

    \draw (LePBL.center) to (LePBR.center);
    \draw (LePBR.center) to (LePT.center);
    \draw (LePT.center)  to (LePBL.center);

    \draw[line width=0.6pt] (PAu1.center) to (PAd1.center);
    \draw[line width=0.6pt] (PAu2.center) to (PAd2.center);
    \draw[line width=0.6pt] (PBu1.center) to (PBd1.center);
    \draw[line width=0.6pt] (PBu2.center) to (PBd2.center);

    \draw (PwTL.center) to (PwTR.center);
    \draw (PwTR.center) to (PwB.center);
    \draw (PwB.center)  to (PwTL.center);

    \draw (PnuTL.center) to (PnuTR.center);
    \draw (PnuTR.center) to (PnuB.center);
    \draw (PnuB.center)  to (PnuTL.center);

    \draw (TwTL.center) to (TwTR.center);
    \draw (TwTR.center) to (TwB.center);
    \draw (TwB.center)  to (TwTL.center);

    \draw (TnTL.center) to (TnTR.center);
    \draw (TnTR.center) to (TnB.center);
    \draw (TnB.center)  to (TnTL.center);

    \draw[line width=0.6pt] (TAu1.center) to (TAt1.center);
    \draw[line width=0.6pt] (TAu2.center) to (TAt2.center);
    \draw[line width=0.6pt] (TBu1.center) to (TBt1.center);
    \draw[line width=0.6pt] (TBu2.center) to (TBt2.center);

    \draw (ReBL.center) to (ReBR.center);
    \draw (ReBR.center) to (ReT.center);
    \draw (ReT.center)  to (ReBL.center);

    \draw[line width=0.6pt] (RAu1.center) to (RAd1.center);
    \draw[line width=0.6pt] (RAu2.center) to (RAd2.center);
    \draw[line width=0.6pt] (RBu1.center) to (RBd1.center);
    \draw[line width=0.6pt] (RBu2.center) to (RBd2.center);

    \draw (RePBL.center) to (RePBR.center);
    \draw (RePBR.center) to (RePT.center);
    \draw (RePT.center)  to (RePBL.center);

    \draw[line width=0.6pt] (NAu1.center) to (NAd1.center);
    \draw[line width=0.6pt] (NAu2.center) to (NAd2.center);
    \draw[line width=0.6pt] (NBu1.center) to (NBd1.center);
    \draw[line width=0.6pt] (NBu2.center) to (NBd2.center);

  \end{pgfonlayer}
\end{tikzpicture}
}}
\end{equation}

\end{definition}
In other words, a composite system obeys tomographic locality if one can always distinguish composite states (effects) by local effects (resp.~states).\footnote{For general processes, i.e., transformations that might have nontrivial inputs or outputs, the natural definition requires that every process with at least two (nontrivial) inputs and two (nontrivial) outputs can be distinguished when acting on product states \emph{and} product effects~\cite{Schmid2024structuretheorem}. Nevertheless, in finite dimensions, it is possible to show that a theory obeys tomographic locality according to this general definition if and only if  Eq.~\eqref{eq:TLStates} (or, equivalently, Eq.~\eqref{eq:TLEffects}) holds. } Notice that bare quantum theory is indeed tomographically local: any state, even if entangled, can be distinguished by the probabilities it generates for product effects. For example, the Bell state  $\ketbra{\Phi_+}{\Phi_+}$ is the only one which assigns probability $1$ for perfectly correlated outcomes for $XX$ and $ZZ$ measurements.
A simple signature of tomographic locality is given by the dimension of the joint state space~\cite{barrett_informationGPTs_2006}: the composition is tomographically local if and only if ${\rm dim}(AB) = {\rm dim}(A){\rm dim(B)}$. For instance, in the case of unrestricted quantum systems, the space of a bipartite system $\mathcal{AB}$ in which $\mathcal{A}$ is a $d_A$-levels quantum system and $\mathcal{B}$ is a $d_B-$levels quantum system, we have that $\mathsf{dim}(AB)=(d_Ad_B)^2$ which is the same as $\mathsf{dim}(A)\mathsf{dim}(B)=d_A^2d_B^2$.

Examples of tomographically local theories include quantum theory, classical theory, and Spekkens toy theory~\cite{toytheory}. Examples of tomographically nonlocal theories include fermionic quantum theory~\cite{darianoFermionic2014,Dariano2014feynmanproblem}, real quantum theory~\cite{Wootters_2010,Chiribella_QuantumFromPrinciples2016}, twirled worlds~\cite{centeno2024twirledworldssymmetryinducedfailures} and swirled worlds~\cite{ying2025quantumtheoryneedscomplex}, and bilocal classical theory~\cite{d2020classicality}.

\subsection{Entanglement}

An important feature of physical theories is the existence of entanglement, or lack thereof. Entanglement was first noticed in quantum systems {} and for a long time was considered a striking signature of quantumness. In fact, it is a common feature in GPT systems, i.e., GPT systems often exhibit entanglement -- even  in  some theories where every local state space is simplicial~\cite{d2020classicality,PhysRevA.101.042118}. 
From Definition~\ref{def: CompositionRequirements}, we can now define the separable-entangled boundary between states, and between effects, in a given GPT.

\begin{definition}[Entangled states]
\label{def: EntangledStates}
    States $\omega^{\m{AB}}\in \Omega_{\m{AB}}$ of a composite system $\mathcal{AB}$ are {\em separable} if and only if they can be written in the form
    \begin{align}
        \omega^{\m{AB}} = \sum_i p_i \omega^{\m{A}}_i\boxtimes\nu^B_i,
    \end{align}
    with local states $\{\omega^{\m{A}}_i\}_i\subset \Omega_{\m{A}}$ and $\{\nu^B_i\}_i\subset \Omega_{\m{B}}$, and a probability distribution $\{p_i\}_i$.  States $\omega^{\m{AB}}$ that cannot be written in this form are called {\em entangled states}.
\end{definition}
Diagrammatically, normalized separable  states are defined through
\begin{equation}
\vcenter{\hbox{%
\begin{tikzpicture}
	\begin{pgfonlayer}{nodelayer}

		\node [style=none] (TA)  at (-0.6,0.8) {};
		\node [style=none] (TB)  at (0.6,0.8) {};
		\node [style=none] (TA2) at (-0.6,0.0) {};
		\node [style=none] (TB2) at (0.6,0.0) {};

		\node [style=none] (Alabel) at (-0.9,0.45) {$\scriptstyle \m{A}$};
		\node [style=none] (Blabel) at (0.9,0.45) {$\scriptstyle \m{B}$};

		\node [style=none] (TL) at (-0.9,0.0) {};
		\node [style=none] (TR) at (0.9,0.0) {};
		\node [style=none] (TC) at (0,-1.0) {};
		\node [style=none] (W)  at (0,-0.45) {$\omega$};

	\end{pgfonlayer}

	\begin{pgfonlayer}{edgelayer}

		\draw[double] (TA.center) -- (TA2.center);
		\draw[double] (TB.center) -- (TB2.center);

		\draw (TL.center) -- (TR.center);
		\draw (TR.center) -- (TC.center);
		\draw (TC.center) -- (TL.center);

	\end{pgfonlayer}
\end{tikzpicture}%
}}
\in \mathsf{Sep}(\Omega_{\m{AB}})
 \iff \vcenter{\hbox{%
\begin{tikzpicture}
	\begin{pgfonlayer}{nodelayer}

		\node[style=none] (LTA)  at (-0.6,0.9) {};
		\node[style=none] (LTB)  at ( 0.6,0.9) {};
		\node[style=none] (LTA2) at (-0.6,0.0) {};
		\node[style=none] (LTB2) at ( 0.6,0.0) {};

		\node[style=none] (LAlab)  at (-0.95,0.55) {$\scriptstyle \m{A}$};
		\node[style=none] (LBblab) at ( 0.95,0.55) {$\scriptstyle \m{B}$};

		\node[style=none] (LTL) at (-0.9,0.0) {};
		\node[style=none] (LTR) at ( 0.9,0.0) {};
		\node[style=none] (LTC) at ( 0.0,-1.0) {};
		\node[style=none] (LW)  at ( 0.0,-0.45) {$\omega$};

		\node[style=none] (Eq)   at (2.00,-0.10) {$=$};
		\node[style=none] (Sum)  at (3.25,-0.10) {$\sum_i p_i$};

		\def\DW{0.08}

		\node[style=none] (RAwireTopL) at (4.70-\DW,0.9) {};
		\node[style=none] (RAwireTopR) at (4.70+\DW,0.9) {};
		\node[style=none] (RAwireBotL) at (4.70-\DW,0.0) {};
		\node[style=none] (RAwireBotR) at (4.70+\DW,0.0) {};
		\node[style=none] (RAwireLab)  at (4.40,0.55) {$\scriptstyle \m{A}$};

		\node[style=none] (RATL) at (4.17,0.0) {};
		\node[style=none] (RATR) at (5.23,0.0) {};
		\node[style=none] (RATC) at (4.70,-0.95) {};
		\node[style=none] (RAom) at (4.70,-0.34) {$\omega_i$};

		\node[style=none] (RBwireTopL) at (6.05-\DW,0.9) {};
		\node[style=none] (RBwireTopR) at (6.05+\DW,0.9) {};
		\node[style=none] (RBwireBotL) at (6.05-\DW,0.0) {};
		\node[style=none] (RBwireBotR) at (6.05+\DW,0.0) {};
		\node[style=none] (RBwireLab)  at (6.35,0.55) {$\scriptstyle \m{B}$};

		\node[style=none] (RBTL) at (5.52,0.0) {};
		\node[style=none] (RBTR) at (6.58,0.0) {};
		\node[style=none] (RBTC) at (6.05,-0.95) {};
		\node[style=none] (RBom) at (6.05,-0.34) {$\nu_i$};

        \node[style=none] (Punctuation) at (7.0,0.0) {,};
	\end{pgfonlayer}

	\begin{pgfonlayer}{edgelayer}

		\draw[double] (LTA.center) -- (LTA2.center);
		\draw[double] (LTB.center) -- (LTB2.center);

		\draw (LTL.center) -- (LTR.center);
		\draw (LTR.center) -- (LTC.center);
		\draw (LTC.center) -- (LTL.center);

		\draw[line width=0.6pt] (RAwireTopL.center) -- (RAwireBotL.center);
		\draw[line width=0.6pt] (RAwireTopR.center) -- (RAwireBotR.center);
		\draw (RATL.center) -- (RATR.center);
		\draw (RATR.center) -- (RATC.center);
		\draw (RATC.center) -- (RATL.center);

		\draw[line width=0.6pt] (RBwireTopL.center) -- (RBwireBotL.center);
		\draw[line width=0.6pt] (RBwireTopR.center) -- (RBwireBotR.center);
		\draw (RBTL.center) -- (RBTR.center);
		\draw (RBTR.center) -- (RBTC.center);
		\draw (RBTC.center) -- (RBTL.center);

	\end{pgfonlayer}
\end{tikzpicture}%
}}
\end{equation}
where $\{p_i\}$ is a probability distribution, $\mathsf{Sep}[\Omega_{\m{AB}}]$ is the set of separable states and, again, $\{\omega_i^{\m{A}}\}_i\subset\Omega^{\m{A}}$ and $\{\nu_i^{\m{B}}\}\subset \Omega_{\m{B}}$.

Finally,  states are entangled if and only if
\begin{equation}
\label{eq:NotSeparableDiagram}
\vcenter{\hbox{%
\begin{tikzpicture}
	\begin{pgfonlayer}{nodelayer}

		\node [style=none] (TA)  at (-0.6,0.8) {};
		\node [style=none] (TB)  at (0.6,0.8) {};
		\node [style=none] (TA2) at (-0.6,0.0) {};
		\node [style=none] (TB2) at (0.6,0.0) {};

		\node [style=none] (Alabel) at (-0.9,0.45) {$\scriptstyle \m{A}$};
		\node [style=none] (Blabel) at (0.9,0.45) {$\scriptstyle \m{B}$};

		\node [style=none] (TL) at (-0.9,0.0) {};
		\node [style=none] (TR) at (0.9,0.0) {};
		\node [style=none] (TC) at (0,-1.0) {};
		\node [style=none] (W)  at (0,-0.45) {$\omega$};

	\end{pgfonlayer}

	\begin{pgfonlayer}{edgelayer}

		\draw[double] (TA.center) -- (TA2.center);
		\draw[double] (TB.center) -- (TB2.center);

		\draw (TL.center) -- (TR.center);
		\draw (TR.center) -- (TC.center);
		\draw (TC.center) -- (TL.center);

	\end{pgfonlayer}
\end{tikzpicture}%
}}
\notin \mathsf{Sep}(\Omega_{\m{AB}}).
\end{equation}

\begin{definition}[Entangled effects]
\label{def: EntangledEffects}
    Effects $e^{\m{AB}}\in E_{\m{AB}}$ of a composite system $\mathcal{AB}$ are {\em separable} if and only if they can be written in the form
    \begin{align}
        e^{\m{AB}} = \sum_i  e^{\m{A}}_i\boxtimes e^{\m{B}}_i,
    \end{align}
    with local effects $\{e^{\m{A}}_i\}_i\subset E_{\m{A}}$ and $\{e^{\m{B}}_i\}_i\subset E_{\m{B}}$.  Effects $e^{\m{AB}}$ that cannot be written in this form are called {\em entangled effects}.
\end{definition}

Diagrammatically, 
separable effects are defined through
\begin{equation}
\vcenter{\hbox{%
\begin{tikzpicture}
	\begin{pgfonlayer}{nodelayer}

		\node [style=none] (BA)  at (-0.6,-0.8) {};
		\node [style=none] (BB)  at (0.6,-0.8) {};
		\node [style=none] (BA2) at (-0.6,0.0) {};
		\node [style=none] (BB2) at (0.6,0.0) {};

		\node [style=none] (Alabel) at (-0.9,-0.45) {$\scriptstyle \m{A}$};
		\node [style=none] (Blabel) at (0.9,-0.45) {$\scriptstyle \m{B}$};

		\node [style=none] (TL) at (-0.9,0.0) {};
		\node [style=none] (TR) at (0.9,0.0) {};
		\node [style=none] (TC) at (0,1.0) {};
		\node [style=none] (E)  at (0,0.45) {$e$};

	\end{pgfonlayer}

	\begin{pgfonlayer}{edgelayer}

		\draw[double] (BA.center) -- (BA2.center);
		\draw[double] (BB.center) -- (BB2.center);

		\draw (TL.center) -- (TR.center);
		\draw (TR.center) -- (TC.center);
		\draw (TC.center) -- (TL.center);

	\end{pgfonlayer}
\end{tikzpicture}%
}}
\in \mathsf{Sep}(E_{\m{AB}})
 \iff \vcenter{\hbox{%
\begin{tikzpicture}
	\begin{pgfonlayer}{nodelayer}

		\def\DW{0.08}

		\node[style=none] (LTA)  at (-0.6,0.0) {};
		\node[style=none] (LTB)  at ( 0.6,0.0) {};
		\node[style=none] (LTA2) at (-0.6,-0.9) {};
		\node[style=none] (LTB2) at ( 0.6,-0.9) {};

		\node[style=none] (LAlab)  at (-0.95,-0.45) {$\scriptstyle \m{A}$};
		\node[style=none] (LBblab) at ( 0.95,-0.45) {$\scriptstyle \m{B}$};

		\node[style=none] (LTL) at (-0.9,0.0) {};
		\node[style=none] (LTR) at ( 0.9,0.0) {};
		\node[style=none] (LTT) at ( 0.0,1.0) {};
		\node[style=none] (LE)  at ( 0.0,0.45) {$e$};

		\node[style=none] (Eq)   at (2.00,0.10) {$=$};
		\node[style=none] (Sum)  at (3.25,0.10) {$\sum_i $ };


		\node[style=none] (RAwireTopL) at (4.70-\DW,0.0) {};
		\node[style=none] (RAwireTopR) at (4.70+\DW,0.0) {};
		\node[style=none] (RAwireBotL) at (4.70-\DW,-0.9) {};
		\node[style=none] (RAwireBotR) at (4.70+\DW,-0.9) {};
		\node[style=none] (RAwireLab)  at (4.40,-0.45) {$\scriptstyle \m{A}$};

		\node[style=none] (RATL) at (4.17,0.0) {};
		\node[style=none] (RATR) at (5.23,0.0) {};
		\node[style=none] (RATT) at (4.70,0.95) {};
		\node[style=none] (RAe)  at (4.70,0.38) {$e_i$};

		\node[style=none] (RBwireTopL) at (6.05-\DW,0.0) {};
		\node[style=none] (RBwireTopR) at (6.05+\DW,0.0) {};
		\node[style=none] (RBwireBotL) at (6.05-\DW,-0.9) {};
		\node[style=none] (RBwireBotR) at (6.05+\DW,-0.9) {};
		\node[style=none] (RBwireLab)  at (6.35,-0.45) {$\scriptstyle \m{B}$};

		\node[style=none] (RBTL) at (5.52,0.0) {};
		\node[style=none] (RBTR) at (6.58,0.0) {};
		\node[style=none] (RBTT) at (6.05,0.95) {};
		\node[style=none] (RBf)  at (6.05,0.38) {$f_i$};

	\end{pgfonlayer}

	\begin{pgfonlayer}{edgelayer}

		\draw[double] (LTA.center) -- (LTA2.center);
		\draw[double] (LTB.center) -- (LTB2.center);

		\draw (LTL.center) -- (LTR.center);
		\draw (LTR.center) -- (LTT.center);
		\draw (LTT.center) -- (LTL.center);

		\draw[line width=0.6pt] (RAwireTopL.center) -- (RAwireBotL.center);
		\draw[line width=0.6pt] (RAwireTopR.center) -- (RAwireBotR.center);
		\draw (RATL.center) -- (RATR.center);
		\draw (RATR.center) -- (RATT.center);
		\draw (RATT.center) -- (RATL.center);

		\draw[line width=0.6pt] (RBwireTopL.center) -- (RBwireBotL.center);
		\draw[line width=0.6pt] (RBwireTopR.center) -- (RBwireBotR.center);
		\draw (RBTL.center) -- (RBTR.center);
		\draw (RBTR.center) -- (RBTT.center);
		\draw (RBTT.center) -- (RBTL.center);

	\end{pgfonlayer}
\end{tikzpicture}%
}} 
\end{equation}
where $\{e^{\m{A}}_i\}_i\subset E_{\m{A}}$ and $\{e^{\m{B}}_i\}_i\subset E_{\m{B}}$ are local effects, \blk
and  effects are entangled if and only if
\begin{equation}
\label{eq:NotSeparableEffectDiagram}
\vcenter{\hbox{%
\begin{tikzpicture}
	\begin{pgfonlayer}{nodelayer}

		\node [style=none] (BA)  at (-0.6,-0.8) {};
		\node [style=none] (BB)  at (0.6,-0.8) {};
		\node [style=none] (BA2) at (-0.6,0.0) {};
		\node [style=none] (BB2) at (0.6,0.0) {};

		\node [style=none] (Alabel) at (-0.9,-0.45) {$\scriptstyle \m{A}$};
		\node [style=none] (Blabel) at (0.9,-0.45) {$\scriptstyle \m{B}$};

		\node [style=none] (TL) at (-0.9,0.0) {};
		\node [style=none] (TR) at (0.9,0.0) {};
		\node [style=none] (TC) at (0,1.0) {};
		\node [style=none] (E)  at (0,0.45) {$e$};

	\end{pgfonlayer}

	\begin{pgfonlayer}{edgelayer}

		\draw[double] (BA.center) -- (BA2.center);
		\draw[double] (BB.center) -- (BB2.center);

		\draw (TL.center) -- (TR.center);
		\draw (TR.center) -- (TC.center);
		\draw (TC.center) -- (TL.center);

	\end{pgfonlayer}
\end{tikzpicture}%
}}
\notin \mathsf{Sep}(E_{\m{AB}}) \,.
\end{equation}

The above definitions provide a mathematical characterization of entangled states and effects, inspired by the traditional quantum notions. Nevertheless, one could aim for operational notions of separable states and effects, i.e., to consider a particular set of free local operations and the entangled states and effects would be those that cannot be generated through such free operations. For instance, one could consider local operations and classical communication to be free, or local operations and shared randomness instead. Depending on the set of free operations one chooses, different sets of entangled states and effects might arise, and therefore might not coincide to those provided by Defs.~\ref{def: EntangledStates} and ~\ref{def: EntangledEffects}. We suspect that the entangled states defined with respect to LOCC and LOSR might coincide with that of Def.~\ref{def: EntangledStates}, while differences might arise for entangled effects.  In any case, we leave the analysis of these different definitions and their impact for future research, and the natural mathematical generalization provided by Defs.~\ref{def: EntangledEffects} and ~\ref{def: EntangledStates} will be sufficient for our purposes in this work.

One particular aspect that also carries over from the quantum description to the GPT description is that pure states are separable only if they are product states, as is proven\footnote{An analogous statement holds for effects, though with some subtleties: ray-extremal effects are separable if and only if they are product, where ray-extremal means that the effects are extremal in the cone $\mathsf{cone}[E_{\m{AB}}]$. Note that Ref.~\cite{Baldijao_QDarwinisminGPTs_2022} calls ray-extremal states by pure. Additionally, note that Ref.~\cite{Baldijao_QDarwinisminGPTs_2022} assumes that composition preserves purity, i.e.,  $s^{\m{A}}\otimes r^{\m{B}}$ is pure whenever $s^{\m{A}}$ and $r^{\m{B}}$ are pure, which we need not assume in this work. With this extra condition, a separable state is pure if and only if it is product, with the extra implication being trivial.} in Ref.~\cite{Baldijao_QDarwinisminGPTs_2022}. Note that the definition of entanglement is relative to the given theory or the full GPT system.
Indeed, one can define what a subsystem would be in the GPT setting~\cite{schmid2024shadowssubsystemsgeneralizedprobabilistic}, and a state might be entangled for the subsystem and not be entangled for the full system, as in fact happens for real quantum theory systems, which are subsystems of complex quantum systems; there are states that are entangled in the former and separable in the latter.

\section{Tools for exploring the absence of tomographic locality}
In this Section, we propose a few definitions to explore the failure of tomographic locality, which will be needed for refining the notion of entanglement in GPTs.

\subsection{Decomposition of vector spaces and holistic subspaces}

\begin{lemma}[$AB$ contains $A\otimes B$]
\label{lemma:AtBplusC} Given any composite system $\mathcal{AB}$ of two GPT systems $\mathcal{A}$ and $\mathcal{B}$, the span of product states (effects) is isomorphic to the algebraic tensor product $A\otimes B$ (resp., $A^*\otimes B^*$). i.e., the following isomorhpisms hold: 

\begin{equation}
\label{eq:TPStates}
        \mathrm{Span}{\lbrace\omega^{\m{A}}\boxtimes\nu^{\m{B}}\mid\omega^{\m{A}}\in S_{\mathcal{A}},\nu^{\m{B}}\in S_{\mathcal{B}}\rbrace}
        \cong
        A\otimes B
    \end{equation}
    and 
    \begin{equation}
    \label{eq:TPEffects}
        \mathrm{Span}{\lbrace e^{\m{A}}\boxtimes f^{\m{B}}\mid e^{\m{A}}\in E_{\mathcal{A}},f^{\m{B}}\in E_{\mathcal{B}}\rbrace}
        \cong
        A^*\otimes B^*
        ,
    \end{equation}
where $\otimes$ denotes the algebraic tensor product of real vector spaces. 
Moreover, the real vector spaces associated with the composite $\mathcal{AB}$ have the form
\begin{align}
    AB \cong A\otimes B\oplus C
    \label{eq:AssumpComposition}
\end{align} 
and 
\begin{align}
    (AB)^* \cong A^*\otimes B^*\oplus D
    \label{eq:AssumpCompositionE}
\end{align} 
\blk
The composition is tomographically local iff $C=\{0\}$(or, equivalently, iff $D=\{0\}$), so that $\mathsf{Span}[\Omega_{\m{AB}}] \cong A \otimes B$ (equivalently, $\mathsf{Span}[E_{\m{AB}}] \cong A^* \otimes B^*$).
\end{lemma}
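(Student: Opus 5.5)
The plan is to use the universal property of the algebraic tensor product and then show injectivity using the product-statistics condition. The bilinear map $\boxtimes: A\times B\to AB$ induces a unique linear map $\phi: A\otimes B\to AB$ with $\phi(a\otimes b)=a\boxtimes b$. Its image is exactly $\mathrm{Span}\{\omega^{\m A}\boxtimes\nu^{\m B}\}$, because $S_{\m A}$ spans $A$ and $S_{\m B}$ spans $B$, and bilinearity then shows that images of spanning sets span the image. Likewise $\psi: A^*\otimes B^*\to (AB)^*$, with $\psi(e\otimes f)=e\boxtimes f$, has image the span of product effects, using the convention $\mathrm{Span}(E_{\m A})=A^*$.

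The key step is injectivity of $\phi$, which I expect to be the main point requiring care. Take $x=\sum_k a_k\otimes b_k\in\ker\phi$. Then for all product effects $e\boxtimes f$ we have $(e\boxtimes f)[\phi(x)]=0$. By condition 1 of Definition~\ref{def: CompositionRequirements}, extended by bilinearity from states to all of $A\times B$, this equals $\sum_k e(a_k)f(b_k)=(e\otimes f)(x)$. Since product effects $e\otimes f$ span $A^*\otimes B^*\cong(A\otimes B)^*$ in finite dimensions, we get $x=0$. The same argument with the roles of states and effects exchanged gives injectivity of $\psi$. This establishes Eqs.~\eqref{eq:TPStates} and~\eqref{eq:TPEffects}.

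The decompositions Eq.~\eqref{eq:AssumpComposition} and Eq.~\eqref{eq:AssumpCompositionE} then follow by choosing any complement $C$ of the subspace $\phi(A\otimes B)$ in the finite-dimensional space $AB=\mathrm{Span}[S_{\m{AB}}]$, and similarly $D$ in $(AB)^*$. Note that $\dim D=\dim C$, since $\dim(AB)^*=\dim AB$.

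For the TL characterization, first suppose $C=\{0\}$. Then product effects span $(A\otimes B)^*\cong(AB)^*$, so they separate states, which is Eq.~\eqref{eq:TLStates}. Since $\dim D=\dim C=0$, product states span $AB$ and separate effects, which is Eq.~\eqref{eq:TLEffects}. Conversely, suppose $C\neq\{0\}$. Then the span $P$ of product effects has dimension $\dim A\dim B<\dim AB$. Its annihilator in $AB$ therefore contains a nonzero $v$. Because $S_{\m{AB}}$ spans $AB$ and contains $0$, we can write $v$ as a difference of positive multiples of states, and after rescaling obtain distinct states $\omega\neq\nu$ with $\omega-\nu\propto v$. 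These states agree on all product effects, so TL fails. Thus TL holds iff $C=\{0\}$, and hence iff $D=\{0\}$.

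The subtle point in this last step is ensuring $\omega$ and $\nu$ are genuine (possibly subnormalized) states. I will handle this by using that the cone generated by $S_{\m{AB}}$ is generating, and scaling $v$ small enough that $v=\lambda(\omega-\nu)$ with both $\omega$ and $\nu$ in $S_{\m{AB}}$.
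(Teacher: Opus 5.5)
Your proposal is correct and follows the paper's proof essentially step for step: you use the linear map $A\otimes B\to AB$ induced by the universal property, prove injectivity via $(e\boxtimes f)\circ\widetilde{\boxtimes}=e\otimes f$ together with the fact that product effects span $(A\otimes B)^*$, and then take an arbitrary complement $C$ (and likewise $D$ for effects). The one place you go beyond the paper is the ``tomographically local iff $C=\{0\}$'' clause, which the appendix proof leaves implicit (the main text only invokes Barrett's dimension criterion); your dimension count plus annihilator argument fills that in correctly, including the rescaling $v=\lambda(\omega-\nu)$ that keeps $\omega,\nu$ inside $S_{\mathcal{AB}}=\mathsf{ConvHull}[0,\Omega_{\mathcal{AB}}]$.
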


\begin{proof}
A proof can be found in Ref.~\cite{Barnum_CathegoriesJordanAlgebras_2020}. We give an adapted proof in  Appendix~\ref{compositionproofs}.
\end{proof}
Eqs.~\eqref{eq:TPStates} and \eqref{eq:TPEffects} suggest the following shorthand notation: $AB_{\otimes}:=\mathrm{Span}{\lbrace\omega^{\m{A}}\boxtimes\nu^{\m{B}}\mid\omega^{\m{A}}\in S_{\mathcal{A}},\nu^{\m{B}}\in S_{\mathcal{B}}\rbrace}$ and $(AB)^*_{\otimes}:=\mathrm{Span}{\lbrace e^{\m{A}}\boxtimes f^{\m{B}}\mid e^{\m{A}}\in E_{\mathcal{A}},f^{\m{B}}\in E_{\mathcal{B}}\rbrace}$, where $AB_\otimes$ is an isomorphic copy of $A\otimes B$ that lives in $AB$ and similarly for $(AB)^*_\otimes$. We call $AB_\otimes$ the tomographically-local subspace of $AB$, and $(AB)^*_\otimes$ the tomographically-local subspace of $(AB)^*$.
Lemma~\ref{lemma:AtBplusC} shows that any composition satisfying the requirements in Def.~\ref{def: CompositionRequirements} necessarily maps the subsystems to another system that contains at least  $AB_{\otimes}$ as a subspace of $AB$ (and $(AB)^*_{\otimes}$ as a subspace of $(AB)^*$). It is only when the composed system obeys tomographic locality that $AB$ is the span of product states, and thus $AB=AB_\otimes$ (the analogous  statement holding for the effect spaces).
This shows that for GPT systems, we should keep a distinction among the two operations $\boxtimes$ and $\otimes$.

Notice that the vector spaces $AB$ and $AB_\otimes$  (similarly for their analogue for effects) are spanned by the states (resp.~effects) of the system, which means they may contain elements that do not correspond to states (resp.~effects). For instance, some vectors in $AB$ may give a negative number when composed with some effect in $E_{AB}$. This fact will be leveraged throughout this manuscript, especially when decomposing states and effects as linear combinations of elements in these vector spaces, yielding decompositions where the individual terms cannot be assumed to be states or effects themselves. \blk 

Note that for a given $A$, $B$, and $V:=A\otimes B\oplus C$, there may be more than one choice of $C$ for which the above decomposition holds, even when the inclusion of $A\otimes B$ in $V$ is already fixed. This is just an instance of the general fact that, for a vector space $V$ and a nontrivial strict subspace $U\subset V$, there are many distinct subspaces $W$ such that $V=U\oplus W$. For instance, consider a three-dimensional vector space $V$. Since it is isomorphic to $\mathbb{R}^3$, we can describe its elements by  $(x,y,z)$ with $x,y,z\in\mathbb{R}$. Now take $U$ to be the xy plane, i.e., the vectors with the form $(x,y,0)$. Any one-dimensional vector space $W$ within $V$ which is not contained in the xy plane satisfies $V=U\oplus W$. For instance,   $W:=\{(0,0,z)|z\in\mathbb{R}\}$ and $W':=\{(a,a,a)|a\in\mathbb{R}\}$ are two such subspaces. The `decompositions' of the composite vector space of $\m{AB}$ expressed in Eqs.~\eqref{eq:AssumpComposition} and ~\eqref{eq:AssumpCompositionE} will be essential from now on. 

Consider a composed system in which $C$ is not trivial (so tomographic locality necessarily fails in the theory describing such a GPT system).
The fact that the composite vector space (the span of the composite states) is larger than $AB_\otimes$ entails that composite systems possess some degrees of freedom that are `global' in that they cannot be fully accessed locally. This idea has been described in the GPT literature in the past~\cite{Sainz_2018,barrett_informationGPTs_2006,Muller_2012, Hardy_2011,hardy2009foliableoperationalstructuresgeneral}. We propose the following definition to capture this idea.
\begin{definition}[Holistic subspaces]
\label{def:HolisticSubspaces}
    Consider a composite system $\mathcal{AB}$ and consider the local effect spaces   
    $E_{\m{A}}$ and $E_{\m{B}}$.
   
    The \emph{holistic-state subspace}, $H_S$ is defined by 
\begin{equation}
\label{eq:HolisticSubspaceDiagrammatic_tensor_corrected}
H_S 
:=\left\{
\vcenter{\hbox{%
\begin{tikzpicture}[baseline={(base.center)}]
  \begin{pgfonlayer}{nodelayer}
    \node[style=none] (base) at (0,0) {};

    \node[style=none] (TA)  at (-0.55,1.05) {};
    \node[style=none] (TB)  at ( 0.55,1.05) {};
    \node[style=none] (TA2) at (-0.55,0.00) {};
    \node[style=none] (TB2) at ( 0.55,0.00) {};

    \node[style=none] (Alab) at (-0.90,0.65) {$\scriptstyle \m{A}$};
    \node[style=none] (Blab) at ( 0.90,0.65) {$\scriptstyle \m{B}$};

    \node[style=none] (TL) at (-0.90,0.00) {};
    \node[style=none] (TR) at ( 0.90,0.00) {};
    \node[style=none] (TC) at ( 0.00,-0.95) {};
    \node[style=none] (Nu) at ( 0.00,-0.45) {$\tilde{h}$};
  \end{pgfonlayer}

  \begin{pgfonlayer}{edgelayer}
    \draw[double] (TA.center) -- (TA2.center);
    \draw[double] (TB.center) -- (TB2.center);

    \draw (TL.center) -- (TR.center);
    \draw (TR.center) -- (TC.center);
    \draw (TC.center) -- (TL.center);
  \end{pgfonlayer}
\end{tikzpicture}%
}}
\in {AB}
\;\Bigm|\;
\vcenter{\hbox{%
\begin{tikzpicture}[baseline={(base.center)}]
  \begin{pgfonlayer}{nodelayer}
    \node[style=none] (base) at (0,0) {};

    \def\yTop{0.90}
    \def\yMid{0.10}
    \def\yBot{-0.85}

    \node[style=none] (eTL) at (-1.00,\yTop) {};
    \node[style=none] (eTR) at (-0.20,\yTop) {};
    \node[style=none] (eTT) at (-0.60,1.65) {};
    \node[style=none] (eLab) at (-0.60,1.25) {$e$};

    \node[style=none] (fTL) at ( 0.20,\yTop) {};
    \node[style=none] (fTR) at ( 1.00,\yTop) {};
    \node[style=none] (fTT) at ( 0.60,1.65) {};
    \node[style=none] (fLab) at ( 0.60,1.25) {$f$};

    \node[style=none] (AwL_top) at (-0.68,\yTop) {};
    \node[style=none] (AwR_top) at (-0.52,\yTop) {};
    \node[style=none] (AwL_mid) at (-0.68,\yMid) {};
    \node[style=none] (AwR_mid) at (-0.52,\yMid) {};
    \node[style=none] (AwLab) at (-0.92,0.55) {$\scriptstyle \m{A}$};

    \node[style=none] (BwL_top) at ( 0.52,\yTop) {};
    \node[style=none] (BwR_top) at ( 0.68,\yTop) {};
    \node[style=none] (BwL_mid) at ( 0.52,\yMid) {};
    \node[style=none] (BwR_mid) at ( 0.68,\yMid) {};
    \node[style=none] (BwLab) at ( 0.92,0.55) {$\scriptstyle \m{B}$};

    \node[style=none] (sTL) at (-1.10,\yMid) {};
    \node[style=none] (sTR) at ( 1.10,\yMid) {};
    \node[style=none] (sTC) at ( 0.00,\yBot) {};
    \node[style=none] (sNu) at ( 0.00,-0.30) {$\tilde{h}$};

  \end{pgfonlayer}

  \begin{pgfonlayer}{edgelayer}
    \draw (eTL.center) -- (eTR.center);
    \draw (eTR.center) -- (eTT.center);
    \draw (eTT.center) -- (eTL.center);

    \draw (fTL.center) -- (fTR.center);
    \draw (fTR.center) -- (fTT.center);
    \draw (fTT.center) -- (fTL.center);

    \draw[line width=0.6pt] (AwL_top.center) -- (AwL_mid.center);
    \draw[line width=0.6pt] (AwR_top.center) -- (AwR_mid.center);

    \draw[line width=0.6pt] (BwL_top.center) -- (BwL_mid.center);
    \draw[line width=0.6pt] (BwR_top.center) -- (BwR_mid.center);

    \draw (sTL.center) -- (sTR.center);
    \draw (sTR.center) -- (sTC.center);
    \draw (sTC.center) -- (sTL.center);

  \end{pgfonlayer}
\end{tikzpicture}%
}}
=0
\;\;\forall\;
 \vcenter{\hbox{%
            \begin{tikzpicture}[baseline={(base.center)}]
            \begin{pgfonlayer}{nodelayer}
            \node[style=none] (base) at (0,0) {};

            \def\yTop{0.00}
            \def\yBot{-0.80}

            \node[style=none] (eTL) at (-1.00,\yTop) {};
            \node[style=none] (eTR) at (-0.20,\yTop) {};
            \node[style=none] (eTT) at (-0.60,0.75) {};
            \node[style=none] (eLab) at (-0.60,0.35) {$e$};

            \node[style=none] (AwL_top) at (-0.68,\yTop) {};
            \node[style=none] (AwR_top) at (-0.52,\yTop) {};
            \node[style=none] (AwL_bot) at (-0.68,\yBot) {};
            \node[style=none] (AwR_bot) at (-0.52,\yBot) {};
            \node[style=none] (AwLab) at (-0.92,-0.40) {$\scriptstyle \m{A}$};

        \end{pgfonlayer}

        \begin{pgfonlayer}{edgelayer}
            \draw (eTL.center) -- (eTR.center);
            \draw (eTR.center) -- (eTT.center);
            \draw (eTT.center) -- (eTL.center);

            \draw[line width=0.6pt] (AwL_top.center) -- (AwL_bot.center);
            \draw[line width=0.6pt] (AwR_top.center) -- (AwR_bot.center);

        \end{pgfonlayer}
        \end{tikzpicture}%
        }}
        \in E_{\m{A}}\,,\,\vcenter{\hbox{%
        \begin{tikzpicture}[baseline={(base.center)}]
        \begin{pgfonlayer}{nodelayer}
            \node[style=none] (base) at (0,0) {};

            \def\yTop{0.00}
            \def\yBot{-0.80}

            \node[style=none] (fTL) at (-1.00,\yTop) {};
            \node[style=none] (fTR) at (-0.20,\yTop) {};
            \node[style=none] (fTT) at (-0.60,0.75) {};
            \node[style=none] (fLab) at (-0.60,0.35) {$f$};

            \node[style=none] (BwL_top) at (-0.68,\yTop) {};
            \node[style=none] (BwR_top) at (-0.52,\yTop) {};
            \node[style=none] (BwL_bot) at (-0.68,\yBot) {};
            \node[style=none] (BwR_bot) at (-0.52,\yBot) {};
            \node[style=none] (bwLab) at (-0.92,-0.40) {$\scriptstyle \m{B}$};

    \end{pgfonlayer}

    \begin{pgfonlayer}{edgelayer}
    \draw (fTL.center) -- (fTR.center);
    \draw (fTR.center) -- (fTT.center);
    \draw (fTT.center) -- (fTL.center);

    \draw[line width=0.6pt] (AwL_top.center) -- (BwL_bot.center);
    \draw[line width=0.6pt] (AwR_top.center) -- (BwR_bot.center);

  \end{pgfonlayer}
    \end{tikzpicture}%
    }}
    \in E_{\m{B}}
\right\}.
\end{equation}

    Similarly, consider the local state spaces $\m{S_A}$ and $\m{S_{B}}$. Then, the \emph{holistic-effect subspace}, $H_E$, is defined by
\begin{equation}
\label{eq:HolisticEffectSubspaceDiagrammatic_tensor}
H_{E}
:=\left\{
\vcenter{\hbox{%
\begin{tikzpicture}[baseline={(base.center)}]
  \begin{pgfonlayer}{nodelayer}
    \node[style=none] (base) at (0,0) {};

    \node[style=none] (hTL) at (-0.90,0.00) {};
    \node[style=none] (hTR) at ( 0.90,0.00) {};
    \node[style=none] (hTT) at ( 0.00,1.00) {};
    \node[style=none] (hLab) at ( 0.00,0.45) {$\tilde{h}$};

    \node[style=none] (AwL_top) at (-0.58,0.00) {};
    \node[style=none] (AwR_top) at (-0.42,0.00) {};
    \node[style=none] (AwL_bot) at (-0.58,-1.10) {};
    \node[style=none] (AwR_bot) at (-0.42,-1.10) {};
    \node[style=none] (Alab) at (-0.90,-0.55) {$\scriptstyle \m{A}$};

    \node[style=none] (BwL_top) at ( 0.42,0.00) {};
    \node[style=none] (BwR_top) at ( 0.58,0.00) {};
    \node[style=none] (BwL_bot) at ( 0.42,-1.10) {};
    \node[style=none] (BwR_bot) at ( 0.58,-1.10) {};
    \node[style=none] (Blab) at ( 0.90,-0.55) {$\scriptstyle \m{B}$};

  \end{pgfonlayer}

  \begin{pgfonlayer}{edgelayer}
    \draw (hTL.center) -- (hTR.center);
    \draw (hTR.center) -- (hTT.center);
    \draw (hTT.center) -- (hTL.center);

    \draw[line width=0.6pt] (AwL_top.center) -- (AwL_bot.center);
    \draw[line width=0.6pt] (AwR_top.center) -- (AwR_bot.center);

    \draw[line width=0.6pt] (BwL_top.center) -- (BwL_bot.center);
    \draw[line width=0.6pt] (BwR_top.center) -- (BwR_bot.center);
  \end{pgfonlayer}
\end{tikzpicture}%
}}
\in ({AB})^{*}
\;\Bigm|\;
\vcenter{\hbox{%
\begin{tikzpicture}[baseline={(base.center)}]
  \begin{pgfonlayer}{nodelayer}
    \node[style=none] (base) at (0,0) {};


    \node[style=none] (hTL) at (-0.90,0.80) {};
    \node[style=none] (hTR) at ( 0.90,0.80) {};
    \node[style=none] (hTT) at ( 0.00,1.80) {};
    \node[style=none] (hLab) at ( 0.00,1.25) {$\tilde{h}$};

    \node[style=none] (AwL_top) at (-0.58,0.80) {};
    \node[style=none] (AwR_top) at (-0.42,0.80) {};
    \node[style=none] (AwL_mid) at (-0.58,0.00) {};
    \node[style=none] (AwR_mid) at (-0.42,0.00) {};
    \node[style=none] (Alab) at (-0.92,0.35) {$\scriptstyle \m{A}$};

    \node[style=none] (BwL_top) at ( 0.42,0.80) {};
    \node[style=none] (BwR_top) at ( 0.58,0.80) {};
    \node[style=none] (BwL_mid) at ( 0.42,0.00) {};
    \node[style=none] (BwR_mid) at ( 0.58,0.00) {};
    \node[style=none] (Blab) at ( 0.92,0.35) {$\scriptstyle \m{B}$};

    \node[style=none] (oTL) at (-1.05,0.00) {};
    \node[style=none] (oTR) at (-0.10,0.00) {};
    \node[style=none] (oTC) at (-0.58,-0.95) {};
    \node[style=none] (oLab) at (-0.58,-0.35) {$\omega_i$};

    \node[style=none] (nTL) at ( 0.10,0.00) {};
    \node[style=none] (nTR) at ( 1.05,0.00) {};
    \node[style=none] (nTC) at ( 0.58,-0.95) {};
    \node[style=none] (nLab) at ( 0.58,-0.35) {$\nu_i$};

  \end{pgfonlayer}

  \begin{pgfonlayer}{edgelayer}
    \draw (hTL.center) -- (hTR.center);
    \draw (hTR.center) -- (hTT.center);
    \draw (hTT.center) -- (hTL.center);

    \draw[line width=0.6pt] (AwL_top.center) -- (AwL_mid.center);
    \draw[line width=0.6pt] (AwR_top.center) -- (AwR_mid.center);

    \draw[line width=0.6pt] (BwL_top.center) -- (BwL_mid.center);
    \draw[line width=0.6pt] (BwR_top.center) -- (BwR_mid.center);

    \draw (oTL.center) -- (oTR.center);
    \draw (oTR.center) -- (oTC.center);
    \draw (oTC.center) -- (oTL.center);

    \draw (nTL.center) -- (nTR.center);
    \draw (nTR.center) -- (nTC.center);
    \draw (nTC.center) -- (nTL.center);
  \end{pgfonlayer}
\end{tikzpicture}%
}}
=0
\;\;\forall\;
\vcenter{\hbox{%
\begin{tikzpicture}[baseline={(base.center)}]
  \begin{pgfonlayer}{nodelayer}
    \node[style=none] (base) at (0,0) {};

    \def\DW{0.08}

    \node[style=none] (wTL) at (-1.05,0.00) {};
    \node[style=none] (wTR) at (-0.10,0.00) {};
    \node[style=none] (wTC) at (-0.58,-0.95) {};
    \node[style=none] (wLab) at (-0.58,-0.35) {$\omega_i$};

    \node[style=none] (AwL_bot) at (-0.66,0.00) {};
    \node[style=none] (AwR_bot) at (-0.50,0.00) {};
    \node[style=none] (AwL_top) at (-0.66,0.95) {};
    \node[style=none] (AwR_top) at (-0.50,0.95) {};
    \node[style=none] (Alab) at (-0.92,0.55) {$\scriptstyle \m{A}$};

  \end{pgfonlayer}

  \begin{pgfonlayer}{edgelayer}
    \draw (wTL.center) -- (wTR.center);
    \draw (wTR.center) -- (wTC.center);
    \draw (wTC.center) -- (wTL.center);

    \draw[line width=0.6pt] (AwL_bot.center) -- (AwL_top.center);
    \draw[line width=0.6pt] (AwR_bot.center) -- (AwR_top.center);

  \end{pgfonlayer}
\end{tikzpicture}%
}}
\in \m{S_{A}}, 
\vcenter{\hbox{%
\begin{tikzpicture}[baseline={(base.center)}]
  \begin{pgfonlayer}{nodelayer}
    \node[style=none] (base) at (0,0) {};

    \def\DW{0.08}

    \node[style=none] (vTL) at (-1.05,0.00) {};
    \node[style=none] (vTR) at (-0.10,0.00) {};
    \node[style=none] (vTC) at (-0.58,-0.95) {};
    \node[style=none] (vLab) at (-0.58,-0.35) {$\nu_i$};

    \node[style=none] (BwL_bot) at (-0.66,0.00) {};
    \node[style=none] (BwR_bot) at (-0.50,0.00) {};
    \node[style=none] (BwL_top) at (-0.66,0.95) {};
    \node[style=none] (BwR_top) at (-0.50,0.95) {};
    \node[style=none] (Blab) at (-0.92,0.55) {$\scriptstyle \m{B}$};

  \end{pgfonlayer}

  \begin{pgfonlayer}{edgelayer}
    \draw (vTL.center) -- (vTR.center);
    \draw (vTR.center) -- (vTC.center);
    \draw (vTC.center) -- (vTL.center);

    \draw[line width=0.6pt] (BwL_bot.center) -- (BwL_top.center);
    \draw[line width=0.6pt] (BwR_bot.center) -- (BwR_top.center);

  \end{pgfonlayer}
\end{tikzpicture}%
}}
\in \m{S_{B}} 
\right\}.
\end{equation}
\end{definition} 

Since $(AB)_{\otimes}^*$ is the space generated by linear combination of product effects, the condition $e^{\m{A}}\boxtimes f^{\m{B}}(\tilde{h})=0$ for all $e^{\m{A}}$ and $f^{\m{B}}$ is equivalent to $e(\tilde{h})=0$ for all $e\in (AB_{\otimes})^*$. Thus, the holistic-state subspace is the set of all and only the vectors of $AB$ that evaluate to $0$ on all elements of $(AB)_{\otimes}^*$.
In linear algebra terms,  $H_S$ is  the annihilator of the subspace $(AB)^*_\otimes\subset(AB)^*$; similarly, $H_E$ is the annihilator of the subspace   $AB_\otimes\subset AB$.  Notice the profound implications of these facts: since $u^{\m{AB}} \in (AB)_{\otimes}^*$, then $u^{\m{AB}}(\tilde{h})=0$. That is, the holistic part of the state of a system never contributes to the norm of the state (it has norm 0), and hence there can never be a state that is purely holistic. This fact will be crucial at various points in the manuscript, and we will revisit it later. \blk

One can show that the holistic subspaces are vector space complements to the tomographically-local part of the state and effect spaces, respectively:
\begin{lemma}
\label{lemma:AtensorB+Holistic}
The real vector space associated with any composite $AB$ of two GPT systems $\mathcal{A}$ and $\mathcal{B}$ has the form
    \begin{align}
        AB = (AB_\otimes)\oplus H_S,
    \end{align}
    and
    \begin{align}
        (AB)^* = (AB)^*_\otimes\oplus H_E.
    \end{align}
\end{lemma}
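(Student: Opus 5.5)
The plan is pure finite-dimensional linear algebra: the key fact is that the natural pairing between $AB$ and $(AB)^*$ restricts to a \emph{nondegenerate} pairing between $AB_\otimes$ and $(AB)^*_\otimes$. As noted after Definition~\ref{def:HolisticSubspaces}, $H_S$ is the annihilator in $AB$ of $(AB)^*_\otimes$, and $H_E$ is the annihilator in $(AB)^*$ of $AB_\otimes$. Since $AB$ is finite dimensional, $\dim H_S = \dim AB - \dim (AB)^*_\otimes$. By Lemma~\ref{lemma:AtBplusC}, $\dim (AB)^*_\otimes = \dim(A^*\otimes B^*) = \dim A\dim B = \dim AB_\otimes$. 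Hence $\dim AB_\otimes + \dim H_S = \dim AB$. It then remains only to prove $AB_\otimes\cap H_S=\{0\}$, after which the direct sum decomposition follows by counting dimensions. The effect case is identical with the roles of states and effects exchanged.

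For the trivial intersection, I take $x\in AB_\otimes\cap H_S$. By Lemma~\ref{lemma:AtBplusC}, I write $x=\sum_{ij} c_{ij}\, a_i\boxtimes b_j$, where $\{a_i\}\subset S_{\m A}$ and $\{b_j\}\subset S_{\m B}$ are bases of $A$ and $B$ made of states. Such bases exist because $A=\mathrm{Span}(S_{\m A})$. The isomorphism with $A\otimes B$ in Lemma~\ref{lemma:AtBplusC} ensures that the products $a_i\boxtimes b_j$ are linearly independent. Now I choose bases $\{e_k\}\subset E_{\m A}$ and $\{f_l\}\subset E_{\m B}$, which exist by the convention $\mathrm{Span}(E_{\m A})=A^*$. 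By condition~1 of Definition~\ref{def: CompositionRequirements}, the pairing factorizes:
\[
(e_k\boxtimes f_l)(x)=\sum_{ij} c_{ij}\, e_k(a_i) f_l(b_j).
\]
Membership in $H_S$ makes every such value zero. The matrices $M_{ki}=e_k(a_i)$ and $N_{lj}=f_l(b_j)$ are invertible, since they are the pairings of bases of $A^*$ and $A$ (respectively $B^*$ and $B$). So $M c N^{T}=0$ forces $c=0$, hence $x=0$.

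The main subtlety is a logical point rather than a computational one. The step that uses the factorization $(e\boxtimes f)(\omega\boxtimes\nu)=e(\omega)f(\nu)$ only for states and effects, and then extends it bilinearly to all of $AB_\otimes$ and $(AB)^*_\otimes$, needs care. The extension works because both $\boxtimes$ maps are bilinear, and states and effects span the local spaces. One must also check that the dimension count uses the isomorphisms of Lemma~\ref{lemma:AtBplusC} rather than presupposing them. If those isomorphisms were not available, the same matrix argument would itself prove the linear independence of the products, by taking $x=0$ in the computation above. The effect-side statement $(AB)^*=(AB)^*_\otimes\oplus H_E$ follows verbatim by swapping the roles of states and effects, using the second bullet of condition~1.
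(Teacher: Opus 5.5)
Your proof is correct. Its first half is the paper's argument: the paper also proves $AB_\otimes\cap H_S=\{0\}$ by noting that product effects separate points of $AB_\otimes$, and your invertible pairing matrices $M$ and $N$ just make that explicit.

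The routes differ in how the sum $AB=AB_\otimes+H_S$ is obtained. The paper does no dimension counting. It uses the annihilator identities $(W\cap U)^0=W^0+U^0$, $\{0\}^0=V^*$ and $(W^0)^0=W$, applied crosswise:
\begin{itemize}
\item the trivial intersection on the state side yields $(AB)^*=(AB)^*_\otimes+H_E$;
\item the analogous trivial intersection $(AB)^*_\otimes\cap H_E=\{0\}$ yields $AB=AB_\otimes+H_S$.
\end{itemize}
You instead combine the annihilator dimension formula $\dim H_S=\dim AB-\dim(AB)^*_\otimes$ with $\dim(AB)^*_\otimes=\dim A\dim B=\dim AB_\otimes$ from Lemma~\ref{lemma:AtBplusC}. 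So each decomposition closes with one intersection statement plus a count.

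Your version is more elementary. It isolates the real input: the pairing between $AB_\otimes$ and $(AB)^*_\otimes$ is square and nondegenerate. As you note, it even reproves the injectivity part of Lemma~\ref{lemma:AtBplusC} along the way. The paper's version is coordinate-free and shows the duality between the two decompositions: a trivial intersection on one side gives the sum on the other. Both need finite dimensionality, through the double-annihilator identity and the dimension formula respectively.
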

We give a proof in Appendix~\ref{lemma2proof}. 
Lemma~\ref{lemma:AtensorB+Holistic} is very similar to Lemma~\ref{lemma:AtBplusC}, but is stronger: it
shows that one can find a decomposition as in Lemma~\ref{lemma:AtBplusC} where the subspace $C$ has a particularly useful form, namely, where $C$ is the holistic subspace. (Recall that other choices are possible in any tomographically nonlocal theory.)

Note that the holistic-state subspace is defined by its relationship with the tomographically-local subspace of the effect space (namely by being its annihilator), and similarly the holistic-effect subspace is defined by its relationship with the tomographically-local subspace of the state space. In contrast, Lemma~\ref{lemma:AtensorB+Holistic} establishes a relationship between the holistic-state subspace and the tomographically local subspace of the {\em state} space, and similarly for the effect space. 

Thus we have relationships between the holistic subspaces and the tomographically local subspaces, both within their duals (as from the definition) and within the same vector space (as a consequence of  Lemma~\ref{lemma:AtensorB+Holistic}). It is natural to also ask whether one can find a simple relationship between $H_S$ and $H_E$. In Appendix~\ref{Hrels}, we show that introducing some additional structure (namely, an inner product that properly captures the splitting in Lemma~\ref{lemma:AtensorB+Holistic}), one can find a natural isomorphism between $H_E$ and $H_S^*$. With this additional structure, one can think of those subspaces as dual to one another.

The decomposition in Lemma~\ref{lemma:AtensorB+Holistic} implies that we can always write a vector in $AB$ as
\begin{equation}
\label{eq:DecompositionStateGeneral}
\vcenter{\hbox{
\begin{tikzpicture}
	\begin{pgfonlayer}{nodelayer}

		\node [style=none] (L1) at (-4.8,0) {};
		\node [style=none] (R1) at (-3.0,0) {};
		\node [style=none] (C1) at (-3.9,-1.2) {};
		\node [style=none] (W1) at (-3.9,-0.4) {$\omega$};

		\node [style=none] (T1a) at (-4.4,0) {};
		\node [style=none] (T1b) at (-3.4,0) {};

		\node [style=none] (T1aL) at (-4.44,0) {};
		\node [style=none] (T1aR) at (-4.36,0) {};
		\node [style=none] (T1bL) at (-3.44,0) {};
		\node [style=none] (T1bR) at (-3.36,0) {};

		\node [style=none] (A1) at (-4.15,0.5) {$\scriptstyle \m{A}$};
		\node [style=none] (B1) at (-3.15,0.5) {$\scriptstyle \m{B}$};

		\node [style=none] (eq)  at (-2.1,0) {$=$};
		\node [style=none] (Sij) at (-0.35,0) {$\sum_{ij} r_{ij}$};

		\node [style=none] (L2) at (0.7,0) {};
		\node [style=none] (R2) at (2.2,0) {};
		\node [style=none] (C2) at (1.45,-1.0) {};
		\node [style=none] (W2) at (1.45,-0.33) {$\omega_i$};

		\node [style=none] (T2) at (1.45,0) {};
		\node [style=none] (T2L) at (1.41,0) {};
		\node [style=none] (T2R) at (1.49,0) {};
		\node [style=none] (A2) at (1.70,0.42) {$\scriptstyle \m{A}$};

		\node [style=none] (L3) at (2.4,0) {};
		\node [style=none] (R3) at (3.9,0) {};
		\node [style=none] (C3) at (3.15,-1.0) {};
		\node [style=none] (W3) at (3.15,-0.33) {$\omega_j$};

		\node [style=none] (T3) at (3.15,0) {};
		\node [style=none] (T3L) at (3.11,0) {};
		\node [style=none] (T3R) at (3.19,0) {};
		\node [style=none] (B3) at (3.40,0.42) {$\scriptstyle \m{B}$};

		\node [style=none] (plus) at (4.6,0) {$+$};
		\node [style=none] (St) at (5.7,0) {$\sum_{t} r_{t}$};

		\node [style=none] (L4) at (6.8,0) {};
		\node [style=none] (R4) at (8.9,0) {};
		\node [style=none] (C4) at (7.85,-1.0) {};
		\node [style=none] (W4) at (7.85,-0.35) {$\tilde{t}$};

		\node [style=none] (T4a) at (7.35,0) {};
		\node [style=none] (T4b) at (8.35,0) {};

		\node [style=none] (T4aL) at (7.31,0) {};
		\node [style=none] (T4aR) at (7.39,0) {};
		\node [style=none] (T4bL) at (8.31,0) {};
		\node [style=none] (T4bR) at (8.39,0) {};

		\node [style=none] (A4) at (7.60,0.5) {$\scriptstyle \m{A}$};
		\node [style=none] (B4) at (8.60,0.5) {$\scriptstyle \m{B}$};

	\end{pgfonlayer}

	\begin{pgfonlayer}{edgelayer}

		\draw (L1.center) to (R1.center);
		\draw (R1.center) to (C1.center);
		\draw (C1.center) to (L1.center);

		\draw (L2.center) to (R2.center);
		\draw (R2.center) to (C2.center);
		\draw (C2.center) to (L2.center);

		\draw (L3.center) to (R3.center);
		\draw (R3.center) to (C3.center);
		\draw (C3.center) to (L3.center);

		\draw (L4.center) to (R4.center);
		\draw (R4.center) to (C4.center);
		\draw (C4.center) to (L4.center);

		\draw[line width=0.6pt] (T1aL.center) to +(0,0.9);
		\draw[line width=0.6pt] (T1aR.center) to +(0,0.9);

		\draw[line width=0.6pt] (T1bL.center) to +(0,0.9);
		\draw[line width=0.6pt] (T1bR.center) to +(0,0.9);

		\draw[line width=0.6pt] (T2L.center) to +(0,0.9);
		\draw[line width=0.6pt] (T2R.center) to +(0,0.9);

		\draw[line width=0.6pt] (T3L.center) to +(0,0.9);
		\draw[line width=0.6pt] (T3R.center) to +(0,0.9);

		\draw[line width=0.6pt] (T4aL.center) to +(0,0.9);
		\draw[line width=0.6pt] (T4aR.center) to +(0,0.9);

		\draw[line width=0.6pt] (T4bL.center) to +(0,0.9);
		\draw[line width=0.6pt] (T4bR.center) to +(0,0.9);

	\end{pgfonlayer}
\end{tikzpicture}
}}
\end{equation}
for some real numbers $r_{ij}$ and $r_t$, and similarly for effects.

\subsection{Projections onto the tomographically local and holistic subspaces}

One object that will play a particularly important role in our endeavor is the following operator:

\begin{definition}[Projection onto the tomographically local subspace]
\label{def:PiTL}
Let $\mathcal{AB}$ be a composite system with vector space decomposition
$AB = AB_\otimes \oplus H_S$
where $AB_\otimes$ is the tomographically local subspace and $H_S$ is the
holistic subspace. Writing a generic vector $v \in AB$ as $v=v_{TL}+h$ for some $v_{TL}\in (AB)_\otimes$ and $h\in H_S$, we can define the \emph{projection onto the tomographically local subspace} as the linear map $\Pi_{TL} : AB \to AB$
given by
\begin{align}
    \Pi_{TL}(v_{TL}+h) = v_{TL}.
    \end{align}
\end{definition}
\noindent That is, $\Pi_{\rm TL}$ just kills off the holistic component  of a given vector in $AB$, while preserving vectors in the span of product states. This implies that $\Pi_{\rm TL}^*: (AB)^*\to (AB)^*$, defined by $\Pi_{\rm TL}^*(w)=w\circ \Pi_{\rm TL}$ for any element $w\in (AB)^*$, also kills off the holistic component in the effect space $(AB)^*$ (while preserving vectors in the span of product effects), as we see next.
\begin{proposition}
[Dual of $\Pi_{\rm TL}$] \label{prop:DualPiTL}
The operator $\Pi_{\rm TL}^*: (AB)^*\to (AB)^*$ defined by 
\begin{align}
    \Pi_{\rm TL}^*(w)=w\circ \Pi_{\rm TL}.
\end{align}
is the projector onto the tomographically-local subspace $(AB)^*_\otimes$. 
\end{proposition}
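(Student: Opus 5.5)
We have to show three things about $\Pi_{\rm TL}^*$: it is idempotent, its range is exactly $(AB)^*_\otimes$, and it acts as the identity there. Its kernel will turn out to be $H_E$, so it is the projector associated with the splitting $(AB)^*=(AB)^*_\otimes\oplus H_E$ of Lemma~\ref{lemma:AtensorB+Holistic}.

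\emph{Idempotence.} Since $\Pi_{\rm TL}$ is a projector (Definition~\ref{def:PiTL}), $\Pi_{\rm TL}\circ\Pi_{\rm TL}=\Pi_{\rm TL}$. Hence
\[
(\Pi_{\rm TL}^*)^2(w)=w\circ\Pi_{\rm TL}\circ\Pi_{\rm TL}=\Pi_{\rm TL}^*(w)
\]
for every $w\in(AB)^*$.

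\emph{Identity on $(AB)^*_\otimes$.} Take $w\in(AB)^*_\otimes$ and any $v=v_{TL}+h$ with $v_{TL}\in AB_\otimes$ and $h\in H_S$. By definition, $H_S$ is the annihilator of $(AB)^*_\otimes$, so $w(h)=0$. Therefore
\[
\Pi_{\rm TL}^*(w)(v)=w(v_{TL})=w(v_{TL})+w(h)=w(v),
\]
that is, $\Pi_{\rm TL}^*(w)=w$.

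\emph{Range contained in $(AB)^*_\otimes$.} This is the main step. For arbitrary $w$, the functional $w\circ\Pi_{\rm TL}$ vanishes on $H_S$. So it suffices to show that every functional vanishing on $H_S$ lies in $(AB)^*_\otimes$. Write $\mathrm{Ann}(H_S)$ for the set of functionals on $AB$ that vanish on $H_S$. Because $H_S$ is the annihilator of $(AB)^*_\otimes$ and everything is finite-dimensional, the double-annihilator theorem gives $\mathrm{Ann}(H_S)=(AB)^*_\otimes$. If one prefers a dimension count: $\dim\mathrm{Ann}(H_S)=\dim AB-\dim H_S$, and by Lemma~\ref{lemma:AtensorB+Holistic} this equals $\dim AB_\otimes$. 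By Lemma~\ref{lemma:AtBplusC}, $\dim AB_\otimes=\dim A\dim B=\dim(AB)^*_\otimes$. Combined with the obvious inclusion $(AB)^*_\otimes\subseteq\mathrm{Ann}(H_S)$, the two spaces coincide.

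\emph{Conclusion.} These three facts show that $\Pi_{\rm TL}^*$ is a projector with range exactly $(AB)^*_\otimes$. The range is not larger than $(AB)^*_\otimes$ by the inclusion just proved, and not smaller since $\Pi_{\rm TL}^*$ fixes every element of $(AB)^*_\otimes$. Its kernel is the set of $w$ with $w|_{AB_\otimes}=0$, which is precisely $H_E$, the annihilator of $AB_\otimes$. So $\Pi_{\rm TL}^*$ is the projector onto $(AB)^*_\otimes$ along $H_E$, consistent with Lemma~\ref{lemma:AtensorB+Holistic}.
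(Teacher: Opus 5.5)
Your proof is correct, but it is organized differently from the paper's. The paper starts from the effect-side splitting $(AB)^*=(AB)^*_\otimes\oplus H_E$ of Lemma~\ref{lemma:AtensorB+Holistic}, writes $w=w_{\rm TL}+h'$, and computes directly that $w\circ\Pi_{\rm TL}=w_{\rm TL}$. There the $h'$ term vanishes because $h'$ annihilates $AB_\otimes$. The paper also tacitly uses $w_{\rm TL}(h)=0$, which holds because $H_S$ annihilates $(AB)^*_\otimes$; you make this step explicit in your ``identity on $(AB)^*_\otimes$'' paragraph.

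You instead use only the state-side splitting, both to define $\Pi_{\rm TL}$ and for your dimension count. You then identify the range via the double-annihilator theorem. In effect you prove the general fact that the transpose of the projection onto $U$ along $W$ is the projection onto $W^0$ along $U^0$.

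What your route buys is that the dual decomposition $(AB)^*=(AB)^*_\otimes\oplus H_E$ comes out as a corollary, since range and kernel of an idempotent are complementary, rather than being assumed. What the paper's route buys is brevity and the explicit formula $\Pi_{\rm TL}^*(w)=w_{\rm TL}$, which directly displays the ``kills the holistic component'' interpretation. The finite-dimensional double-annihilator identity you invoke is the same ingredient the paper uses inside its proof of Lemma~\ref{lemma:AtensorB+Holistic} (identity (iii)), so neither route is more general.
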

\begin{proof}
    
Since $(AB)^* = (AB)^*_\otimes \oplus H_E$, every $w\in (AB)^*$ decomposes uniquely as $w = w_{\rm TL} + h'$ for some 
$w_{\rm TL}\in (AB)^*_\otimes$ and $h'\in H_E$. 
Then, for any $v=v_{\rm TL}+h \in AB$ we have
\begin{align}
    \Pi_{\rm TL}^*(w)(v) &= (w \circ \Pi_{\rm TL})(v) \\
    &= w(\Pi_{\rm TL}(v)) \\
    &= w(v_{\rm TL}) \\
    &= w_{\rm TL}(v_{\rm TL}) + h'(v_{\rm TL}).
\end{align}
Since $h'$ annihilates $AB_\otimes$, the last term vanishes, and hence
\begin{align}
    \Pi_{\rm TL}^*(w)(v) = w_{\rm TL}(v) \quad \forall v \in AB.
\end{align}
Therefore, $\Pi_{\rm TL}^*$ coincides with the projection $w \mapsto w_{\rm TL}$ 
of $(AB)^*$ onto $(AB)^*_\otimes$ along $H_E$. Equivalently, $\Pi_{\rm TL}^*$ is 
the projection onto the tomographically local subspace in the dual space; i.e., 
it just kills off the holistic component of any functional in $(AB)^*$.\footnote{Note that in this proposition we have proven that the kernel of $\Pi_{\rm TL}^*$ is $H_E$, which is by definition the annihilator of $AB_\otimes$ (recall Def.~\ref{def:HolisticSubspaces}). Moreover, $AB_\otimes$ is the image of $\Pi_{\rm TL}$, as it is the projector onto that subspace. All of this constitute a special case of the general fact that for any linear map $\phi: V\to W$, $\mathsf{ker}[\phi^*]=(\mathsf{Im}[\phi])^0$, where $\mathsf{Im}[\phi]^0\subset W^*$ is the annihilator of $\mathsf{Im}[\phi]\subset W$. }\blk

\end{proof}

$\Pi_{\rm TL}$ has some useful properties we list below. These properties follow directly from the fact that $\Pi_{\rm TL}$ is a projection onto the subspace $AB_\otimes$ (and $\Pi_{\rm TL}^*$ is a projection onto $(AB)^*_\otimes$), and we provide explicit proofs in Appendix~\ref{projprop} (see Prop.~\ref{prop16}). 
\begin{enumerate}
\item $\Pi_{\rm TL}$ acts trivially on $AB_{\otimes}$ and (by pre-composition) on $(AB)^*_\otimes$. Thus, $\Pi_{\rm TL}(\omega^{\m{AB}})=\omega^{\m{AB}}$ for all states ${\omega^{\m{AB}}\in AB_\otimes}$ and $e^{\m{AB}}\circ\Pi_{\rm TL}=e^{\m{AB}}$ for all effects $e\in (AB)^*_\otimes$.

\item  $\Pi_{\rm TL}$ preserves normalization: $u^{\m{AB}}\circ \Pi_{\rm TL} = u^{\m{AB}}$.

\item The only state $\omega^{\m{AB}}\in \Omega_{\m{AB}}$ for which $\Pi_{\rm TL}(\omega^{\m{AB}})=0^{AB}$ is $\omega^{\m{AB}}=0^{AB}$. 

\item If $e^{\m{AB}}\left[\Pi_{\rm TL}\left(\omega^{\m{AB}}\right)\right]\neq e^{\m{AB}}[\omega^{\m{AB}}]$, then both the effect $e^{\m{AB}}$ and the state $\omega^{\m{AB}}$ have holistic components; i.e., $e^{\m{AB}}\not\in (AB)^*_\otimes$ and  $\omega^{\m{AB}}\not\in AB_\otimes$. 
\end{enumerate}
Note that $\Pi_{\rm TL}$ takes all states in $S_{\m{AB}}$ to vectors \blk in $AB_\otimes$, but there is no guarantee that $\Pi_{\rm TL}(\omega)$ is itself a state; in general, it will not be the case. We give one such example in Section~\ref{sec:CompositesRQT}. A similar fact follows for effects. Nevertheless, since separable states and effects always belong to $AB_\otimes$ and $(AB)^*_\otimes$ (resp.), property $1$ immediately implies the following:
\begin{corollary}
    For any composite GPT system $\m{AB}$, the projection $\Pi_{\rm TL}$ acts trivially on separable states and separable effects.
\end{corollary}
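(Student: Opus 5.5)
The plan is to reduce the corollary to property~1 of $\Pi_{\rm TL}$ listed above. That property says $\Pi_{\rm TL}$ acts as the identity on $AB_\otimes$, and (by pre-composition) $e\circ\Pi_{\rm TL}=e$ for every $e\in(AB)^*_\otimes$. The only thing left to check is that separable states lie in $AB_\otimes$ and separable effects lie in $(AB)^*_\otimes$.

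For states, take a separable $\omega^{\m{AB}}=\sum_i p_i\,\omega_i^{\m{A}}\boxtimes\nu_i^{\m{B}}$ as in Def.~\ref{def: EntangledStates}. Each term $\omega_i^{\m{A}}\boxtimes\nu_i^{\m{B}}$ is a product of local states, so it belongs by definition to $AB_\otimes=\mathrm{Span}\{\omega^{\m{A}}\boxtimes\nu^{\m{B}}\}$. Since $AB_\otimes$ is a linear subspace, the convex combination $\omega^{\m{AB}}$ also lies in $AB_\otimes$. Writing $\omega^{\m{AB}}=\omega^{\m{AB}}+0$ in the decomposition $AB=AB_\otimes\oplus H_S$ of Lemma~\ref{lemma:AtensorB+Holistic}, Def.~\ref{def:PiTL} then gives $\Pi_{\rm TL}(\omega^{\m{AB}})=\omega^{\m{AB}}$. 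The same argument covers subnormalized separable states, which are nonnegative multiples of these.

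For effects, take a separable $e^{\m{AB}}=\sum_i e_i^{\m{A}}\boxtimes f_i^{\m{B}}$. It is a finite sum of product effects, so $e^{\m{AB}}\in(AB)^*_\otimes$. Write the decomposition $e^{\m{AB}}=w_{\rm TL}+h'$ with respect to $(AB)^*=(AB)^*_\otimes\oplus H_E$; membership in $(AB)^*_\otimes$ means $h'=0$. Proposition~\ref{prop:DualPiTL} then yields $\Pi_{\rm TL}^*(e^{\m{AB}})=e^{\m{AB}}\circ\Pi_{\rm TL}=e^{\m{AB}}$.

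There is no real obstacle here. The only point needing care is that "acting trivially on effects" is meant in the dual sense, via pre-composition with $\Pi_{\rm TL}$, and that this is exactly the content of Proposition~\ref{prop:DualPiTL}.
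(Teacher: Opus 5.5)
Your proposal is correct and follows the paper's argument: the paper also observes that separable states lie in $AB_\otimes$ and separable effects lie in $(AB)^*_\otimes$, and then invokes property~1 of $\Pi_{\rm TL}$ (identity on $AB_\otimes$, and $e\circ\Pi_{\rm TL}=e$ for $e\in(AB)^*_\otimes$). Your explicit treatment of the effect case through Proposition~\ref{prop:DualPiTL} is just that property spelled out, so nothing is missing.
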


It is insightful to write $\Pi_{\rm TL}$ more explicitly, as follows.

\begin{proposition}
\label{propPiTL}
The tomographically local projection for a composite system $\mathcal{AB}$ can be written
in two equivalent ways:
\begin{enumerate}
    \item Local vectors forming basis and dual basis: for any given basis of $A$ (whose elements might not be states), $\{v^A_i\}_i\subset A$, take its dual basis in $A^*$ (which, again, might contain vectors that are not effects), $\{t^{A^*}_{i'}\}\subset A^*$. Similarly, given any basis for $B$, $\{v^B_j\}$ take the dual basis $\{t^{B^*}_{j'}\}$. Then,
\begin{align}
\label{eq:Pitl_DualBases}
    \Pi_{\rm TL} (\bullet) = (v^{A}_i\boxtimes v^{B}_j)\circ  [t^{A^*}_i\boxtimes t^{B^*}_j](\bullet).
\end{align}
Diagrammaticaly,
\begin{equation}
\label{eq:Pi_TL_decomposition}
\vcenter{\hbox{%
\begin{tikzpicture}
  \begin{pgfonlayer}{nodelayer}

    \def\dw{0.08}   
    \def\gap{0.16}  

    \def\xLA{-4.6}
    \def\xLB{-3.2}

    \def\yTop{ 1.5}
    \def\yMidTop{ 0.6}
    \def\yMidBot{-0.6}
    \def\yBot{-1.5}

    \node[style=none] (LTA1) at ({\xLA-\dw}, \yTop) {};
    \node[style=none] (LTA2) at ({\xLA-\dw}, \yMidTop) {};
    \node[style=none] (LBA1) at ({\xLA-\dw}, \yMidBot) {};
    \node[style=none] (LBA2) at ({\xLA-\dw}, \yBot) {};

    \node[style=none] (LTA3) at ({\xLA+\dw}, \yTop) {};
    \node[style=none] (LTA4) at ({\xLA+\dw}, \yMidTop) {};
    \node[style=none] (LBA3) at ({\xLA+\dw}, \yMidBot) {};
    \node[style=none] (LBA4) at ({\xLA+\dw}, \yBot) {};

    \node[style=none] (LTB1) at ({\xLB-\dw}, \yTop) {};
    \node[style=none] (LTB2) at ({\xLB-\dw}, \yMidTop) {};
    \node[style=none] (LBB1) at ({\xLB-\dw}, \yMidBot) {};
    \node[style=none] (LBB2) at ({\xLB-\dw}, \yBot) {};

    \node[style=none] (LTB3) at ({\xLB+\dw}, \yTop) {};
    \node[style=none] (LTB4) at ({\xLB+\dw}, \yMidTop) {};
    \node[style=none] (LBB3) at ({\xLB+\dw}, \yMidBot) {};
    \node[style=none] (LBB4) at ({\xLB+\dw}, \yBot) {};

    \node[style=none] (LLabA_top) at ({\xLA+0.30},  1.00) {$\scriptstyle \m{A}$};
    \node[style=none] (LLabB_top) at ({\xLB+0.30},  1.00) {$\scriptstyle \m{B}$};
    \node[style=none] (LLabA_bot) at ({\xLA+0.30}, -1.00) {$\scriptstyle \m{A}$};
    \node[style=none] (LLabB_bot) at ({\xLB+0.30}, -1.00) {$\scriptstyle \m{B}$};

    \def\boxPad{0.35}
    \node[style=none] (LR1) at ({\xLA-\boxPad}, \yMidTop) {};
    \node[style=none] (LR2) at ({\xLB+\boxPad}, \yMidTop) {};
    \node[style=none] (LR3) at ({\xLB+\boxPad}, \yMidBot) {};
    \node[style=none] (LR4) at ({\xLA-\boxPad}, \yMidBot) {};
    \node[style=none] (LBoxLab) at ({(\xLA+\xLB)/2}, 0) {$\Pi_{TL}$};

    \node[style=none] (Eq)  at (-1.6,0) {$=$};
    \node[style=none] (Sum) at (-0.3,0) {$\sum_{i,j}$};

    \def\xRA{ 1.2}
    \def\xRB{ 3.0}

    \def\wTri{0.55}
    \def\yBaseTop{ 1.05}  
    \def\yBaseBot{-1.05}  
    \def\yWireTop{ 1.55}
    \def\yWireBot{-1.55}

    \node[style=none] (ATipTop) at (\xRA, { \gap/2}) {};
    \node[style=none] (ATipBot) at (\xRA, {- \gap/2}) {};

    \node[style=none] (BTipTop) at (\xRB, { \gap/2}) {};
    \node[style=none] (BTipBot) at (\xRB, {- \gap/2}) {};

    \node[style=none] (AeL) at ({\xRA-\wTri}, \yBaseTop) {};
    \node[style=none] (AeR) at ({\xRA+\wTri}, \yBaseTop) {};
    \node[style=none] (AeLab) at (\xRA, 0.62) {$v_i$};

    \node[style=none] (AtL) at ({\xRA-\wTri}, \yBaseBot) {};
    \node[style=none] (AtR) at ({\xRA+\wTri}, \yBaseBot) {};
    \node[style=none] (AtLab) at (\xRA,-0.62) {$t_i$};

    \node[style=none] (AUpL)  at ({\xRA-\dw}, \yWireTop) {};
    \node[style=none] (AUpR)  at ({\xRA+\dw}, \yWireTop) {};
    \node[style=none] (AUpL2) at ({\xRA-\dw}, \yBaseTop) {};
    \node[style=none] (AUpR2) at ({\xRA+\dw}, \yBaseTop) {};

    \node[style=none] (ADnL)  at ({\xRA-\dw}, \yBaseBot) {};
    \node[style=none] (ADnR)  at ({\xRA+\dw}, \yBaseBot) {};
    \node[style=none] (ADnL2) at ({\xRA-\dw}, \yWireBot) {};
    \node[style=none] (ADnR2) at ({\xRA+\dw}, \yWireBot) {};

    \node[style=none] (RLabA_top) at ({\xRA+0.35},  1.35) {$\scriptstyle \m{A}$};
    \node[style=none] (RLabA_bot) at ({\xRA+0.35}, -1.35) {$\scriptstyle \m{A}$};

    \node[style=none] (BeL) at ({\xRB-\wTri}, \yBaseTop) {};
    \node[style=none] (BeR) at ({\xRB+\wTri}, \yBaseTop) {};
    \node[style=none] (BeLab) at (\xRB, 0.62) {$v_j$};

    \node[style=none] (BtL) at ({\xRB-\wTri}, \yBaseBot) {};
    \node[style=none] (BtR) at ({\xRB+\wTri}, \yBaseBot) {};
    \node[style=none] (BtLab) at (\xRB,-0.62) {$t_j$};

    \node[style=none] (BUpL)  at ({\xRB-\dw}, \yWireTop) {};
    \node[style=none] (BUpR)  at ({\xRB+\dw}, \yWireTop) {};
    \node[style=none] (BUpL2) at ({\xRB-\dw}, \yBaseTop) {};
    \node[style=none] (BUpR2) at ({\xRB+\dw}, \yBaseTop) {};

    \node[style=none] (BDnL)  at ({\xRB-\dw}, \yBaseBot) {};
    \node[style=none] (BDnR)  at ({\xRB+\dw}, \yBaseBot) {};
    \node[style=none] (BDnL2) at ({\xRB-\dw}, \yWireBot) {};
    \node[style=none] (BDnR2) at ({\xRB+\dw}, \yWireBot) {};

    \node[style=none] (RLabB_top) at ({\xRB+0.35},  1.35) {$\scriptstyle \m{B}$};
    \node[style=none] (RLabB_bot) at ({\xRB+0.35}, -1.35) {$\scriptstyle \m{B}$};

  \end{pgfonlayer}

  \begin{pgfonlayer}{edgelayer}

    \draw (LTA1.center) -- (LTA2.center);
    \draw (LTA3.center) -- (LTA4.center);
    \draw (LBA1.center) -- (LBA2.center);
    \draw (LBA3.center) -- (LBA4.center);

    \draw (LTB1.center) -- (LTB2.center);
    \draw (LTB3.center) -- (LTB4.center);
    \draw (LBB1.center) -- (LBB2.center);
    \draw (LBB3.center) -- (LBB4.center);

    \draw (LR1.center) -- (LR2.center) -- (LR3.center) -- (LR4.center) -- cycle;

    \draw (AeL.center) -- (AeR.center) -- (ATipTop.center) -- cycle;
    \draw (AtL.center) -- (AtR.center) -- (ATipBot.center) -- cycle;

    \draw (BeL.center) -- (BeR.center) -- (BTipTop.center) -- cycle;
    \draw (BtL.center) -- (BtR.center) -- (BTipBot.center) -- cycle;

    \draw (AUpL2.center) -- (AUpL.center);
    \draw (AUpR2.center) -- (AUpR.center);
    \draw (ADnL.center) -- (ADnL2.center);
    \draw (ADnR.center) -- (ADnR2.center);

    \draw (BUpL2.center) -- (BUpL.center);
    \draw (BUpR2.center) -- (BUpR.center);
    \draw (BDnL.center) -- (BDnL2.center);
    \draw (BDnR.center) -- (BDnR2.center);

  \end{pgfonlayer}
\end{tikzpicture}%
}}
\end{equation}

    \item Linear combination of effect-state channels: Consider two sets $\{\omega^{\m{A}}_i\}_{i=1}^{d_A}\subset \m{S_A}$ and $\{\omega^B_k\}_{k=1}^{d_B}\subset \m{S_B}$ of states forming bases for the local vector spaces $A$ and $B$ and sets of effects $\{e^{\m{A}}_j\}_{j=1}^{d_A}$ and $\{e^{\m{B}}_l\}_{l=1}^{d_B}$ forming bases for the local vector spaces $A^*$ and $B^*$. Then, $\Pi_{\rm TL}$ can be written as

    \begin{align}
    \label{eq:PiTL_EffectState}
        &\Pi_{\rm TL}(\bullet) = \sum_{i,j,k,l}c^{\m{A}}_{ij}c^{\m{B}}_{kl}\left(\omega^{\m{A}}_i\boxtimes \omega^B_k \right)\circ \left[e^{\m{A}}_j\boxtimes e^{\m{B}}_l\right](\bullet)
    \end{align}
  where  $c_{ij}^{\m{A}}$ are the coefficients of the inverse of matrix $\mathds{M}^{\m{A}}$, where $(\mathds{M}^{\m{A}})_{ij}=e^{\m{A}}_j(\omega^{\m{A}}_i)$, and similarly for $c_{kl}^{\m{B}}$. Diagrammatically,
  \begin{equation}
\label{eq:Pi_TL_coeff_decomposition}
\vcenter{\hbox{%
\begin{tikzpicture}
  \begin{pgfonlayer}{nodelayer}

    \def\dw{0.08}    
    \def\gap{0.16}   

    \def\xLA{-4.6}
    \def\xLB{-3.2}

    \def\yTop{ 1.5}
    \def\yMidTop{ 0.6}
    \def\yMidBot{-0.6}
    \def\yBot{-1.5}

    \node[style=none] (LTA1) at ({\xLA-\dw}, \yTop) {};
    \node[style=none] (LTA2) at ({\xLA-\dw}, \yMidTop) {};
    \node[style=none] (LBA1) at ({\xLA-\dw}, \yMidBot) {};
    \node[style=none] (LBA2) at ({\xLA-\dw}, \yBot) {};

    \node[style=none] (LTA3) at ({\xLA+\dw}, \yTop) {};
    \node[style=none] (LTA4) at ({\xLA+\dw}, \yMidTop) {};
    \node[style=none] (LBA3) at ({\xLA+\dw}, \yMidBot) {};
    \node[style=none] (LBA4) at ({\xLA+\dw}, \yBot) {};

    \node[style=none] (LTB1) at ({\xLB-\dw}, \yTop) {};
    \node[style=none] (LTB2) at ({\xLB-\dw}, \yMidTop) {};
    \node[style=none] (LBB1) at ({\xLB-\dw}, \yMidBot) {};
    \node[style=none] (LBB2) at ({\xLB-\dw}, \yBot) {};

    \node[style=none] (LTB3) at ({\xLB+\dw}, \yTop) {};
    \node[style=none] (LTB4) at ({\xLB+\dw}, \yMidTop) {};
    \node[style=none] (LBB3) at ({\xLB+\dw}, \yMidBot) {};
    \node[style=none] (LBB4) at ({\xLB+\dw}, \yBot) {};

    \node[style=none] (LLabA_top) at ({\xLA+0.30},  1.00) {$\scriptstyle \m{A}$};
    \node[style=none] (LLabB_top) at ({\xLB+0.30},  1.00) {$\scriptstyle \m{B}$};
    \node[style=none] (LLabA_bot) at ({\xLA+0.30}, -1.00) {$\scriptstyle \m{A}$};
    \node[style=none] (LLabB_bot) at ({\xLB+0.30}, -1.00) {$\scriptstyle \m{B}$};

    \def\boxPad{0.35}
    \node[style=none] (LR1) at ({\xLA-\boxPad}, \yMidTop) {};
    \node[style=none] (LR2) at ({\xLB+\boxPad}, \yMidTop) {};
    \node[style=none] (LR3) at ({\xLB+\boxPad}, \yMidBot) {};
    \node[style=none] (LR4) at ({\xLA-\boxPad}, \yMidBot) {};
    \node[style=none] (LBoxLab) at ({(\xLA+\xLB)/2}, 0) {$\Pi_{TL}$};

    \node[style=none] (Eq)   at (-1.6,0) {$=$};
    \node[style=none] (Sum)  at (-0.55,0) {$\sum$};
    \node[style=none] (Coef) at (0.70,0.0) {$c^{\m{A}}_{ij}\,c^{\m{B}}_{k\ell}$};

    \def\xRA{ 2.3}   
    \def\xRB{ 4.1}   

    \def\wTri{0.65}
    \def\yBaseTop{ 1.05}
    \def\yBaseBot{-1.05}
    \def\yWireTop{ 1.55}
    \def\yWireBot{-1.55}

    \node[style=none] (ATipTop) at (\xRA, { \gap/2}) {};
    \node[style=none] (ATipBot) at (\xRA, {- \gap/2}) {};

    \node[style=none] (BTipTop) at (\xRB, { \gap/2}) {};
    \node[style=none] (BTipBot) at (\xRB, {- \gap/2}) {};

    \node[style=none] (AomL) at ({\xRA-\wTri}, \yBaseTop) {};
    \node[style=none] (AomR) at ({\xRA+\wTri}, \yBaseTop) {};
    \node[style=none] (AomLab) at (\xRA, 0.65) {$\omega_i$};

    \node[style=none] (AeL) at ({\xRA-\wTri}, \yBaseBot) {};
    \node[style=none] (AeR) at ({\xRA+\wTri}, \yBaseBot) {};
    \node[style=none] (AeLab) at (\xRA,-0.65) {$e_j$};

    \node[style=none] (AUpL)  at ({\xRA-\dw}, \yWireTop) {};
    \node[style=none] (AUpR)  at ({\xRA+\dw}, \yWireTop) {};
    \node[style=none] (AUpL2) at ({\xRA-\dw}, \yBaseTop) {};
    \node[style=none] (AUpR2) at ({\xRA+\dw}, \yBaseTop) {};

    \node[style=none] (ADnL)  at ({\xRA-\dw}, \yBaseBot) {};
    \node[style=none] (ADnR)  at ({\xRA+\dw}, \yBaseBot) {};
    \node[style=none] (ADnL2) at ({\xRA-\dw}, \yWireBot) {};
    \node[style=none] (ADnR2) at ({\xRA+\dw}, \yWireBot) {};

    \node[style=none] (RLabA_top) at ({\xRA+0.35},  1.25) {$\scriptstyle \m{A}$};
    \node[style=none] (RLabA_bot) at ({\xRA+0.35}, -1.25) {$\scriptstyle \m{A}$};

    \node[style=none] (BomL) at ({\xRB-\wTri}, \yBaseTop) {};
    \node[style=none] (BomR) at ({\xRB+\wTri}, \yBaseTop) {};
    \node[style=none] (BomLab) at (\xRB, 0.65) {$\omega_k$};

    \node[style=none] (BeL) at ({\xRB-\wTri}, \yBaseBot) {};
    \node[style=none] (BeR) at ({\xRB+\wTri}, \yBaseBot) {};
    \node[style=none] (BeLab) at (\xRB,-0.65) {$e_\ell$};

    \node[style=none] (BUpL)  at ({\xRB-\dw}, \yWireTop) {};
    \node[style=none] (BUpR)  at ({\xRB+\dw}, \yWireTop) {};
    \node[style=none] (BUpL2) at ({\xRB-\dw}, \yBaseTop) {};
    \node[style=none] (BUpR2) at ({\xRB+\dw}, \yBaseTop) {};

    \node[style=none] (BDnL)  at ({\xRB-\dw}, \yBaseBot) {};
    \node[style=none] (BDnR)  at ({\xRB+\dw}, \yBaseBot) {};
    \node[style=none] (BDnL2) at ({\xRB-\dw}, \yWireBot) {};
    \node[style=none] (BDnR2) at ({\xRB+\dw}, \yWireBot) {};

    \node[style=none] (RLabB_top) at ({\xRB+0.35},  1.25) {$\scriptstyle \m{B}$};
    \node[style=none] (RLabB_bot) at ({\xRB+0.35}, -1.25) {$\scriptstyle \m{B}$};

  \end{pgfonlayer}

  \begin{pgfonlayer}{edgelayer}

    \draw (LTA1.center) -- (LTA2.center);
    \draw (LTA3.center) -- (LTA4.center);
    \draw (LBA1.center) -- (LBA2.center);
    \draw (LBA3.center) -- (LBA4.center);

    \draw (LTB1.center) -- (LTB2.center);
    \draw (LTB3.center) -- (LTB4.center);
    \draw (LBB1.center) -- (LBB2.center);
    \draw (LBB3.center) -- (LBB4.center);

    \draw (LR1.center) -- (LR2.center) -- (LR3.center) -- (LR4.center) -- cycle;

    \draw (AomL.center) -- (AomR.center) -- (ATipTop.center) -- cycle;
    \draw (AeL.center)  -- (AeR.center)  -- (ATipBot.center) -- cycle;

    \draw (BomL.center) -- (BomR.center) -- (BTipTop.center) -- cycle;
    \draw (BeL.center)  -- (BeR.center)  -- (BTipBot.center) -- cycle;

    \draw (AUpL2.center) -- (AUpL.center);
    \draw (AUpR2.center) -- (AUpR.center);
    \draw (ADnL.center) -- (ADnL2.center);
    \draw (ADnR.center) -- (ADnR2.center);

    \draw (BUpL2.center) -- (BUpL.center);
    \draw (BUpR2.center) -- (BUpR.center);
    \draw (BDnL.center) -- (BDnL2.center);
    \draw (BDnR.center) -- (BDnR2.center);

  \end{pgfonlayer}
\end{tikzpicture}%
}}
\end{equation}

  \end{enumerate} 

\end{proposition}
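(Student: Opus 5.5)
The plan is to prove both formulas by showing that the right-hand side, call it $P$, agrees with $\Pi_{\rm TL}$ on each summand of the decomposition $AB=AB_\otimes\oplus H_S$ from Lemma~\ref{lemma:AtensorB+Holistic}. Since $\Pi_{\rm TL}$ is, by Definition~\ref{def:PiTL}, the unique linear map that is the identity on $AB_\otimes$ and zero on $H_S$, this identifies $P$ with $\Pi_{\rm TL}$.

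For item 1, I take $P(\bullet)=\sum_{i,j}(v^A_i\boxtimes v^B_j)\,[t^{A^*}_i\boxtimes t^{B^*}_j](\bullet)$. This is linear, and its image lies in $AB_\otimes$, because $AB_\otimes$ is the span of product states and the bilinear map $\boxtimes$ lets me expand $v^A_i\boxtimes v^B_j$ in products of states.

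\emph{Vanishing on $H_S$.} Each $t^{A^*}_i\boxtimes t^{B^*}_j$ is a linear combination of product effects, hence lies in $(AB)^*_\otimes$. By definition $H_S$ is the annihilator of $(AB)^*_\otimes$, so $P(h)=0$ for every $h\in H_S$.

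\emph{Identity on $AB_\otimes$.} It suffices to check this on the spanning set of product vectors $a\boxtimes b$. Bilinearity of both $\boxtimes$ maps, together with the factorization $(e\boxtimes f)[\omega\boxtimes\nu]=e(\omega)f(\nu)$ from Def.~\ref{def: CompositionRequirements} (extended from states and effects to their spans by bilinearity), gives
\[
P(a\boxtimes b)=\sum_{i,j} t_i(a)\,t_j(b)\; v_i\boxtimes v_j=\Big(\sum_i t_i(a)v_i\Big)\boxtimes\Big(\sum_j t_j(b)v_j\Big)=a\boxtimes b,
\]
where the last step uses the dual-basis expansion $a=\sum_i t_i(a)v_i$ and likewise for $b$.

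For item 2, I reduce to item 1 by building explicit dual bases. Let $\{\omega^A_i\}$ be the basis of states and $\{e^A_j\}$ the basis of effects, with $(\mathds{M}^A)_{ij}=e_j(\omega_i)$. This matrix is invertible: it is the pairing between a basis of $A$ and a basis of $A^*$, and if it were singular some nonzero combination of the $\omega_i$ would be annihilated by all of $A^*$. Define $t_i:=\sum_j c_{ij}e_j$ with $c=(\mathds{M}^A)^{-1}$ in the appropriate index order, arranged so that $t_{i}(\omega_{i'})=\delta_{ii'}$; then $\{t_i\}$ is the dual basis to $\{\omega_i\}$. Substituting into item 1 with $v_i=\omega_i$, doing the same on the $B$ side, and expanding by bilinearity of $\boxtimes$ on effects gives Eq.~\eqref{eq:PiTL_EffectState}.

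The main obstacle is bookkeeping rather than anything conceptual. First, the transpose convention for $c_{ij}$ must match the stated definition of $\mathds{M}$, i.e.\ one has to check whether $\sum_j c_{ij}\,(\mathds{M})_{i'j}=\delta_{ii'}$ holds with the paper's indexing. Second, the product rule $(e\boxtimes f)(x\boxtimes y)=e(x)f(y)$ must be justified for arbitrary vectors in $A$, $B$, $A^*$, $B^*$ rather than only for states and effects. This follows by bilinearity because $S_{\m A}$ spans $A$ and $E_{\m A}$ spans $A^*$.
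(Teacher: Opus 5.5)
Your proposal is correct and follows the paper's proof essentially step for step. The paper also defines $P$ via the dual bases, checks that it fixes $AB_\otimes$ (on the basis products $v_{i'}\boxtimes v_{j'}$ rather than on arbitrary $a\boxtimes b$) and annihilates $H_S$ because each $t_i\boxtimes t_j$ lies in $(AB)^*_\otimes$, and then takes the $v_i$ to be states and expands the dual basis in effects to obtain $C=\mathds{M}^{-1}$.

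The transpose issue you flag is genuine. With the statement's convention $(\mathds{M}^{\m{A}})_{ij}=e^{\m{A}}_j(\omega^{\m{A}}_i)$ one needs $c^{\m{A}}_{ij}=[(\mathds{M}^{\m{A}})^{-1}]_{ji}$. The paper's appendix resolves this by silently switching to $(\mathds{M}^{\m{A}})_{\alpha i}=e^{\m{A}}_\alpha(\omega^{\m{A}}_i)$, for which $C^{\m{A}}=(\mathds{M}^{\m{A}})^{-1}$ holds literally.
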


The proof of Proposition~\ref{propPiTL} is given in Appendix~\ref{projprop}. Here we provide a brief sketch and indicate how these decompositions lead to an intuitive understanding of $\Pi_{\rm TL}$. The decomposition in terms of bases of $A$ and $B$ and their dual bases follows from the defining property of dual bases, namely that $t^{A^*}_{i'}(v^A_i)=\delta_{ii'}$. This ensures that the operator in Eq.~\eqref{eq:Pitl_DualBases} acts trivially on any component $v_{\rm TL}\in AB_{\otimes}$ of a vector $v\in AB$. Moreover, it is a linear combination of elements belonging to the annihilator of $H_S$ (and an analogous argument applies to its dual action on $AB^*$), and therefore maps every component in $H_S$ to $0^{AB}$. The characterization in terms of effect--state channels follows from the previous one by specializing to the case in which $\{v_i^A\}$ and $\{v_j^B\}$ are sets of states and expanding the effects in terms of their corresponding dual bases.

We now turn to the intuition behind the action of $\Pi_{\rm TL}$ and its connection to tomographic locality, focusing in particular on Eq.~\eqref{eq:Pi_TL_coeff_decomposition}. As shown in Refs.~\cite{Schmid2024structuretheorem,hardy2011reformulatingreconstructingquantumtheory}, this specific linear combination of effect--state pairs coincides with the identity transformation in tomographically-local theories. That is, in a TL theory, one has

\begin{equation}
\label{eq:local_tomography_completeness}
\vcenter{\hbox{%
\begin{tikzpicture}
  \begin{pgfonlayer}{nodelayer}

    \def\dw{0.08}    
    \def\gap{0.16}   

    \node[style=none] (Sum)  at (-3.2,0.0) {$\sum_{i,j}$};
    \node[style=none] (Coef) at (-1.9,0.0) {$c^{\m{A}}_{ij}$};

    \def\xA{-0.6}

    \def\wTri{0.65}
    \def\yBaseTop{ 0.85}   
    \def\yBaseBot{-0.85}   
    \def\yWireTop{ 1.35}
    \def\yWireBot{-1.35}

    \node[style=none] (TipTop) at (\xA, { \gap/2}) {};
    \node[style=none] (TipBot) at (\xA, {- \gap/2}) {};

    \node[style=none] (TopL) at ({\xA-\wTri}, \yBaseTop) {};
    \node[style=none] (TopR) at ({\xA+\wTri}, \yBaseTop) {};
    \node[style=none] (TopLab) at (\xA, 0.50) {$\omega_i$};

    \node[style=none] (BotL) at ({\xA-\wTri}, \yBaseBot) {};
    \node[style=none] (BotR) at ({\xA+\wTri}, \yBaseBot) {};
    \node[style=none] (BotLab) at (\xA,-0.52) {$e_j$};

    \node[style=none] (UpL)  at ({\xA-\dw}, \yWireTop) {};
    \node[style=none] (UpR)  at ({\xA+\dw}, \yWireTop) {};
    \node[style=none] (UpL2) at ({\xA-\dw}, \yBaseTop) {};
    \node[style=none] (UpR2) at ({\xA+\dw}, \yBaseTop) {};

    \node[style=none] (DnL)  at ({\xA-\dw}, \yBaseBot) {};
    \node[style=none] (DnR)  at ({\xA+\dw}, \yBaseBot) {};
    \node[style=none] (DnL2) at ({\xA-\dw}, \yWireBot) {};
    \node[style=none] (DnR2) at ({\xA+\dw}, \yWireBot) {};

    \node[style=none] (LabTop) at ({\xA+0.35},  1.05) {$\scriptstyle \m{A}$};
    \node[style=none] (LabBot) at ({\xA+0.35}, -1.05) {$\scriptstyle \m{A}$};

    \node[style=none] (Eq)   at (1.0,0.0) {$=$};
    \node[style=none] (EqTL) at (1.0,0.45) {$\scriptstyle \mathrm{TL}$};

    \def\xId{2.6}

    \node[style=none] (IdUpL)  at ({\xId-\dw},  1.35) {};
    \node[style=none] (IdUpR)  at ({\xId+\dw},  1.35) {};
    \node[style=none] (IdDnL)  at ({\xId-\dw}, -1.35) {};
    \node[style=none] (IdDnR)  at ({\xId+\dw}, -1.35) {};

    \node[style=none] (IdLab)  at ({\xId+0.35}, 0.0) {$\scriptstyle \m{A}$};

  \end{pgfonlayer}

  \begin{pgfonlayer}{edgelayer}

    \draw (TopL.center) -- (TopR.center) -- (TipTop.center) -- cycle;
    \draw (BotL.center) -- (BotR.center) -- (TipBot.center) -- cycle;

    \draw (UpL2.center) -- (UpL.center);
    \draw (UpR2.center) -- (UpR.center);
    \draw (DnL.center)  -- (DnL2.center);
    \draw (DnR.center)  -- (DnR2.center);

    \draw (IdUpL.center) -- (IdDnL.center);
    \draw (IdUpR.center) -- (IdDnR.center);

  \end{pgfonlayer}
\end{tikzpicture}%
}}
\end{equation}
while in a general theory one has
\begin{equation}
\label{eq:EffectStateNOTId}
\vcenter{\hbox{%
\begin{tikzpicture}
  \begin{pgfonlayer}{nodelayer}

    \def\dw{0.08}    
    \def\gap{0.16}   

    \node[style=none] (Sum)  at (-3.2,0.0) {$\sum_{i,j}$};
    \node[style=none] (Coef) at (-1.9,0.0) {$c^{\m{A}}_{ij}$};

    \def\xA{-0.6}

    \def\wTri{0.65}
    \def\yBaseTop{ 0.85}   
    \def\yBaseBot{-0.85}   
    \def\yWireTop{ 1.35}
    \def\yWireBot{-1.35}

    \node[style=none] (TipTop) at (\xA, { \gap/2}) {};
    \node[style=none] (TipBot) at (\xA, {- \gap/2}) {};

    \node[style=none] (TopL) at ({\xA-\wTri}, \yBaseTop) {};
    \node[style=none] (TopR) at ({\xA+\wTri}, \yBaseTop) {};
    \node[style=none] (TopLab) at (\xA, 0.50) {$\omega_i$};

    \node[style=none] (BotL) at ({\xA-\wTri}, \yBaseBot) {};
    \node[style=none] (BotR) at ({\xA+\wTri}, \yBaseBot) {};
    \node[style=none] (BotLab) at (\xA,-0.52) {$e_j$};

    \node[style=none] (UpL)  at ({\xA-\dw}, \yWireTop) {};
    \node[style=none] (UpR)  at ({\xA+\dw}, \yWireTop) {};
    \node[style=none] (UpL2) at ({\xA-\dw}, \yBaseTop) {};
    \node[style=none] (UpR2) at ({\xA+\dw}, \yBaseTop) {};

    \node[style=none] (DnL)  at ({\xA-\dw}, \yBaseBot) {};
    \node[style=none] (DnR)  at ({\xA+\dw}, \yBaseBot) {};
    \node[style=none] (DnL2) at ({\xA-\dw}, \yWireBot) {};
    \node[style=none] (DnR2) at ({\xA+\dw}, \yWireBot) {};

    \node[style=none] (LabTop) at ({\xA+0.35},  1.05) {$\scriptstyle \m{A}$};
    \node[style=none] (LabBot) at ({\xA+0.35}, -1.05) {$\scriptstyle \m{A}$};

    \node[style=none] (Eq)   at (1.0,0.0) {$\neq$};

    \def\xId{2.6}

    \node[style=none] (IdUpL)  at ({\xId-\dw},  1.35) {};
    \node[style=none] (IdUpR)  at ({\xId+\dw},  1.35) {};
    \node[style=none] (IdDnL)  at ({\xId-\dw}, -1.35) {};
    \node[style=none] (IdDnR)  at ({\xId+\dw}, -1.35) {};

    \node[style=none] (IdLab)  at ({\xId+0.35}, 0.0) {$\scriptstyle \m{A}$};

  \end{pgfonlayer}

  \begin{pgfonlayer}{edgelayer}

    \draw (TopL.center) -- (TopR.center) -- (TipTop.center) -- cycle;
    \draw (BotL.center) -- (BotR.center) -- (TipBot.center) -- cycle;

    \draw (UpL2.center) -- (UpL.center);
    \draw (UpR2.center) -- (UpR.center);
    \draw (DnL.center)  -- (DnL2.center);
    \draw (DnR.center)  -- (DnR2.center);

    \draw (IdUpL.center) -- (IdDnL.center);
    \draw (IdUpR.center) -- (IdDnR.center);

  \end{pgfonlayer}
\end{tikzpicture}%
}}
\end{equation}

In fact, such a channel might not even be valid, ie, might not be a physical operation. However, in every theory, this mathematical object acts \emph{locally} as the identity, that is, its action {\em on system $\mathcal A$ alone} is the same as the action of the identity. We represent this as
\begin{equation}
\label{eq:EffectStateLocallyEquivId}
\vcenter{\hbox{%
\begin{tikzpicture}
  \begin{pgfonlayer}{nodelayer}

    \def\dw{0.08}    
    \def\gap{0.16}   

    \node[style=none] (Sum)  at (-3.2,0.0) {$\sum_{i,j}$};
    \node[style=none] (Coef) at (-1.9,0.0) {$c^{\m{A}}_{ij}$};

    \def\xA{-0.6}

    \def\wTri{0.65}
    \def\yBaseTop{ 0.85}   
    \def\yBaseBot{-0.85}   
    \def\yWireTop{ 1.35}
    \def\yWireBot{-1.35}

    \node[style=none] (TipTop) at (\xA, { \gap/2}) {};
    \node[style=none] (TipBot) at (\xA, {- \gap/2}) {};

    \node[style=none] (TopL) at ({\xA-\wTri}, \yBaseTop) {};
    \node[style=none] (TopR) at ({\xA+\wTri}, \yBaseTop) {};
    \node[style=none] (TopLab) at (\xA, 0.50) {$\omega_i$};

    \node[style=none] (BotL) at ({\xA-\wTri}, \yBaseBot) {};
    \node[style=none] (BotR) at ({\xA+\wTri}, \yBaseBot) {};
    \node[style=none] (BotLab) at (\xA,-0.52) {$e_j$};

    \node[style=none] (UpL)  at ({\xA-\dw}, \yWireTop) {};
    \node[style=none] (UpR)  at ({\xA+\dw}, \yWireTop) {};
    \node[style=none] (UpL2) at ({\xA-\dw}, \yBaseTop) {};
    \node[style=none] (UpR2) at ({\xA+\dw}, \yBaseTop) {};

    \node[style=none] (DnL)  at ({\xA-\dw}, \yBaseBot) {};
    \node[style=none] (DnR)  at ({\xA+\dw}, \yBaseBot) {};
    \node[style=none] (DnL2) at ({\xA-\dw}, \yWireBot) {};
    \node[style=none] (DnR2) at ({\xA+\dw}, \yWireBot) {};

    \node[style=none] (LabTop) at ({\xA+0.35},  1.05) {$\scriptstyle \m{A}$};
    \node[style=none] (LabBot) at ({\xA+0.35}, -1.05) {$\scriptstyle \m{A}$};

    \node[style=none] (Eq)   at (1.0,0.0) {$\simeq$};
    \node[style=none] (EqTL) at (1.35,-0.40) {$\scriptstyle \mathrm{L}$};

    \def\xId{2.6}

    \node[style=none] (IdUpL)  at ({\xId-\dw},  1.35) {};
    \node[style=none] (IdUpR)  at ({\xId+\dw},  1.35) {};
    \node[style=none] (IdDnL)  at ({\xId-\dw}, -1.35) {};
    \node[style=none] (IdDnR)  at ({\xId+\dw}, -1.35) {};

    \node[style=none] (IdLab)  at ({\xId+0.35}, 0.0) {$\scriptstyle \m{A}$};
    \node[style=none] (Punctuation)  at ({\xId+0.71}, -0.39) {,};
  \end{pgfonlayer}

  \begin{pgfonlayer}{edgelayer}

    \draw (TopL.center) -- (TopR.center) -- (TipTop.center) -- cycle;
    \draw (BotL.center) -- (BotR.center) -- (TipBot.center) -- cycle;

    \draw (UpL2.center) -- (UpL.center);
    \draw (UpR2.center) -- (UpR.center);
    \draw (DnL.center)  -- (DnL2.center);
    \draw (DnR.center)  -- (DnR2.center);

    \draw (IdUpL.center) -- (IdDnL.center);
    \draw (IdUpR.center) -- (IdDnR.center);

  \end{pgfonlayer}
\end{tikzpicture}%
}}
\end{equation}
where we introduce the symbol $\simeq_L$ to mean that the process is locally equivalent to the identity. 
Similarly, in a general GPT, we have for a pair of systems that 
\begin{equation}
\label{eq:PiTL_NOTequal_identity}
\vcenter{\hbox{%
\begin{tikzpicture}
  \begin{pgfonlayer}{nodelayer}

    \def\dw{0.08} 

    \def\xLA{-3.8}
    \def\xLB{-2.4}

    \def\yTop{ 1.5}
    \def\yBoxTop{ 0.6}
    \def\yBoxBot{-0.6}
    \def\yBot{-1.5}

    \node[style=none] (LTA1) at ({\xLA-\dw}, \yTop) {};
    \node[style=none] (LTA2) at ({\xLA-\dw}, \yBoxTop) {};
    \node[style=none] (LTA3) at ({\xLA+\dw}, \yTop) {};
    \node[style=none] (LTA4) at ({\xLA+\dw}, \yBoxTop) {};

    \node[style=none] (LBA1) at ({\xLA-\dw}, \yBoxBot) {};
    \node[style=none] (LBA2) at ({\xLA-\dw}, \yBot) {};
    \node[style=none] (LBA3) at ({\xLA+\dw}, \yBoxBot) {};
    \node[style=none] (LBA4) at ({\xLA+\dw}, \yBot) {};

    \node[style=none] (LTB1) at ({\xLB-\dw}, \yTop) {};
    \node[style=none] (LTB2) at ({\xLB-\dw}, \yBoxTop) {};
    \node[style=none] (LTB3) at ({\xLB+\dw}, \yTop) {};
    \node[style=none] (LTB4) at ({\xLB+\dw}, \yBoxTop) {};

    \node[style=none] (LBB1) at ({\xLB-\dw}, \yBoxBot) {};
    \node[style=none] (LBB2) at ({\xLB-\dw}, \yBot) {};
    \node[style=none] (LBB3) at ({\xLB+\dw}, \yBoxBot) {};
    \node[style=none] (LBB4) at ({\xLB+\dw}, \yBot) {};

    \node[style=none] (LLabA_top) at ({\xLA+0.30},  1.00) {$\scriptstyle \m{A}$};
    \node[style=none] (LLabB_top) at ({\xLB+0.30},  1.00) {$\scriptstyle \m{B}$};
    \node[style=none] (LLabA_bot) at ({\xLA+0.30}, -1.00) {$\scriptstyle \m{A}$};
    \node[style=none] (LLabB_bot) at ({\xLB+0.30}, -1.00) {$\scriptstyle \m{B}$};

    \def\boxPad{0.45}
    \node[style=none] (BR1) at ({\xLA-\boxPad}, \yBoxTop) {};
    \node[style=none] (BR2) at ({\xLB+\boxPad}, \yBoxTop) {};
    \node[style=none] (BR3) at ({\xLB+\boxPad}, \yBoxBot) {};
    \node[style=none] (BR4) at ({\xLA-\boxPad}, \yBoxBot) {};
    \node[style=none] (BoxLab) at ({(\xLA+\xLB)/2}, 0) {$\Pi_{TL}$};

    \node[style=none] (Eq)   at (-0.9,0.0) {$\neq$};

    \def\xRA{1.0}
    \def\xRB{2.2}

    \node[style=none] (RTA1) at ({\xRA-\dw}, \yTop) {};
    \node[style=none] (RTA2) at ({\xRA-\dw}, \yBot) {};
    \node[style=none] (RTA3) at ({\xRA+\dw}, \yTop) {};
    \node[style=none] (RTA4) at ({\xRA+\dw}, \yBot) {};

    \node[style=none] (RTB1) at ({\xRB-\dw}, \yTop) {};
    \node[style=none] (RTB2) at ({\xRB-\dw}, \yBot) {};
    \node[style=none] (RTB3) at ({\xRB+\dw}, \yTop) {};
    \node[style=none] (RTB4) at ({\xRB+\dw}, \yBot) {};

    \node[style=none] (RLabA) at ({\xRA+0.30}, 0.0) {$\scriptstyle \m{A}$};
    \node[style=none] (RLabB) at ({\xRB+0.30}, 0.0) {$\scriptstyle \m{B}$};

  \end{pgfonlayer}

  \begin{pgfonlayer}{edgelayer}

    \draw (LTA1.center) -- (LTA2.center);
    \draw (LTA3.center) -- (LTA4.center);
    \draw (LBA1.center) -- (LBA2.center);
    \draw (LBA3.center) -- (LBA4.center);

    \draw (LTB1.center) -- (LTB2.center);
    \draw (LTB3.center) -- (LTB4.center);
    \draw (LBB1.center) -- (LBB2.center);
    \draw (LBB3.center) -- (LBB4.center);

    \draw (BR1.center) -- (BR2.center) -- (BR3.center) -- (BR4.center) -- cycle;

    \draw (RTA1.center) -- (RTA2.center);
    \draw (RTA3.center) -- (RTA4.center);

    \draw (RTB1.center) -- (RTB2.center);
    \draw (RTB3.center) -- (RTB4.center);

  \end{pgfonlayer}
\end{tikzpicture}%
}}
\end{equation}

even though
\begin{equation}
\label{eq:PiTL_equals_identity_TL}
\vcenter{\hbox{%
\begin{tikzpicture}
  \begin{pgfonlayer}{nodelayer}

    \def\dw{0.08} 

    \def\xLA{-3.8}
    \def\xLB{-2.4}

    \def\yTop{ 1.5}
    \def\yBoxTop{ 0.6}
    \def\yBoxBot{-0.6}
    \def\yBot{-1.5}

    \node[style=none] (LTA1) at ({\xLA-\dw}, \yTop) {};
    \node[style=none] (LTA2) at ({\xLA-\dw}, \yBoxTop) {};
    \node[style=none] (LTA3) at ({\xLA+\dw}, \yTop) {};
    \node[style=none] (LTA4) at ({\xLA+\dw}, \yBoxTop) {};

    \node[style=none] (LBA1) at ({\xLA-\dw}, \yBoxBot) {};
    \node[style=none] (LBA2) at ({\xLA-\dw}, \yBot) {};
    \node[style=none] (LBA3) at ({\xLA+\dw}, \yBoxBot) {};
    \node[style=none] (LBA4) at ({\xLA+\dw}, \yBot) {};

    \node[style=none] (LTB1) at ({\xLB-\dw}, \yTop) {};
    \node[style=none] (LTB2) at ({\xLB-\dw}, \yBoxTop) {};
    \node[style=none] (LTB3) at ({\xLB+\dw}, \yTop) {};
    \node[style=none] (LTB4) at ({\xLB+\dw}, \yBoxTop) {};

    \node[style=none] (LBB1) at ({\xLB-\dw}, \yBoxBot) {};
    \node[style=none] (LBB2) at ({\xLB-\dw}, \yBot) {};
    \node[style=none] (LBB3) at ({\xLB+\dw}, \yBoxBot) {};
    \node[style=none] (LBB4) at ({\xLB+\dw}, \yBot) {};

    \node[style=none] (LLabA_top) at ({\xLA+0.30},  1.00) {$\scriptstyle \m{A}$};
    \node[style=none] (LLabB_top) at ({\xLB+0.30},  1.00) {$\scriptstyle \m{B}$};
    \node[style=none] (LLabA_bot) at ({\xLA+0.30}, -1.00) {$\scriptstyle \m{A}$};
    \node[style=none] (LLabB_bot) at ({\xLB+0.30}, -1.00) {$\scriptstyle \m{B}$};

    \def\boxPad{0.45}
    \node[style=none] (BR1) at ({\xLA-\boxPad}, \yBoxTop) {};
    \node[style=none] (BR2) at ({\xLB+\boxPad}, \yBoxTop) {};
    \node[style=none] (BR3) at ({\xLB+\boxPad}, \yBoxBot) {};
    \node[style=none] (BR4) at ({\xLA-\boxPad}, \yBoxBot) {};
    \node[style=none] (BoxLab) at ({(\xLA+\xLB)/2}, 0) {$\Pi_{TL}$};

    \node[style=none] (Eq)   at (-0.9,0.0) {$=$};
    \node[style=none] (EqTL) at (-0.9,0.45) {$\scriptstyle \mathrm{TL}$};

    \def\xRA{1.0}
    \def\xRB{2.2}

    \node[style=none] (RTA1) at ({\xRA-\dw}, \yTop) {};
    \node[style=none] (RTA2) at ({\xRA-\dw}, \yBot) {};
    \node[style=none] (RTA3) at ({\xRA+\dw}, \yTop) {};
    \node[style=none] (RTA4) at ({\xRA+\dw}, \yBot) {};

    \node[style=none] (RTB1) at ({\xRB-\dw}, \yTop) {};
    \node[style=none] (RTB2) at ({\xRB-\dw}, \yBot) {};
    \node[style=none] (RTB3) at ({\xRB+\dw}, \yTop) {};
    \node[style=none] (RTB4) at ({\xRB+\dw}, \yBot) {};

    \node[style=none] (RLabA) at ({\xRA+0.30}, 0.0) {$\scriptstyle \m{A}$};
    \node[style=none] (RLabB) at ({\xRB+0.30}, 0.0) {$\scriptstyle \m{B}$};
    \node[style=none] (Punctuation)  at ({\xRB+0.71}, -0.39) {.};
  \end{pgfonlayer}

  \begin{pgfonlayer}{edgelayer}

    \draw (LTA1.center) -- (LTA2.center);
    \draw (LTA3.center) -- (LTA4.center);
    \draw (LBA1.center) -- (LBA2.center);
    \draw (LBA3.center) -- (LBA4.center);

    \draw (LTB1.center) -- (LTB2.center);
    \draw (LTB3.center) -- (LTB4.center);
    \draw (LBB1.center) -- (LBB2.center);
    \draw (LBB3.center) -- (LBB4.center);

    \draw (BR1.center) -- (BR2.center) -- (BR3.center) -- (BR4.center) -- cycle;

    \draw (RTA1.center) -- (RTA2.center);
    \draw (RTA3.center) -- (RTA4.center);

    \draw (RTB1.center) -- (RTB2.center);
    \draw (RTB3.center) -- (RTB4.center);

  \end{pgfonlayer}
\end{tikzpicture}%
}}
\end{equation}
Thus, $\Pi_{\rm TL}$ coincides with the identity when constructed within a tomographically local theory, but differs from the identity whenever the parallel composition fails to be tomographically local. In such cases, it is the parallel composition of channels that are locally indistinguishable from the identity, yet can differ from it at the global level.

We now define a complementary operator $\Pi_{\rm TNL}$. 
\begin{definition}[Projection onto the tomographically non-local subspace]
\label{def: PinTL}
    Consider a composite system $\mathcal{AB}$ and the associated projection $\Pi_{TL}$ onto its tomographically local subspace. We define the \emph{projection $\Pi_{\rm TNL}$ onto the tomographically-nonlocal subspace} as 
    \begin{align}
        \Pi_{\rm TNL} := \mathds{1}^{AB} - \Pi_{TL},
    \end{align}
    where $\mathds{1}^{AB}$ is the identity operator on $AB$.
    Given that $\Pi_{\rm TL}:v\to v_{\rm TL}$ for all $v\in AB$, 
\begin{align}
    \Pi_{\rm TNL} (v_{\rm TL}+h)= h \quad \text{ for all }v_{\rm TL}\in A\otimes B,h\in H_S.
\end{align}
\end{definition}

\begin{equation}
\label{eq:PiNTL}
\vcenter{\hbox{
We will represent $\Pi_{\rm TNL}$  diagrammatically as
\begin{tikzpicture}
	\begin{pgfonlayer}{nodelayer}

		\node [style=none] (LTA) at (-3.6,1.7) {};
		\node [style=none] (LTB) at (-2.4,1.7) {};
		\node [style=none] (LTA2) at (-3.6,0.75) {};
		\node [style=none] (LTB2) at (-2.4,0.75) {};

		\node [style=none] (LBA) at (-3.6,-0.75) {};
		\node [style=none] (LBB) at (-2.4,-0.75) {};
		\node [style=none] (LBA2) at (-3.6,-1.7) {};
		\node [style=none] (LBB2) at (-2.4,-1.7) {};

		\node [style=none] (LA1) at (-3.9,1.3) {$\scriptstyle \mathcal A$};
		\node [style=none] (LB1) at (-2.1,1.3) {$\scriptstyle \mathcal B$};
		\node [style=none] (LA2) at (-3.9,-1.3) {$\scriptstyle \mathcal A$};
		\node [style=none] (LB2) at (-2.1,-1.3) {$\scriptstyle \mathcal B$};

		\node [style=none] (R1) at (-4.15,0.75) {};
		\node [style=none] (R2) at (-1.85,0.75) {};
		\node [style=none] (R3) at (-1.85,-0.75) {};
		\node [style=none] (R4) at (-4.15,-0.75) {};
		\node [style=none] (RectLabel) at (-3.0,0) {$\Pi_{TNL}$};

		\end{pgfonlayer}

	\begin{pgfonlayer}{edgelayer}

		\draw[double] (LTA.center) to (LTA2.center);
		\draw[double] (LTB.center) to (LTB2.center);
		\draw[double] (LBA.center) to (LBA2.center);
		\draw[double] (LBB.center) to (LBB2.center);

		\draw (R1.center) to (R2.center);
		\draw (R2.center) to (R3.center);
		\draw (R3.center) to (R4.center);
		\draw (R4.center) to (R1.center);
	\end{pgfonlayer}
\end{tikzpicture}
}}
\end{equation}
$\Pi_{\rm TNL}$ is a projection ($\Pi_{\rm TNL}\circ\Pi_{\rm TNL}=\Pi_{\rm TNL}$) that destroys the tomographically local component of any vector in $AB$. Just as for Prop.~\ref{prop:DualPiTL}, a similar feature holds for the action of $\Pi_{\rm TNL}$ on effects, namely $\Pi^*_{\rm TNL}(w):=w\circ \Pi_{\rm TNL}$ for $w\in (AB)^*$:
\begin{proposition} 
[Dual of $\Pi_{\rm TNL}$]
\label{prop:DualPinTL}
The operator $\Pi_{\rm TNL}^*: (AB)^*\to (AB)^*$ defined by 
\begin{align}
    \Pi_{\rm TNL}^*(w)=w\circ \Pi_{\rm TNL}.
\end{align}
is the projector onto the tomographically non-local subspace $H_E$. 
\end{proposition}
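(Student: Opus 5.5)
The plan is to mirror the proof of Proposition~\ref{prop:DualPiTL}, or more directly to deduce the statement from it by linearity. By Definition~\ref{def: PinTL}, $\Pi_{\rm TNL}=\mathds{1}^{AB}-\Pi_{\rm TL}$. Precomposition is linear in the map, so for every $w\in(AB)^*$
\begin{align}
\Pi_{\rm TNL}^*(w)=w\circ(\mathds{1}^{AB}-\Pi_{\rm TL})=w-\Pi_{\rm TL}^*(w).
\end{align}
Thus $\Pi^*_{\rm TNL}=\mathds{1}^{(AB)^*}-\Pi^*_{\rm TL}$.

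Using Lemma~\ref{lemma:AtensorB+Holistic}, I will decompose each $w\in(AB)^*$ uniquely as $w=w_{\rm TL}+h'$, with $w_{\rm TL}\in(AB)^*_\otimes$ and $h'\in H_E$. Proposition~\ref{prop:DualPiTL} says $\Pi^*_{\rm TL}(w)=w_{\rm TL}$, so $\Pi^*_{\rm TNL}(w)=h'$. This is exactly the projection onto $H_E$ along $(AB)^*_\otimes$.

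As a sanity check, I will also verify the claim directly on vectors, without relying on Proposition~\ref{prop:DualPiTL}. Take $v=v_{\rm TL}+h\in AB$, with $v_{\rm TL}\in AB_\otimes$ and $h\in H_S$. Then
\begin{align}
\Pi^*_{\rm TNL}(w)(v)=w(h)=w_{\rm TL}(h)+h'(h).
\end{align}
The term $w_{\rm TL}(h)$ vanishes because $H_S$ is the annihilator of $(AB)^*_\otimes$. Since $h'$ annihilates $AB_\otimes$, we also have $h'(v)=h'(h)$. Hence $\Pi^*_{\rm TNL}(w)=h'$.

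Finally, I will confirm that $\Pi^*_{\rm TNL}$ is idempotent, has image $H_E$, and has kernel $(AB)^*_\otimes$. All three properties follow immediately from the decomposition above.

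No step is a real obstacle. The only point needing care is to use both annihilator relations, $H_S=((AB)^*_\otimes)^0$ and $H_E=(AB_\otimes)^0$, together with the two direct-sum decompositions from Lemma~\ref{lemma:AtensorB+Holistic}. These are what make the cross terms vanish.
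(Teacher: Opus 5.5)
Your proposal is correct and matches the paper, which only remarks that the proof is analogous to that of Proposition~\ref{prop:DualPiTL}. Your ``sanity check'' is exactly that analogous argument, and your main route via $\Pi^*_{\rm TNL}=\mathds{1}-\Pi^*_{\rm TL}$ is an equally valid shortcut that reduces the claim directly to Proposition~\ref{prop:DualPiTL}. You are also right that the vanishing of $w_{\rm TL}(h)$ relies on $H_S=((AB)^*_\otimes)^0$, a step the paper's proof of Proposition~\ref{prop:DualPiTL} uses without stating it.
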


The proof of Prop.~\ref{prop:DualPinTL} is totally analogous to that of Prop.~\ref{prop:DualPiTL}.

This operator also has some important properties, proved in Appendix~\ref{projprop} (see Prop.~\ref{prop17}). 
\begin{enumerate}
    \item $\Pi_{\rm TNL}=\mathbf{0}$ if and only if the GPT system is tomographically local;
    \item $\Pi_{\rm TNL}$ does not preserve normalization. Vectors in its image have normalization 0, since $u^{\m{AB}}\circ\Pi_{\rm TNL}=0^{(AB)^*}$;
    \item $\mathsf{ Im}(\Pi_{\rm TNL})\bigcap S_{\m{AB}}=\{0^{AB}\}$; 
    \item For a state $\omega$, $\Pi_{\rm TNL}(\omega)=0\iff \omega\in AB_{\otimes}$;  similarly for effects, $e\circ \Pi_{\rm TNL}=0\iff e\in AB^*_{\otimes}$. As a consequence of property $3$, for any normalized state $\omega$, $\Pi_{\rm TNL}(\omega)\not\in S_{\m{AB}}\iff \omega\not\in AB_{\otimes}$.
\end{enumerate}

In analogy to $\Pi_{\rm TL}$, the projection $\Pi_{\rm TNL}$ can also be decomposed as a sum of vectors in $A,B,AB$ and their duals  or as a linear combination of states/effects. 

\begin{proposition}
    The projection onto the holistic subspaces can be written equivalently in the following two ways:
    \begin{enumerate}
        \item Combination of basis vectors: Consider a basis $\{v_{ij}\}_{ij}\bigcup\{h_k\}_k$ of $AB$ such that $\{v_{ij}\}\subset AB_{\otimes}$ and 
        $\{h_k\} \subset H_S$. Now, take the dual basis of $AB^*$, which will have the form $\{t_{ij}\}_{ij}\bigcup\{k'\}_{k'}$ with $\{t_{ij}\}\subset AB^*_{\otimes}$ and 
        $\{h'_{k'}\}_{k'}\subset H_E$. Then, $\Pi_{\rm TNL}$ can be written as
        \begin{align}
            \Pi_{\rm TNL}(\bullet)= \sum_{k}h_k\circ (h_k'[\bullet]).
        \end{align}
        Diagrammatically,
        \begin{equation}
\label{eq:Pi_nTL_hourglass}
\vcenter{\hbox{
        \begin{tikzpicture}
	\begin{pgfonlayer}{nodelayer}

		\node [style=none] (LTA) at (-3.6,1.4) {};
		\node [style=none] (LTB) at (-2.4,1.4) {};
		\node [style=none] (LTA2) at (-3.6,0.6) {};
		\node [style=none] (LTB2) at (-2.4,0.6) {};

		\node [style=none] (LBA) at (-3.6,-0.6) {};
		\node [style=none] (LBB) at (-2.4,-0.6) {};
		\node [style=none] (LBA2) at (-3.6,-1.4) {};
		\node [style=none] (LBB2) at (-2.4,-1.4) {};

		\node [style=none] (LA1) at (-3.9,1.0) {$\scriptstyle \mathcal A$};
		\node [style=none] (LB1) at (-2.1,1.0) {$\scriptstyle \mathcal B$};
		\node [style=none] (LA2) at (-3.9,-1.0) {$\scriptstyle \mathcal A$};
		\node [style=none] (LB2) at (-2.1,-1.0) {$\scriptstyle \mathcal B$};

		\node [style=none] (R1) at (-3.9,0.6) {};
		\node [style=none] (R2) at (-2.1,0.6) {};
		\node [style=none] (R3) at (-2.1,-0.6) {};
		\node [style=none] (R4) at (-3.9,-0.6) {};
		\node [style=none] (RectLabel) at (-3.0,0) {$\Pi_{TNL}$};

		\node [style=none] (EQ) at (-1.2,0) {$=$};
		\node [style=none] (Sumk) at (0.2,0) {$\sum_{k}$};


		\node [style=none] (Tip) at (2.2,0) {};

		\node [style=none] (RTA)  at (1.55,2.05) {};
		\node [style=none] (RTB)  at (2.85,2.05) {};
		\node [style=none] (RTA2) at (1.55,0.95) {};
		\node [style=none] (RTB2) at (2.85,0.95) {};

		\node [style=none] (RA1) at (1.25,1.45) {$\scriptstyle \mathcal A$};
		\node [style=none] (RB1) at (3.15,1.45) {$\scriptstyle \mathcal B$};

		\node [style=none] (RBA)  at (1.55,-0.95) {};
		\node [style=none] (RBB)  at (2.85,-0.95) {};
		\node [style=none] (RBA2) at (1.55,-2.05) {};
		\node [style=none] (RBB2) at (2.85,-2.05) {};

		\node [style=none] (RA2) at (1.25,-1.45) {$\scriptstyle \mathcal A$};
		\node [style=none] (RB2) at (3.15,-1.45) {$\scriptstyle \mathcal B$};

		\node [style=none] (TTopL) at (0.90,0.95) {};
		\node [style=none] (TTopR) at (3.50,0.95) {};
		\node [style=none] (TopLabel) at (2.2,0.55) {$h_k$};

		\node [style=none] (TBotL) at (0.90,-0.95) {};
		\node [style=none] (TBotR) at (3.50,-0.95) {};
		\node [style=none] (BotLabel) at (2.2,-0.55) {$h'_k$};

	\end{pgfonlayer}

	\begin{pgfonlayer}{edgelayer}

		\draw[double] (LTA.center) to (LTA2.center);
		\draw[double] (LTB.center) to (LTB2.center);
		\draw[double] (LBA.center) to (LBA2.center);
		\draw[double] (LBB.center) to (LBB2.center);

		\draw (R1.center) to (R2.center);
		\draw (R2.center) to (R3.center);
		\draw (R3.center) to (R4.center);
		\draw (R4.center) to (R1.center);

		\draw[double] (RTA.center) to (RTA2.center);
		\draw[double] (RTB.center) to (RTB2.center);
		\draw[double] (RBA.center) to (RBA2.center);
		\draw[double] (RBB.center) to (RBB2.center);

		\draw (TTopL.center) to (TTopR.center);
		\draw (TTopR.center) to (Tip.center);
		\draw (Tip.center) to (TTopL.center);

		\draw (TBotL.center) to (TBotR.center);
		\draw (TBotR.center) to (Tip.center);
		\draw (Tip.center) to (TBotL.center);

	\end{pgfonlayer}
\end{tikzpicture}
}}
\end{equation}
        \item Linear combination of effect state channels and $\Pi_{\rm TL}$: Consider a basis of $AB$ $\{s_i\}$ such that $\{s_i\}\subset \m{S_{AB}}$ and a basis of $AB^*$ constituted of effects, $\{e_i\}\subset E_{AB}$. Then, $\Pi_{\rm TNL}$ can be written as
        \begin{align}
            \Pi_{\rm TNL}(\bullet) = \sum_{ij} c^{\m{AB}}_{ij} s_i\circ e_j[\bullet] - \Pi_{\rm TL},
        \end{align}
        where $c^{\m{AB}}_{ij}$ are again the elements of the inverse matrix $\mathbb{M}^{\m{AB}}$, with $(\mathbb{M}^{\m{AB}})_{ij}=e_j(s_i)$. 
        Diagrammatically,

        \begin{equation}
\label{eq:Pi_nTL_GlobalMinusTL}
        \vcenter{\hbox{
\begin{tikzpicture}[baseline={(eq.base)}]
  \begin{pgfonlayer}{nodelayer}

    \node[style=none] (LTA)  at (-4.60, 2.00) {};
    \node[style=none] (LTB)  at (-3.40, 2.00) {};
    \node[style=none] (LTA2) at (-4.60, 0.70) {};
    \node[style=none] (LTB2) at (-3.40, 0.70) {};

    \node[style=none] (LBA)  at (-4.60,-0.70) {};
    \node[style=none] (LBB)  at (-3.40,-0.70) {};
    \node[style=none] (LBA2) at (-4.60,-2.00) {};
    \node[style=none] (LBB2) at (-3.40,-2.00) {};

    \node[style=none] (LA1) at (-4.90, 1.35) {$\scriptstyle \m{A}$};
    \node[style=none] (LB1) at (-3.10, 1.35) {$\scriptstyle \m{B}$};
    \node[style=none] (LA2) at (-4.90,-1.35) {$\scriptstyle \m{A}$};
    \node[style=none] (LB2) at (-3.10,-1.35) {$\scriptstyle \m{B}$};

    \node[style=none] (LR1) at (-5.10, 0.70) {};
    \node[style=none] (LR2) at (-2.90, 0.70) {};
    \node[style=none] (LR3) at (-2.90,-0.70) {};
    \node[style=none] (LR4) at (-5.10,-0.70) {};
    \node[style=none] (LRectLab) at (-4.00, 0.00) {$\Pi_{TNL}$};

    \node[style=none] (eq)  at (-1.90, 0.00) {$=$};
    \node[style=none] (sum) at (-0.10, 0.00) {$\sum_{i,j} c_{ij}$};

    \node[style=none] (MTA)  at (1.05, 2.00) {};
    \node[style=none] (MTB)  at (2.35, 2.00) {};
    \node[style=none] (MTA2) at (1.05, 0.90) {};
    \node[style=none] (MTB2) at (2.35, 0.90) {};

    \node[style=none] (MBA)  at (1.05,-0.90) {};
    \node[style=none] (MBB)  at (2.35,-0.90) {};
    \node[style=none] (MBA2) at (1.05,-2.00) {};
    \node[style=none] (MBB2) at (2.35,-2.00) {};

    \node[style=none] (MA1) at (0.75, 1.35) {$\scriptstyle \m{A}$};
    \node[style=none] (MB1) at (2.65, 1.35) {$\scriptstyle \m{B}$};
    \node[style=none] (MA2) at (0.75,-1.35) {$\scriptstyle \m{A}$};
    \node[style=none] (MB2) at (2.65,-1.35) {$\scriptstyle \m{B}$};

    \node[style=none] (MTip)  at (1.70, 0.00) {};
    \node[style=none] (MTopL) at (0.40, 0.90) {};
    \node[style=none] (MTopR) at (3.00, 0.90) {};
    \node[style=none] (MBotL) at (0.40,-0.90) {};
    \node[style=none] (MBotR) at (3.00,-0.90) {};

    \node[style=none] (MTopLab) at (1.70, 0.52) {$\omega_i$};
    \node[style=none] (MBotLab) at (1.70,-0.52) {$e_j$};

    \node[style=none] (minus) at (3.8, 0.00) {$-$};

    \node[style=none] (RTA)  at (5.60, 2.00) {};
    \node[style=none] (RTB)  at (6.80, 2.00) {};
    \node[style=none] (RTA2) at (5.60, 0.70) {};
    \node[style=none] (RTB2) at (6.80, 0.70) {};

    \node[style=none] (RBA)  at (5.60,-0.70) {};
    \node[style=none] (RBB)  at (6.80,-0.70) {};
    \node[style=none] (RBA2) at (5.60,-2.00) {};
    \node[style=none] (RBB2) at (6.80,-2.00) {};

    \node[style=none] (RA1) at (5.30, 1.35) {$\scriptstyle \m{A}$};
    \node[style=none] (RB1) at (7.10, 1.35) {$\scriptstyle \m{B}$};
    \node[style=none] (RA2) at (5.30,-1.35) {$\scriptstyle \m{A}$};
    \node[style=none] (RB2) at (7.10,-1.35) {$\scriptstyle \m{B}$};

    \node[style=none] (RR1) at (5.10, 0.70) {};
    \node[style=none] (RR2) at (7.30, 0.70) {};
    \node[style=none] (RR3) at (7.30,-0.70) {};
    \node[style=none] (RR4) at (5.10,-0.70) {};
    \node[style=none] (RRectLab) at (6.20, 0.00) {$\Pi_{TL}$};

    \node[style=none] (eqbase) at (-1.90,0) {};
    \node[style=none] (eq) at (-1.90,0) {}; 
  \end{pgfonlayer}

  \begin{pgfonlayer}{edgelayer}

    \draw[double] (LTA.center) to (LTA2.center);
    \draw[double] (LTB.center) to (LTB2.center);
    \draw[double] (LBA.center) to (LBA2.center);
    \draw[double] (LBB.center) to (LBB2.center);

    \draw (LR1.center) to (LR2.center);
    \draw (LR2.center) to (LR3.center);
    \draw (LR3.center) to (LR4.center);
    \draw (LR4.center) to (LR1.center);

    \draw[double] (MTA.center) to (MTA2.center);
    \draw[double] (MTB.center) to (MTB2.center);
    \draw[double] (MBA.center) to (MBA2.center);
    \draw[double] (MBB.center) to (MBB2.center);

    \draw (MTopL.center) to (MTopR.center);
    \draw (MTopR.center) to (MTip.center);
    \draw (MTip.center)  to (MTopL.center);

    \draw (MBotL.center) to (MBotR.center);
    \draw (MBotR.center) to (MTip.center);
    \draw (MTip.center)  to (MBotL.center);

    \draw[double] (RTA.center) to (RTA2.center);
    \draw[double] (RTB.center) to (RTB2.center);
    \draw[double] (RBA.center) to (RBA2.center);
    \draw[double] (RBB.center) to (RBB2.center);

    \draw (RR1.center) to (RR2.center);
    \draw (RR2.center) to (RR3.center);
    \draw (RR3.center) to (RR4.center);
    \draw (RR4.center) to (RR1.center);

  \end{pgfonlayer}
\end{tikzpicture}
}}
\end{equation}

    \end{enumerate}
    
\end{proposition}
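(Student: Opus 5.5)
Both expressions follow from a single observation: a basis of $AB$ together with its dual basis of $(AB)^*$ resolves the identity on $AB$, and splitting this resolution along the decomposition of Lemma~\ref{lemma:AtensorB+Holistic} isolates $\Pi_{\rm TNL}$.

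For item 1, fix the basis $\{v_{ij}\}\cup\{h_k\}$ of $AB = AB_\otimes\oplus H_S$ as in the statement. Let $\{t_{ij}\}\cup\{h'_k\}$ be the dual basis. The first step is to check that the dual basis really has the claimed form. Each $h'_k$ vanishes on every $v_{ij}$, hence on all of $AB_\otimes$, so $h'_k$ lies in the annihilator of $AB_\otimes$, which is $H_E$ by Definition~\ref{def:HolisticSubspaces}. Each $t_{ij}$ vanishes on all $h_k$, hence on $H_S$. By Lemma~\ref{lemma:AtensorB+Holistic} and a dimension count, the annihilator of $H_S$ is exactly $(AB)^*_\otimes$: since $H_S$ is the annihilator of $(AB)^*_\otimes$, the double-annihilator identity in finite dimensions gives $H_S^0=(AB)^*_\otimes$. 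Therefore $t_{ij}\in(AB)^*_\otimes$.

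Next comes the resolution of the identity, $\mathds{1}^{AB}=\sum_{ij}v_{ij}\,t_{ij}[\bullet]+\sum_k h_k\,h'_k[\bullet]$. Evaluate both sums on $v=v_{\rm TL}+h$. The first sum kills $h$ and reproduces $v_{\rm TL}$, so it equals $\Pi_{\rm TL}$; this also agrees with Proposition~\ref{propPiTL}, item 1. The second sum kills $v_{\rm TL}$ and returns $h$. By Definition~\ref{def: PinTL} it is therefore $\Pi_{\rm TNL}$.

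For item 2, take bases $\{s_i\}\subset S_{\m{AB}}$ and $\{e_j\}\subset E_{\m{AB}}$. These exist because $AB=\mathrm{Span}(S_{\m{AB}})$ and $(AB)^*=\mathrm{Span}(E_{\m{AB}})$. Define the matrix $(\mathbb{M}^{\m{AB}})_{ij}=e_j(s_i)$. It is invertible, since the bilinear pairing $AB\times(AB)^*\to\mathbb R$ is nondegenerate and both families are bases. The dual basis of $\{s_i\}$ is then $\tilde t_i=\sum_j c_{ij}e_j$ (up to the transpose convention, which one fixes by checking $\tilde t_{i'}(s_i)=\delta_{ii'}$ with $c=(\mathbb M^{\m{AB}})^{-1}$). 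Hence $\mathds{1}^{AB}=\sum_{ij}c^{\m{AB}}_{ij}\,s_i\circ e_j[\bullet]$, exactly as in the local statement preceding Eq.~\eqref{eq:local_tomography_completeness}, now applied to the global system. Subtracting $\Pi_{\rm TL}$ and using $\Pi_{\rm TNL}=\mathds{1}^{AB}-\Pi_{\rm TL}$ gives the claimed formula.

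Everything here is routine linear algebra. The only points needing care are the index and transpose conventions in $c_{ij}$, and the identification $H_S^0=(AB)^*_\otimes$ used in item 1. That identification is the main thing to state carefully, and it follows from finite dimensionality together with Lemma~\ref{lemma:AtensorB+Holistic}.
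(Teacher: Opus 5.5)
Your proposal is correct and follows the paper's route. The paper's proof simply says the result follows from $\Pi_{\rm TNL}=\mathds{1}^{AB}-\Pi_{\rm TL}$ (Def.~\ref{def: PinTL}) and Prop.~\ref{propPiTL}, that is, a dual-basis resolution of the identity split along $AB=AB_\otimes\oplus H_S$ for item 1, and the global effect--state resolution of the identity minus $\Pi_{\rm TL}$ for item 2. Your checks that $t_{ij}\in(AB)^*_\otimes$ and $h'_k\in H_E$ via $H_S^0=(AB)^*_\otimes$ are warranted; they implicitly use that the $v_{ij}$ and $h_k$ span $AB_\otimes$ and $H_S$, which holds by a dimension count. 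Your transpose caveat is also warranted: with $(\mathbb{M}^{AB})_{ij}=e_j(s_i)$ the correct coefficients are $c=\big((\mathbb{M}^{AB})^{T}\big)^{-1}$, which matches the transposed convention the appendix uses in proving Prop.~\ref{propPiTL}.
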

The above proposition follows from Def.~\ref{def: PinTL} and Prop.~\ref{propPiTL}.

\section{Tomographically-nonlocal entanglement}

We now apply these tools to study entanglement in general GPTs, and in particular in tomographically-nonlocal \blk ones (though still confined to prepare-and-measure scenarios). 
As per Definition~\ref{def: EntangledStates}, every state that is not in the convex hull of the product states is entangled. There are two qualitatively distinct ways in which this might happen. The first occurs when a state has a component {\em in } the span but outside the convex hull of the product states. This kind of entanglement, which we call {\em tomographically-local entanglement}  is the kind present in  unrestricted quantum theory, since {\em all} such  entangled quantum states are in the span of the unrestricted product states. \blk In GPTs that are tomographically nonlocal, however, a new possibility arises: states can have a component outside even the {\em span} of the product states, and such states exhibit a novel form of entanglement that we call {\em tomographically-nonlocal entanglement}. In general theories, both kinds of entanglement may be present at once in a single state. 

All of these considerations hold equally well for effects. 
Thus, we can formalise \blk two definitions:

\begin{definition}[Tomographically-local entanglement] 
\label{def:TLEntenglement}
A normalized state $\omega^{\m{AB}}\in \Omega_{\m{AB}}$ is said to have tomographically-local entanglement (or TL-entanglement) if and only if the component of the state in the tomographically local subspace $AB_\otimes$ is not separable: \blk
\begin{equation}
\label{eq:TLEntanglementDef}
\vcenter{\hbox{
\begin{tikzpicture}
	\begin{pgfonlayer}{nodelayer}

		\node [style=none] (TA) at (-0.6,1.4) {};
		\node [style=none] (TB) at (0.6,1.4) {};

		\node [style=none] (TA2) at (-0.6,0.7) {};
		\node [style=none] (TB2) at (0.6,0.7) {};

		\node [style=none] (Alabel1) at (-0.9,1.1) {$\scriptstyle \mathcal A$};
		\node [style=none] (Blabel1) at (0.9,1.1) {$\scriptstyle \mathcal B$};

		\node [style=none] (R1) at (-0.9,0.7) {};
		\node [style=none] (R2) at (0.9,0.7) {};
		\node [style=none] (R3) at (0.9,-0.1) {};
		\node [style=none] (R4) at (-0.9,-0.1) {};

		\node [style=none] (RectLabel) at (0,0.3) {$\Pi_{TL}$};

		\node [style=none] (BA) at (-0.6,-0.1) {};
		\node [style=none] (BB) at (0.6,-0.1) {};

		\node [style=none] (BA2) at (-0.6,-0.8) {};
		\node [style=none] (BB2) at (0.6,-0.8) {};

		\node [style=none] (Alabel2) at (-0.9,-0.45) {$\scriptstyle \mathcal A$};
		\node [style=none] (Blabel2) at (0.9,-0.45) {$\scriptstyle \mathcal B$};

		\node [style=none] (TL) at (-0.9,-0.8) {};
		\node [style=none] (TR) at (0.9,-0.8) {};
		\node [style=none] (TC) at (0,-1.8) {};
		\node [style=none] (W)  at (0,-1.25) {$\omega$};

		\node [style=none] (NEQ) at (2.2,-0.5) {};

	\end{pgfonlayer}

	\begin{pgfonlayer}{edgelayer}

		\draw[double] (TA.center) to (TA2.center);
		\draw[double] (TB.center) to (TB2.center);

		\draw (R1.center) to (R2.center);
		\draw (R2.center) to (R3.center);
		\draw (R3.center) to (R4.center);
		\draw (R4.center) to (R1.center);

		\draw[double] (BA.center) to (BA2.center);
		\draw[double] (BB.center) to (BB2.center);

		\draw (TL.center) to (TR.center);
		\draw (TR.center) to (TC.center);
		\draw (TC.center) to (TL.center);

	\end{pgfonlayer}
\end{tikzpicture}}}\not\in \mathsf{Sep}(\Omega^{\mathcal{AB}})\,.
\end{equation}
This is equivalent to stating that there is no probability distribution $\{p_i\}$, states $\{\omega_i^A\}\subset S_{A}$ and $\{\nu_i^B\}\subset S_{B}$, and vector $\tilde{h}$ in the holistic subspace $H_S$ such that 
\begin{equation}
\label{eq:StateswithoutTLEntanglement}
\vcenter{\hbox{
\begin{tikzpicture}
	\begin{pgfonlayer}{nodelayer}

		\node [style=none] (L1) at (-4.8,0) {};
		\node [style=none] (R1) at (-3.0,0) {};
		\node [style=none] (C1) at (-3.9,-1.2) {};
		\node [style=none] (W1) at (-3.9,-0.4) {$\omega$};

		\node [style=none] (T1a) at (-4.4,0) {};
		\node [style=none] (T1b) at (-3.4,0) {};

		\node [style=none] (T1aL) at (-4.44,0) {};
		\node [style=none] (T1aR) at (-4.36,0) {};
		\node [style=none] (T1bL) at (-3.44,0) {};
		\node [style=none] (T1bR) at (-3.36,0) {};

		\node [style=none] (A1) at (-4.15,0.5) {$\scriptstyle \m{A}$};
		\node [style=none] (B1) at (-3.15,0.5) {$\scriptstyle \m{B}$};

		\node [style=none] (eq)  at (-1.7,0) {$=$};
		\node [style=none] (Sij) at (-0.3,0) {$\sum_{i} p_{i}$};

		\node [style=none] (L2) at (0.7,0) {};
		\node [style=none] (R2) at (2.2,0) {};
		\node [style=none] (C2) at (1.45,-1.0) {};
		\node [style=none] (W2) at (1.45,-0.33) {$\omega_i$};

		\node [style=none] (T2) at (1.45,0) {};
		\node [style=none] (T2L) at (1.41,0) {};
		\node [style=none] (T2R) at (1.49,0) {};
		\node [style=none] (A2) at (1.70,0.42) {$\scriptstyle \m{A}$};

		\node [style=none] (L3) at (2.4,0) {};
		\node [style=none] (R3) at (3.9,0) {};
		\node [style=none] (C3) at (3.15,-1.0) {};
		\node [style=none] (W3) at (3.15,-0.33) {$\nu_i$};

		\node [style=none] (T3) at (3.15,0) {};
		\node [style=none] (T3L) at (3.11,0) {};
		\node [style=none] (T3R) at (3.19,0) {};
		\node [style=none] (B3) at (3.40,0.42) {$\scriptstyle \m{B}$};

		\node [style=none] (plus) at (4.8,0) {$+$};

		\node [style=none] (L4) at (5.8,0) {};
		\node [style=none] (R4) at (7.9,0) {};
		\node [style=none] (C4) at (6.85,-1.0) {};
		\node [style=none] (W4) at (6.85,-0.38) {$\tilde{h}$};

		\node [style=none] (T4a) at (6.35,0) {};
		\node [style=none] (T4b) at (7.35,0) {};

		\node [style=none] (T4aL) at (6.31,0) {};
		\node [style=none] (T4aR) at (6.39,0) {};
		\node [style=none] (T4bL) at (7.31,0) {};
		\node [style=none] (T4bR) at (7.39,0) {};

		\node [style=none] (A4) at (6.60,0.5) {$\scriptstyle \m{A}$};
		\node [style=none] (B4) at (7.60,0.5) {$\scriptstyle \m{B}$};

	\end{pgfonlayer}

	\begin{pgfonlayer}{edgelayer}

		\draw (L1.center) to (R1.center);
		\draw (R1.center) to (C1.center);
		\draw (C1.center) to (L1.center);

		\draw (L2.center) to (R2.center);
		\draw (R2.center) to (C2.center);
		\draw (C2.center) to (L2.center);

		\draw (L3.center) to (R3.center);
		\draw (R3.center) to (C3.center);
		\draw (C3.center) to (L3.center);

		\draw (L4.center) to (R4.center);
		\draw (R4.center) to (C4.center);
		\draw (C4.center) to (L4.center);

		\draw[line width=0.6pt] (T1aL.center) to +(0,0.9);
		\draw[line width=0.6pt] (T1aR.center) to +(0,0.9);

		\draw[line width=0.6pt] (T1bL.center) to +(0,0.9);
		\draw[line width=0.6pt] (T1bR.center) to +(0,0.9);

		\draw[line width=0.6pt] (T2L.center) to +(0,0.9);
		\draw[line width=0.6pt] (T2R.center) to +(0,0.9);

		\draw[line width=0.6pt] (T3L.center) to +(0,0.9);
		\draw[line width=0.6pt] (T3R.center) to +(0,0.9);

		\draw[line width=0.6pt] (T4aL.center) to +(0,0.9);
		\draw[line width=0.6pt] (T4aR.center) to +(0,0.9);

		\draw[line width=0.6pt] (T4bL.center) to +(0,0.9);
		\draw[line width=0.6pt] (T4bR.center) to +(0,0.9);

	\end{pgfonlayer}
\end{tikzpicture}
}}.
\end{equation}
An analogous definition can be given for effects.
\end{definition}

Notice \blk that whether or not an object has tomographically-local entanglement is independent of what component of that object (if any) is in the holistic subspace. Whether or not an object has a component in the holistic subspace {\em is}, however, precisely what determines whether or not the object has the second form of entanglement:

\begin{definition}[Tomographically-nonlocal entanglement] 
\label{def:nonTLEntenglement}
A state is said to have tomographically-nonlocal entanglement (or TNL entanglement) if and only if it has a component in the  holistic subspace:
\begin{equation}
\label{eq:TLEntangledStateDEF}
\vcenter{\hbox{\begin{tikzpicture}
	\begin{pgfonlayer}{nodelayer}

		\node [style=none] (TA) at (-0.6,1.4) {};
		\node [style=none] (TB) at (0.6,1.4) {};

		\node [style=none] (TA2) at (-0.6,0.7) {};
		\node [style=none] (TB2) at (0.6,0.7) {};

		\node [style=none] (Alabel1) at (-0.9,1.1) {$\scriptstyle \mathcal A$};
		\node [style=none] (Blabel1) at (0.9,1.1) {$\scriptstyle \mathcal B$};

		\node [style=none] (R1) at (-0.9,0.7) {};
		\node [style=none] (R2) at (0.9,0.7) {};
		\node [style=none] (R3) at (0.9,-0.1) {};
		\node [style=none] (R4) at (-0.9,-0.1) {};

		\node [style=none] (RectLabel) at (0,0.3) {$\Pi_{TNL}$};

		\node [style=none] (BA) at (-0.6,-0.1) {};
		\node [style=none] (BB) at (0.6,-0.1) {};

		\node [style=none] (BA2) at (-0.6,-0.8) {};
		\node [style=none] (BB2) at (0.6,-0.8) {};

		\node [style=none] (Alabel2) at (-0.9,-0.45) {$\scriptstyle \mathcal A$};
		\node [style=none] (Blabel2) at (0.9,-0.45) {$\scriptstyle \mathcal B$};

		\node [style=none] (TL) at (-0.9,-0.8) {};
		\node [style=none] (TR) at (0.9,-0.8) {};
		\node [style=none] (TC) at (0,-1.8) {};
		\node [style=none] (W)  at (0,-1.25) {$\omega$};

		\node [style=none] (NEQ) at (2.2,-0.5) {$\neq 0_{AB}$};

	\end{pgfonlayer}

	\begin{pgfonlayer}{edgelayer}

		\draw[double] (TA.center) to (TA2.center);
		\draw[double] (TB.center) to (TB2.center);

		\draw (R1.center) to (R2.center);
		\draw (R2.center) to (R3.center);
		\draw (R3.center) to (R4.center);
		\draw (R4.center) to (R1.center);

		\draw[double] (BA.center) to (BA2.center);
		\draw[double] (BB.center) to (BB2.center);

		\draw (TL.center) to (TR.center);
		\draw (TR.center) to (TC.center);
		\draw (TC.center) to (TL.center);

	\end{pgfonlayer}
\end{tikzpicture}
}},
\end{equation}
where $0_{AB}$ is the zero state  (the zero vector in $AB$). Equivalently, a state is said to be TNL entangled if and only if it has a component outside the span of the product states; i.e., if and only if there are no real coefficients $\{r_i\}_i$ and states $\{\omega_i^A\}\subset \Omega_{\m{A}}$ and $\{\nu_i^B\}\subset \Omega_{\m{B}}$ such that
\begin{equation}
\label{eq:StateWIthoutTNLEntanglement}
\vcenter{\hbox{
\begin{tikzpicture}
	\begin{pgfonlayer}{nodelayer}

		\node [style=none] (L1) at (-4.8,0) {};
		\node [style=none] (R1) at (-3.0,0) {};
		\node [style=none] (C1) at (-3.9,-1.2) {};
		\node [style=none] (W1) at (-3.9,-0.4) {$\omega$};

		\node [style=none] (T1a) at (-4.4,0) {};
		\node [style=none] (T1b) at (-3.4,0) {};

		\node [style=none] (T1aL) at (-4.44,0) {};
		\node [style=none] (T1aR) at (-4.36,0) {};
		\node [style=none] (T1bL) at (-3.44,0) {};
		\node [style=none] (T1bR) at (-3.36,0) {};

		\node [style=none] (A1) at (-4.15,0.5) {$\scriptstyle \m{A}$};
		\node [style=none] (B1) at (-3.15,0.5) {$\scriptstyle \m{B}$};

		\node [style=none] (eq)  at (-2.1,0) {$=$};
		\node [style=none] (Sij) at (-0.35,0) {$\sum_{ij} r_{ij}$};

		\node [style=none] (L2) at (0.7,0) {};
		\node [style=none] (R2) at (2.2,0) {};
		\node [style=none] (C2) at (1.45,-1.0) {};
		\node [style=none] (W2) at (1.45,-0.33) {$\omega_i$};

		\node [style=none] (T2) at (1.45,0) {};
		\node [style=none] (T2L) at (1.41,0) {};
		\node [style=none] (T2R) at (1.49,0) {};
		\node [style=none] (A2) at (1.70,0.42) {$\scriptstyle \m{A}$};

		\node [style=none] (L3) at (2.4,0) {};
		\node [style=none] (R3) at (3.9,0) {};
		\node [style=none] (C3) at (3.15,-1.0) {};
		\node [style=none] (W3) at (3.15,-0.33) {$\nu_j$};

		\node [style=none] (T3) at (3.15,0) {};
		\node [style=none] (T3L) at (3.11,0) {};
		\node [style=none] (T3R) at (3.19,0) {};
		\node [style=none] (B3) at (3.40,0.42) {$\scriptstyle \m{B}$};

	\end{pgfonlayer}

	\begin{pgfonlayer}{edgelayer}

		\draw (L1.center) to (R1.center);
		\draw (R1.center) to (C1.center);
		\draw (C1.center) to (L1.center);

		\draw (L2.center) to (R2.center);
		\draw (R2.center) to (C2.center);
		\draw (C2.center) to (L2.center);

		\draw (L3.center) to (R3.center);
		\draw (R3.center) to (C3.center);
		\draw (C3.center) to (L3.center);

		\draw[line width=0.6pt] (T1aL.center) to +(0,0.9);
		\draw[line width=0.6pt] (T1aR.center) to +(0,0.9);

		\draw[line width=0.6pt] (T1bL.center) to +(0,0.9);
		\draw[line width=0.6pt] (T1bR.center) to +(0,0.9);

		\draw[line width=0.6pt] (T2L.center) to +(0,0.9);
		\draw[line width=0.6pt] (T2R.center) to +(0,0.9);

		\draw[line width=0.6pt] (T3L.center) to +(0,0.9);
		\draw[line width=0.6pt] (T3R.center) to +(0,0.9);

	\end{pgfonlayer}
\end{tikzpicture}
}}.
\end{equation}
An analogous definition can be given for effects.
\end{definition}

Recall that a state that lacks TNL entanglement is mapped to the $0_{\rm AB}$ vector by the projector $\Pi_{\rm TNL}$, which is a separable (unnormalized) state. Additionally, $\Pi_{\rm TNL}$ takes every state that is TNL entangled to a nonzero vector in $H_S$, which is necessarily not a state  (as we argued after Def.~\ref{def:HolisticSubspaces}) \blk -- and thus lives outside of $\mathsf{Sep}[\Omega]$. Therefore,  one could write Definition~\ref{def: PinTL} (for states) in an equivalent way which is similar to that of  Definition~\ref{def:PiTL}: \blk TNL entangled states are those $\omega\in S_{\m{AB}}$ such that  $\Pi_{\rm TNL}(\omega)\not\in\mathsf{Sep}[\Omega]$. But we remark that this form of the Definition is less direct and may hide the fact that what is important for TNL entanglement is the component of the state in the holistic subspace. 
\blk

States with TNL entanglement can be viewed as those that cannot be generated with Local Operations and Shared Quasiprobability Distrubutions, or LOSQP, where shared quasiprobability distributions should be contrasted with the shared randomness in LOSR~\cite{Wolfe2020quantifyingbell,Schmid2020typeindependent,Schmid2023understanding} and the classical communication in LOCC~\cite{nielsen2010quantum}, the two most natural classes of free operations that characterize TL entanglement. (Physically, it is not clear what it would mean to share a quasiprobability distribution, but this mathematical characterization of TNL entanglement may still be useful.)

Clearly, every separable state (effect) lacks both kinds of entanglement. 
In any TNL theory, there are some states that have TNL entanglement {\em and} some effects that have TNL entanglement. (One could not construct a theory that had TNL entangled states but not effects, or vice versa, since this would imply a failure of tomography. ) In some theories, like real quantum theory, \blk a state can have either TL entanglement or TNL entanglement, or both. In others, like bilocal classical theory, the only kind of entanglement a state can have is TNL entanglement. And in TL theories, like unrestricted  quantum theory and classical theory, the only kind of entanglement a state can have is TL entanglement.

The following table summarizes the different kinds of entanglement in various common theories. 

\begin{center}
\begin{tabular}{|l|c|c|}
\hline
 & \textbf{has TL entanglement} & \textbf{has TNL entanglement} \\
\hline
\textbf{Classical Theory}  & $\times$   & $\times$ \\
\hline
\textbf{Bilocal Classical Theory} & $\times$   & \checkmark \\
\hline
\textbf{Unrestricted Quantum Theory}  & \checkmark & $\times$ \\
\hline
\textbf{Fermionic Quantum Theory} & \checkmark   & \checkmark \\
\hline
\textbf{Real Quantum Theory} & \checkmark & \checkmark \\
\hline
\end{tabular}
\end{center}

 We conclude this section by showing a sufficient condition for a state to feature both types of entanglement.  

\begin{proposition}\label{propnboth}
    If $\Pi_{TL}(\omega^{\m{AB}})$ is not in the state space  $S_{\m{AB}}$ (despite being a vector in $(AB)_{\otimes}$), \blk then it necessarily has both TL entanglement and TNL entanglement.
\end{proposition}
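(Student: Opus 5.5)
The plan is to prove the two conclusions separately, each by contraposition, using only the definitions of the two kinds of entanglement and the elementary properties of $\Pi_{\rm TL}$ listed after Proposition~\ref{prop:DualPiTL}. Throughout, let $\omega^{\m{AB}}$ be a normalized state with $\Pi_{\rm TL}(\omega^{\m{AB}})\notin S_{\m{AB}}$.

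\emph{TL entanglement.} Suppose, for contradiction, that $\omega^{\m{AB}}$ has no TL entanglement. By Definition~\ref{def:TLEntenglement}, $\Pi_{\rm TL}(\omega^{\m{AB}})\in\mathsf{Sep}(\Omega_{\m{AB}})$, so it equals $\sum_i p_i\,\omega_i^{\m A}\boxtimes\nu_i^{\m B}$ for some probability distribution $\{p_i\}$ and local states. Condition 1 of Definition~\ref{def: CompositionRequirements} guarantees that each product $\omega_i^{\m A}\boxtimes\nu_i^{\m B}$ lies in $S_{\m{AB}}$. Since $S_{\m{AB}}$ is convex, the convex combination also lies in $S_{\m{AB}}$. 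This contradicts the hypothesis, so $\omega^{\m{AB}}$ has TL entanglement. In short, this step just records that $\mathsf{Sep}(\Omega_{\m{AB}})\subseteq S_{\m{AB}}$.

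\emph{TNL entanglement.} Suppose, for contradiction, that $\omega^{\m{AB}}$ has no TNL entanglement. By Definition~\ref{def:nonTLEntenglement}, $\Pi_{\rm TNL}(\omega^{\m{AB}})=0_{AB}$. By Definition~\ref{def: PinTL}, $\Pi_{\rm TNL}=\mathds{1}^{AB}-\Pi_{\rm TL}$, so this gives $\Pi_{\rm TL}(\omega^{\m{AB}})=\omega^{\m{AB}}$. Equivalently, $\omega^{\m{AB}}\in AB_\otimes$, and property 1 of $\Pi_{\rm TL}$ says it acts trivially there. Since $\omega^{\m{AB}}$ is a state, $\Pi_{\rm TL}(\omega^{\m{AB}})\in S_{\m{AB}}$, again contradicting the hypothesis. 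Hence $\omega^{\m{AB}}$ has TNL entanglement.

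There is no real obstacle here; both halves are one-line contrapositions. The only point needing care is the first half: one must invoke the requirement that product states are valid composite states, together with convexity of $S_{\m{AB}}$, to conclude that separable vectors are states. Without that, one could not rule out a separable vector $\Pi_{\rm TL}(\omega^{\m{AB}})$ lying outside the state space.
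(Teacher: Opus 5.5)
Your proof is correct and follows the paper's argument essentially step for step. The TL half is the observation that $\Pi_{\rm TL}(\omega^{\m{AB}})\notin S_{\m{AB}}$ forces $\Pi_{\rm TL}(\omega^{\m{AB}})\notin\mathsf{Sep}(\Omega_{\m{AB}})$; the TNL half is that $\Pi_{\rm TL}(\omega^{\m{AB}})\neq\omega^{\m{AB}}$, so the holistic component must be nonzero. The only difference is that you make explicit the inclusion $\mathsf{Sep}(\Omega_{\m{AB}})\subseteq S_{\m{AB}}$ (validity of product states plus convexity), which the paper uses without comment.
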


\begin{proof}
First, note that $\Pi_{TL}(\omega^{\m{AB}})\not\in S_{\m{AB}}\implies \Pi_{TL}(\omega^{\m{AB}})\not\in \mathsf{Sep}(\omega^{\m{AB}})$, which means that $\omega^{\m{AB}}$ has TL entanglement.  
Now let us prove that $\omega^{\m{AB}}$ also has to carry TNL entanglement. Recall that we can express $\omega^{\m{AB}}$ as $\omega^{\m{AB}}=v_{\rm TL}+h \in AB$, where $v_{\rm TL}$ and $h$ are vectors that are not necessarily states. If $\Pi_{TL}(\omega^{\m{AB}})\not\in S_{\m{AB}}$ then $v_{\rm TL}$ is not a valid state, and hence $\Pi_{TL}(\omega^{\m{AB}})\neq \omega^{\m{AB}}$. Hence, for $\omega^{\m{AB}}$ to be a state it needs to have a non-zero component outside of the TL subspace $AB_\otimes$, which implies that it has TNL entanglement. 
\end{proof}

\section{Example: Real Quantum Theory}
\label{sec:RQT}

To get more intuition on these concepts, we flesh them out in the specific context of real quantum theory (RQT){\cite{Wootters_2010,Chiribella_QuantumFromPrinciples2016}}, especially the case of two-levels systems, called rebits.

\subsection{Systems in Real Quantum Theory}
\label{sec:SystemsRQT}

Systems in real quantum theory (RQT) can be defined via the usual Hilbert--space formulation of quantum mechanics, but restricted to \emph{real} vector spaces rather than complex ones. Concretely, every $d$-level system is associated with a real Hilbert space $\mathcal H^{\mathbb R}_d$ instead of a complex one. The GPT vector space associated with such a system is then isomorphic to the real vector space of $d\times d$ symmetric matrices acting on $\mathcal H^{\mathbb R}_d$, $\mathrm{Sym}_d(\mathbb{R})$. Normalized states (including mixed ones) correspond to real symmetric positive semidefinite matrices of unit trace, while measurements are associated with real-valued POVMs.

Let us exemplify RQT systems with the simplest case of a $2$-level system, called a rebit, which we denote by $\m{A}_{\rm rebit}$. In the traditional construction, it is associated with a two-dimensional real Hilbert space $\mathcal H^{\mathbb R}_2 \cong \mathbb R^2$, and the GPT vector space associated with it, denoted by $A_{\rm rebit}$, is isomorphic to the three-dimensional real vector space of $2\times2$ real symmetric matrices.

The set of normalized states, $\Omega^{\m{A}_{\rm rebit}}$, consists of all density operators acting on $\mathcal H^{\mathbb R}_2$. These density operators can be expanded using the basis $\{\mathds{1},\sigma_x,\sigma_z\}$ for $A_{\rm rebit}$ as
\begin{equation}
\label{eq:RQTstates}
  \rho = \frac12\left(\mathds{1} + a_x\,\sigma_x + a_z\,\sigma_z\right),
\end{equation}
with real coefficients $(a_x,a_z)$ satisfying $a_x^2 + a_z^2 \le 1$,  where $\sigma_x$ and $\sigma_z$ are the X and Z Pauli operators respectively. \blk Comparing Eq.~\eqref{eq:RQTstates} with the analogous expression for a qubit, one immediately notes that the Pauli matrix $\sigma_y$ (which has purely imaginary entries) does not appear in this expansion. Consequently, the Bloch vector $(a_x,a_z)$ lives in a two-dimensional disk rather than in a three-dimensional ball,
and the normalized state space $\Omega^{\m{A}_{\rm rebit}}$ has dimension $2$ instead of $3$.
The full rebit state space $\mathcal S_{\m{A}_{\rm rebit}} = \mathsf{ConvHull}[0,\Omega^{\m{A}_{\rm rebit}}]$,
which also contains subnormalized states, is therefore a three-dimensional convex set spanning the GPT vector space $A_{\rm rebit}$.

Any rebit effect can be written as
\begin{equation}
\label{eq:RQTeffects}
  M = \frac12\left(b_0\,\mathds{1} + b_x\,\sigma_x + b_z\,\sigma_z\right),
\end{equation}
with $b_0 \ge 0$ and $(b_x,b_z)\in\mathbb R^2$ satisfying
$b_x^2 + b_z^2 \le \min\{b_0^2,(1-b_0)^2\}$. The Born rule then takes the form
\begin{align}
\Tr(\rho M) = \tfrac12\left(b_0 + a_x b_x + a_z b_z\right).
\end{align}
All such real-valued POVM elements constitute the effect space $E_{\m{A}_{\rm rebit}}$.

Note that both $\Omega^{\m{A}_{\rm rebit}}$ and $E_{\m{A}_{\rm rebit}}$ can be viewed as subsets of the corresponding state and effect spaces of a qubit (in usual complex quantum theory), namely those operators with no imaginary components in a fixed basis. The same idea extends to arbitrary systems in real quantum theory: states and effects can always be represented using real symmetric operators, analogously to Eqs.~\eqref{eq:RQTstates} and~\eqref{eq:RQTeffects} which can be thought of as a subset of the states in the analogous complex quantum system.

\subsection{Composite systems in real quantum theory}
\label{sec:CompositesRQT}

For composite systems, the tensor-product rule still applies for the Hilbert spaces: a bipartite system $\m{AB}$ has Hilbert space $\mathcal{ H^{\mathbb{R}}_A}\otimes\mathcal {H^{\mathbb{R}}_B}$. 
Here one finds a striking difference from unrestricted (complex) quantum theory: the composite GPT vector space $AB$ has higher dimensionality than $A\otimes B$. This is a general fact in composite systems in real quantum theory, wich we exemplify again with the two-rebit systems case. 

For two rebits\footnote{In this subsection, we are using the notation $\m{A}$ instead of $\m{A}_{\rm rebit}$ in order to have a lighter notation -- and similarly using $AB$ for the GPT vector space spanned by the states of $\m{AB}$ instead of $AB_{\rm rebit}$.}
$\m{A}$ and $\m{B}$, the joint Hilbert space is isomorphic to $\mathbb{R}^2\otimes \mathbb{R}^2\cong \mathbb{R}^4$ and the GPT vector space $AB$ is isomorphic to the space of $4\times4$ symmetric matrices.  Notice that a real $4\times4$ symmetric matrix has 10 independent parameters (16 real parameters minus symmetry constraints yields 10), while $A\otimes B$ has dimension $9$, since both $A$ and $B$ are three-dimensional.  Hence, there is one extra holistic degree of freedom.
This extra parameter is most easily seen in the Pauli basis for $AB$, since a general real-valued bipartite state can be written as
\begin{align}
  \label{eq:RQTCompositeStates}
  \rho_{AB}=\tfrac14\Bigl(\mathds{1}\otimes\mathds{1} &+ \sum_{i\in\{x,z\}}a_i\,\sigma_i\otimes\mathds{1} + \sum_{j\in\{x,z\}}b_j\,\mathds{1}\otimes\sigma_j 
  + \sum_{i,j\in\{x,z\}}T_{ij}\,\sigma_i\otimes\sigma_j \Bigr) 
  + t\,(\sigma_y\otimes\sigma_y),
\end{align}
where $t=\Tr(\rho_{AB}\,\sigma_y\otimes\sigma_y)$. We see that this component is not the product of local components, $\mathds{1},\sigma_x$ and $\sigma_z$ for each local system. The extra degree of freedom arises because although the state of a single rebit cannot have any component of $\sigma_y$ (which is not real-valued), the state of a pair of rebits can have a component of $\sigma_y\otimes\sigma_y$, which is real-valued:
\begin{align}
\sigma_y\otimes\sigma_y =
\begin{pmatrix}
0&0&0&-1\\
0&0&1&0\\
0&1&0&0\\
-1&0&0&0
\end{pmatrix}.
\end{align}
In other words, while $t\sigma_y\otimes\sigma_y\in AB$ for all $t$, $t\sigma_y\otimes\sigma_y\not\in AB_{\otimes}$ if $t\neq 0$. 

Recall that effects in real quantum theory are represented by positive semidefinite operators bounded above by the identity, with the additional restriction that they be real-valued matrices. In analogy to states, effects are also represented by $4\times 4$ real-valued positive semidefinite matrices, and thus admit a decomposition similar to Eq.~\eqref{eq:RQTCompositeStates}. Again, local effects can have components $\mathds{1},\sigma_x,\sigma_z$ but never $\sigma_y$ components. Thus, product effects $M^{\m{A}}\boxtimes M^{\m{B}}$ (or effects in their span) can have no $\sigma_y\otimes\sigma_y$ component, which implies that every effect of two rebits that lives in the span of product effects also lack a $\sigma_y\otimes\sigma_y$ component. In other words, $s\sigma_y\otimes\sigma_y\in AB$, but $s\sigma_y\otimes\sigma_y\not\in AB^*_{\otimes}$ for all $s\neq 0$.

Therefore, using the Born rule for real quantum systems, $\Tr(M\rho)$,  we see that the component $t\,\sigma_y\otimes\sigma_y$ of a state $\rho$ is never probed by any effect $M\in AB^*_{\otimes}$ — and it is the only component that is oblivious to all such effects. Hence, the holistic state space is given by
\begin{align}
\label{eq:H_STwoRebits}
    H_S=\mathsf{Span}\{\sigma_y\otimes\sigma_y\}.
\end{align}

In other words, one cannot learn this parameter by any local measurements; rather, it can only be determined by a global measurement such as the one associated with the Hermitian operator $\sigma_y\otimes\sigma_y$. (The remaining 9 parameters of a state, of course, simply correspond to products of the three elements of the single-system basis.) \blk

One can make a similar argument for effects, which shows that the $s\sigma_y\otimes\sigma_y\in AB^*$ of an effect is the only one that cannot be probed by states in $AB^{\otimes}$, which implies
\begin{align}
    H_E=\mathsf{Span}\{\sigma_y\otimes\sigma_y\}.
\end{align}

Thus, for any two-rebit state written as in Eq.~\eqref{eq:RQTCompositeStates}, the action of $\Pi_{\rm TL}$ can be written as
\begin{align}
\label{eq:PiTLTwoRebits}
\Pi_{\rm TL}(\rho) = \tfrac14\Bigl(\mathds{1}\otimes\mathds{1}
+ \sum_{i\in\{x,z\}}a_i\,\sigma_i\otimes\mathds{1}
+ \sum_{j\in\{x,z\}}b_j\,\mathds{1}\otimes\sigma_j
+ \sum_{i,j\in\{x,z\}}T_{ij}\,\sigma_i\otimes\sigma_j \Bigr),
\end{align}
while
\begin{align}
\label{eq:PiTnLTwoRebits}
    \Pi_{\rm TNL}(\rho)=t\,\sigma_y\otimes\sigma_y.
\end{align}

Let us now take a closer look at $\Pi_{\rm TNL}(\rho)$ and $\Pi_{\rm TNL}(\rho)$. On the one hand, as discussed after Def.~\ref{def:HolisticSubspaces}, $\Pi_{\rm TNL}(\rho)$ is not a valid state of the joint system, and has trace 0.  On the other hand, in the case of $\Pi_{TL}(\rho)$ both alternatives can happen: \blk $\Pi_{TL}(\rho)$ might be a valid state but it might also fail to be in $\m{S}_{AB}$ if the initial state had non-null component in the holistic subspace. For instance, the Bell state $\ket{\Phi^+}\bra{\Phi^+}$ is mapped to the matrix $\frac14 (\mathds{1}\otimes\mathds{1}+\sigma_x\otimes\sigma_x+\sigma_z\otimes\sigma_z)$, which belongs to $AB_{\otimes}$ but not to $\m{S_{AB}}$, as this is not a positive semidefinite matrix and thus not a valid state.

While we have illustrated this phenomenon using two rebits, recall that the appearance of such holistic degrees of freedom is generic for composite systems in real quantum theory: whenever local subsystems forbid certain complex quantum observables due to imaginary components, these observables can reappear in pairs at the composite level, leading to a systematic failure of tomographic locality. This can also be seen comparing the dimension of the GPT vector space $AB$ with that of $A\otimes B$: denote by $d_A$ and $d_B$ the number of levels of real quantum systems $\m{A}$ and $\m{B}$, respectively and note that
positive symmetric real-valued $(d_Ad_B)\times (d_Ad_B)$ matrices require $\frac{d_Ad_B(d_Ad_B+1)}{2}$ parameters to be determined. This is a higher number than $(\frac{d_A(d_A+1)}{2})(\frac{d_B(d_B+1)}{2})$ for all nontrivial values of $d_A,d_B\not\in\{0,1\}$. Thus, $H_S$ will never be trivial for such composite real quantum systems.

Finally, note that any state with non-null component in the holistic subspace -- in the two-rebits example, any state $\rho$ with $t\sigma_y\otimes\sigma_y$ and $t\neq 0$ --  cannot belong to $AB_\otimes$, and thus cannot be separable, which means that it is an entangled state. This happens with any GPT system with holistic degrees of freedom. According to the traditional definition of entanglement in GPT systems, this is essentially all that one would be able to say: holistic components implies entanglement. With our proposed definitions, we can analyze the entanglement for two-rebits in more depth, as we show next.

\subsection{Entanglement in real quantum theory}
\label{sec:EntanglementRQT}

Entanglement in real quantum theory (and in any GPT) is defined just as in standard quantum mechanics: a bipartite state is \emph{separable} if it can be written as a convex sum of product states $\rho_{AB}=\sum_k p_k\,\rho_A^{(k)}\otimes\rho_B^{(k)}$ within the theory. Otherwise it is \emph{entangled}. 

Although every real quantum state is also a valid complex quantum state and the definition of entanglement is formally the same in both theories, whether a given density matrix represents an entangled state can depend on whether it is regarded as a state of real quantum theory or of complex quantum theory.  In fact,  because the scope of product states is limited in a real quantum system relative to the analogous system in standard quantum theory, some states in RQT are entangled even though states of the same form would be separable in standard (complex) quantum theory. That is, a given density matrix might not admit a separable decomposition in terms of density matrices on real Hilbert spaces, but do admit a separable decomposition when complex amplitudes are allowed in the underlying Hilbert space.
For instance, consider the following states for a pair of rebits:
\begin{align}
\label{eq:StateswRQT}
    \omega_{\pm}^{AB} =\frac{\mathds{1}\pm\sigma_y\otimes\sigma_y}{4}.
\end{align}
Since these states have non-null holistic components, they are not separable in RQT, i.e., they are entangled. Each of these states, however, is \emph{separable} in standard quantum theory, as these can equivalently be written as 
\begin{equation}
\label{eq:OmegapmSeparableCQT}
\omega_{\pm}^{AB} = \frac{1}{2}(\ketbra{+y}\otimes \ketbra{\pm y} + \ketbra{-y} \otimes \ketbra{\mp
 y})
\end{equation}
(which can be easily seen using the fact that $\ketbra{\pm y} = \frac12 (\mathds{1}\pm \sigma_y$)), and in bare complex quantum theory $\ket{\pm y}\bra{\pm y}$ are valid states, while these are not valid states in real quantum theory.

The states $\omega^{AB}_{\pm}$ in Eq.~\eqref{eq:StateswRQT} showcase some puzzling features of entanglement in real quantum theory: there are maximally entangled states (as quantified by concurrence) that can be shared by an arbitrary number of parties, drastically breaking monogamy of entanglement~\cite{Wootters_2010}. Moreover, the decomposition in Eq.~\eqref{eq:OmegapmSeparableCQT} can be used to show that these ``maximally entangled'' states cannot violate any Bell inequality; additionally, it can be shown that $\omega_{\pm}^{\m{AB}}$ can be locally broadcast~\cite{Weilenmann_2025}, which is never the case for entangled states in unrestricted (complex) quantum systems. 

Notice that some familiar entangled states within quantum theory are also entangled states in RQT. For example, the Bell state $\ket{\Phi^+} = \frac{1}{\sqrt{2}}(\ket{00}+\ket{11})$ corresponds to the two-rebit density matrix $\tfrac14(\mathds{1}\otimes\mathds{1}+\sigma_x\otimes\sigma_x+\sigma_z\otimes\sigma_z-\sigma_y\otimes\sigma_y)$, and is entangled in either theory. This example shows that not all entangled states in RQT exhibit these exotic features: in contrast to the states $\omega^{AB}_{\pm}$, the Bell state $\ket{\Phi^+}$ remains monogamous, cannot be locally broadcast, and can violate Bell inequalities in RQT as well.

Let us now analyze the entanglement of $\omega_{\pm}$ and $\ket{\Phi}$ in the light of the new definitions of tomographically-local and tomographically-nonlocal entanglement. Given the action of $\Pi_{\rm TL}$ and $\Pi_{\rm TNL}$ in pairs of rebits (Eqs.~\eqref{eq:PiTLTwoRebits} and \eqref{eq:PiTnLTwoRebits}), one can check that
\begin{align}
    \Pi_{\rm TL}(\omega_{\pm})&=\frac{\mathds{1}}{4};\\
    \Pi_{\rm TNL}(\omega_{\pm})&=  \pm \frac{1}{4} \sigma_y\otimes\sigma_y. 
\end{align}
These Equations imply that $\omega_{\pm}$ \emph{carry only TNL entanglement} (as per Definitions~\ref{def:TLEntenglement} and \ref{def:nonTLEntenglement}), since $\frac{\mathds{1}\otimes\mathds{1}}{4}$ is separable and $\pm\sigma_y\otimes\sigma_y\neq 0^{AB}$. The state $\ket{\Phi}$, however, \emph{carry both forms of entanglement}, a fact that one can see using Proposition~\ref{propnboth} together with the fact that $\Pi_{TL}( \ketbra{\Phi^+}{\Phi^+}) = \tfrac14(\mathds{1}\otimes\mathds{1}+\sigma_x\otimes\sigma_x+\sigma_z\otimes\sigma_z)$ is not a valid two-rebit state. One could wonder whether a pair of rebits could have TL entanglement alone, which the following result shows is impossible:

\begin{proposition}
\label{prop:TwoRebitsNonSepIffTNLEntangled}
    For a pair of rebits, a state is nonseparable if and only if it has TNL entanglement. 
\end{proposition}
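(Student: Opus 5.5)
The easy direction (TNL entanglement $\Rightarrow$ nonseparable) holds in any GPT. Separable states lie in $AB_\otimes$, so $\Pi_{\rm TNL}$ annihilates them (property 4 of $\Pi_{\rm TNL}$). Hence any state with $\Pi_{\rm TNL}(\rho)\neq 0$ is entangled.

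The content is the converse: a two-rebit state with no holistic component is separable. Equivalently, if $t=0$ in Eq.~\eqref{eq:RQTCompositeStates}, then $\rho$ is separable in RQT.

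\textbf{Step 1.} Observe what $t=0$ means. A real symmetric $4\times4$ matrix $\rho$ with vanishing $\sigma_y\otimes\sigma_y$ coefficient is invariant under partial transposition on $B$. On real symmetric matrices, partial transposition flips the sign of exactly those Pauli terms containing $\sigma_y$ on $B$, and the only such term available is $\sigma_y\otimes\sigma_y$. So $\rho^{T_B}=\rho\geq 0$, i.e. $\rho$ is PPT.

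\textbf{Step 2.} Invoke the Peres--Horodecki criterion for $2\times2$ systems. This shows $\rho$ is separable in complex quantum theory:
\[
\rho=\sum_k p_k\,\alpha_k\otimes\beta_k,
\]
with the $\alpha_k,\beta_k$ qubit density matrices, possibly complex.

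\textbf{Step 3 (the main obstacle).} Produce a real separable decomposition. The natural move is to average the decomposition with its complex conjugate, using $\rho=\bar\rho$. But $\tfrac12(\alpha\otimes\beta+\bar\alpha\otimes\bar\beta)$ is not a product of real states. Its $\sigma_y\otimes\sigma_y$ part does not vanish termwise; only the sum over $k$ cancels.

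A cleaner route I would pursue is to work directly in the real Bloch representation. Write $\rho=\tfrac14(\mathds 1+a\cdot\sigma\otimes\mathds 1+\mathds 1\otimes b\cdot\sigma+\sum T_{ij}\sigma_i\otimes\sigma_j)$ with $a,b\in\mathbb R^2$ and $T$ a $2\times2$ real matrix. Since $\rho$ is also a valid qubit state with zero $y$-components, it can be viewed as a two-qubit state in the $x$-$z$ plane.

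Then I would use the known fact (e.g. via Wootters' concurrence for real states, or the Caves--Fuchs--Rungta analysis of rebits) that real PPT two-qubit states admit separable decompositions into real product states. Concretely, the real concurrence equals the usual concurrence when $t=0$, since $\tilde\rho=(\sigma_y\otimes\sigma_y)\rho(\sigma_y\otimes\sigma_y)$ is real. Wootters' construction of an optimal decomposition for a zero-concurrence state then yields product vectors, and I would check that they can be chosen real. In Wootters' procedure the relevant eigenvectors of $\rho\tilde\rho$ are real when $\rho$ is real, and the final phase choices can be made $\pm1$ and real rotations. This phase check is the delicate point.

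If that check fails, the fallback is a direct geometric argument. Decompose $\rho$ as a mixture of states with rank-one correlation, e.g. via the singular value decomposition $T=O_1DO_2$ with real rotations $O_1,O_2$. Then show that the PPT (equivalently positivity) constraints on $(a,b,D)$ place $\rho$ in the convex hull of products of Bloch-disk states, writing the decomposition explicitly from the $\pm$ eigen-directions in the $x$-$z$ plane.
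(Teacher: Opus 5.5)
Your easy direction and Steps 1--2 are correct. With $t=0$ every Pauli term of $\rho$ lies in $\{\mathds{1},\sigma_x,\sigma_z\}^{\otimes 2}$, so $\rho^{T_B}=\rho\geq 0$, and Peres--Horodecki gives $\rho=\sum_k p_k\,\alpha_k\otimes\beta_k$ with complex qubit states. The gap is Step 3, which carries all of the rebit-specific content, and you leave it open: both routes end in an unverified check. In the Wootters route, the reason you give for the real and complex concurrences agreeing (``since $\tilde\rho$ is real'') proves nothing. $\tilde\rho$ is real for every real $\rho$, yet $\omega_\pm$ have real concurrence $1$ and complex concurrence $0$. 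And equal concurrence values would not by themselves make Wootters' product vectors real; the claim that the phases can be taken to be $\pm1$ is exactly what needs proof. Simply citing Caves--Fuchs--Rungta here would reproduce the paper's proof, which is nothing more than that citation, and would make your Steps 1--2 unnecessary.

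Either of two short arguments closes the gap.

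\emph{Fix compatible with Steps 1--2.} Average over complex conjugation locally on each side rather than globally. Let $K$ be conjugation (equivalently, transposition) of single-qubit Hermitian matrices, and set $\mathcal{O}:=\tfrac12(\mathcal{I}+K)\otimes\tfrac12(\mathcal{I}+K)=\tfrac14(\mathrm{id}+T_A+T_B+T_AT_B)$. This map fixes $\rho$ when $t=0$. On products, $\mathcal{O}(\alpha\otimes\beta)=\tfrac12(\alpha+\bar\alpha)\otimes\tfrac12(\beta+\bar\beta)$, where each factor is real, positive semidefinite and of unit trace, i.e.\ a valid rebit state. Hence
\[
\rho=\mathcal{O}(\rho)=\sum_k p_k\,\tfrac12(\alpha_k+\bar\alpha_k)\otimes\tfrac12(\beta_k+\bar\beta_k).
\]
This is the same map the paper uses in the appendix to prove item 2 of Prop.~\ref{prop:TwoRebitLackingTLEntanglement}.

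\emph{Completing the Wootters route (no need for Steps 1--2).} The key identity is about a trace, not concurrence. Write $\rho=\sum_i v_iv_i^T$ with real $v_i$. Then $\tau_{ij}=v_i^T(\sigma_y\otimes\sigma_y)v_j$ is real symmetric, with $\mathrm{tr}\,\tau=\mathrm{tr}(\rho\,\sigma_y\otimes\sigma_y)=0$. Rotate the $v_i$ by a real orthogonal matrix that diagonalizes $\tau$, and recombine them with $\pm1$ Hadamard signs. This gives real vectors $z_k$ with $\sum_k z_kz_k^T=\rho$ and $z_k^T(\sigma_y\otimes\sigma_y)z_k=\tfrac14\mathrm{tr}\,\tau=0$. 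For real $z$ this quantity equals $-2\det Z$, where $Z$ is the $2\times2$ coefficient matrix of $z$. So each $Z$ has rank at most one, and each $z_k$ is a product of real vectors.

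With either argument you get a self-contained proof in place of the paper's appeal to the literature.
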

\begin{proof}
This follows from the fact that all non-separable states must have a $\sigma_y\otimes\sigma_y$ component, as proven in  Ref.~\cite{caves2000entanglementformationarbitrarystate}. Therefore, a two-rebit state $s^{\m{AB}}$ is non-separable if and only if $\Pi_{\rm TNL}(s^{\m{AB}})\neq 0$.
\end{proof}

The following result, proven in App.~\ref{sec:ProofTwoRebitsLackingTLEProposition}, provides two conditions for two-rebit states that carry only TNL entanglement.
\begin{proposition}[Two-rebit states carrying only TNL entanglement]
\label{prop:TwoRebitLackingTLEntanglement}
Let $\rho^{\m{AB}}\in\Omega_{\m{AB}}$ be a two-rebit state. Then:
\begin{enumerate}
    \item $\rho^{\m{AB}}$ lacks TL-entanglement if and only if
    $\Pi_{\rm TL}(\rho^{\m{AB}})$ is a valid two-rebit state, i.e.,
    $\Pi_{\rm TL}(\rho^{\m{AB}})\in\Omega_{\m{AB}}$.
    \item Let $\iota$ denote the natural inclusion of two-rebit states into
    two-qubit states in complex quantum theory. If $\iota(\rho^{\m{AB}})$
    is separable in (unrestricted) complex quantum theory, then $\rho^{\m{AB}}$ lacks
    TL-entanglement.
\end{enumerate}
\end{proposition}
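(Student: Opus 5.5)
Both parts reduce to concrete statements about $4\times 4$ real symmetric matrices, using the explicit form of $\Pi_{\rm TL}$ in Eq.~\eqref{eq:PiTLTwoRebits}: $\Pi_{\rm TL}$ simply deletes the $\sigma_y\otimes\sigma_y$ coefficient $t$ of a two-rebit state.

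\textbf{Part 1.} The forward direction is immediate. If $\rho^{\m{AB}}$ lacks TL-entanglement, then by Definition~\ref{def:TLEntenglement} the vector $\Pi_{\rm TL}(\rho^{\m{AB}})$ lies in $\mathsf{Sep}(\Omega_{\m{AB}})\subset\Omega_{\m{AB}}$. For the converse, suppose $\sigma:=\Pi_{\rm TL}(\rho^{\m{AB}})\in\Omega_{\m{AB}}$. Since $\sigma\in AB_\otimes$, we have $\Pi_{\rm TNL}(\sigma)=0$, so $\sigma$ has no TNL entanglement. Proposition~\ref{prop:TwoRebitsNonSepIffTNLEntangled} then says $\sigma$ is separable. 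Hence the TL component of $\rho^{\m{AB}}$ is separable, i.e.\ $\rho^{\m{AB}}$ lacks TL-entanglement. Normalization is automatic, because $\Pi_{\rm TL}$ preserves $u^{\m{AB}}$ (property 2). So validity and separability of $\Pi_{\rm TL}(\rho)$ coincide. The only step carrying real content is Proposition~\ref{prop:TwoRebitsNonSepIffTNLEntangled}, that is, the result of Ref.~\cite{caves2000entanglementformationarbitrarystate}.

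\textbf{Part 2.} By Part 1, it suffices to show that $\Pi_{\rm TL}(\rho)$ is positive semidefinite whenever $\iota(\rho)$ is separable over $\mathbb{C}$. Write $\iota(\rho)=\sum_k p_k\,\alpha_k\otimes\beta_k$ with qubit states $\alpha_k,\beta_k$. The key idea is to average with complex conjugation.
\begin{itemize}
\item Since $\rho$ is real, $\iota(\rho)=\overline{\iota(\rho)}=\sum_k p_k\,\bar\alpha_k\otimes\bar\beta_k$.
\item Local conjugation flips the sign of the $\sigma_y$ Bloch component, so $\bar\alpha=\tfrac12(\mathds{1}+a_x\sigma_x-a_y\sigma_y+a_z\sigma_z)$.
\item Consider the positive combination $\tfrac12\sum_k p_k(\alpha_k\otimes\bar\beta_k+\bar\alpha_k\otimes\beta_k)$. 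It is separable over $\mathbb{C}$, hence positive semidefinite.
\end{itemize}

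Expanding in the Pauli basis, this combination keeps every term of $\iota(\rho)$ that contains no $\sigma_y$. It kills all terms with exactly one $\sigma_y$, but those already vanish because $\rho$ is real. It flips the sign of the $\sigma_y\otimes\sigma_y$ term, giving $\rho-2t\,\sigma_y\otimes\sigma_y$. So $\rho'=\rho-2t\,\sigma_y\otimes\sigma_y\succeq0$, and it is real.

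Then $\Pi_{\rm TL}(\rho)=\tfrac12(\rho+\rho')$ is a convex combination of two real positive unit-trace matrices. It is therefore a valid two-rebit state, and Part 1 finishes the proof.

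The main obstacle is the sign bookkeeping in the Pauli expansion. I will check that the combination reproduces exactly the $\sigma_y$-free part of $\rho$ and reverses the sign of $t$. The key inputs are $\overline{\sigma_y}=-\sigma_y$ and the vanishing of the cross terms $\sigma_y\otimes\sigma_{x,z}$ for real $\rho$.
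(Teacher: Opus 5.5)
Your proposal is correct and follows the paper's proof. Part~1 is identical: validity of $\Pi_{\rm TL}(\rho)$, together with Proposition~\ref{prop:TwoRebitsNonSepIffTNLEntangled}, gives separability. In Part~2, your average $\tfrac12(\rho+\rho')$ is exactly the paper's local conjugation-twirl $\mathcal{O}^{\m{CD}}[\iota(\rho)]=\tfrac14(\mathcal{I}\otimes\mathcal{I}+K\otimes\mathcal{I}+\mathcal{I}\otimes K+K\otimes K)[\iota(\rho)]$, because the $K\otimes K$ term returns $\rho$ when $\rho$ is real, so both arguments obtain validity of $\Pi_{\rm TL}(\rho)$ from complex separability and then close with Part~1.
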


In other words, the first condition says that a two-rebit state $\rho^{\m{AB}}$ lacks TL-entanglement if and only if the matrix resulting from forgetting its $\sigma_y\otimes\sigma_y$ component also represents a valid state of two-rebits. The second tell us that, if $\rho^{\m{AB}}$ represented a state in bare complex quantum theory (instead of RQT) and it was separable (in complex quantum theory), then the state $\rho^{\m{AB}}$ cannot posses TL entanglement \emph{in real quantum theory}. As we saw before, the two features apply to the $\omega_{\pm}$ states, while they do not hold for the Bell states, like $\ket{\Phi^+}$.

We suspect that the striking operational differences between states $\omega_{\pm}$ and $\ket{\Phi^+}$, even though those have maximal concurrence, are precisely due to the fact that $\omega_{\pm}$ lacks TL entanglement, whereas $\ket{\Phi^+}$ has it. Indeed, in the next section we show that states without TL entanglement cannot be used to violate Bell inequalities, nor to demonstrate steering or enable teleportation, in any tomographically-nonlocal theory. From this perspective, the fact that some two-rebit states are maximally entangled yet completely useless for these tasks is not mysterious at all: it is simply a consequence of the absence of TL entanglement. Before turning to these results, we show how the possibility of locally broadcasting entangled states of pairs of rebits can also be understood through the distinction between TL and TNL entanglement.

\begin{proposition}
\label{prop: LocallyBroadRebits}
If a two-rebit state $\rho^{\m{AB}}$ is locally broadcastable (in real quantum theory), then it lacks TL-entanglement.  
\end{proposition}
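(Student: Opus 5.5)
The plan is to combine Proposition~\ref{prop:TwoRebitLackingTLEntanglement} (item 2) with known results on local broadcasting in real quantum theory, routing through the complex embedding $\iota$. The key fact I will use is that local broadcasting in RQT is witnessed by real channels, and every real channel is also a valid channel in complex quantum theory. Suppose $\rho^{\m{AB}}$ is locally broadcastable in RQT. Then there exist real broadcasting channels $\Lambda_A:\m{A}\to\m{A}\m{A}'$ and $\Lambda_B:\m{B}\to\m{B}\m{B}'$ such that $\sigma:=(\Lambda_A\otimes\Lambda_B)(\rho^{\m{AB}})$ has both $\m{AB}$ and $\m{A}'\m{B}'$ marginals equal to $\rho^{\m{AB}}$.

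The first step is to embed everything via $\iota$ into two-qubit (and four-qubit) complex quantum theory. Extending $\Lambda_A$ and $\Lambda_B$ complex-linearly gives completely positive trace-preserving maps, since a real CP map has real Kraus operators. These extended maps act on $\iota(\rho^{\m{AB}})$ exactly as the real maps act on $\rho^{\m{AB}}$, because partial traces commute with $\iota$. Consequently $\iota(\rho^{\m{AB}})$ is locally broadcastable in complex quantum theory.

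The second step invokes the no-local-broadcasting theorem of Piani et al.~\cite{PianiNoLocalBroadcasting_2008}. That theorem states that a bipartite quantum state is locally broadcastable iff it is classical-classical, and in particular separable. Hence $\iota(\rho^{\m{AB}})$ is separable in complex quantum theory. Item 2 of Proposition~\ref{prop:TwoRebitLackingTLEntanglement} then gives directly that $\rho^{\m{AB}}$ lacks TL-entanglement.

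The main obstacle is making the embedding step airtight. One must check that the RQT notion of local broadcasting (as in Ref.~\cite{Weilenmann_2025}) uses channels that are completely positive in RQT. One must also check that RQT complete positivity, which is tested against real ancillas, coincides with CP of the complexification. I would argue this via the Choi matrix: it is real symmetric and positive semidefinite in both settings, so the two positivity conditions agree.

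An alternative route avoids Piani's theorem and uses item 1 instead. Here one would try to show that $\Pi_{\rm TL}(\rho^{\m{AB}})$ is a valid state by exploiting that broadcasting copies must reproduce the $\sigma_y\otimes\sigma_y$ component. This seems harder, so I would present the embedding argument as the main proof.
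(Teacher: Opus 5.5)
Your proposal is correct and matches the paper's proof step for step. Both embed via $\iota$ and use that real-quantum-theory operations are valid complex-quantum operations, so $\iota(\rho^{\m{AB}})$ is locally broadcastable. Both then invoke the no-local-broadcasting theorem of Piani et al.\ to get a classical-classical, hence separable, state, and conclude with item 2 of Proposition~\ref{prop:TwoRebitLackingTLEntanglement}. Your Choi-matrix and real-Kraus argument just spells out the inclusion of real channels into complex ones, which the paper asserts in a single line.
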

\begin{proof}
    If $\rho^{AB}$ is locally broadcastable in real quantum theory, then the two-qubit state represented by the same density matrix is also locally broadcastable in (unrestricted) complex quantum theory. That is, considering the natural inclusion $\iota$ of rebits into qubits, $\iota(\rho^{\m{AB}})$, $\rho^{\m{AB}}$ being locally broadcastable implies $\iota(\rho^{\m{AB}})$ is also locally broadcastable. This follows from the fact that valid operations in RQT are a subset of the valid operations of complex quantum theory, so the local operations that broadcast $\rho^{AB}$ can also be included in complex quantum theory to locally broadcast $\iota(\rho^{\m{AB}})$.  

    In unrestricted complex quantum theory, a state  is locally broadcastable if and only if it is a classical-classical state~\cite{PianiNoLocalBroadcasting_2008,piani2016localbroadcastingquantumcorrelations}, which implies it is separable. Thus, $\iota(\rho^{\m{AB}})$ being locally broadcastable implies $\iota(\rho^{\m{AB}})$ is separable in complex quantum theory. By condition 2. of Prop.~\ref{prop:TwoRebitLackingTLEntanglement}, this implies that $\rho^{\m{AB}}$ has no TL-entanglement.
\end{proof}

This Proposition shows that, at least in the two-rebits case, TL-entanglement cannot be locally broadcast—exactly as in complex quantum theory. This clarifies the otherwise puzzling fact that some (maximally) entangled states in RQT are locally broadcastable~\cite{Weilenmann_2025}: this can occur only when the entanglement present is entirely tomographically nonlocal. This provides a unified and intuitive view of local broadcasting in both unrestricted complex quantum theory and in the two-rebits case: local broadcastability signals the absence of TL-entanglement.

An open question is whether this is the case in general GPTs, i.e., if in all TNL GPTs,  states with TL entanglement are not locally broadcastable.  We defer this study to future work.

\section{TNL entanglement is useless for nonlocality, steering, and teleportation}

Tomographically-nonlocal entanglement often behaves in ways that are qualitatively distinct from tomographically-local entanglement. This is illustrated by the following three theorems, which show that TNL entanglement is useless for generating nonclassical correlations in Bell scenarios, steering scenarios, and teleportation scenarios, respectively.

\begin{proposition}
    Consider a bipartite GPT system $\m{AB}$. Any state that lacks tomographically-local entanglement cannot violate any Bell inequality (even if it has tomographically-nonlocal entanglement). 
\end{proposition}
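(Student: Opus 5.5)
The plan is to show that, in a Bell scenario, the correlations produced by a state $\omega^{\m{AB}}$ depend only on its tomographically-local component $\Pi_{\rm TL}(\omega^{\m{AB}})$, and then to use the absence of TL entanglement to write that component as a separable mixture, which yields an explicit local hidden variable model.

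\textbf{Step 1: model the Bell experiment.} In a Bell scenario Alice and Bob act locally on their subsystems. Alice chooses a measurement $x$, which is a collection of effects $\{e^{\m{A}}_{a|x}\}_a\subset E_{\m{A}}$ summing to $u^{\m{A}}$. Bob similarly chooses $\{f^{\m{B}}_{b|y}\}_b\subset E_{\m{B}}$. The observed correlations are
\begin{equation}
p(ab|xy)=\left(e^{\m{A}}_{a|x}\boxtimes f^{\m{B}}_{b|y}\right)[\omega^{\m{AB}}].
\end{equation}
Any local transformations the parties apply before measuring can be absorbed into these local effects, since the resulting effects are still local.

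\textbf{Step 2: discard the holistic part.} Every product effect $e^{\m{A}}\boxtimes f^{\m{B}}$ lies in $(AB)^*_\otimes$. Property 1 of $\Pi_{\rm TL}$, or equivalently Prop.~\ref{prop:DualPiTL}, gives $(e\boxtimes f)\circ\Pi_{\rm TL}=e\boxtimes f$. Hence
\begin{equation}
p(ab|xy)=\left(e^{\m{A}}_{a|x}\boxtimes f^{\m{B}}_{b|y}\right)[\Pi_{\rm TL}(\omega^{\m{AB}})].
\end{equation}
Equivalently, the holistic component $h=\Pi_{\rm TNL}(\omega^{\m{AB}})\in H_S$ is annihilated by all product effects, by Def.~\ref{def:HolisticSubspaces}.

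\textbf{Step 3: build the local model.} Since $\omega^{\m{AB}}$ lacks TL entanglement, Def.~\ref{def:TLEntenglement} (in the form of Eq.~\eqref{eq:StateswithoutTLEntanglement}) gives a probability distribution $\{p_\lambda\}$ and local states $\omega_\lambda\in\Omega_{\m{A}}$, $\nu_\lambda\in\Omega_{\m{B}}$ such that
\begin{equation}
\Pi_{\rm TL}(\omega^{\m{AB}})=\sum_\lambda p_\lambda\,\omega_\lambda\boxtimes\nu_\lambda.
\end{equation}
Substituting this into Step 2 and using the product rule of Def.~\ref{def: CompositionRequirements}, we get
\begin{equation}
p(ab|xy)=\sum_\lambda p_\lambda\, e^{\m{A}}_{a|x}(\omega_\lambda)\, f^{\m{B}}_{b|y}(\nu_\lambda).
\end{equation}
Each factor $e^{\m{A}}_{a|x}(\omega_\lambda)$ is a genuine probability distribution over $a$: it lies in $[0,1]$ because $\omega_\lambda$ is a valid normalized state and $e^{\m{A}}_{a|x}$ a valid effect, and it sums to one because $u^{\m{A}}(\omega_\lambda)=1$. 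The same holds for Bob's factors. This is a local hidden variable model, so no Bell inequality can be violated.

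\textbf{Main obstacle.} The subtle point is the validity of the decomposition in Step 3, not the algebra. Although $\Pi_{\rm TL}(\omega^{\m{AB}})$ is in general not a state, the hypothesis says it is separable. By Definition~\ref{def:TLEntenglement}, separability means a convex combination of \emph{valid} local states, so the response functions are genuine probabilities and not quasiprobabilities.

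A second point to address is more elaborate Bell scenarios, for example sequential local operations or more parties. I would handle these by noting that any local processing composed with a local measurement is again a product effect on $\m{AB}$, so Step 2 still applies.
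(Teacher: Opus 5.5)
Your proposal is correct and follows essentially the same route as the paper. The paper also writes $\omega$ as a separable tomographically-local part plus a holistic component $\tilde h\in H_S$, notes that $\tilde h$ is annihilated by every product effect, and reads off the local hidden variable model $p(a,b|x,y)=\sum_i p_i\, p(a|x,i)\,p(b|y,i)$. Your Step 2, phrased via $(e\boxtimes f)\circ\Pi_{\rm TL}=e\boxtimes f$, is just the dual form of that same annihilation argument.
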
 
\begin{proof}
Recall that any state of $AB$ can be decomposed as (see Eq.~\eqref{eq:DecompositionStateGeneral})
\begin{equation}
\label{eq:DecompositionStatesAB}
\vcenter{\hbox{
\begin{tikzpicture}
	\begin{pgfonlayer}{nodelayer}

		\node [style=none] (L1) at (-4.8,0) {};
		\node [style=none] (R1) at (-3.0,0) {};
		\node [style=none] (C1) at (-3.9,-1.2) {};
		\node [style=none] (W1) at (-3.9,-0.4) {$\omega$};

		\node [style=none] (T1a) at (-4.4,0) {};
		\node [style=none] (T1b) at (-3.4,0) {};

		\node [style=none] (T1aL) at (-4.44,0) {};
		\node [style=none] (T1aR) at (-4.36,0) {};
		\node [style=none] (T1bL) at (-3.44,0) {};
		\node [style=none] (T1bR) at (-3.36,0) {};

		\node [style=none] (A1) at (-4.15,0.5) {$\scriptstyle \m{A}$};
		\node [style=none] (B1) at (-3.15,0.5) {$\scriptstyle \m{B}$};

		\node [style=none] (eq)  at (-2.1,0) {$=$};
		\node [style=none] (Sij) at (-0.35,0) {$\sum_{ij} r_{ij}$};

		\node [style=none] (L2) at (0.7,0) {};
		\node [style=none] (R2) at (2.2,0) {};
		\node [style=none] (C2) at (1.45,-1.0) {};
		\node [style=none] (W2) at (1.45,-0.33) {$\omega_i$};

		\node [style=none] (T2) at (1.45,0) {};
		\node [style=none] (T2L) at (1.41,0) {};
		\node [style=none] (T2R) at (1.49,0) {};
		\node [style=none] (A2) at (1.70,0.42) {$\scriptstyle \m{A}$};

		\node [style=none] (L3) at (2.4,0) {};
		\node [style=none] (R3) at (3.9,0) {};
		\node [style=none] (C3) at (3.15,-1.0) {};
		\node [style=none] (W3) at (3.15,-0.33) {$\omega_j$};

		\node [style=none] (T3) at (3.15,0) {};
		\node [style=none] (T3L) at (3.11,0) {};
		\node [style=none] (T3R) at (3.19,0) {};
		\node [style=none] (B3) at (3.40,0.42) {$\scriptstyle \m{B}$};

		\node [style=none] (plus) at (4.5,0) {$+$};
		\node [style=none] (St) at (5.7,0) {$\sum_{t} r_{t}$};

		\node [style=none] (L4) at (6.8,0) {};
		\node [style=none] (R4) at (8.9,0) {};
		\node [style=none] (C4) at (7.85,-1.0) {};
		\node [style=none] (W4) at (7.85,-0.35) {$\tilde{t}$};

		\node [style=none] (T4a) at (7.35,0) {};
		\node [style=none] (T4b) at (8.35,0) {};

		\node [style=none] (T4aL) at (7.31,0) {};
		\node [style=none] (T4aR) at (7.39,0) {};
		\node [style=none] (T4bL) at (8.31,0) {};
		\node [style=none] (T4bR) at (8.39,0) {};

		\node [style=none] (A4) at (7.60,0.5) {$\scriptstyle \m{A}$};
		\node [style=none] (B4) at (8.60,0.5) {$\scriptstyle \m{B}$};

	\end{pgfonlayer}

	\begin{pgfonlayer}{edgelayer}

		\draw (L1.center) to (R1.center);
		\draw (R1.center) to (C1.center);
		\draw (C1.center) to (L1.center);

		\draw (L2.center) to (R2.center);
		\draw (R2.center) to (C2.center);
		\draw (C2.center) to (L2.center);

		\draw (L3.center) to (R3.center);
		\draw (R3.center) to (C3.center);
		\draw (C3.center) to (L3.center);

		\draw (L4.center) to (R4.center);
		\draw (R4.center) to (C4.center);
		\draw (C4.center) to (L4.center);

		\draw[line width=0.6pt] (T1aL.center) to +(0,0.9);
		\draw[line width=0.6pt] (T1aR.center) to +(0,0.9);

		\draw[line width=0.6pt] (T1bL.center) to +(0,0.9);
		\draw[line width=0.6pt] (T1bR.center) to +(0,0.9);

		\draw[line width=0.6pt] (T2L.center) to +(0,0.9);
		\draw[line width=0.6pt] (T2R.center) to +(0,0.9);

		\draw[line width=0.6pt] (T3L.center) to +(0,0.9);
		\draw[line width=0.6pt] (T3R.center) to +(0,0.9);

		\draw[line width=0.6pt] (T4aL.center) to +(0,0.9);
		\draw[line width=0.6pt] (T4aR.center) to +(0,0.9);

		\draw[line width=0.6pt] (T4bL.center) to +(0,0.9);
		\draw[line width=0.6pt] (T4bR.center) to +(0,0.9);

	\end{pgfonlayer}
\end{tikzpicture}\,.
}}
\end{equation}

Since $\omega^{\m{AB}}$ lacks TL-entanglement, the component of the state in $AB_{\otimes}$ is in fact a convex combination (rather than a more general linear combination) of product states, as in Eq.~\eqref{eq:StateswithoutTLEntanglement}, repeated here:
\begin{equation}
\label{eq:DecompositionStatesWithoutTL}
\vcenter{\hbox{
\begin{tikzpicture}
	\begin{pgfonlayer}{nodelayer}

		\node [style=none] (L1) at (-4.8,0) {};
		\node [style=none] (R1) at (-3.0,0) {};
		\node [style=none] (C1) at (-3.9,-1.2) {};
		\node [style=none] (W1) at (-3.9,-0.4) {$\omega$};

		\node [style=none] (T1a) at (-4.4,0) {};
		\node [style=none] (T1b) at (-3.4,0) {};

		\node [style=none] (T1aL) at (-4.44,0) {};
		\node [style=none] (T1aR) at (-4.36,0) {};
		\node [style=none] (T1bL) at (-3.44,0) {};
		\node [style=none] (T1bR) at (-3.36,0) {};

		\node [style=none] (A1) at (-4.15,0.5) {$\scriptstyle \m{A}$};
		\node [style=none] (B1) at (-3.15,0.5) {$\scriptstyle \m{B}$};

		\node [style=none] (eq)  at (-2.1,0) {$=$};
		\node [style=none] (Sij) at (-0.35,0) {$\sum_{i} p_{i}$};

		\node [style=none] (L2) at (0.7,0) {};
		\node [style=none] (R2) at (2.2,0) {};
		\node [style=none] (C2) at (1.45,-1.0) {};
		\node [style=none] (W2) at (1.45,-0.33) {$\omega_i$};

		\node [style=none] (T2) at (1.45,0) {};
		\node [style=none] (T2L) at (1.41,0) {};
		\node [style=none] (T2R) at (1.49,0) {};
		\node [style=none] (A2) at (1.70,0.42) {$\scriptstyle \m{A}$};

		\node [style=none] (L3) at (2.4,0) {};
		\node [style=none] (R3) at (3.9,0) {};
		\node [style=none] (C3) at (3.15,-1.0) {};
		\node [style=none] (W3) at (3.15,-0.33) {$\nu_i$};

		\node [style=none] (T3) at (3.15,0) {};
		\node [style=none] (T3L) at (3.11,0) {};
		\node [style=none] (T3R) at (3.19,0) {};
		\node [style=none] (B3) at (3.40,0.42) {$\scriptstyle \m{B}$};

		\node [style=none] (plus) at (5.2,0) {$+$};

		\node [style=none] (L4) at (6.8,0) {};
		\node [style=none] (R4) at (8.9,0) {};
		\node [style=none] (C4) at (7.85,-1.0) {};
		\node [style=none] (W4) at (7.85,-0.38) {$\tilde{h}$};

		\node [style=none] (T4a) at (7.35,0) {};
		\node [style=none] (T4b) at (8.35,0) {};

		\node [style=none] (T4aL) at (7.31,0) {};
		\node [style=none] (T4aR) at (7.39,0) {};
		\node [style=none] (T4bL) at (8.31,0) {};
		\node [style=none] (T4bR) at (8.39,0) {};

		\node [style=none] (A4) at (7.60,0.5) {$\scriptstyle \m{A}$};
		\node [style=none] (B4) at (8.60,0.5) {$\scriptstyle \m{B}$};

	\end{pgfonlayer}

	\begin{pgfonlayer}{edgelayer}

		\draw (L1.center) to (R1.center);
		\draw (R1.center) to (C1.center);
		\draw (C1.center) to (L1.center);

		\draw (L2.center) to (R2.center);
		\draw (R2.center) to (C2.center);
		\draw (C2.center) to (L2.center);

		\draw (L3.center) to (R3.center);
		\draw (R3.center) to (C3.center);
		\draw (C3.center) to (L3.center);

		\draw (L4.center) to (R4.center);
		\draw (R4.center) to (C4.center);
		\draw (C4.center) to (L4.center);

		\draw[line width=0.6pt] (T1aL.center) to +(0,0.9);
		\draw[line width=0.6pt] (T1aR.center) to +(0,0.9);

		\draw[line width=0.6pt] (T1bL.center) to +(0,0.9);
		\draw[line width=0.6pt] (T1bR.center) to +(0,0.9);

		\draw[line width=0.6pt] (T2L.center) to +(0,0.9);
		\draw[line width=0.6pt] (T2R.center) to +(0,0.9);

		\draw[line width=0.6pt] (T3L.center) to +(0,0.9);
		\draw[line width=0.6pt] (T3R.center) to +(0,0.9);

		\draw[line width=0.6pt] (T4aL.center) to +(0,0.9);
		\draw[line width=0.6pt] (T4aR.center) to +(0,0.9);

		\draw[line width=0.6pt] (T4bL.center) to +(0,0.9);
		\draw[line width=0.6pt] (T4bR.center) to +(0,0.9);

	\end{pgfonlayer}
\end{tikzpicture}\,,
}}
\end{equation}
 with $\tilde{h}=\sum_tr_t\tilde{t}$ being the holistic component. 
 
The idea of the proof is as follows. 
 The component $\tilde{h}$  of the state lying in the holistic subspace is annihilated by all product effects, and the only effects used in Bell scenarios are product effects. Therefore, this component of the state does not contribute in any way to the observed probabilities in the scenario, and any correlations observed come solely from the component of the state living in $AB_{\otimes }$, which by assumption is not entangled, and so cannot violate any Bell inequalities.  In what follows, we spell out this proof idea (which will be useful for subsequent proofs in this manuscript).

Consider a bipartite Bell scenario with a finite number of outcomes and of measurements. 
 Consider the set of all conditional probabilities of obtaining outcomes $a,b$  when Alice implements local measurement $x$ and Bob implements  local measurement $y$ on their subsystems: $\{\{p(a,b|x,y)\}_{a,b}\}_{x,y}$. These correlations violate some Bell inequality for the scenario if and only if each joint probability $p(a,b|x,y)$ does not admit a decomposition of the form  $p(a,b|x,y)=\sum_{\lambda}p(a|x,\lambda)p(b|y,\lambda)p(\lambda)$. Diagrammatically, $p(a,b|x,y)$ can be written as
\begin{equation}
p(a,b|x,y)=\vcenter{\hbox{%
\begin{tikzpicture}
  \begin{pgfonlayer}{nodelayer}

    \def\dw{0.07}          
    \def\xA{-1.3}         
    \def\xB{ 1.3}         

    \def\yEffBase{1.05}
    \def\yEffTip{ 2.17}
    \def\triHalfW{0.91}    

    \def\yOmBase{0.00}
    \def\yOmTip{-1.10}
    \def\omHalfW{1.90}

    \node[style=none] (eBL) at ({\xA-\triHalfW}, \yEffBase) {};
    \node[style=none] (eBR) at ({\xA+\triHalfW}, \yEffBase) {};
    \node[style=none] (eT)  at (\xA, \yEffTip) {};
    \node[style=none] (eLab) at (\xA+0.1, 1.40) {$e_{a|x}$};

    \node[style=none] (fBL) at ({\xB-\triHalfW}, \yEffBase) {};
    \node[style=none] (fBR) at ({\xB+\triHalfW}, \yEffBase) {};
    \node[style=none] (fT)  at (\xB, \yEffTip) {};
    \node[style=none] (fLab) at (\xB, 1.40) {$f_{b|y}$};

    \node[style=none] (OmL) at (-\omHalfW, \yOmBase) {};
    \node[style=none] (OmR) at ( \omHalfW, \yOmBase) {};
    \node[style=none] (OmT) at (0, \yOmTip) {};
    \node[style=none] (OmLab) at (0,-0.45) {$\omega$};

    \node[style=none] (Adn1) at ({\xA-\dw}, \yOmBase) {};
    \node[style=none] (Adn2) at ({\xA+\dw}, \yOmBase) {};
    \node[style=none] (Aup1) at ({\xA-\dw}, \yEffBase) {};
    \node[style=none] (Aup2) at ({\xA+\dw}, \yEffBase) {};
    \node[style=none] (Alab) at ({\xA+0.28}, 0.55) {$\scriptstyle \m{A}$};

    \node[style=none] (Bdn1) at ({\xB-\dw}, \yOmBase) {};
    \node[style=none] (Bdn2) at ({\xB+\dw}, \yOmBase) {};
    \node[style=none] (Bup1) at ({\xB-\dw}, \yEffBase) {};
    \node[style=none] (Bup2) at ({\xB+\dw}, \yEffBase) {};
    \node[style=none] (Blab) at ({\xB+0.28}, 0.55) {$\scriptstyle \m{B}$};

  \end{pgfonlayer}

  \begin{pgfonlayer}{edgelayer}

    \draw (eBL.center) -- (eBR.center) -- (eT.center) -- cycle;
    \draw (fBL.center) -- (fBR.center) -- (fT.center) -- cycle;

    \draw (OmL.center) -- (OmR.center) -- (OmT.center) -- cycle;

    \draw[line width=0.6pt] (Adn1.center) -- (Aup1.center);
    \draw[line width=0.6pt] (Adn2.center) -- (Aup2.center);

    \draw[line width=0.6pt] (Bdn1.center) -- (Bup1.center);
    \draw[line width=0.6pt] (Bdn2.center) -- (Bup2.center);

  \end{pgfonlayer}
\end{tikzpicture}%
}}\,.
\end{equation}

Now, if $\omega$ lacks TL-entanglement, 

\begin{equation}\label{eq:your_label_here}
\vcenter{\hbox{%
\begin{tikzpicture}[baseline={(base.center)}]
\begin{pgfonlayer}{nodelayer}

\node (base) at (0,0) {}; 

\def\dw{0.07}            

\def\yEffBase{1.05}
\def\yEffTip{ 2.17}
\def\triHalfW{0.91}

\def\yBigBase{0.00}
\def\yBigTip{-1.8}
\def\bigHalfW{2.1}

\def\ySmBase{0.00}
\def\ySmTip{-1.05}
\def\smHalfW{0.75}

\def\yWireLab{0.55}

%
\newcommand{\EffectTri}[3]{%
  \node[style=none] (#2BL) at ({#3-\triHalfW}, \yEffBase) {};
  \node[style=none] (#2BR) at ({#3+\triHalfW}, \yEffBase) {};
  \node[style=none] (#2T)  at (#3, \yEffTip) {};
  \node[style=none] (#2Lab) at (#3, 1.40) {$#1$};
}

\newcommand{\InvTri}[7]{%
  \node[style=none] (#2L) at ({#3-#4}, #5) {};
  \node[style=none] (#2R) at ({#3+#4}, #5) {};
  \node[style=none] (#2T) at (#3, #6) {};
  \node[style=none] (#2Lab) at (#3, #7) {$#1$};
}

\newcommand{\DoubleWire}[4]{%
  \node[style=none] (#1dn1) at ({#2-\dw}, #3) {};
  \node[style=none] (#1dn2) at ({#2+\dw}, #3) {};
  \node[style=none] (#1up1) at ({#2-\dw}, #4) {};
  \node[style=none] (#1up2) at ({#2+\dw}, #4) {};
}

\begin{scope}[xshift=-8.0cm]
  \def\xA{-1.48}
  \def\xB{ 1.48}

  \EffectTri{e_{a|x}}{LHe}{\xA}
  \EffectTri{f_{b|y}}{LHf}{\xB}

  \InvTri{\omega}{LHOm}{0}{\bigHalfW}{\yBigBase}{\yBigTip}{-0.70}

  \DoubleWire{LHA}{\xA}{\yBigBase}{\yEffBase}
  \DoubleWire{LHB}{\xB}{\yBigBase}{\yEffBase}

  \node[style=none] (LHAlab) at ({\xA+0.30}, \yWireLab) {$\scriptstyle \m{A}$};
  \node[style=none] (LHBhlab) at ({\xB+0.30}, \yWireLab) {$\scriptstyle \m{B}$};
\end{scope}

\node[style=none] (EQ) at (-4.2,0.10) {$=$};
\node[style=none] (SUM) at (-1.7,0.10) {$\sum_{i} p_i$};

\begin{scope}[xshift=1.0cm]
  \def\xMA{-0.9}
  \EffectTri{e_{a|x}}{Me}{\xMA}
  \InvTri{\omega_i}{Mwi}{\xMA}{\smHalfW}{\ySmBase}{\ySmTip}{-0.35}
  \DoubleWire{MWA}{\xMA}{\ySmBase}{\yEffBase}
  \node[style=none] (MWAlab) at ({\xMA+0.28}, \yWireLab) {$\scriptstyle \m{A}$};

  \def\xMB{1.35}
  \EffectTri{f_{b|y}}{Mf}{\xMB}
  \InvTri{\nu_i}{Mnu}{\xMB}{\smHalfW}{\ySmBase}{\ySmTip}{-0.35}
  \DoubleWire{MWB}{\xMB}{\ySmBase}{\yEffBase}
  \node[style=none] (MWBlab) at ({\xMB+0.28}, \yWireLab) {$\scriptstyle \m{B}$};
\end{scope}

\node[style=none] (PLUS) at (4.9,0.10) {$+$};

\begin{scope}[xshift=9.0cm]
  \def\xA{-1.48}
  \def\xB{ 1.48}

  \EffectTri{e_{a|x}}{RHe}{\xA}
  \EffectTri{f_{b|y}}{RHf}{\xB}

  \InvTri{\tilde h}{RHh}{0}{\bigHalfW}{\yBigBase}{\yBigTip}{-0.70}

  \DoubleWire{RHA}{\xA}{\yBigBase}{\yEffBase}
  \DoubleWire{RHB}{\xB}{\yBigBase}{\yEffBase}

  \node[style=none] (RHAlab) at ({\xA+0.30}, \yWireLab) {$\scriptstyle \m{A}$};
  \node[style=none] (RHBhlab) at ({\xB+0.30}, \yWireLab) {$\scriptstyle \m{B}$};
\end{scope}

\end{pgfonlayer}

\begin{pgfonlayer}{edgelayer}

  \draw (LHeBL.center) -- (LHeBR.center) -- (LHeT.center) -- cycle;
  \draw (LHfBL.center) -- (LHfBR.center) -- (LHfT.center) -- cycle;
  \draw (LHOmL.center) -- (LHOmR.center) -- (LHOmT.center) -- cycle;

  \draw[line width=0.6pt] (LHAdn1.center) -- (LHAup1.center);
  \draw[line width=0.6pt] (LHAdn2.center) -- (LHAup2.center);
  \draw[line width=0.6pt] (LHBdn1.center) -- (LHBup1.center);
  \draw[line width=0.6pt] (LHBdn2.center) -- (LHBup2.center);

  \draw (MeBL.center) -- (MeBR.center) -- (MeT.center) -- cycle;
  \draw (MwiL.center) -- (MwiR.center) -- (MwiT.center) -- cycle;
  \draw[line width=0.6pt] (MWAdn1.center) -- (MWAup1.center);
  \draw[line width=0.6pt] (MWAdn2.center) -- (MWAup2.center);

  \draw (MfBL.center) -- (MfBR.center) -- (MfT.center) -- cycle;
  \draw (MnuL.center) -- (MnuR.center) -- (MnuT.center) -- cycle;
  \draw[line width=0.6pt] (MWBdn1.center) -- (MWBup1.center);
  \draw[line width=0.6pt] (MWBdn2.center) -- (MWBup2.center);

  \draw (RHeBL.center) -- (RHeBR.center) -- (RHeT.center) -- cycle;
  \draw (RHfBL.center) -- (RHfBR.center) -- (RHfT.center) -- cycle;
  \draw (RHhL.center) -- (RHhR.center) -- (RHhT.center) -- cycle;

  \draw[line width=0.6pt] (RHAdn1.center) -- (RHAup1.center);
  \draw[line width=0.6pt] (RHAdn2.center) -- (RHAup2.center);
  \draw[line width=0.6pt] (RHBdn1.center) -- (RHBup1.center);
  \draw[line width=0.6pt] (RHBdn2.center) -- (RHBup2.center);

\end{pgfonlayer}
\end{tikzpicture}%
}}\,,
\end{equation}

and since $\tilde{h}\in H_S$, it evaluates to $0$ on all product effects, so the last term vanishes. Therefore, we end up with
\begin{equation}
p(a,b|x,y)=\vcenter{\hbox{%
\begin{tikzpicture}[baseline={(base.center)}]
\begin{pgfonlayer}{nodelayer}
  \node (base) at (0,0) {}; 

  \def\dw{0.07}
  \def\yEffBase{1.05}
  \def\yEffTip{ 2.17}
  \def\triHalfW{0.91}

  \def\ySmBase{0.00}
  \def\ySmTip{-1.05}
  \def\smHalfW{0.75}

  \def\yWireLab{0.55}

  \def\xMA{0.00}   
  \def\xMB{2.25}   

  \node[style=none] (SUM) at (-1.55,0.65) {$\sum_i p_i$};

  \newcommand{\EffectTri}[3]{%
    \node[style=none] (#2BL) at ({#3-\triHalfW-0.2}, \yEffBase) {};
    \node[style=none] (#2BR) at ({#3+\triHalfW}, \yEffBase) {};
    \node[style=none] (#2T)  at (#3, \yEffTip) {};
    \node[style=none] (#2Lab) at (#3, 1.40) {$#1$};
  }
  \newcommand{\InvTri}[7]{%
    \node[style=none] (#2L) at ({#3-#4}, #5) {};
    \node[style=none] (#2R) at ({#3+#4}, #5) {};
    \node[style=none] (#2T) at (#3, #6) {};
    \node[style=none] (#2Lab) at (#3, #7) {$#1$};
  }
  \newcommand{\DoubleWire}[4]{%
    \node[style=none] (#1dn1) at ({#2-\dw}, #3) {};
    \node[style=none] (#1dn2) at ({#2+\dw}, #3) {};
    \node[style=none] (#1up1) at ({#2-\dw}, #4) {};
    \node[style=none] (#1up2) at ({#2+\dw}, #4) {};
  }

  \EffectTri{e_{a|x}}{Ae}{\xMA}
  \InvTri{\omega_i}{Aom}{\xMA}{\smHalfW}{\ySmBase}{\ySmTip}{-0.35}
  \DoubleWire{Aw}{\xMA}{\ySmBase}{\yEffBase}
  \node[style=none] (Alab) at ({\xMA+0.28}, \yWireLab) {$\scriptstyle \m{A}$};

  \EffectTri{f_{b|y}}{Bf}{\xMB}
  \InvTri{\nu_i}{Bnu}{\xMB}{\smHalfW}{\ySmBase}{\ySmTip}{-0.35}
  \DoubleWire{Bw}{\xMB}{\ySmBase}{\yEffBase}
  \node[style=none] (Blab) at ({\xMB+0.28}, \yWireLab) {$\scriptstyle \m{B}$};

\end{pgfonlayer}

\begin{pgfonlayer}{edgelayer}
  \draw (AeBL.center) -- (AeBR.center) -- (AeT.center) -- cycle;
  \draw (BfBL.center) -- (BfBR.center) -- (BfT.center) -- cycle;

  \draw (AomL.center) -- (AomR.center) -- (AomT.center) -- cycle;
  \draw (BnuL.center) -- (BnuR.center) -- (BnuT.center) -- cycle;

  \draw[line width=0.6pt] (Awdn1.center) -- (Awup1.center);
  \draw[line width=0.6pt] (Awdn2.center) -- (Awup2.center);
  \draw[line width=0.6pt] (Bwdn1.center) -- (Bwup1.center);
  \draw[line width=0.6pt] (Bwdn2.center) -- (Bwup2.center);
\end{pgfonlayer}

\end{tikzpicture}%
}}=\sum_i p_i p(a|x,i)p(b|y,i).
\end{equation}

The final expression for the joint probability $p(ab|xy)$ provides a local hidden variable model for it, which shows that $p(ab|xy)$ cannot violate Bell inequality. 
 \end{proof}
The result above makes it clear why it is entirely natural that the rebit states $\omega^{AB}_{\pm}$, although maximally entangled (as quantified by concurrence), cannot violate any Bell inequality: the state is indeed entangled, but all of its entanglement is entirely of the TNL type, which cannot contribute to Bell violations, regardless of which GPT is considered.

\begin{proposition} Consider a bipartite system $\m{AB}$. Any state in  $\Omega^{\m{AB}}$ that lacks tomographically-local entanglement is useless for steering (even if it has tomographically-nonlocal entanglement).
\end{proposition}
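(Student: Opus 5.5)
The plan mirrors the Bell-scenario proof just given. I would first fix what ``useless for steering'' means in a GPT. In a steering scenario, Alice performs local measurements $\{e_{a|x}\}_a\subset E_{\m{A}}$ on her share of $\omega^{\m{AB}}$. This prepares the unnormalized conditional states (the assemblage) on Bob's side, $\sigma_{a|x}:=(e_{a|x}\boxtimes \mathds{1}^{\m{B}})[\omega^{\m{AB}}]\in \m{S}_{\m{B}}$. These are valid local states by the closure-under-steering condition of Def.~\ref{def: CompositionRequirements}. The state demonstrates steering iff this assemblage admits no local hidden state (LHS) model, i.e.\ iff there are no probability distribution $p(\lambda)$, response functions $p(a|x,\lambda)$ and normalized states $\nu_\lambda\in\Omega_{\m{B}}$ with $\sigma_{a|x}=\sum_\lambda p(\lambda)p(a|x,\lambda)\nu_\lambda$ for all $a,x$. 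So the goal is to exhibit such a model whenever $\omega^{\m{AB}}$ lacks TL-entanglement.

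The key step is to show that the holistic component never contributes to any $\sigma_{a|x}$. Since $\omega^{\m{AB}}$ lacks TL-entanglement, Eq.~\eqref{eq:StateswithoutTLEntanglement} gives $\omega^{\m{AB}}=\sum_i p_i\,\omega_i^{\m{A}}\boxtimes\nu_i^{\m{B}}+\tilde{h}$ with $\tilde{h}\in H_S$. Applying $e_{a|x}$ on $\m{A}$ is linear, so $\sigma_{a|x}=\sum_i p_i\, e_{a|x}(\omega_i^{\m{A}})\,\nu_i^{\m{B}}+\tilde{\sigma}_{a|x}$, where $\tilde{\sigma}_{a|x}:=(e_{a|x}\boxtimes\mathds{1}^{\m{B}})[\tilde{h}]\in B$. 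To show $\tilde{\sigma}_{a|x}=0$, I invoke tomography of the single system $\m{B}$: a vector of $B$ is zero iff every effect $f\in E_{\m{B}}$ evaluates to zero on it. This holds because $\mathrm{Span}(E_{\m{B}})=B^*$. For each such $f$ we have $f(\tilde{\sigma}_{a|x})=(e_{a|x}\boxtimes f)[\tilde{h}]=0$, since $\tilde{h}$ lies in $H_S$, the annihilator of all product effects (Def.~\ref{def:HolisticSubspaces}). Hence $\tilde{\sigma}_{a|x}=0^{B}$.

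What remains is immediate. Setting $\lambda=i$, $p(\lambda)=p_i$, $p(a|x,i)=e_{a|x}(\omega_i^{\m{A}})$ and $\nu_\lambda=\nu_i^{\m{B}}$ gives an LHS model. The response functions are legitimate because $\{e_{a|x}\}_a$ sums to $u^{\m{A}}$ and the $\omega_i^{\m{A}}$ are normalized, so $p(a|x,i)$ is a probability distribution over $a$ for each $x,i$. I would also remark that the same argument covers steering in the reverse direction, from Bob to Alice, by the $\m{A}\leftrightarrow\m{B}$ symmetric version of the closure condition.

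The only step needing care, and thus the main obstacle, is justifying that $\tilde{\sigma}_{a|x}$ vanishes as a vector rather than merely having zero norm. This is where local tomography of $\m{B}$, together with $\mathrm{Span}(E_{\m{B}})=B^*$, must be used explicitly. I would also note in passing that, even though $\tilde{h}$ is not a state, the map $(e\boxtimes\mathds{1})$ extends linearly to all of $AB$, so applying it to $\tilde{h}$ is well defined.
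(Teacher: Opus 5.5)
Your proposal is correct and follows essentially the same route as the paper. You decompose the state as $\sum_i p_i\,\omega_i^{\m{A}}\boxtimes\nu_i^{\m{B}}+\tilde{h}$ with $\tilde{h}\in H_S$, kill the holistic contribution to Bob's conditional state because $(e\boxtimes f)(\tilde{h})=0$ for all $f\in E_{\m{B}}$ and these effects span $B^*$, and read off the local hidden state model $\sum_i p_i\, e(\omega_i^{\m{A}})\,\nu_i^{\m{B}}$; your explicit normalization check on the response functions and the remark on extending $e\boxtimes\mathds{1}$ linearly to $\tilde{h}$ are helpful clarifications rather than a different argument.
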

\begin{proof}
    Consider the action of an effect $e\in E_{\m{A}}$ on an entangled state $\omega^{\m{AB}}$ that lacks TL entanglement. Given the general decomposition for a state $\omega^{\m{AB}}$ without TL entanglement (Eq.~\eqref{eq:StateswithoutTLEntanglement}), the conditional state (which belongs to $B$ as per the requirements of GPT composite systems, Def.~\ref{def: CompositionRequirements}) is
    \begin{equation}
    B\ni
\vcenter{\hbox{%
\begin{tikzpicture}
\begin{pgfonlayer}{nodelayer}
\node[style=none] (LwTL) at (-5.20,0.00) {};   
\node[style=none] (LwTR) at (-3.60,0.00) {};   
\node[style=none] (LwB)  at (-4.40,-1.30) {};  
\node[style=none] (Lw)   at (-4.40,-0.55) {$\omega$};

\node[style=none] (LAu1) at (-4.85,0.00) {};
\node[style=none] (LAu2) at (-4.75,0.00) {};
\node[style=none] (LBu1) at (-3.95,0.00) {};
\node[style=none] (LBu2) at (-3.85,0.00) {};

\node[style=none] (LAt1) at (-4.85,1.25) {};
\node[style=none] (LAt2) at (-4.75,1.25) {};
\node[style=none] (LBt1) at (-3.95,1.25) {};
\node[style=none] (LBt2) at (-3.85,1.25) {};

\node[style=none] (LAlab) at (-5.25,0.62) {$\scriptstyle \m{A}$};
\node[style=none] (LBlab) at (-3.45,0.62) {$\scriptstyle \m{B}$};

\node[style=none] (LeBL) at (-5.25,1.25) {};
\node[style=none] (LeBR) at (-4.35,1.25) {};
\node[style=none] (LeT)  at (-4.80,2.20) {};
\node[style=none] (Le)   at (-4.80,1.72) {$e$};

\node[style=none] (Eq)   at (-2.78,0.05) {$=$};
\node[style=none] (Sum)  at (-1.46,0.13) {$\sum_i p_i$};

\node[style=none] (Mw1TL) at (-0.59,0.00) {};
\node[style=none] (Mw1TR) at ( 0.59,0.00) {};
\node[style=none] (Mw1B)  at ( 0.00,-1.04) {};
\node[style=none] (Mw1)   at ( 0.00,-0.35) {$\omega_i$};

\node[style=none] (MAu1) at (-0.08,0.00) {};
\node[style=none] (MAu2) at ( 0.08,0.00) {};
\node[style=none] (MAt1) at (-0.08,1.25) {};
\node[style=none] (MAt2) at ( 0.08,1.25) {};
\node[style=none] (MAlab) at (-0.35,0.62) {$\scriptstyle \m{A}$};

\node[style=none] (MeBL) at (-0.45,1.25) {};
\node[style=none] (MeBR) at ( 0.45,1.25) {};
\node[style=none] (MeT)  at ( 0.00,2.20) {};
\node[style=none] (Me)   at ( 0.00,1.72) {$e$};

\node[style=none] (Mw2TL) at (0.81,0.00) {};
\node[style=none] (Mw2TR) at (2.19,0.00) {};
\node[style=none] (Mw2B)  at (1.50,-0.95) {};
\node[style=none] (Mw2)   at (1.50,-0.35) {$\nu_i$};

\node[style=none] (MBu1) at (1.42,0.00) {};
\node[style=none] (MBu2) at (1.58,0.00) {};
\node[style=none] (MBt1) at (1.42,1.25) {};
\node[style=none] (MBt2) at (1.58,1.25) {};
\node[style=none] (MBlab) at (2.25,0.62) {$\scriptstyle \m{B}$};

\node[style=none] (Plus) at (3.05,0.05) {$+$};

\node[style=none] (RwTL) at (4.15,0.00) {};
\node[style=none] (RwTR) at (5.75,0.00) {};
\node[style=none] (RwB)  at (4.95,-1.30) {};
\node[style=none] (Rnu)  at (4.95,-0.55) {$\tilde{h}$};

\node[style=none] (RAu1) at (4.50,0.00) {};
\node[style=none] (RAu2) at (4.60,0.00) {};
\node[style=none] (RBu1) at (5.30,0.00) {};
\node[style=none] (RBu2) at (5.40,0.00) {};
\node[style=none] (RAt1) at (4.50,1.25) {};
\node[style=none] (RAt2) at (4.60,1.25) {};
\node[style=none] (RBt1) at (5.30,1.25) {};
\node[style=none] (RBt2) at (5.40,1.25) {};

\node[style=none] (RAlab) at (4.05,0.62) {$\scriptstyle \m{A}$};
\node[style=none] (RBlab) at (5.95,0.62) {$\scriptstyle \m{B}$};

\node[style=none] (ReBL) at (4.10,1.25) {};
\node[style=none] (ReBR) at (5.00,1.25) {};
\node[style=none] (ReT)  at (4.55,2.20) {};
\node[style=none] (Re)   at (4.55,1.72) {$e$};
\end{pgfonlayer}

\begin{pgfonlayer}{edgelayer}
\draw (LwTL.center) -- (LwTR.center) -- (LwB.center) -- cycle;
\draw (LeBL.center) -- (LeBR.center) -- (LeT.center) -- cycle;

\draw[line width=0.6pt] (LAu1.center) -- (LAt1.center);
\draw[line width=0.6pt] (LAu2.center) -- (LAt2.center);
\draw[line width=0.6pt] (LBu1.center) -- (LBt1.center);
\draw[line width=0.6pt] (LBu2.center) -- (LBt2.center);

\draw (Mw1TL.center) -- (Mw1TR.center) -- (Mw1B.center) -- cycle;
\draw (MeBL.center) -- (MeBR.center) -- (MeT.center) -- cycle;
\draw[line width=0.6pt] (MAu1.center) -- (MAt1.center);
\draw[line width=0.6pt] (MAu2.center) -- (MAt2.center);

\draw (Mw2TL.center) -- (Mw2TR.center) -- (Mw2B.center) -- cycle;
\draw[line width=0.6pt] (MBu1.center) -- (MBt1.center);
\draw[line width=0.6pt] (MBu2.center) -- (MBt2.center);

\draw (RwTL.center) -- (RwTR.center) -- (RwB.center) -- cycle;
\draw (ReBL.center) -- (ReBR.center) -- (ReT.center) -- cycle;
\draw[line width=0.6pt] (RAu1.center) -- (RAt1.center);
\draw[line width=0.6pt] (RAu2.center) -- (RAt2.center);
\draw[line width=0.6pt] (RBu1.center) -- (RBt1.center);
\draw[line width=0.6pt] (RBu2.center) -- (RBt2.center);
\end{pgfonlayer}
\end{tikzpicture}%
}}\vcenter{\hbox{%
\begin{tikzpicture}
\begin{pgfonlayer}{nodelayer}
\node[style=none] (Imp) at (-2.17,0.00) {$\implies$};

\node[style=none] (wTL) at (-0.80,0.00) {};
\node[style=none] (wTR) at ( 0.80,0.00) {};
\node[style=none] (wB)  at ( 0.00,-1.20) {};
\node[style=none] (nu)  at ( 0.00,-0.52) {$\tilde{h}$};

\node[style=none] (Au1) at (-0.25,0.00) {};
\node[style=none] (Au2) at (-0.15,0.00) {};
\node[style=none] (Bu1) at ( 0.35,0.00) {};
\node[style=none] (Bu2) at ( 0.45,0.00) {};
\node[style=none] (At1) at (-0.25,1.10) {};
\node[style=none] (At2) at (-0.15,1.10) {};
\node[style=none] (Bt1) at ( 0.35,1.10) {};
\node[style=none] (Bt2) at ( 0.45,1.10) {};

\node[style=none] (eBL) at (-0.70,1.10) {};
\node[style=none] (eBR) at ( 0.10,1.10) {};
\node[style=none] (eT)  at (-0.30,1.95) {};
\node[style=none] (e)   at (-0.30,1.52) {$e$};

\node[style=none] (in)  at (1.20,0.00) {$\in$};
\node[style=none] (Bset) at (2.15,0.00) { ${B}.$};
\end{pgfonlayer}

\begin{pgfonlayer}{edgelayer}
\draw (wTL.center) -- (wTR.center) -- (wB.center) -- cycle;
\draw (eBL.center) -- (eBR.center) -- (eT.center) -- cycle;

\draw[line width=0.6pt] (Au1.center) -- (At1.center);
\draw[line width=0.6pt] (Au2.center) -- (At2.center);
\draw[line width=0.6pt] (Bu1.center) -- (Bt1.center);
\draw[line width=0.6pt] (Bu2.center) -- (Bt2.center);
\end{pgfonlayer}
\end{tikzpicture}%
}}
\end{equation}
    where the implication follows from the fact that the first term to the rhs of the equal sign is also in $B$, and $B$ is a vector space. We now analyze which vector of $B$ this can be.
    Recall that by the definition of the holistic-state subspace $H_S$, one has

\begin{equation}
    \begin{tikzpicture}
\begin{pgfonlayer}{nodelayer}

\node[style=none] (nuTL)   at (0.105,0.00) {};     
\node[style=none] (nuTR)   at (1.995,0.00) {};     
\node[style=none] (nuB)    at (1.050,-1.44) {};    
\node[style=none] (nuLab)  at (1.050,-0.63) {$\tilde{h}$};


\node[style=none] (AuL) at (0.58,0.00) {};
\node[style=none] (AuR) at (0.68,0.00) {};
\node[style=none] (AtL) at (0.58,1.25) {};
\node[style=none] (AtR) at (0.68,1.25) {};
\node[style=none] (Alab) at (0.88,0.62) {$\scriptstyle \m{A}$};

\node[style=none] (BuL) at (1.42,0.00) {};
\node[style=none] (BuR) at (1.52,0.00) {};
\node[style=none] (BtL) at (1.42,1.25) {};
\node[style=none] (BtR) at (1.52,1.25) {};
\node[style=none] (Blab) at (1.72,0.62) {$\scriptstyle \m{B}$};

\node[style=none] (eBL)  at (0.245,1.25) {};   
\node[style=none] (eBR)  at (1.015,1.25) {};   
\node[style=none] (eT)   at (0.630,2.13) {};
\node[style=none] (eLab) at (0.630,1.67) {$e$};

\node[style=none] (fBL)  at (1.085,1.25) {};   
\node[style=none] (fBR)  at (1.855,1.25) {};   
\node[style=none] (fT)   at (1.470,2.13) {};
\node[style=none] (fLab) at (1.38,1.65) {$f$};

\node[style=none] (Eq)     at (3.05,0.05) {$=$};
\node[style=none] (Zero)   at (4.05,0.05) {$0$};
\node[style=none] (Forall) at (5.10,0.05) {$\forall$};


\node[style=none] (RBbL) at (6.10,0.00) {};
\node[style=none] (RBbR) at (6.20,0.00) {};
\node[style=none] (RBtL) at (6.10,0.95) {};
\node[style=none] (RBtR) at (6.20,0.95) {};
\node[style=none] (RBlabel) at (6.42,0.45) {$\scriptstyle \m{B}$};

\node[style=none] (RfBL)  at (5.82,0.95) {};   
\node[style=none] (RfBR)  at (6.59,0.95) {};   
\node[style=none] (RfT)   at (6.15,1.775) {};  
\node[style=none] (RfLab) at (6.07,1.31) {$f$};

\node[style=none] (In) at (7.25,0.05) {$\in$};
\node[style=none] (EB) at (8.35,0.05) {$E_{\m{B}}$};

\end{pgfonlayer}

\begin{pgfonlayer}{edgelayer}
\draw (nuTL.center) -- (nuTR.center) -- (nuB.center) -- cycle;
\draw (eBL.center)  -- (eBR.center)  -- (eT.center)  -- cycle;
\draw (fBL.center)  -- (fBR.center)  -- (fT.center)  -- cycle;

\draw[line width=0.6pt] (AuL.center) -- (AtL.center);
\draw[line width=0.6pt] (AuR.center) -- (AtR.center);
\draw[line width=0.6pt] (BuL.center) -- (BtL.center);
\draw[line width=0.6pt] (BuR.center) -- (BtR.center);

\draw[line width=0.6pt] (RBbL.center) -- (RBtL.center);
\draw[line width=0.6pt] (RBbR.center) -- (RBtR.center);
\draw (RfBL.center) -- (RfBR.center) -- (RfT.center) -- cycle;
\end{pgfonlayer}
\end{tikzpicture}
\end{equation}

 which implies that
 \begin{equation}
\begin{tikzpicture}
\begin{pgfonlayer}{nodelayer}

\node[style=none] (nuTL)   at (0.10,0.00) {};
\node[style=none] (nuTR)   at (1.99,0.00) {};
\node[style=none] (nuB)    at (1.05,-1.30) {};
\node[style=none] (nuLab)  at (1.05,-0.52) {$\tilde{h}$};

\node[style=none] (AuL) at (0.58,0.00) {};
\node[style=none] (AuR) at (0.68,0.00) {};
\node[style=none] (AtL) at (0.58,1.25) {};
\node[style=none] (AtR) at (0.68,1.25) {};
\node[style=none] (Alab) at (0.88,0.62) {$\scriptstyle \m{A}$};

\node[style=none] (BuL) at (1.42,0.00) {};
\node[style=none] (BuR) at (1.52,0.00) {};
\node[style=none] (BtL) at (1.42,1.25) {};
\node[style=none] (BtR) at (1.52,1.25) {};
\node[style=none] (Blab) at (1.72,0.62) {$\scriptstyle \m{B}$};

\node[style=none] (eBL)  at (0.245,1.25) {};
\node[style=none] (eBR)  at (1.015,1.25) {};
\node[style=none] (eT)   at (0.630,2.13) {};
\node[style=none] (eLab) at (0.630,1.67) {$e$};

\node[style=none] (Eq) at (2.85,0.05) {$=$};


\node[style=none] (RBtL) at (4.10,1.10) {};
\node[style=none] (RBtR) at (4.20,1.10) {};
\node[style=none] (RBbL) at (4.10,0.00) {};
\node[style=none] (RBbR) at (4.20,0.00) {};
\node[style=none] (RBlabel) at (4.42,0.55) {$\scriptstyle \m{B}$};

\node[style=none] (zTL)  at (3.65,0.00) {};
\node[style=none] (zTR)  at (4.65,0.00) {};
\node[style=none] (zB)   at (4.15,-0.91) {};
\node[style=none] (zLab) at (4.15,-0.35) {$\vec{0}$};


\end{pgfonlayer}

\begin{pgfonlayer}{edgelayer}
\draw (nuTL.center) -- (nuTR.center) -- (nuB.center) -- cycle;
\draw (eBL.center)  -- (eBR.center)  -- (eT.center)  -- cycle;

\draw[line width=0.6pt] (AuL.center) -- (AtL.center);
\draw[line width=0.6pt] (AuR.center) -- (AtR.center);
\draw[line width=0.6pt] (BuL.center) -- (BtL.center);
\draw[line width=0.6pt] (BuR.center) -- (BtR.center);

\draw[line width=0.6pt] (RBtL.center) -- (RBbL.center);
\draw[line width=0.6pt] (RBtR.center) -- (RBbR.center);
\draw (zTL.center) -- (zTR.center) -- (zB.center) -- cycle;
\end{pgfonlayer}
\end{tikzpicture}
\end{equation}
    (since the effects $f^{\m{B}}$ span $B^*$), i.e., the holistic component does not contribute to the conditional state of subsystem $\m{B}$.
    Therefore, we can conclude that the conditional state can be written as
    \begin{equation}
    \begin{tikzpicture}
\begin{pgfonlayer}{nodelayer}

\node[style=none] (wTL)   at (0.10,0.00) {};
\node[style=none] (wTR)   at (1.99,0.00) {};
\node[style=none] (wB)    at (1.05,-1.44) {};
\node[style=none] (wLab)  at (1.08,-0.63) {$\omega$};

\node[style=none] (AuL) at (0.58,0.00) {};
\node[style=none] (AuR) at (0.68,0.00) {};
\node[style=none] (AtL) at (0.58,1.25) {};
\node[style=none] (AtR) at (0.68,1.25) {};
\node[style=none] (Alab) at (0.88,0.62) {$\scriptstyle \m{A}$};

\node[style=none] (BuL) at (1.42,0.00) {};
\node[style=none] (BuR) at (1.52,0.00) {};
\node[style=none] (BtL) at (1.42,1.25) {};
\node[style=none] (BtR) at (1.52,1.25) {};
\node[style=none] (Blab) at (1.72,0.62) {$\scriptstyle \m{B}$};

\node[style=none] (eBL)  at (0.245,1.25) {};
\node[style=none] (eBR)  at (1.015,1.25) {};
\node[style=none] (eT)   at (0.630,2.13) {};
\node[style=none] (eLab) at (0.630,1.67) {$e$};

\node[style=none] (Eq)   at (2.80,0.0) {$=$};
\node[style=none] (Sum)  at (3.95,0.0) {$\sum_i p_i$};


\node[style=none] (wi1TL)  at (4.82,0.00) {};
\node[style=none] (wi1TR)  at (5.98,0.00) {};
\node[style=none] (wi1B)   at (5.40,-0.98) {};
\node[style=none] (wi1Lab) at (5.42,-0.38) {$\omega_i$};

\node[style=none] (RAuL) at (5.35,0.00) {};
\node[style=none] (RAuR) at (5.45,0.00) {};
\node[style=none] (RAtL) at (5.35,1.25) {};
\node[style=none] (RAtR) at (5.45,1.25) {};
\node[style=none] (RAlab) at (5.65,0.62) {$\scriptstyle \m{A}$};

\node[style=none] (ReBL)  at (5.015,1.25) {};
\node[style=none] (ReBR)  at (5.785,1.25) {};
\node[style=none] (ReT)   at (5.400,2.13) {};
\node[style=none] (ReLab) at (5.400,1.67) {$e$};

\node[style=none] (wi2TL)  at (6.12,0.00) {};
\node[style=none] (wi2TR)  at (7.28,0.00) {};
\node[style=none] (wi2B)   at (6.70,-0.98) {};
\node[style=none] (wi2Lab) at (6.70,-0.38) {$\nu_i$};

\node[style=none] (RBuL) at (6.65,0.00) {};
\node[style=none] (RBuR) at (6.75,0.00) {};
\node[style=none] (RBtL) at (6.65,1.25) {};
\node[style=none] (RBtR) at (6.75,1.25) {};
\node[style=none] (RBlab) at (6.95,0.62) {$\scriptstyle \m{B}$};

\end{pgfonlayer}

\begin{pgfonlayer}{edgelayer}
\draw (wTL.center) -- (wTR.center) -- (wB.center) -- cycle;
\draw (eBL.center) -- (eBR.center) -- (eT.center) -- cycle;

\draw[line width=0.6pt] (AuL.center) -- (AtL.center);
\draw[line width=0.6pt] (AuR.center) -- (AtR.center);
\draw[line width=0.6pt] (BuL.center) -- (BtL.center);
\draw[line width=0.6pt] (BuR.center) -- (BtR.center);

\draw (wi1TL.center) -- (wi1TR.center) -- (wi1B.center) -- cycle;
\draw (ReBL.center)  -- (ReBR.center)  -- (ReT.center)  -- cycle;
\draw[line width=0.6pt] (RAuL.center) -- (RAtL.center);
\draw[line width=0.6pt] (RAuR.center) -- (RAtR.center);

\draw (wi2TL.center) -- (wi2TR.center) -- (wi2B.center) -- cycle;
\draw[line width=0.6pt] (RBuL.center) -- (RBtL.center);
\draw[line width=0.6pt] (RBuR.center) -- (RBtR.center);
\end{pgfonlayer}
\end{tikzpicture}
\end{equation}
    The last expression, equivalent to $\sum_i p_ip(e|i)\nu_i^{\m{B}}$, implies that any assemblage constructed from $\omega^{\m{AB}}$ by one party performing local measurements will admit of a local hidden state model~\cite{steeringwiseman}. Hence, no assemblage constructed from $\omega^{\m{AB}}$  is useful for steering (i.e., it is  manifestly LOSR-free~\cite{Schmid2020typeindependent,Zjawin2023quantifyingepr}). \blk    
\end{proof}

\begin{proposition}
\label{Prop.: TNLEAloneUselessForSteering}Consider a bipartite system $\m{AB}$. Any state in  $\Omega^{\m{AB}}$ that lacks tomographically-local entanglement is useless for teleportation (even if it has tomographically-nonlocal entanglement).
\end{proposition}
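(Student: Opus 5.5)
The plan is to follow the same strategy as the Bell and steering propositions. The key observation is that in a (possibly generalized, GPT) teleportation protocol using a shared resource $\omega^{\m{AB}}$, every operation that touches the resource is local. Alice performs a joint measurement $\{E_k\}_k\subset E_{\m{A'A}}$ on the input system $\m{A'}$ together with her half $\m{A}$. She communicates $k$. Bob then applies a correction $T_k$ on $\m{B}$ (and we allow any final local test on Bob's output). The output is
\begin{equation}
\sum_k (T_k\circ (E_k\otimes \mathrm{id}_{\m B}))[\psi^{\m{A'}}\boxtimes\omega^{\m{AB}}].
\end{equation}
Teleportation succeeds only if this map equals $\mathrm{id}_{\m{A'}\to\m{B}}$, or is close to it in the approximate version. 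Usefulness should be measured against the best protocol that uses a separable resource, i.e.\ a measure-and-prepare channel.

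First, write $\omega^{\m{AB}}=\sum_i p_i\,\omega_i^{\m A}\boxtimes\nu_i^{\m B}+\tilde h$, with $\tilde h\in H_S$, as in Eq.~\eqref{eq:StateswithoutTLEntanglement}. By linearity, the output splits into a separable contribution plus a contribution from $\tilde h$.

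Second, show that the $\tilde h$ contribution vanishes. For any state $\psi$ of $\m{A'}$, the vector $E_k[\psi\boxtimes(\bullet)]$ acting on the $\m A$ leg is, by closure under steering (Def.~\ref{def: CompositionRequirements}, item 3), a valid local effect $\tilde e_k\in E_{\m A}$. Hence the unnormalized conditional vector on $\m B$ is $\tilde e_k\boxtimes \mathrm{id}_{\m B}[\tilde h]$. This is exactly the quantity shown to be $\vec 0$ in the steering proof, because $\tilde h$ is annihilated by all product effects $\tilde e_k\boxtimes f$ and the $f\in E_{\m B}$ span $B^*$. Linearity of $T_k$ then kills this term.

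Third, read off the remaining channel:
\begin{equation}
\psi\mapsto\sum_{k,i}p_i\,\tilde e_{k}^{(i)}(\psi)\,T_k(\nu_i^{\m B}),
\qquad \tilde e_k^{(i)}:=E_k[(\bullet)\boxtimes\omega_i^{\m A}]\in E_{\m{A'}}.
\end{equation}
The effects $\{\tilde e^{(i)}_k\}_k$ form a measurement for each $i$, since $\sum_k E_k=u^{\m{A'}}\boxtimes u^{\m A}$. So the channel is a convex mixture of measure-and-prepare (entanglement-breaking) channels. It is therefore exactly what can be achieved with the separable resource $\sum_ip_i\omega_i\boxtimes\nu_i$, or with no entanglement at all. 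In particular, it cannot equal the identity on a nonclassical $\m{A'}$, and it yields no advantage in teleportation fidelity over classical strategies.

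The main obstacle is the second step. It requires that Alice's joint measurement acting on $\psi\boxtimes(\cdot)$ induces a genuine local effect on $\m A$, so that only product effects ever probe $\tilde h$. This depends on $\psi$ being a product input, and on closure under steering for effects. If one allowed the input $\m{A'}$ to be entangled with a reference system, I would handle it the same way: the reference is untouched, so the map is still composed of local processes on $\m{AB}$ legs. This reduces the claim to the statement that the induced channel is measure-and-prepare, which is exactly what the definition of uselessness requires. I would also note briefly that a separable resource cannot teleport any system whose identity channel is not entanglement-breaking, to make ``useless'' precise.
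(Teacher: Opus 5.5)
Your steps one to three reproduce the paper's proof. You decompose $\omega^{\m{AB}}=\sum_i p_i\,\omega_i\boxtimes\nu_i+\tilde h$, contract $\psi$ with Alice's joint effect to get a local effect on $\m A$ (the paper's $e_{\rm eff}$), and reuse the steering argument to show the $\tilde h$ contribution on $\m B$ is the zero vector. The remainder is then an LOSR-free teleportage. Folding in Bob's corrections $T_k$ and calling the result a mixture of measure-and-prepare maps is an equivalent repackaging. For product inputs $\psi$, which is the setting the paper uses (following Cavalcanti et al.), this is correct.

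Your closing claim about a reference system is false, however, and should be dropped.

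\textbf{Why the argument fails there.} If $\m{A'}$ is entangled with a reference $\m R$, the $\tilde h$ term yields $(E_k\boxtimes\mathrm{id}_{\m{RB}})[\rho^{\m{RA'}}\boxtimes\tilde h]$ on $\m{RB}$. To show this vanishes you must test it against \emph{all} effects on $\m{RB}$. Only product effects $g^{\m R}\boxtimes g^{\m B}$ reduce to your step two. Contracting an entangled $g^{\m{RB}}$ with $\rho^{\m{RA'}}$ and $E_k$ gives an effect on $\m{AB}$ that in general has a component in $H_E$, so it need not annihilate $\tilde h$.

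\textbf{Counterexample from the paper.} The paper's own LOCC-decoding protocol shows this concretely. Take the resource $\omega_0^{\m{A'B'}}$, which has no TL entanglement. Let the input $\m A$ be entangled with the reference $\m B$ in $\omega_x^{\m{AB}}$. Alice measures $\{e_0,e_1\}$ on $\m{AA'}$, and Bob applies $\mathcal Z$ on $\m{B'}$ when $a=1$. The reference--output pair then ends exactly in $\omega_x^{\m{BB'}}$, and the $\tilde h$ term contributes $\pm\tfrac18\,\sigma_y\otimes\sigma_y\neq 0$ for each outcome. Yet the same protocol sends every local state of $\m A$ to the maximally mixed state of $\m{B'}$.

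\textbf{Consequence.} In a tomographically-nonlocal theory the induced process is not fixed by its action on states of $\m{A'}$, and it is \emph{not} measure-and-prepare (or entanglement-breaking) as a process. Your phrases ``exactly what can be achieved with the separable resource'' and ``entanglement-breaking'' must therefore be restricted to the action on local inputs. The conclusion that perfect teleportation is impossible still holds, because the identity process acts as the identity on local inputs, and your step three shows that action is a mixture of measure-and-prepare maps.
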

\begin{proof}
A teleportation protocol (acting on a particular state $\psi$) and using a bipartite state $\omega^{\m{AB}}$ as a resource starts with a joint measurement made on $\psi$ and the $\m{A}$ subsystem. Then, Alice might communicate the outcome of the measurement to Bob, which, depending on the outcome, might perform an operation on his subsystem $\m{B}$. Representing a measurement as an operation from the input systems to a classical GPT system $\m{C}$, we can represent a teleportation protocol in the following diagram
\begin{equation}
\vcenter{\hbox{%
\begin{tikzpicture}[baseline={(base.center)}]
\begin{pgfonlayer}{nodelayer}

\node[style=none] (base) at (0,0) {};

\def\dw{0.10}                 

\def\BellCx{2.90}
\def\BellCy{2.05}
\def\BellW{3.55}
\def\BellH{1.10}

\node[style=none] (BellTL) at ({\BellCx-\BellW/2}, {\BellCy+\BellH/2}) {};
\node[style=none] (BellTR) at ({\BellCx+\BellW/2}, {\BellCy+\BellH/2}) {};
\node[style=none] (BellBR) at ({\BellCx+\BellW/2}, {\BellCy-\BellH/2}) {};
\node[style=none] (BellBL) at ({\BellCx-\BellW/2}, {\BellCy-\BellH/2}) {};
\node[style=none] (BellLab) at (\BellCx,\BellCy) {$\mathrm{Measure}$};

\def\Sx{2.05}
\node[style=none] (SuL1) at ({\Sx-\dw}, {\BellCy-\BellH/2}) {};
\node[style=none] (SuR1) at ({\Sx+\dw}, {\BellCy-\BellH/2}) {};
\node[style=none] (SLab) at (1.52,1.25) {$\scriptstyle \m{S}$};

\def\Ax{3.75}
\node[style=none] (AuL1) at ({\Ax-\dw}, {\BellCy-\BellH/2}) {};
\node[style=none] (AuR1) at ({\Ax+\dw}, {\BellCy-\BellH/2}) {};
\node[style=none] (ALab) at (4.25,1.15) {$\scriptstyle \m{A}$};

\node[style=none] (Cout0) at (\BellCx, {\BellCy+\BellH/2}) {};
\node[style=none] (Ccorner1) at (\BellCx, 2.95) {};        

\def\Psix{\Sx}
\node[style=none] (PsiTL)  at ({\Psix-0.55}, 0.55) {};
\node[style=none] (PsiTR)  at ({\Psix+0.55}, 0.55) {};
\node[style=none] (PsiB)   at (\Psix,-0.30) {};
\node[style=none] (PsiLab) at (\Psix,0.18) {{\color{red}$\psi$}};

\node[style=none] (SuL0) at ({\Sx-\dw}, 0.55) {};
\node[style=none] (SuR0) at ({\Sx+\dw}, 0.55) {};

\def\TCx{7.80}
\def\TCy{3.70}
\def\TW{1.45}
\def\TH{0.90}

\node[style=none] (TTL) at ({\TCx-\TW/2}, {\TCy+\TH/2}) {};
\node[style=none] (TTR) at ({\TCx+\TW/2}, {\TCy+\TH/2}) {};
\node[style=none] (TBR) at ({\TCx+\TW/2}, {\TCy-\TH/2}) {};
\node[style=none] (TBL) at ({\TCx-\TW/2}, {\TCy-\TH/2}) {};
\node[style=none] (TLab) at (\TCx,\TCy) {$T$};

\node[style=none] (Cin) at (\TCx, {\TCy-\TH/2}) {};

\def\Bx{\TCx}

\node[style=none] (AuL0) at ({\Ax-\dw}, 0.00) {};
\node[style=none] (AuR0) at ({\Ax+\dw}, 0.00) {};
\node[style=none] (BuL0) at ({\Bx-\dw}, 0.00) {};
\node[style=none] (BuR0) at ({\Bx+\dw}, 0.00) {};

\node[style=none] (OmTL)  at ({0.5*(\Ax+\Bx)-2.60}, 0.00) {};
\node[style=none] (OmTR)  at ({0.5*(\Ax+\Bx)+2.60}, 0.00) {};
\node[style=none] (OmB)   at ({0.5*(\Ax+\Bx)}, -1.55) {};
\node[style=none] (OmLab) at ({0.5*(\Ax+\Bx)}, -0.70) {$\omega$};

\node[style=none] (BuL_in) at ({\Bx-\dw}, {\TCy-\TH/2}) {};
\node[style=none] (BuR_in) at ({\Bx+\dw}, {\TCy-\TH/2}) {};

\node[style=none] (BuL_out0) at ({\Bx-\dw}, {\TCy+\TH/2}) {};
\node[style=none] (BuR_out0) at ({\Bx+\dw}, {\TCy+\TH/2}) {};
\node[style=none] (BuL_out1) at ({\Bx-\dw}, 4.70) {};
\node[style=none] (BuR_out1) at ({\Bx+\dw}, 4.70) {};

\node[style=none] (BLab_top) at ({\Bx+0.45},4.45) {$\scriptstyle \m{B}$};
\node[style=none] (BLab_mid) at ({\Bx+0.55},1.20) {$\scriptstyle \m{B}$};

\node[style=none] (Ccorner2) at ({\Bx-0.45}, 2.95) {};

\node[style=none] (Ccorner3) at ({\Bx-0.45}, {\TCy-\TH/2}) {};

\node[style=none] (CLab) at ({0.5*(\BellCx+(\Bx-0.45))}, 3.26) {$\scriptstyle \m{C}$};

\end{pgfonlayer}

\begin{pgfonlayer}{edgelayer}

\draw (BellTL.center) -- (BellTR.center) -- (BellBR.center) -- (BellBL.center) -- cycle;

\draw (TTL.center) -- (TTR.center) -- (TBR.center) -- (TBL.center) -- cycle;

\draw (OmTL.center) -- (OmTR.center) -- (OmB.center) -- cycle;

\draw[red] (PsiTL.center) -- (PsiTR.center) -- (PsiB.center) -- cycle;

\draw[line width=0.6pt] (SuL0.center) -- (SuL1.center);
\draw[line width=0.6pt] (SuR0.center) -- (SuR1.center);

\draw[line width=0.6pt] (AuL0.center) -- (AuL1.center);
\draw[line width=0.6pt] (AuR0.center) -- (AuR1.center);

\draw[line width=0.6pt] (BuL0.center) -- (BuL_in.center);
\draw[line width=0.6pt] (BuR0.center) -- (BuR_in.center);
\draw[line width=0.6pt] (BuL_out0.center) -- (BuL_out1.center);
\draw[line width=0.6pt] (BuR_out0.center) -- (BuR_out1.center);

\draw (Cout0.center) -- (Ccorner1.center) -- (Ccorner2.center) -- (Ccorner3.center) -- (Cin.center);

    \end{pgfonlayer}
    \end{tikzpicture}%
    }}
\end{equation}
State $\omega$ allows for perfect teleportation if and only if there exist a measurement and T operations such that the final state of subsystem $\m{B}$ is equivalent to $\psi$.   Following Refs.~\cite{PhysRevLett.119.110501,Schmid2020typeindependent}, we now focus on the part of the protocol prior to the outcome $a$ being communicated to the party on the right-hand-side:
\begin{equation}
\label{eq:DiagramTeleportage}
    \vcenter{\hbox{%
\begin{tikzpicture}[baseline={(base.center)}]
\begin{pgfonlayer}{nodelayer}
\node[style=none] (base) at (0,0) {};

\pgfmathsetmacro{\dw}{0.10}       

\pgfmathsetmacro{\Sx}{0.00}
\pgfmathsetmacro{\Ax}{1.55}
\pgfmathsetmacro{\Bx}{3.75}

\pgfmathsetmacro{\yPsiTop}{0.60}
\pgfmathsetmacro{\yOmTop}{0.60}
\pgfmathsetmacro{\yEaBase}{2.55}
\pgfmathsetmacro{\yBTop}{3.85}

\pgfmathsetmacro{\xEa}{0.5*(\Sx+\Ax)}
\pgfmathsetmacro{\xOm}{0.5*(\Ax+\Bx)}
\pgfmathsetmacro{\xPsi}{\Sx}

\pgfmathsetmacro{\eaHalfW}{1.30}
\pgfmathsetmacro{\eaH}{1.25}

\node[style=none] (EaBL) at (\xEa-\eaHalfW,\yEaBase) {};
\node[style=none] (EaBR) at (\xEa+\eaHalfW,\yEaBase) {};
\node[style=none] (EaT)  at (\xEa,\yEaBase+\eaH) {};
\node[style=none] (EaLab) at (\xEa,\yEaBase+0.55) {$e_a$};

\pgfmathsetmacro{\psiHalfW}{0.55}
\pgfmathsetmacro{\psiH}{0.85}

\node[style=none] (PsiTL) at (\xPsi-\psiHalfW,\yPsiTop) {};
\node[style=none] (PsiTR) at (\xPsi+\psiHalfW,\yPsiTop) {};
\node[style=none] (PsiB)  at (\xPsi,\yPsiTop-\psiH) {};
\node[style=none] (PsiLab) at (\xPsi,\yPsiTop-0.32) {{\color{red}$\psi$}};

\pgfmathsetmacro{\omHalfW}{2.10}
\pgfmathsetmacro{\omH}{1.35}

\node[style=none] (OmTL) at (\xOm-\omHalfW+0.13,\yOmTop) {};
\node[style=none] (OmTR) at (\xOm+\omHalfW-0.13,\yOmTop) {};
\node[style=none] (OmB)  at (\xOm,\yOmTop-\omH) {};
\node[style=none] (OmLab) at (\xOm,\yOmTop-0.55) {$\omega$};

\node[style=none] (SuL0) at (\Sx-\dw,\yPsiTop) {};
\node[style=none] (SuR0) at (\Sx+\dw,\yPsiTop) {};
\node[style=none] (SuL1) at (\Sx-\dw,\yEaBase) {};
\node[style=none] (SuR1) at (\Sx+\dw,\yEaBase) {};
\node[style=none] (SLab) at (\Sx-0.55,1.55) {$\scriptstyle \m{S}$};

\node[style=none] (AuL0) at (\Ax-\dw,\yOmTop) {};
\node[style=none] (AuR0) at (\Ax+\dw,\yOmTop) {};
\node[style=none] (AuL1) at (\Ax-\dw,\yEaBase) {};
\node[style=none] (AuR1) at (\Ax+\dw,\yEaBase) {};
\node[style=none] (ALab) at (\Ax-0.35,1.55) {$\scriptstyle \m{A}$};

\node[style=none] (BuL0) at (\Bx-\dw,\yOmTop) {};
\node[style=none] (BuR0) at (\Bx+\dw,\yOmTop) {};
\node[style=none] (BuL1) at (\Bx-\dw,\yBTop) {};
\node[style=none] (BuR1) at (\Bx+\dw,\yBTop) {};
\node[style=none] (BLab) at (\Bx+0.40,2.35) {$\scriptstyle \m{B}$};

\end{pgfonlayer}

\begin{pgfonlayer}{edgelayer}
\draw (EaBL.center) -- (EaBR.center) -- (EaT.center) -- cycle;
\draw[red] (PsiTL.center) -- (PsiTR.center) -- (PsiB.center) -- cycle;
\draw (OmTL.center) -- (OmTR.center) -- (OmB.center) -- cycle;

\draw[line width=0.6pt] (SuL0.center) -- (SuL1.center);
\draw[line width=0.6pt] (SuR0.center) -- (SuR1.center);

\draw[line width=0.6pt] (AuL0.center) -- (AuL1.center);
\draw[line width=0.6pt] (AuR0.center) -- (AuR1.center);

\draw[line width=0.6pt] (BuL0.center) -- (BuL1.center);
\draw[line width=0.6pt] (BuR0.center) -- (BuR1.center);
\end{pgfonlayer}
\end{tikzpicture}%
}}
\end{equation}

    for some effect $e_{a}$.
Since the state $\omega^{\m{AB}}$ lacks TL-entanglement, Eq.~\eqref{eq:StateswithoutTLEntanglement} gives
    \begin{equation}
\vcenter{\hbox{
\begin{tikzpicture}
	\begin{pgfonlayer}{nodelayer}

		\node [style=none] (L1) at (-4.8,0) {};
		\node [style=none] (R1) at (-3.0,0) {};
		\node [style=none] (C1) at (-3.9,-1.2) {};
		\node [style=none] (W1) at (-3.9,-0.4) {$\omega$};

		\node [style=none] (T1a) at (-4.4,0) {};
		\node [style=none] (T1b) at (-3.4,0) {};

		\node [style=none] (T1aL) at (-4.44,0) {};
		\node [style=none] (T1aR) at (-4.36,0) {};
		\node [style=none] (T1bL) at (-3.44,0) {};
		\node [style=none] (T1bR) at (-3.36,0) {};

		\node [style=none] (A1) at (-4.15,0.5) {$\scriptstyle \m{A}$};
		\node [style=none] (B1) at (-3.15,0.5) {$\scriptstyle \m{B}$};

		\node [style=none] (eq)  at (-1.7,0) {$=$};
		\node [style=none] (Sij) at (-0.3,0) {$\sum_{i} p_{i}$};

		\node [style=none] (L2) at (0.7,0) {};
		\node [style=none] (R2) at (2.2,0) {};
		\node [style=none] (C2) at (1.45,-1.0) {};
		\node [style=none] (W2) at (1.45,-0.33) {$\omega_i$};

		\node [style=none] (T2) at (1.45,0) {};
		\node [style=none] (T2L) at (1.41,0) {};
		\node [style=none] (T2R) at (1.49,0) {};
		\node [style=none] (A2) at (1.70,0.42) {$\scriptstyle \m{A}$};

		\node [style=none] (L3) at (2.4,0) {};
		\node [style=none] (R3) at (3.9,0) {};
		\node [style=none] (C3) at (3.15,-1.0) {};
		\node [style=none] (W3) at (3.15,-0.33) {$\nu_i$};

		\node [style=none] (T3) at (3.15,0) {};
		\node [style=none] (T3L) at (3.11,0) {};
		\node [style=none] (T3R) at (3.19,0) {};
		\node [style=none] (B3) at (3.40,0.42) {$\scriptstyle \m{B}$};

		\node [style=none] (plus) at (4.8,0) {$+$};

		\node [style=none] (L4) at (5.8,0) {};
		\node [style=none] (R4) at (7.9,0) {};
		\node [style=none] (C4) at (6.85,-1.0) {};
		\node [style=none] (W4) at (6.85,-0.38) {$\tilde{h}$};

		\node [style=none] (T4a) at (6.35,0) {};
		\node [style=none] (T4b) at (7.35,0) {};

		\node [style=none] (T4aL) at (6.31,0) {};
		\node [style=none] (T4aR) at (6.39,0) {};
		\node [style=none] (T4bL) at (7.31,0) {};
		\node [style=none] (T4bR) at (7.39,0) {};

		\node [style=none] (A4) at (6.60,0.5) {$\scriptstyle \m{A}$};
		\node [style=none] (B4) at (7.60,0.5) {$\scriptstyle \m{B}$};

	\end{pgfonlayer}

	\begin{pgfonlayer}{edgelayer}

		\draw (L1.center) to (R1.center);
		\draw (R1.center) to (C1.center);
		\draw (C1.center) to (L1.center);

		\draw (L2.center) to (R2.center);
		\draw (R2.center) to (C2.center);
		\draw (C2.center) to (L2.center);

		\draw (L3.center) to (R3.center);
		\draw (R3.center) to (C3.center);
		\draw (C3.center) to (L3.center);

		\draw (L4.center) to (R4.center);
		\draw (R4.center) to (C4.center);
		\draw (C4.center) to (L4.center);

		\draw[line width=0.6pt] (T1aL.center) to +(0,0.9);
		\draw[line width=0.6pt] (T1aR.center) to +(0,0.9);

		\draw[line width=0.6pt] (T1bL.center) to +(0,0.9);
		\draw[line width=0.6pt] (T1bR.center) to +(0,0.9);

		\draw[line width=0.6pt] (T2L.center) to +(0,0.9);
		\draw[line width=0.6pt] (T2R.center) to +(0,0.9);

		\draw[line width=0.6pt] (T3L.center) to +(0,0.9);
		\draw[line width=0.6pt] (T3R.center) to +(0,0.9);

		\draw[line width=0.6pt] (T4aL.center) to +(0,0.9);
		\draw[line width=0.6pt] (T4aR.center) to +(0,0.9);

		\draw[line width=0.6pt] (T4bL.center) to +(0,0.9);
		\draw[line width=0.6pt] (T4bR.center) to +(0,0.9);

	\end{pgfonlayer}
\end{tikzpicture}
}}\,.
\end{equation}
Substituting this into the expression  from Eq.~\eqref{eq:DiagramTeleportage},  we have the sum of two terms, the latter of which is 
\begin{equation}
    \vcenter{\hbox{%
\begin{tikzpicture}[baseline={(base.center)}]
\begin{pgfonlayer}{nodelayer}
\node[style=none] (base) at (0,0) {};

\pgfmathsetmacro{\dw}{0.10}

\pgfmathsetmacro{\Sx}{0.00}
\pgfmathsetmacro{\Ax}{1.55}
\pgfmathsetmacro{\Bx}{3.75}

\pgfmathsetmacro{\yPsiTop}{0.60}
\pgfmathsetmacro{\yOmTop}{0.60}
\pgfmathsetmacro{\yEaBase}{2.55}
\pgfmathsetmacro{\yBTop}{3.85}

\pgfmathsetmacro{\xEa}{0.5*(\Sx+\Ax)}
\pgfmathsetmacro{\xOm}{0.5*(\Ax+\Bx)}
\pgfmathsetmacro{\xPsi}{\Sx}

\pgfmathsetmacro{\eaHalfW}{1.30}
\pgfmathsetmacro{\eaH}{1.25}

\node[style=none] (EaBL)  at (\xEa-\eaHalfW,\yEaBase) {};
\node[style=none] (EaBR)  at (\xEa+\eaHalfW,\yEaBase) {};
\node[style=none] (EaT)   at (\xEa,\yEaBase+\eaH) {};
\node[style=none] (EaLab) at (\xEa,\yEaBase+0.55) {$e_a$};

\pgfmathsetmacro{\psiHalfW}{0.55}
\pgfmathsetmacro{\psiH}{0.85}

\node[style=none] (PsiTL)  at (\xPsi-\psiHalfW,\yPsiTop) {};
\node[style=none] (PsiTR)  at (\xPsi+\psiHalfW,\yPsiTop) {};
\node[style=none] (PsiB)   at (\xPsi,\yPsiTop-\psiH) {};
\node[style=none] (PsiLab) at (\xPsi,\yPsiTop-0.32) {{\color{red}$\psi$}};

\pgfmathsetmacro{\omHalfW}{1.70}   
\pgfmathsetmacro{\omH}{1.10}       

\node[style=none] (OmTL)  at (\xOm-\omHalfW,\yOmTop) {};
\node[style=none] (OmTR)  at (\xOm+\omHalfW,\yOmTop) {};
\node[style=none] (OmB)   at (\xOm,\yOmTop-\omH) {};
\node[style=none] (OmLab) at (\xOm,\yOmTop-0.48) {$\tilde{h}$};

\node[style=none] (SuL0) at (\Sx-\dw,\yPsiTop) {};
\node[style=none] (SuR0) at (\Sx+\dw,\yPsiTop) {};
\node[style=none] (SuL1) at (\Sx-\dw,\yEaBase) {};
\node[style=none] (SuR1) at (\Sx+\dw,\yEaBase) {};
\node[style=none] (SLab) at (\Sx-0.43,1.55) {$\scriptstyle \m{S}$};

\node[style=none] (AuL0) at (\Ax-\dw,\yOmTop) {};
\node[style=none] (AuR0) at (\Ax+\dw,\yOmTop) {};
\node[style=none] (AuL1) at (\Ax-\dw,\yEaBase) {};
\node[style=none] (AuR1) at (\Ax+\dw,\yEaBase) {};
\node[style=none] (ALab) at (\Ax-0.37,1.30) {$\scriptstyle \m{A}$};

\node[style=none] (BuL0) at (\Bx-\dw,\yOmTop) {};
\node[style=none] (BuR0) at (\Bx+\dw,\yOmTop) {};
\node[style=none] (BuL1) at (\Bx-\dw,\yBTop) {};
\node[style=none] (BuR1) at (\Bx+\dw,\yBTop) {};
\node[style=none] (BLab) at (\Bx+0.40,2.35) {$\scriptstyle \m{B}$};

\pgfmathsetmacro{\yAcut}{1.55}     

\pgfmathsetmacro{\boxL}{-0.62}
\pgfmathsetmacro{\boxB}{-0.50}
\pgfmathsetmacro{\boxT}{4.05}

\pgfmathsetmacro{\boxRlow}{0.90}

\pgfmathsetmacro{\boxRhigh}{2.05}

\node[style=none] (BoxTL)   at (\boxL,\boxT) {};
\node[style=none] (BoxTR)   at (\boxRhigh,\boxT) {};
\node[style=none] (BoxStep) at (\boxRhigh,\yAcut) {};
\node[style=none] (BoxStepL)at (\boxRlow,\yAcut) {};
\node[style=none] (BoxBR)   at (\boxRlow,\boxB) {};
\node[style=none] (BoxBL)   at (\boxL,\boxB) {};

\node[style=none] (EeffLab) at ({0.5*(\boxL+\boxRhigh)}, {\boxT+0.25}) {$e_{\mathrm{eff}}$};

\end{pgfonlayer}

\begin{pgfonlayer}{edgelayer}
\draw (EaBL.center) -- (EaBR.center) -- (EaT.center) -- cycle;
\draw[red] (PsiTL.center) -- (PsiTR.center) -- (PsiB.center) -- cycle;
\draw (OmTL.center) -- (OmTR.center) -- (OmB.center) -- cycle;

\draw[line width=0.6pt] (SuL0.center) -- (SuL1.center);
\draw[line width=0.6pt] (SuR0.center) -- (SuR1.center);

\draw[line width=0.6pt] (AuL0.center) -- (AuL1.center);
\draw[line width=0.6pt] (AuR0.center) -- (AuR1.center);

\draw[line width=0.6pt] (BuL0.center) -- (BuL1.center);
\draw[line width=0.6pt] (BuR0.center) -- (BuR1.center);

\draw[draw=gray, line width=0.9pt]
  (BoxTL.center) -- (BoxTR.center) -- (BoxStep.center) -- (BoxStepL.center)
  -- (BoxBR.center) -- (BoxBL.center) -- cycle;

\end{pgfonlayer}
\end{tikzpicture}%
}}= \begin{tikzpicture}
\begin{pgfonlayer}{nodelayer}

\node[style=none] (nuTL)   at (0.10,0.00) {};
\node[style=none] (nuTR)   at (1.99,0.00) {};
\node[style=none] (nuB)    at (1.05,-1.30) {};
\node[style=none] (nuLab)  at (1.05,-0.52) {$\tilde{h}$};

\node[style=none] (AuL) at (0.58,0.00) {};
\node[style=none] (AuR) at (0.68,0.00) {};
\node[style=none] (AtL) at (0.58,1.25) {};
\node[style=none] (AtR) at (0.68,1.25) {};
\node[style=none] (Alab) at (0.88,0.62) {$\scriptstyle \m{A}$};

\node[style=none] (BuL) at (1.42,0.00) {};
\node[style=none] (BuR) at (1.52,0.00) {};
\node[style=none] (BtL) at (1.42,1.25) {};
\node[style=none] (BtR) at (1.52,1.25) {};
\node[style=none] (Blab) at (1.72,0.62) {$\scriptstyle \m{B}$};

\node[style=none] (eBL)  at (0.01,1.25) {};
\node[style=none] (eBR)  at (1.250,1.25) {};
\node[style=none] (eT)   at (0.630,2.13) {};
\node[style=none] (eLab) at (0.630,1.44) {${e_{\rm eff}}$};

\node[style=none] (Eq) at (2.85,0.05) {$=$};


\node[style=none] (RBtL) at (4.10,1.10) {};
\node[style=none] (RBtR) at (4.20,1.10) {};
\node[style=none] (RBbL) at (4.10,0.00) {};
\node[style=none] (RBbR) at (4.20,0.00) {};
\node[style=none] (RBlabel) at (4.42,0.55) {$\scriptstyle \m{B}$};

\node[style=none] (zTL)  at (3.65,0.00) {};
\node[style=none] (zTR)  at (4.65,0.00) {};
\node[style=none] (zB)   at (4.15,-0.91) {};
\node[style=none] (zLab) at (4.15,-0.35) {$\vec{0}$};


\end{pgfonlayer}

\begin{pgfonlayer}{edgelayer}
\draw (nuTL.center) -- (nuTR.center) -- (nuB.center) -- cycle;
\draw (eBL.center)  -- (eBR.center)  -- (eT.center)  -- cycle;

\draw[line width=0.6pt] (AuL.center) -- (AtL.center);
\draw[line width=0.6pt] (AuR.center) -- (AtR.center);
\draw[line width=0.6pt] (BuL.center) -- (BtL.center);
\draw[line width=0.6pt] (BuR.center) -- (BtR.center);

\draw[line width=0.6pt] (RBtL.center) -- (RBbL.center);
\draw[line width=0.6pt] (RBtR.center) -- (RBbR.center);
\draw (zTL.center) -- (zTR.center) -- (zB.center) -- cycle;
\end{pgfonlayer}
\end{tikzpicture}
\end{equation}
    where $e_{\rm eff}$ is some effect on $A$ (defined by composing $\psi$ with the left subsystem of $e_a$). 
So the contribution to the state of $B$ coming from the component $\tilde{h}$ of the state in the holistic subspace is a constant vector (in fact, the zero vector, as we saw in the proof of Prop.~\ref{Prop.: TNLEAloneUselessForSteering}), \emph{independent of $\psi$}. Consequently, it is useless for achieving teleportation, since no correction operation on $B$ could map this constant vector to state $\psi$ for any unknown state $\psi$ (even if one conditions the correction on the value of $a$, as is allowed in the teleportation protocol).

The remaining term in the sum is simply 
\begin{equation}
\sum_{i} p_{i}\;
\vcenter{\hbox{%
\begin{tikzpicture}[baseline={(base.center)}]
\begin{pgfonlayer}{nodelayer}

\node[style=none] (base) at (0,0) {};

\pgfmathsetmacro{\dw}{0.10}  

\pgfmathsetmacro{\Sx}{0.00}
\pgfmathsetmacro{\Ax}{1.55}
\pgfmathsetmacro{\Vx}{3.25}

\pgfmathsetmacro{\yBotTop}{0.35}   
\pgfmathsetmacro{\yBotTip}{-0.85}  
\pgfmathsetmacro{\yEaBase}{1.55}   
\pgfmathsetmacro{\yEaTop}{2.85}    
\pgfmathsetmacro{\yBTop}{2.70}     

\pgfmathsetmacro{\EaInset}{0.55}

\pgfmathsetmacro{\EaBLx}{\Sx-\EaInset}
\pgfmathsetmacro{\EaBRx}{\Ax+\EaInset}
\pgfmathsetmacro{\EaCx}{0.5*(\EaBLx+\EaBRx)}

\node[style=none] (EaBL) at (\EaBLx,\yEaBase) {};
\node[style=none] (EaBR) at (\EaBRx,\yEaBase) {};
\node[style=none] (EaT)  at (\EaCx,\yEaTop)  {};
\node[style=none] (EaLab) at (\EaCx,2.20) {$e_a$};

\node[style=none] (EaS) at (\Sx,\yEaBase) {};
\node[style=none] (EaA) at (\Ax,\yEaBase) {};

\pgfmathsetmacro{\triHalfW}{0.65}

\node[style=none] (PsiTL) at (\Sx-\triHalfW,\yBotTop) {};
\node[style=none] (PsiTR) at (\Sx+\triHalfW,\yBotTop) {};
\node[style=none] (PsiB)  at (\Sx,\yBotTip) {};
\node[style=none] (PsiLab) at (\Sx,-0.15) {{\color{red}$\psi$}};

\node[style=none] (WiTL) at (\Ax-\triHalfW,\yBotTop) {};
\node[style=none] (WiTR) at (\Ax+\triHalfW,\yBotTop) {};
\node[style=none] (WiB)  at (\Ax,\yBotTip) {};
\node[style=none] (WiLab) at (\Ax,-0.15) {$\omega_i$};

\node[style=none] (ViTL) at (\Vx-\triHalfW,\yBotTop) {};
\node[style=none] (ViTR) at (\Vx+\triHalfW,\yBotTop) {};
\node[style=none] (ViB)  at (\Vx,\yBotTip) {};
\node[style=none] (ViLab) at (\Vx,-0.15) {$\nu_i$};

\node[style=none] (SuL0) at (\Sx-\dw,\yBotTop) {};
\node[style=none] (SuR0) at (\Sx+\dw,\yBotTop) {};
\node[style=none] (SuL1) at (\Sx-\dw,\yEaBase) {};
\node[style=none] (SuR1) at (\Sx+\dw,\yEaBase) {};
\node[style=none] (SLab) at (\Sx-0.48,1.00) {$\scriptstyle \m{S}$};

\node[style=none] (AuL0) at (\Ax-\dw,\yBotTop) {};
\node[style=none] (AuR0) at (\Ax+\dw,\yBotTop) {};
\node[style=none] (AuL1) at (\Ax-\dw,\yEaBase) {};
\node[style=none] (AuR1) at (\Ax+\dw,\yEaBase) {};
\node[style=none] (ALab) at (\Ax-0.40,1.00) {$\scriptstyle \m{A}$};

\node[style=none] (BuL0) at (\Vx-\dw,\yBotTop) {};
\node[style=none] (BuR0) at (\Vx+\dw,\yBotTop) {};
\node[style=none] (BuL1) at (\Vx-\dw,\yBTop) {};
\node[style=none] (BuR1) at (\Vx+\dw,\yBTop) {};
\node[style=none] (BLab) at (\Vx+0.32,1.45) {$\scriptstyle \m{B}$};

\end{pgfonlayer}

\begin{pgfonlayer}{edgelayer}

\draw (EaBL.center) -- (EaBR.center) -- (EaT.center) -- cycle;

\draw[red] (PsiTL.center) -- (PsiTR.center) -- (PsiB.center) -- cycle;

\draw (WiTL.center) -- (WiTR.center) -- (WiB.center) -- cycle;

\draw (ViTL.center) -- (ViTR.center) -- (ViB.center) -- cycle;

\draw[line width=0.6pt] (SuL0.center) -- (SuL1.center);
\draw[line width=0.6pt] (SuR0.center) -- (SuR1.center);

\draw[line width=0.6pt] (AuL0.center) -- (AuL1.center);
\draw[line width=0.6pt] (AuR0.center) -- (AuR1.center);

\draw[line width=0.6pt] (BuL0.center) -- (BuL1.center);
\draw[line width=0.6pt] (BuR0.center) -- (BuR1.center);

\end{pgfonlayer}
\end{tikzpicture}%
}}
\end{equation}

which has the form of a so-called   LOSR-free teleportage~\cite{Schmid2020typeindependent}, and so is useless for teleportation.
\end{proof}

A stronger version of each of these results also holds: even for a state with TL entanglement, the amount of nonclassical correlations that one can generate in each of these three scenarios cannot be increased by increasing the component of TNL entanglement in the state; indeed, it does not depend on that component at all. This is clear immediately from the proofs above. 

The common feature that drives all of the above results is the fact that in each scenario, the two halves of an entangled state are accessed locally rather than jointly. Since local effects annihilate the holistic component of a state, they are not sensitive to tomographically-nonlocal entanglement. In other information-processing protocols, however, inaccessibility to local effects might be advantageous or maybe joint measurements may be incorporated, and in such a case there is potential for tomographically-nonlocal entanglement to be relevant. We will show that this is the case for dense coding and data hiding in the next section. 

\section{Tasks for which TNL entanglement is useful}

\subsection{Using TNL entanglement for dense coding}

Dense coding is a paradigmatic information-processing task in complex quantum theory (CQT)~\cite{BennettWiesner1992}.
In its simplest bipartite form, two parties share a system prepared in an entangled state; by applying a local
operation on one subsystem and then transmitting this subsystem, one party can convey more classical
information than would otherwise be possible with an unentangled system of the same local
dimension -- in the perfect case, the classical communication capacity is doubled. It is known that (in standard quantum theory) entanglement is necessary for dense coding~\cite{HorodeckiDenseCoding2001}, and in fact the perfect case requires maximally entangled Bell states~\cite{Nayak_2023}. Since standard quantum systems carry no holistic degrees of freedom, tomographically local entanglement is the sole form of entanglement responsible for such a quantum advantage.\footnote{Note that the traditional example of quantum dense coding with qubits use states, effects and operations that are also valid in real quantum theory -- but in this case both kinds of entanglement are present.}

Could there be theories in which dense coding is driven by tomographically \emph{non}local entanglement instead? 
The above question is answered in the affirmative by bilocal classical theory (BCT)~\cite{d2020classicality}.
In BCT, there is no TL entanglement at all: all entanglement present in the theory is of
the TNL type. Nevertheless, dense coding is possible, as we exemplify below and was first proven in Ref.~\cite{d2020classicality}.
This demonstrates that TL entanglement is not necessary for dense coding in some theories.

\begin{example}
    \label{BCTDenseCoding}
Consider a composite system $AB$ in \emph{bilocal classical theory} (BCT), where the local state spaces
$\m{S}_A$ and $\m{S}_B$ are those of a classical bit, with pure states $\lvert{0}\rparen$ and $\lvert{1}\rparen$ (as well as the local effect spaces are identical to those of classical bits). In standard
classical theory, the composite would be a two-bit system with four extremal states. In BCT, by contrast,
the composite state space $\m{S}_{AB}$ possesses an additional \emph{holistic degree of freedom}
$s\in\{+,-\}$, and therefore has eight extremal pure states
\begin{align}
\Big\{ \lvert{(ij)^s}\rparen_{AB} : i,j\in\{0,1\},\ s\in\{+,-\} \Big\}.
\end{align}

Product states correspond to complete ignorance about the holistic degree of freedom:
\begin{align}
\lvert{i}_A\rparen\lvert {j}_B\rparen
\;=\;
\frac{1}{2}\Big( \lvert{(ij)^+}\rparen_{AB} + \lvert{(ij)^-}\rparen_{AB} \Big),
\end{align}
so every extremal bipartite state $\lvert{(ij)^s}\rparen_{AB}$ has definite $s$ and is therefore entangled. A similar rule follows for products of effects. Thus, the sign $s$ is invisible to all local measurements but can be accessed by global ones.

Crucially, local reversible transformations can act independently on the local bit and the holistic bit, i.e., valid reversible transformations on Alice's side can map $\lvert{(ab)^s}\rparen$ to any state of the form $\lvert{(a'b)^{s'}}\rparen$ with $a'$ taking values in $\{0,1\}$ and $s'$ taking values in $\{+,-\}$.

Now, suppose Alice and Bob share the entangled state
$\lvert{(0b)^-}\rparen_{AB}$, with $b\in\{0,1\}$ (the particular value of Bob's local bit $b$ plays no role in the protocol). To send a two-bit message $(x,y)\in\{0,1\}^2$, Alice locally
encodes $x$ in her bit value first (i.e., she implements $\lvert{(0b)^s}\rparen\mapsto\lvert{(xb)^s}\rparen$) and $y$ in the holistic sign $s$ (which is possible in BCT since local operations can independently tweak the value of $s$), preparing one of the four states
\begin{align}
\lvert{(0b)^-}\rparen_{AB},\quad
\lvert{(0b)^+}\rparen_{AB},\quad
\lvert{(1b)^-}\rparen_{AB},\quad
\lvert{(1b)^+}\rparen_{AB}.
\end{align}
She then sends her single local system to Bob. Once Bob holds both subsystems, he performs a global
measurement that perfectly distinguishes these four states and recovers both bits.

Thus, in BCT, two classical bits can be transmitted by sending a single classical-bit system, with the
dense-coding advantage arising from a controllable holistic degree of freedom rather than from
nonclassical local systems.
\end{example}

The feature enabling dense coding in BCT is twofold.
First, the theory possesses holistic degrees of freedom, that is, degrees of freedom
lying outside the span of product states, which can carry information inaccessible to
purely local measurements. Second, and crucially, the theory must admit local
transformations that can act nontrivially and independently on these holistic degrees of freedom.
Indeed, in BCT, local operations can modify the holistic component
of a shared state, allowing one party to encode information into it prior to transmission.  An even more striking example is bit-flip twirled c-bit world~\cite{centeno2024twirledworldssymmetryinducedfailures}, in which the local systems are trivial, yet a shared holistic degree of freedom, controllable by local operations, allows Alice to transmit one classical bit by sending a trivial system.

Dense coding is \emph{not} possible in all tomographically-nonlocal theories. Counterexamples are provided by some minimal operational probabilistic theories~\cite{Rolino2024,Rolino_MOPTs_2025}, which are theories with very restricted set of operations. In those cases, even though holistic degrees of freedom might exist, the operations needed by Alice to encode information beyond her local classical capacity are not allowed, thus making dense coding impossible. The idea is the following. Consider the minimal version of BCT, called MBCT~\cite{Rolino_MOPTs_2025}: in such a minimal theory, the reversible transformations are restricted to compositions of identities and swaps, while the remaining channels are just discard-and-prepare ones. Notice that, importantly, the number of available reversible transformations is smaller than the number of discard-and-prepare channels whose associated states are perfectly discriminable: this is in stark contrast with Example~\ref{BCTDenseCoding} from BCT. Thus, via a counting argument, it can be checked that dense coding is impossible in MBCT by using the following property. In MBCT, it is impossible to perfectly discriminate any two bipartite states such that: (i) they have the same reduced state on both subsystems; and (ii) one of them is a product state. This means that the reversible transformations in MBCT are not more useful than measure-and-prepare channels, and accordingly the channels that are available in MBCT do not allow Alice to exceed her local classical capacity.

In summary, dense coding does not require tomographically-nonlocal entanglement in general, as shown by complex quantum theory, but it can be powered solely by TNL entanglement in suitably structured theories such as BCT. What ultimately matters for TNL to power dense coding is not just the existence of holistic degrees of freedom but the ability to locally manipulate them.

\subsection{Using TNL entanglement for data hiding}
\label{sec:Dense coding}

In this Section, we show that tomographically-nonlocal entanglement can also be used to accomplish a task which is impossible in complex quantum theory~\cite{DiVincenzo_2003_DataHidingPerfectCase}; namely, perfectly secure data hiding. We consider here the simpler case of hiding just a bit in a bipartite scenario, and we consider the standard case of hiding data against LOCC measurements. Here we show that tomographically nonlocal entanglement is required for success in such a task; in fact,  in some theories, states with tomographically non-local entanglement alone can be the resource for success.

In data hiding protocols~\cite{Terhal_DataHidingFirst}, a bit $x$ is encoded using  two bipartite  states $\{s^{\m{AB}}_x\}_x$, which should be perfectly distinguishable. The subsystems are then distributed to spatially separated parties, Alice (who holds subsystem ${\m{A}}$) and Bob (who holds subsystem $\m{B}$). Alice and Bob can then perform any local operation and communicate classically in order to try to learn the value of $x$.  In the perfectly secure case, Alice and Bob should never be able to recover any information about $x$~\cite{DiVincenzo_2003_DataHidingPerfectCase}. In other words, for such a scheme to be possible in a given theory,  there should be two perfectly distinguishable states $\{s^{\m{AB}}_x\}_x$ that produce the same local statistics for any given product effect.

The above explanation already suggests that achieving success in such a task (i.e., successfully hide)  is impossible in standard quantum theory~\cite{DiVincenzo_2002_DataHidingNoquantumPerfectCase,Matthews_2009_DistinguishabilityQStatesUnderRestrictedMeasurements}, or, in fact, in any tomographically local theory. Indeed,  by implementing LOCC operations in TL theories, one is always able to reconstruct the state of a system using many copies, which implies that any distinct pair of states should lead to distinct statistics for some product effect.\footnote{Note, however, that data hiding is possible in the non-perfect case within bare quantum theory, and can be made arbitrarily close to the perfect case by increasing the local dimension of systems asymptotically~\cite{DiVincenzo_2002_DataHidingNoquantumPerfectCase}. Reference ~\cite{Lami_2018_DataHidingGPTs} also analyzed how TL GPTs can perform data hiding in the non-perfect case, and against different strategies for Alice and Bob, besides LOCC.} That is, only tomographically nonlocal theories can successfully implement such a task. Moreover, as we show next, any encoding scheme that allows for perfectly secure data hiding requires tomographically non-local entanglement.

\begin{proposition}
\label{prop.:PerfectDataHidingRequiresTnLE}
    Consider a composite GPT system $\m{AB}$. Any encoding scheme $\{\omega_x^{\m{AB}}\}$ that allows for perfectly secure data hiding requires tomographically non-local entanglement, i.e., either $\omega_0^{\m{AB}}$ or $\omega_{1}^{AB}$ (or both) have holistic degrees of freedom.
\end{proposition}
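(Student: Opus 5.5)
We argue by contraposition: assume neither $\omega_0^{\m{AB}}$ nor $\omega_1^{\m{AB}}$ has tomographically-nonlocal entanglement, and show that the encoding cannot be both perfectly distinguishable and perfectly hidden. By Definition~\ref{def:nonTLEntenglement} and property~4 of $\Pi_{\rm TNL}$, the assumption means $\Pi_{\rm TNL}(\omega_x^{\m{AB}})=0$ for $x\in\{0,1\}$. Equivalently, both states lie in the tomographically-local subspace $AB_\otimes$, so $\omega_x^{\m{AB}}=\Pi_{\rm TL}(\omega_x^{\m{AB}})$.

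First we use the hiding requirement, as formalised in the paragraph preceding the statement. Perfect security against LOCC implies, in particular, that for every pair of local effects $e^{\m{A}}\in E_{\m{A}}$ and $f^{\m{B}}\in E_{\m{B}}$,
\begin{equation}
(e^{\m{A}}\boxtimes f^{\m{B}})[\omega_0^{\m{AB}}]=(e^{\m{A}}\boxtimes f^{\m{B}})[\omega_1^{\m{AB}}].
\end{equation}
These product measurements are a special case of LOCC strategies, so any difference here would leak information about $x$. By linearity, the difference $\delta:=\omega_0^{\m{AB}}-\omega_1^{\m{AB}}$ is annihilated by all of $(AB)^*_\otimes=\mathrm{Span}\{e^{\m{A}}\boxtimes f^{\m{B}}\}$. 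By Definition~\ref{def:HolisticSubspaces}, this says exactly that $\delta\in H_S$.

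Next we combine the two facts. Since both states lie in $AB_\otimes$ and this is a subspace, $\delta\in AB_\otimes$ as well. Lemma~\ref{lemma:AtensorB+Holistic} gives the direct sum $AB=AB_\otimes\oplus H_S$, so $AB_\otimes\cap H_S=\{0\}$ and therefore $\delta=0$, i.e.\ $\omega_0^{\m{AB}}=\omega_1^{\m{AB}}$. Identical states assign identical probabilities to every effect in $E_{\m{AB}}$, so they cannot be perfectly distinguishable. This contradicts the requirement that the encoding states be perfectly distinguishable, and hence at least one $\omega_x^{\m{AB}}$ must have a nonzero holistic component.

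The only step that needs care is the first one: reducing "perfectly secure against LOCC" to "equal statistics on all product effects". One needs to note that product measurements, including those with no communication at all, are included in LOCC. The necessity direction used here is therefore immediate, and the rest is linear algebra using Lemma~\ref{lemma:AtensorB+Holistic}.
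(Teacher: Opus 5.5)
Your proof is correct and follows essentially the same route as the paper. In both, hiding forces the two states to agree on $(AB)^*_\otimes$, so they share the same tomographically-local component (equivalently $\omega_0-\omega_1\in H_S$), and perfect distinguishability then forces a nonzero holistic component via $AB=AB_\otimes\oplus H_S$. The only cosmetic difference is that you argue by contraposition with the difference vector, whereas the paper writes $\omega_x=v_{\rm TL}+\tilde h_x$ directly and concludes $\tilde h_0\neq\tilde h_1$.
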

\begin{proof}
    An encoding scheme defined by the pair of states $\{\omega^{\m{AB}}_0,\omega^{\m{AB}}_1\}$ achieves perfectly secure data hiding iff the following two conditions hold:
    \begin{enumerate}
        \item[i.] Product effects cannot separate $\omega^{\m{AB}}_0$ and $\omega^{\m{AB}}_1$, ie: 
        \begin{equation}
            \vcenter{\hbox{%
            \begin{tikzpicture}[baseline={(base.center)}]
  \begin{pgfonlayer}{nodelayer}
    \node[style=none] (base) at (0,0) {};

    \def\yTop{0.90}
    \def\yMid{0.10}
    \def\yBot{-0.85}

    \node[style=none] (eTL) at (-1.00,\yTop) {};
    \node[style=none] (eTR) at (-0.20,\yTop) {};
    \node[style=none] (eTT) at (-0.60,1.65) {};
    \node[style=none] (eLab) at (-0.60,1.25) {$e$};

    \node[style=none] (fTL) at ( 0.20,\yTop) {};
    \node[style=none] (fTR) at ( 1.00,\yTop) {};
    \node[style=none] (fTT) at ( 0.60,1.65) {};
    \node[style=none] (fLab) at ( 0.60,1.25) {$f$};

    \node[style=none] (AwL_top) at (-0.68,\yTop) {};
    \node[style=none] (AwR_top) at (-0.52,\yTop) {};
    \node[style=none] (AwL_mid) at (-0.68,\yMid) {};
    \node[style=none] (AwR_mid) at (-0.52,\yMid) {};
    \node[style=none] (AwLab) at (-0.92,0.55) {$\scriptstyle \m{A}$};

    \node[style=none] (BwL_top) at ( 0.52,\yTop) {};
    \node[style=none] (BwR_top) at ( 0.68,\yTop) {};
    \node[style=none] (BwL_mid) at ( 0.52,\yMid) {};
    \node[style=none] (BwR_mid) at ( 0.68,\yMid) {};
    \node[style=none] (BwLab) at ( 0.92,0.55) {$\scriptstyle \m{B}$};

    \node[style=none] (sTL) at (-1.10,\yMid) {};
    \node[style=none] (sTR) at ( 1.10,\yMid) {};
    \node[style=none] (sTC) at ( 0.00,\yBot) {};
    \node[style=none] (sNu) at ( 0.00,-0.30) {$\omega_0$};

  \end{pgfonlayer}

  \begin{pgfonlayer}{edgelayer}
    \draw (eTL.center) -- (eTR.center);
    \draw (eTR.center) -- (eTT.center);
    \draw (eTT.center) -- (eTL.center);

    \draw (fTL.center) -- (fTR.center);
    \draw (fTR.center) -- (fTT.center);
    \draw (fTT.center) -- (fTL.center);

    \draw[line width=0.6pt] (AwL_top.center) -- (AwL_mid.center);
    \draw[line width=0.6pt] (AwR_top.center) -- (AwR_mid.center);

    \draw[line width=0.6pt] (BwL_top.center) -- (BwL_mid.center);
    \draw[line width=0.6pt] (BwR_top.center) -- (BwR_mid.center);

    \draw (sTL.center) -- (sTR.center);
    \draw (sTR.center) -- (sTC.center);
    \draw (sTC.center) -- (sTL.center);

  \end{pgfonlayer}
            \end{tikzpicture}%
            }}= \vcenter{\hbox{%
\begin{tikzpicture}[baseline={(base.center)}]
  \begin{pgfonlayer}{nodelayer}
    \node[style=none] (base) at (0,0) {};

    \def\yTop{0.90}
    \def\yMid{0.10}
    \def\yBot{-0.85}

    \node[style=none] (eTL) at (-1.00,\yTop) {};
    \node[style=none] (eTR) at (-0.20,\yTop) {};
    \node[style=none] (eTT) at (-0.60,1.65) {};
    \node[style=none] (eLab) at (-0.60,1.25) {$e$};

    \node[style=none] (fTL) at ( 0.20,\yTop) {};
    \node[style=none] (fTR) at ( 1.00,\yTop) {};
    \node[style=none] (fTT) at ( 0.60,1.65) {};
    \node[style=none] (fLab) at ( 0.60,1.25) {$f$};

    \node[style=none] (AwL_top) at (-0.68,\yTop) {};
    \node[style=none] (AwR_top) at (-0.52,\yTop) {};
    \node[style=none] (AwL_mid) at (-0.68,\yMid) {};
    \node[style=none] (AwR_mid) at (-0.52,\yMid) {};
    \node[style=none] (AwLab) at (-0.92,0.55) {$\scriptstyle \m{A}$};

    \node[style=none] (BwL_top) at ( 0.52,\yTop) {};
    \node[style=none] (BwR_top) at ( 0.68,\yTop) {};
    \node[style=none] (BwL_mid) at ( 0.52,\yMid) {};
    \node[style=none] (BwR_mid) at ( 0.68,\yMid) {};
    \node[style=none] (BwLab) at ( 0.92,0.55) {$\scriptstyle \m{B}$};

    \node[style=none] (sTL) at (-1.10,\yMid) {};
    \node[style=none] (sTR) at ( 1.10,\yMid) {};
    \node[style=none] (sTC) at ( 0.00,\yBot) {};
    \node[style=none] (sNu) at ( 0.00,-0.30) {$\omega_1$};

  \end{pgfonlayer}

  \begin{pgfonlayer}{edgelayer}
    \draw (eTL.center) -- (eTR.center);
    \draw (eTR.center) -- (eTT.center);
    \draw (eTT.center) -- (eTL.center);

    \draw (fTL.center) -- (fTR.center);
    \draw (fTR.center) -- (fTT.center);
    \draw (fTT.center) -- (fTL.center);

    \draw[line width=0.6pt] (AwL_top.center) -- (AwL_mid.center);
    \draw[line width=0.6pt] (AwR_top.center) -- (AwR_mid.center);

    \draw[line width=0.6pt] (BwL_top.center) -- (BwL_mid.center);
    \draw[line width=0.6pt] (BwR_top.center) -- (BwR_mid.center);

    \draw (sTL.center) -- (sTR.center);
    \draw (sTR.center) -- (sTC.center);
    \draw (sTC.center) -- (sTL.center);

  \end{pgfonlayer}
\end{tikzpicture}%
}}\;\;\forall\;
            \vcenter{\hbox{%
            \begin{tikzpicture}[baseline={(base.center)}]
            \begin{pgfonlayer}{nodelayer}
            \node[style=none] (base) at (0,0) {};

            \def\yTop{0.00}
            \def\yBot{-0.80}

            \node[style=none] (eTL) at (-1.00,\yTop) {};
            \node[style=none] (eTR) at (-0.20,\yTop) {};
            \node[style=none] (eTT) at (-0.60,0.75) {};
            \node[style=none] (eLab) at (-0.60,0.35) {$e$};

            \node[style=none] (AwL_top) at (-0.68,\yTop) {};
            \node[style=none] (AwR_top) at (-0.52,\yTop) {};
            \node[style=none] (AwL_bot) at (-0.68,\yBot) {};
            \node[style=none] (AwR_bot) at (-0.52,\yBot) {};
            \node[style=none] (AwLab) at (-0.92,-0.40) {$\scriptstyle \m{A}$};

        \end{pgfonlayer}

        \begin{pgfonlayer}{edgelayer}
            \draw (eTL.center) -- (eTR.center);
            \draw (eTR.center) -- (eTT.center);
            \draw (eTT.center) -- (eTL.center);

            \draw[line width=0.6pt] (AwL_top.center) -- (AwL_bot.center);
            \draw[line width=0.6pt] (AwR_top.center) -- (AwR_bot.center);

        \end{pgfonlayer}
        \end{tikzpicture}%
        }}
        \in E_{\m{A}}\,,\,\vcenter{\hbox{%
        \begin{tikzpicture}[baseline={(base.center)}]
        \begin{pgfonlayer}{nodelayer}
            \node[style=none] (base) at (0,0) {};

            \def\yTop{0.00}
            \def\yBot{-0.80}

            \node[style=none] (fTL) at (-1.00,\yTop) {};
            \node[style=none] (fTR) at (-0.20,\yTop) {};
            \node[style=none] (fTT) at (-0.60,0.75) {};
            \node[style=none] (fLab) at (-0.60,0.35) {$f$};

            \node[style=none] (BwL_top) at (-0.68,\yTop) {};
            \node[style=none] (BwR_top) at (-0.52,\yTop) {};
            \node[style=none] (BwL_bot) at (-0.68,\yBot) {};
            \node[style=none] (BwR_bot) at (-0.52,\yBot) {};
            \node[style=none] (bwLab) at (-0.92,-0.40) {$\scriptstyle \m{B}$};

    \end{pgfonlayer}

    \begin{pgfonlayer}{edgelayer}
    \draw (fTL.center) -- (fTR.center);
    \draw (fTR.center) -- (fTT.center);
    \draw (fTT.center) -- (fTL.center);

    \draw[line width=0.6pt] (AwL_top.center) -- (BwL_bot.center);
    \draw[line width=0.6pt] (AwR_top.center) -- (BwR_bot.center);

  \end{pgfonlayer}
    \end{tikzpicture}%
    }}
    \in E_{\m{B}};
\end{equation} 
\item[ii.] States $\omega^{\m{AB}}_0$ and $\omega^{\m{AB}}_1$ are perfectly distinguishable. That is, there exists a (global) measurement $\{e^{\m{AB}}_0,e^{\m{AB}}_1\}\subset E_{\m{AB}}$ (thus satisfying $e^{\m{AB}}_0+e^{\m{AB}}_1=u^{\m{AB}}$), such that
\begin{equation}
\vcenter{\hbox{%
\begin{tikzpicture}[baseline={(base.center)}]
  \begin{pgfonlayer}{nodelayer}
    \node[style=none] (base) at (0,0) {};

    \node[style=none] (eTL)  at (-0.95, 1.20) {};
    \node[style=none] (eTR)  at ( 0.95, 1.20) {};
    \node[style=none] (eTT)  at ( 0.00, 2.20) {};
    \node[style=none] (eLab) at ( 0.00, 1.65) {$e_i$};

    \node[style=none] (wTL)  at (-0.95, 0.00) {};
    \node[style=none] (wTR)  at ( 0.95, 0.00) {};
    \node[style=none] (wTC)  at ( 0.00,-1.10) {};
    \node[style=none] (wLab) at ( 0.00,-0.45) {$\omega_x$};

    \node[style=none] (AL_top) at (-0.60, 1.20) {};
    \node[style=none] (AR_top) at (-0.44, 1.20) {};
    \node[style=none] (AL_bot) at (-0.60, 0.00) {};
    \node[style=none] (AR_bot) at (-0.44, 0.00) {};
    \node[style=none] (Alab)   at (-0.88, 0.60) {$\scriptstyle \m{A}$};

    \node[style=none] (BL_top) at ( 0.44, 1.20) {};
    \node[style=none] (BR_top) at ( 0.60, 1.20) {};
    \node[style=none] (BL_bot) at ( 0.44, 0.00) {};
    \node[style=none] (BR_bot) at ( 0.60, 0.00) {};
    \node[style=none] (Blab)   at ( 0.88, 0.60) {$\scriptstyle \m{B}$};

  \end{pgfonlayer}

  \begin{pgfonlayer}{edgelayer}
    \draw (eTL.center) -- (eTR.center);
    \draw (eTR.center) -- (eTT.center);
    \draw (eTT.center) -- (eTL.center);

    \draw (wTL.center) -- (wTR.center);
    \draw (wTR.center) -- (wTC.center);
    \draw (wTC.center) -- (wTL.center);

    \draw[line width=0.6pt] (AL_top.center) -- (AL_bot.center);
    \draw[line width=0.6pt] (AR_top.center) -- (AR_bot.center);

    \draw[line width=0.6pt] (BL_top.center) -- (BL_bot.center);
    \draw[line width=0.6pt] (BR_top.center) -- (BR_bot.center);

  \end{pgfonlayer}
\end{tikzpicture}%
}}=\delta_{ix}.%
\end{equation}

    \end{enumerate}
Now, the first conditions implies that $\omega_0^{\m{AB}}$ and $\omega_1^{\m{AB}}$ cannot differ in the tomographically local subspace, ie
\begin{equation}
    \vcenter{\hbox{
\begin{tikzpicture}
	\begin{pgfonlayer}{nodelayer}

		\node [style=none] (TA) at (-0.6,1.4) {};
		\node [style=none] (TB) at (0.6,1.4) {};

		\node [style=none] (TA2) at (-0.6,0.7) {};
		\node [style=none] (TB2) at (0.6,0.7) {};

		\node [style=none] (Alabel1) at (-0.9,1.1) {$\scriptstyle \mathcal A$};
		\node [style=none] (Blabel1) at (0.9,1.1) {$\scriptstyle \mathcal B$};

		\node [style=none] (R1) at (-0.9,0.7) {};
		\node [style=none] (R2) at (0.9,0.7) {};
		\node [style=none] (R3) at (0.9,-0.1) {};
		\node [style=none] (R4) at (-0.9,-0.1) {};

		\node [style=none] (RectLabel) at (0,0.3) {$\Pi_{TL}$};

		\node [style=none] (BA) at (-0.6,-0.1) {};
		\node [style=none] (BB) at (0.6,-0.1) {};

		\node [style=none] (BA2) at (-0.6,-0.8) {};
		\node [style=none] (BB2) at (0.6,-0.8) {};

		\node [style=none] (Alabel2) at (-0.9,-0.45) {$\scriptstyle \mathcal A$};
		\node [style=none] (Blabel2) at (0.9,-0.45) {$\scriptstyle \mathcal B$};

		\node [style=none] (TL) at (-0.9,-0.8) {};
		\node [style=none] (TR) at (0.9,-0.8) {};
		\node [style=none] (TC) at (0,-1.8) {};
		\node [style=none] (W)  at (0,-1.25) {$\omega_0$};

		\node [style=none] (NEQ) at (2.2,-0.5) {};

	\end{pgfonlayer}

	\begin{pgfonlayer}{edgelayer}

		\draw[double] (TA.center) to (TA2.center);
		\draw[double] (TB.center) to (TB2.center);

		\draw (R1.center) to (R2.center);
		\draw (R2.center) to (R3.center);
		\draw (R3.center) to (R4.center);
		\draw (R4.center) to (R1.center);

		\draw[double] (BA.center) to (BA2.center);
		\draw[double] (BB.center) to (BB2.center);

		\draw (TL.center) to (TR.center);
		\draw (TR.center) to (TC.center);
		\draw (TC.center) to (TL.center);

	\end{pgfonlayer}
\end{tikzpicture}}} = \vcenter{\hbox{
\begin{tikzpicture}
	\begin{pgfonlayer}{nodelayer}

		\node [style=none] (TA) at (-0.6,1.4) {};
		\node [style=none] (TB) at (0.6,1.4) {};

		\node [style=none] (TA2) at (-0.6,0.7) {};
		\node [style=none] (TB2) at (0.6,0.7) {};

		\node [style=none] (Alabel1) at (-0.9,1.1) {$\scriptstyle \mathcal A$};
		\node [style=none] (Blabel1) at (0.9,1.1) {$\scriptstyle \mathcal B$};

		\node [style=none] (R1) at (-0.9,0.7) {};
		\node [style=none] (R2) at (0.9,0.7) {};
		\node [style=none] (R3) at (0.9,-0.1) {};
		\node [style=none] (R4) at (-0.9,-0.1) {};

		\node [style=none] (RectLabel) at (0,0.3) {$\Pi_{TL}$};

		\node [style=none] (BA) at (-0.6,-0.1) {};
		\node [style=none] (BB) at (0.6,-0.1) {};

		\node [style=none] (BA2) at (-0.6,-0.8) {};
		\node [style=none] (BB2) at (0.6,-0.8) {};

		\node [style=none] (Alabel2) at (-0.9,-0.45) {$\scriptstyle \mathcal A$};
		\node [style=none] (Blabel2) at (0.9,-0.45) {$\scriptstyle \mathcal B$};

		\node [style=none] (TL) at (-0.9,-0.8) {};
		\node [style=none] (TR) at (0.9,-0.8) {};
		\node [style=none] (TC) at (0,-1.8) {};
		\node [style=none] (W)  at (0,-1.25) {$\omega_1$};

		\node [style=none] (NEQ) at (2.2,-0.5) {};

	\end{pgfonlayer}

	\begin{pgfonlayer}{edgelayer}

		\draw[double] (TA.center) to (TA2.center);
		\draw[double] (TB.center) to (TB2.center);

		\draw (R1.center) to (R2.center);
		\draw (R2.center) to (R3.center);
		\draw (R3.center) to (R4.center);
		\draw (R4.center) to (R1.center);

		\draw[double] (BA.center) to (BA2.center);
		\draw[double] (BB.center) to (BB2.center);

		\draw (TL.center) to (TR.center);
		\draw (TR.center) to (TC.center);
		\draw (TC.center) to (TL.center);

	\end{pgfonlayer}
\end{tikzpicture}}}\implies 
\vcenter{\hbox{%
\begin{tikzpicture}[baseline={(base.center)}]
  \begin{pgfonlayer}{nodelayer}
    \node[style=none] (base) at (0,0) {};

    \node[style=none] (wTL)  at (-0.90,0.00) {};
    \node[style=none] (wTR)  at ( 0.90,0.00) {};
    \node[style=none] (wTC)  at ( 0.00,-1.18) {};
    \node[style=none] (wLab) at ( 0.00,-0.45) {$\omega_x$};

    \node[style=none] (AL_bot) at (-0.58,0.00) {};
    \node[style=none] (AR_bot) at (-0.42,0.00) {};
    \node[style=none] (AL_top) at (-0.58,1.10) {};
    \node[style=none] (AR_top) at (-0.42,1.10) {};
    \node[style=none] (Alab)   at (-0.88,0.70) {$\scriptstyle \m{A}$};

    \node[style=none] (BL_bot) at ( 0.42,0.00) {};
    \node[style=none] (BR_bot) at ( 0.58,0.00) {};
    \node[style=none] (BL_top) at ( 0.42,1.10) {};
    \node[style=none] (BR_top) at ( 0.58,1.10) {};
    \node[style=none] (Blab)   at ( 0.88,0.70) {$\scriptstyle \m{B}$};

  \end{pgfonlayer}

  \begin{pgfonlayer}{edgelayer}
    \draw (wTL.center) -- (wTR.center);
    \draw (wTR.center) -- (wTC.center);
    \draw (wTC.center) -- (wTL.center);

    \draw[line width=0.6pt] (AL_bot.center) -- (AL_top.center);
    \draw[line width=0.6pt] (AR_bot.center) -- (AR_top.center);

    \draw[line width=0.6pt] (BL_bot.center) -- (BL_top.center);
    \draw[line width=0.6pt] (BR_bot.center) -- (BR_top.center);
  \end{pgfonlayer}
\end{tikzpicture}%
}} = \vcenter{\hbox{%
\begin{tikzpicture}[baseline={(base.center)}]
  \begin{pgfonlayer}{nodelayer}
    \node[style=none] (base) at (0,0) {};

    \node[style=none] (wTL)  at (-0.90,0.00) {};
    \node[style=none] (wTR)  at ( 0.90,0.00) {};
    \node[style=none] (wTC)  at ( 0.00,-1.18) {};
    \node[style=none] (wLab) at ( -0.05,-0.35) {$v_{\rm TL}$};

    \node[style=none] (AL_bot) at (-0.58,0.00) {};
    \node[style=none] (AR_bot) at (-0.42,0.00) {};
    \node[style=none] (AL_top) at (-0.58,1.10) {};
    \node[style=none] (AR_top) at (-0.42,1.10) {};
    \node[style=none] (Alab)   at (-0.88,0.70) {$\scriptstyle \m{A}$};

    \node[style=none] (BL_bot) at ( 0.42,0.00) {};
    \node[style=none] (BR_bot) at ( 0.58,0.00) {};
    \node[style=none] (BL_top) at ( 0.42,1.10) {};
    \node[style=none] (BR_top) at ( 0.58,1.10) {};
    \node[style=none] (Blab)   at ( 0.88,0.70) {$\scriptstyle \m{B}$};

  \end{pgfonlayer}

  \begin{pgfonlayer}{edgelayer}
    \draw (wTL.center) -- (wTR.center);
    \draw (wTR.center) -- (wTC.center);
    \draw (wTC.center) -- (wTL.center);

    \draw[line width=0.6pt] (AL_bot.center) -- (AL_top.center);
    \draw[line width=0.6pt] (AR_bot.center) -- (AR_top.center);

    \draw[line width=0.6pt] (BL_bot.center) -- (BL_top.center);
    \draw[line width=0.6pt] (BR_bot.center) -- (BR_top.center);
  \end{pgfonlayer}
\end{tikzpicture}%
}} + \vcenter{\hbox{%
\begin{tikzpicture}[baseline={(base.center)}]
  \begin{pgfonlayer}{nodelayer}
    \node[style=none] (base) at (0,0) {};

    \node[style=none] (wTL)  at (-0.90,0.00) {};
    \node[style=none] (wTR)  at ( 0.90,0.00) {};
    \node[style=none] (wTC)  at ( 0.00,-1.18) {};
    \node[style=none] (wLab) at ( 0.00,-0.45) {$\tilde{h}_x$};

    \node[style=none] (AL_bot) at (-0.58,0.00) {};
    \node[style=none] (AR_bot) at (-0.42,0.00) {};
    \node[style=none] (AL_top) at (-0.58,1.10) {};
    \node[style=none] (AR_top) at (-0.42,1.10) {};
    \node[style=none] (Alab)   at (-0.88,0.70) {$\scriptstyle \m{A}$};

    \node[style=none] (BL_bot) at ( 0.42,0.00) {};
    \node[style=none] (BR_bot) at ( 0.58,0.00) {};
    \node[style=none] (BL_top) at ( 0.42,1.10) {};
    \node[style=none] (BR_top) at ( 0.58,1.10) {};
    \node[style=none] (Blab)   at ( 0.88,0.70) {$\scriptstyle \m{B}$};

  \end{pgfonlayer}

  \begin{pgfonlayer}{edgelayer}
    \draw (wTL.center) -- (wTR.center);
    \draw (wTR.center) -- (wTC.center);
    \draw (wTC.center) -- (wTL.center);

    \draw[line width=0.6pt] (AL_bot.center) -- (AL_top.center);
    \draw[line width=0.6pt] (AR_bot.center) -- (AR_top.center);

    \draw[line width=0.6pt] (BL_bot.center) -- (BL_top.center);
    \draw[line width=0.6pt] (BR_bot.center) -- (BR_top.center);
  \end{pgfonlayer}
\end{tikzpicture}%
}}%
\end{equation}
Where $v_{\rm TL}\in AB_{\otimes}$ is independent of $x$, and $\tilde{h}_x\in H_S$. Now, the second condition implies that $\omega^{\m{AB}}_0\neq \omega^{\m{AB}}_1$, which implies  $\tilde{h}_0\neq \tilde{h}_1$ and therefore, for at least some value of $x$,  $h_x\neq{0}_{AB}$.
\end{proof}

So far, we have shown that tomographically non-local entanglement is necessary for perfectly secure data hiding schemes. A natural question is whether tomographically non-local entanglement alone can be sufficient for perfectly secure data hiding in some theories. The following result, which is naturally suggested by Proposition~\ref{prop.:PerfectDataHidingRequiresTnLE}, defines a class of GPTs in which this occurs.
\begin{proposition}
\label{prop.:ConditionsPerfectDataHiding}
    Consider a GPT containing a composite GPT system $\m{AB}$. If there exists a perfectly distinguishable pair of states $\omega^{\m{AB}}_{x}$ satisfying
    \begin{enumerate}
        \item[i.] Product effects cannot separate $\omega^{\m{AB}}_0$ and $\omega^{\m{AB}}_1$, ie: 
        \begin{equation}
            \vcenter{\hbox{%
            \begin{tikzpicture}[baseline={(base.center)}]
  \begin{pgfonlayer}{nodelayer}
    \node[style=none] (base) at (0,0) {};

    \def\yTop{0.90}
    \def\yMid{0.10}
    \def\yBot{-0.85}

    \node[style=none] (eTL) at (-1.00,\yTop) {};
    \node[style=none] (eTR) at (-0.20,\yTop) {};
    \node[style=none] (eTT) at (-0.60,1.65) {};
    \node[style=none] (eLab) at (-0.60,1.25) {$e$};

    \node[style=none] (fTL) at ( 0.20,\yTop) {};
    \node[style=none] (fTR) at ( 1.00,\yTop) {};
    \node[style=none] (fTT) at ( 0.60,1.65) {};
    \node[style=none] (fLab) at ( 0.60,1.25) {$f$};

    \node[style=none] (AwL_top) at (-0.68,\yTop) {};
    \node[style=none] (AwR_top) at (-0.52,\yTop) {};
    \node[style=none] (AwL_mid) at (-0.68,\yMid) {};
    \node[style=none] (AwR_mid) at (-0.52,\yMid) {};
    \node[style=none] (AwLab) at (-0.92,0.55) {$\scriptstyle \m{A}$};

    \node[style=none] (BwL_top) at ( 0.52,\yTop) {};
    \node[style=none] (BwR_top) at ( 0.68,\yTop) {};
    \node[style=none] (BwL_mid) at ( 0.52,\yMid) {};
    \node[style=none] (BwR_mid) at ( 0.68,\yMid) {};
    \node[style=none] (BwLab) at ( 0.92,0.55) {$\scriptstyle \m{B}$};

    \node[style=none] (sTL) at (-1.10,\yMid) {};
    \node[style=none] (sTR) at ( 1.10,\yMid) {};
    \node[style=none] (sTC) at ( 0.00,\yBot) {};
    \node[style=none] (sNu) at ( 0.00,-0.30) {$\omega_0$};

  \end{pgfonlayer}

  \begin{pgfonlayer}{edgelayer}
    \draw (eTL.center) -- (eTR.center);
    \draw (eTR.center) -- (eTT.center);
    \draw (eTT.center) -- (eTL.center);

    \draw (fTL.center) -- (fTR.center);
    \draw (fTR.center) -- (fTT.center);
    \draw (fTT.center) -- (fTL.center);

    \draw[line width=0.6pt] (AwL_top.center) -- (AwL_mid.center);
    \draw[line width=0.6pt] (AwR_top.center) -- (AwR_mid.center);

    \draw[line width=0.6pt] (BwL_top.center) -- (BwL_mid.center);
    \draw[line width=0.6pt] (BwR_top.center) -- (BwR_mid.center);

    \draw (sTL.center) -- (sTR.center);
    \draw (sTR.center) -- (sTC.center);
    \draw (sTC.center) -- (sTL.center);

  \end{pgfonlayer}
            \end{tikzpicture}%
            }}= \vcenter{\hbox{%
\begin{tikzpicture}[baseline={(base.center)}]
  \begin{pgfonlayer}{nodelayer}
    \node[style=none] (base) at (0,0) {};

    \def\yTop{0.90}
    \def\yMid{0.10}
    \def\yBot{-0.85}

    \node[style=none] (eTL) at (-1.00,\yTop) {};
    \node[style=none] (eTR) at (-0.20,\yTop) {};
    \node[style=none] (eTT) at (-0.60,1.65) {};
    \node[style=none] (eLab) at (-0.60,1.25) {$e$};

    \node[style=none] (fTL) at ( 0.20,\yTop) {};
    \node[style=none] (fTR) at ( 1.00,\yTop) {};
    \node[style=none] (fTT) at ( 0.60,1.65) {};
    \node[style=none] (fLab) at ( 0.60,1.25) {$f$};

    \node[style=none] (AwL_top) at (-0.68,\yTop) {};
    \node[style=none] (AwR_top) at (-0.52,\yTop) {};
    \node[style=none] (AwL_mid) at (-0.68,\yMid) {};
    \node[style=none] (AwR_mid) at (-0.52,\yMid) {};
    \node[style=none] (AwLab) at (-0.92,0.55) {$\scriptstyle \m{A}$};

    \node[style=none] (BwL_top) at ( 0.52,\yTop) {};
    \node[style=none] (BwR_top) at ( 0.68,\yTop) {};
    \node[style=none] (BwL_mid) at ( 0.52,\yMid) {};
    \node[style=none] (BwR_mid) at ( 0.68,\yMid) {};
    \node[style=none] (BwLab) at ( 0.92,0.55) {$\scriptstyle \m{B}$};

    \node[style=none] (sTL) at (-1.10,\yMid) {};
    \node[style=none] (sTR) at ( 1.10,\yMid) {};
    \node[style=none] (sTC) at ( 0.00,\yBot) {};
    \node[style=none] (sNu) at ( 0.00,-0.30) {$\omega_1$};

  \end{pgfonlayer}

  \begin{pgfonlayer}{edgelayer}
    \draw (eTL.center) -- (eTR.center);
    \draw (eTR.center) -- (eTT.center);
    \draw (eTT.center) -- (eTL.center);

    \draw (fTL.center) -- (fTR.center);
    \draw (fTR.center) -- (fTT.center);
    \draw (fTT.center) -- (fTL.center);

    \draw[line width=0.6pt] (AwL_top.center) -- (AwL_mid.center);
    \draw[line width=0.6pt] (AwR_top.center) -- (AwR_mid.center);

    \draw[line width=0.6pt] (BwL_top.center) -- (BwL_mid.center);
    \draw[line width=0.6pt] (BwR_top.center) -- (BwR_mid.center);

    \draw (sTL.center) -- (sTR.center);
    \draw (sTR.center) -- (sTC.center);
    \draw (sTC.center) -- (sTL.center);

  \end{pgfonlayer}
\end{tikzpicture}%
}}\;\;\forall\;
            \vcenter{\hbox{%
            \begin{tikzpicture}[baseline={(base.center)}]
            \begin{pgfonlayer}{nodelayer}
            \node[style=none] (base) at (0,0) {};

            \def\yTop{0.00}
            \def\yBot{-0.80}

            \node[style=none] (eTL) at (-1.00,\yTop) {};
            \node[style=none] (eTR) at (-0.20,\yTop) {};
            \node[style=none] (eTT) at (-0.60,0.75) {};
            \node[style=none] (eLab) at (-0.60,0.35) {$e$};

            \node[style=none] (AwL_top) at (-0.68,\yTop) {};
            \node[style=none] (AwR_top) at (-0.52,\yTop) {};
            \node[style=none] (AwL_bot) at (-0.68,\yBot) {};
            \node[style=none] (AwR_bot) at (-0.52,\yBot) {};
            \node[style=none] (AwLab) at (-0.92,-0.40) {$\scriptstyle \m{A}$};

        \end{pgfonlayer}

        \begin{pgfonlayer}{edgelayer}
            \draw (eTL.center) -- (eTR.center);
            \draw (eTR.center) -- (eTT.center);
            \draw (eTT.center) -- (eTL.center);

            \draw[line width=0.6pt] (AwL_top.center) -- (AwL_bot.center);
            \draw[line width=0.6pt] (AwR_top.center) -- (AwR_bot.center);

        \end{pgfonlayer}
        \end{tikzpicture}%
        }}
        \in E_{\m{A}}\,,\,\vcenter{\hbox{%
        \begin{tikzpicture}[baseline={(base.center)}]
        \begin{pgfonlayer}{nodelayer}
            \node[style=none] (base) at (0,0) {};

            \def\yTop{0.00}
            \def\yBot{-0.80}

            \node[style=none] (fTL) at (-1.00,\yTop) {};
            \node[style=none] (fTR) at (-0.20,\yTop) {};
            \node[style=none] (fTT) at (-0.60,0.75) {};
            \node[style=none] (fLab) at (-0.60,0.35) {$f$};

            \node[style=none] (BwL_top) at (-0.68,\yTop) {};
            \node[style=none] (BwR_top) at (-0.52,\yTop) {};
            \node[style=none] (BwL_bot) at (-0.68,\yBot) {};
            \node[style=none] (BwR_bot) at (-0.52,\yBot) {};
            \node[style=none] (bwLab) at (-0.92,-0.40) {$\scriptstyle \m{B}$};

    \end{pgfonlayer}

    \begin{pgfonlayer}{edgelayer}
    \draw (fTL.center) -- (fTR.center);
    \draw (fTR.center) -- (fTT.center);
    \draw (fTT.center) -- (fTL.center);

    \draw[line width=0.6pt] (AwL_top.center) -- (BwL_bot.center);
    \draw[line width=0.6pt] (AwR_top.center) -- (BwR_bot.center);

  \end{pgfonlayer}
    \end{tikzpicture}%
    }}
    \in E_{\m{B}};
\end{equation} 
\item[ii.] Both states $\omega^{\m{AB}}_0$ and $\omega^{\m{AB}}_1$ lack TL-entanglement, ie:
\begin{equation}
\label{eq:DataHidingCondition2}
\left\{\vcenter{\hbox{
\begin{tikzpicture}
	\begin{pgfonlayer}{nodelayer}

		\node [style=none] (TA) at (-0.6,1.4) {};
		\node [style=none] (TB) at (0.6,1.4) {};

		\node [style=none] (TA2) at (-0.6,0.7) {};
		\node [style=none] (TB2) at (0.6,0.7) {};

		\node [style=none] (Alabel1) at (-0.9,1.1) {$\scriptstyle \mathcal A$};
		\node [style=none] (Blabel1) at (0.9,1.1) {$\scriptstyle \mathcal B$};

		\node [style=none] (R1) at (-0.9,0.7) {};
		\node [style=none] (R2) at (0.9,0.7) {};
		\node [style=none] (R3) at (0.9,-0.1) {};
		\node [style=none] (R4) at (-0.9,-0.1) {};

		\node [style=none] (RectLabel) at (0,0.3) {$\Pi_{TL}$};

		\node [style=none] (BA) at (-0.6,-0.1) {};
		\node [style=none] (BB) at (0.6,-0.1) {};

		\node [style=none] (BA2) at (-0.6,-0.8) {};
		\node [style=none] (BB2) at (0.6,-0.8) {};

		\node [style=none] (Alabel2) at (-0.9,-0.45) {$\scriptstyle \mathcal A$};
		\node [style=none] (Blabel2) at (0.9,-0.45) {$\scriptstyle \mathcal B$};

		\node [style=none] (TL) at (-0.9,-0.8) {};
		\node [style=none] (TR) at (0.9,-0.8) {};
		\node [style=none] (TC) at (0,-1.8) {};
		\node [style=none] (W)  at (0,-1.25) {$\omega_0$};

		\node [style=none] (NEQ) at (2.2,-0.5) {};

	\end{pgfonlayer}

	\begin{pgfonlayer}{edgelayer}

		\draw[double] (TA.center) to (TA2.center);
		\draw[double] (TB.center) to (TB2.center);

		\draw (R1.center) to (R2.center);
		\draw (R2.center) to (R3.center);
		\draw (R3.center) to (R4.center);
		\draw (R4.center) to (R1.center);

		\draw[double] (BA.center) to (BA2.center);
		\draw[double] (BB.center) to (BB2.center);

		\draw (TL.center) to (TR.center);
		\draw (TR.center) to (TC.center);
		\draw (TC.center) to (TL.center);

	\end{pgfonlayer}
\end{tikzpicture}}}\right\} \subset \mathsf{Sep}[\Omega_{\m{AB}}]\,,
\end{equation}
\end{enumerate}
    then a perfectly secure data hiding protocol can be implemented with tomographically non-local entanglement alone.
\end{proposition}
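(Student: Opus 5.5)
The plan is to check the two requirements that define a perfectly secure data hiding scheme, as listed in the proof of Prop.~\ref{prop.:PerfectDataHidingRequiresTnLE}. Then I will argue that the only entanglement in the encoding states is tomographically nonlocal.

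\emph{Perfect distinguishability.} This is given directly by the hypothesis that $\{\omega^{\m{AB}}_0,\omega^{\m{AB}}_1\}$ is a perfectly distinguishable pair. So there is a global measurement $\{e^{\m{AB}}_0,e^{\m{AB}}_1\}$ with $e_i(\omega_x)=\delta_{ix}$, and the holder of both subsystems can decode $x$.

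\emph{Security against LOCC.} Condition i gives $\Pi_{\rm TL}(\omega_0)=\Pi_{\rm TL}(\omega_1)=:v_{\rm TL}$. The reason is that $(AB)^*_\otimes$ is spanned by product effects, so the two vectors $\Pi_{\rm TL}(\omega_x)\in AB_\otimes$ agree on all of $(AB)^*_\otimes$. They are then equal, since $H_E$ annihilates $AB_\otimes$ and $(AB)^*=(AB)^*_\otimes\oplus H_E$ separates points of $AB$.

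Next I argue that every LOCC protocol produces the same outcome statistics on both states. Any such protocol is a sequence of rounds of local instruments and classical messages. Its final outcome probabilities are sums of terms of the form $(e^{\m{A}}\boxtimes f^{\m{B}})$ applied to the state, where each local effect is obtained by composing a local instrument branch with later effects. The steering closure condition of Def.~\ref{def: CompositionRequirements} guarantees that these composites are valid local effects. Hence every LOCC outcome probability is a nonnegative combination of product-effect evaluations, and condition i makes it independent of $x$. The main obstacle is making this reduction rigorous in a general GPT, since transformations are not fully formalized here. If needed, I would take as the operational definition that LOCC effects lie in the cone generated by product effects (separable effects), which suffices for the argument.

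\emph{Only TNL entanglement is used.} Condition ii says $v_{\rm TL}\in\mathsf{Sep}[\Omega_{\m{AB}}]$, so neither state has TL-entanglement (Def.~\ref{def:TLEntenglement}). Because $\omega_0\neq\omega_1$ while their TL components coincide, at least one holistic component $\tilde h_x=\Pi_{\rm TNL}(\omega_x)$ is nonzero. In fact both are nonzero: if, say, $\tilde h_0=0$, then $\omega_0=v_{\rm TL}$ would be separable. I expect this forces $\tilde h_1\neq 0$ and no contradiction, so it does not follow from the hypotheses as stated that both states carry TNL entanglement. The conclusion to aim for is therefore: the resource carries no TL-entanglement at all, and its security-relevant difference lies entirely in $H_S$, in line with Prop.~\ref{prop.:PerfectDataHidingRequiresTnLE}.
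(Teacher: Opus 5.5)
Your proposal is correct and follows the paper's largely implicit argument. Perfect distinguishability together with condition i is exactly the characterization of perfectly secure data hiding used in the proof of Prop.~\ref{prop.:PerfectDataHidingRequiresTnLE}; condition ii rules out TL-entanglement; and equality of the TL components forces a nonzero holistic component in at least one state. Your reduction of LOCC statistics to separable effects just makes explicit what the paper takes as definitional, though it rests on closure under sequential composition and the compatibility of parallel with sequential composition rather than on steering closure (which matters only for ancilla-assisted local operations).

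Do strike the sentence ``In fact both are nonzero''. As you then correctly observe, only ``at least one'' follows from the hypotheses, and that is all the statement requires.
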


Many known tomographically non-local GPTs satisfy the conditions of Proposition~\ref{prop.:ConditionsPerfectDataHiding}, such as bilocal classical theories, bit-fliped twirled c-bit world, and RQT. The next example shows in more detail the particular case of perfectly secure data hiding  using rebits~\cite{Wootters_2010}, where success without tomographically-local entanglement is possible.
\blk
\begin{example}[Perfectly secure data hiding using rebits]
\label{example: DataHidingRebits}
    Consider a world governed by real quantum theory. The following encoding scheme of a bit $x$ in  states $\omega^{\m{AB}}_x\in\m{S_{AB}}$, achieves perfectly secure data hiding:
    \begin{align}
    \label{eq:rebitStatesEncodingBit}
        \omega_{x}^{AB} = \frac{\mathds{1}\otimes\mathds{1}+(-1)^{x}\sigma_y\otimes\sigma_y}{4}.
    \end{align}

These two states are orthogonal in RQT, i.e., one can perfectly discriminate them by implementing  the   measurement $\{e_0:= \Pi_{\omega_0},e_1:= \Pi_{\omega_1}\}\subset E_{\m{AB}}$, \blk where $\Pi_{\omega_0}=\frac{1}{2}(\mathds{1}+\sigma_y\otimes\sigma_y)$ and $\Pi_{\omega_1}=\frac{1}{2}(\mathds{1}-\sigma_y\otimes\sigma_y)$. To see that this is a valid measurement in RQT, note that every effect is real and this is also a valid measurement in complex quantum theory, as the effects can be written as   $e_0=\ketbra{\Psi_+}{\Psi_+}+\ketbra{\Phi_-}{\Phi_-}$ and $e_1=\ketbra{\Psi_-}{\Psi_-}+\ketbra{\Phi_+}{\Phi_+}\}$, where $\ket{\Psi_\pm} = \frac{1}{\sqrt{2}}(\ket{01}\pm \ket{10})$ and $\ket{\Phi_\pm}= \frac{1}{\sqrt{2}}(\ket{00}\pm \ket{11})$.  

Now, notice that the only difference between $\omega^{\m{AB}}_0$ and $\omega^{\m{AB}}_1$ is the sign of the component $\sigma_y\otimes\sigma_y$, which spans the holistic-state subspace $H_S$ (recall Eq.~\eqref{eq:H_STwoRebits}). Since no local measurement can access the holistic degrees of freedom, no local measurements can give any information about this bit. One can only find out the bit by making a joint measurement, such as the measurement described above.

Finally, notice that $\Pi_{\rm TL}(\omega^{\m{AB}})=\frac{\mathds{1}\otimes\mathds{1}}{4}$, which is separable.
\end{example}

Ref.~\cite{Wootters_2010} proposed this protocol for rebits for the first time and further generalized it to an arbitrary number of parties. The case of the other mentioned GPTs are analogous and show that some operational tasks can be accomplished with tomographically non-local entanglement alone that are impossible without it (and thus impossible in all TL theories). 
\blk

\subsection{Local encoding and LOCC-decoding of hidden data allowed by extra TNL-entanglement} 

In this Section we show that, in certain theories where tomographic locality fails, one of the parties (e.g., Alice) can locally encode data into a bipartite system in a way that remains hidden from all local measurements. Traditionally, data-hiding schemes assume that the hidden data is encoded by a third party at the preparation stage. Here, we adapt the data-hiding paradigm to a setting in which Alice and Bob distribute a hiding carrier in advance, and Alice may only later acquire the data she wishes to store. The information is then encoded locally by Alice, yet remains inaccessible to any adversary who gains access to only one subsystem. We further show that if Alice and Bob share an additional pair of systems exhibiting tomographically non-local entanglement, they can use LOCC to decode the hidden data (which is otherwise inaccessible by LOCC alone) such that any adversary with access only to their classical communication is oblivious to Alice's bit. Equivalently, the extra TNL entangled resource activates the implementation of the joint measurement required for decoding. 

For the local encoding part to work,  local operations must be able to affect holistic degrees of freedom, as we saw is needed for dense coding in Section~\ref{sec:Dense coding}.  We start exemplifying how this can be done in RQT.

\begin{example}
\label{example: LocalEncodingDataHiding}Suppose Alice and Bob each hold a share of a two-rebit system prepared in state $\omega^{AB}_0=\frac{\mathds{1}\otimes\mathds{1}+\sigma_y\otimes\sigma_y}{4}$. Then, Alice  can locally encode a bit $x$ into the pair of states $\{\omega^{AB}_0,\omega^{AB}_1\}$ defined in Eq.~\eqref{eq:rebitStatesEncodingBit}, by applying a local operation. 
To see this, note that one can toggle between the two states above by  the unitary superoperator $(\mathcal{Z}^A)^x$ on $A$. 
   This is clear since $\mathcal{Z}(\sigma_y):=\sigma_z\sigma_y\sigma_z=-\sigma_y$, and so
   \begin{align}
        (\mathcal{Z}^A)^x\otimes \mathds{1}^B(\omega^{\m{AB}}_+) = \frac{1}{4}\left[\mathds{1}^{AB}+ \mathcal{Z}^x(\sigma_y^A)\otimes\sigma_y^B\right] = \frac{1}{4}\left[\mathds{1}^{AB}+ (-1)^x\sigma_y^A\otimes\sigma_y^B\right].
    \end{align}
\end{example}

Notice that an analogous mechanism works in BCT and other theories in which local operations can affect the holistic degrees of freedom. For instance, a simple adaptation of  Example~\ref{BCTDenseCoding} allows to achieve such a local encoding in BCT. Again, in theories where local operations never impact the holistic degrees of freedom (such as MBCT~\cite{Rolino_MOPTs_2025}), this is impossible.

Now we show that if Alice and Bob share an extra two-rebit state $\omega^{A'B'}$, that can carry TNL entanglement only, they can use LOCC on their systems so that Bob retrieves Alice's bit. Moreover, Alice can publicly announce her outcomes and only Bob will have access to $x$, if he keeps his systems and outcomes private.

\begin{proposition}[Decoding secret bit with TNL entangled state plus LOCC in RQT]
    Suppose Alice and Bob share the following state encoding a bit $x\in\{0,1\}$:
    \begin{align}
        \omega^{\m{AB}}_x = \frac{1}{4}\left[\mathds{1}+(-1)^{x}\sigma_y\otimes\sigma_y\right].
    \end{align}
    Suppose that they want Bob to know the bit, but they do not have a trusted channel to send rebit particles (so they can't put both systems together to access the value of $x$ through a local joint measurement). Suppose they have access to LOCC  and share an extra pair of rebit systems, $\omega^{A'B'}$. State $\omega^{A'B'}$ can carry TNL entanglement alone and still allow for retrieving bit $x$.
\end{proposition}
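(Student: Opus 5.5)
The plan is to take the extra resource to be the state $\omega^{\m{A}'\m{B}'}:=\omega_0^{\m{A}'\m{B}'}=\frac14(\mathds{1}\otimes\mathds{1}+\sigma_y\otimes\sigma_y)$. In Sec.~\ref{sec:EntanglementRQT} it was shown that $\Pi_{\rm TL}(\omega_0)=\frac{\mathds{1}\otimes\mathds{1}}{4}$ is separable while $\Pi_{\rm TNL}(\omega_0)\neq 0$, so this state carries TNL entanglement alone. The protocol is a parity analogue of entanglement swapping. Alice measures the two-outcome observable $\sigma_y^{\m{A}}\otimes\sigma_y^{\m{A}'}$ on her pair $\m{A}\m{A}'$, and Bob measures $\sigma_y^{\m{B}}\otimes\sigma_y^{\m{B}'}$ on $\m{B}\m{B}'$. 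Both measurements are joint only within one laboratory. First I will check that they are valid in RQT. The projectors $P_\pm=\frac12(\mathds{1}\pm\sigma_y\otimes\sigma_y)$ are real symmetric and positive, exactly as in Example~\ref{example: DataHidingRebits}. Alice then publicly announces her outcome $a\in\{\pm1\}$. This is the only communication, so the protocol is LOCC (indeed one-way).

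The key step is computing the joint statistics on $\omega_x^{\m{AB}}\otimes\omega_0^{\m{A}'\m{B}'}$. Since the product observable factorizes across the two states, $\langle (\sigma_y^{\m{A}}\sigma_y^{\m{A}'})(\sigma_y^{\m{B}}\sigma_y^{\m{B}'})\rangle=\Tr[\omega_x\,\sigma_y\otimes\sigma_y]\,\Tr[\omega_0\,\sigma_y\otimes\sigma_y]=(-1)^x\cdot 1$. The outcomes are $\pm1$-valued, so an expectation of modulus one forces $ab=(-1)^x$ with certainty. Bob therefore recovers $x$ deterministically from $a$ and his private outcome $b$.

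For secrecy I will compute Alice's marginal. The expectation of $\sigma_y^{\m{A}}\otimes\sigma_y^{\m{A}'}$ depends only on the reduced state on $\m{A}\m{A}'$. That reduced state is $\frac{\mathds{1}}{2}\otimes\frac{\mathds{1}}{2}$ for both values of $x$, because all $\sigma_y\otimes\sigma_y$ terms are traced out. Hence $a$ is uniformly random and independent of $x$, and an eavesdropper who sees only the public classical message learns nothing. I will also note that $x$ stays hidden from anyone holding a single subsystem, as in Example~\ref{example: DataHidingRebits}.

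The main obstacle, and the only step needing real care, is making sure the argument never uses a joint measurement across the $\m{A}|\m{B}$ cut. Everything rests on the fact that the global parity $\sigma_y^{\m{A}}\sigma_y^{\m{B}}$ can be recovered as a product of two measurements, each local to one laboratory, once the holistic correlation in $\omega_0^{\m{A}'\m{B}'}$ is "spent". I will verify this using the eigenprojector form of the measurements rather than the expectation values alone.
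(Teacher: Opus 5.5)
Your proposal is correct and follows the paper's proof: the same resource $\omega_0^{\m{A}'\m{B}'}=\frac14(\mathds{1}+\sigma_y\otimes\sigma_y)$, the same parity measurements $\{\frac12(\mathds{1}\pm\sigma_y\otimes\sigma_y)\}$ on $\m{A}\m{A}'$ and $\m{B}\m{B}'$, a one-way announcement of Alice's outcome, and decoding via $x=a\oplus b$ (your $ab=(-1)^x$). Your factorized-correlator computation is a compact version of the paper's joint-probability check, where the paper also writes out Bob's conditional states $\frac12\omega^{\m{B}\m{B}'}_{x}$ and $\frac12\omega^{\m{B}\m{B}'}_{\neg x}$, and your uniform-marginal argument for $a$ reproduces the paper's remark that someone who sees only $a$ learns nothing about $x$.
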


\begin{proof}
We need to find a composite system $\m{A'B'}$ and a state $\omega^{\m{A'B'}}$ such that, when Alice performs a local measurement on $\m{AA'}$,  the  state of systems $\m{BB'}$ on Bob's side conditioned on Alice's outcome are such that Bob can locally perform a joint measurement on $\m{BB'}$ to read out the value of $x$. We will see that we can take (i) $\m{A'}$ as isomorphic to $\m{A}$, and $\m{B'}$ as isomorphic to $\m{B}$, (ii) $\omega^{\m{A'B'}} = \omega^{\m{AB}}_0$, and (iii) the local measurement of Alice on $\m{AA'}$ to be $\{e_0,e_1\}$, i.e., the measurement that perfectly distinguishes among $\omega_0^{\m{AB}}$ and $\omega_1^{\m{AB}}$.

In diagrammatic jargon, what we need to prove is that the following two equalities hold:

\begin{equation}
\label{eq:LOCC-DecodingwExtraTNLE1}
\vcenter{\hbox{%
\begin{tikzpicture}[baseline={(base.center)}, WireThick/.style={line width=0.6pt}]
\begin{pgfonlayer}{nodelayer}
\node[style=none] (base) at (0,0) {};

\node[style=none] (Om0TL)  at (-0.78,0.35) {};
\node[style=none] (Om0TR)  at ( 3.82,0.35) {};
\node[style=none] (Om0B)   at ( 1.47,-1.30) {};
\node[style=none] (Om0Lab) at ( 1.47,-0.52) {$\omega_0$};

\coordinate (cApr) at (0.00,0);
\coordinate (cA)   at (1.20,0);
\coordinate (cB)   at (2.15,0);
\coordinate (cBpr) at (3.10,0);

\node[style=none] (OmxTL)  at (0.95,1.80) {};
\node[style=none] (OmxTR)  at (2.40,1.80) {};
\node[style=none] (OmxB)   at (1.675,0.70) {};
\node[style=none] (OmxLab) at (1.675,1.34) {$\omega_x$};

\node[style=none] (EBL)  at (-0.60,3.25) {};
\node[style=none] (EBR)  at ( 1.80,3.25) {};
\node[style=none] (ET)   at ( 0.60,4.35) {};
\node[style=none] (ELab) at ( 0.60,3.80) {$e_0$};

\def\DW{0.10}

\def\yOm0Top{0.35}
\def\yOmxTop{1.80}
\def\yEBase{3.25}
\def\yTopWires{4.10}

\node[style=none] (AprL0) at (-\DW,\yOm0Top) {};
\node[style=none] (AprR0) at ( \DW,\yOm0Top) {};
\node[style=none] (AprL1) at (-\DW,\yEBase) {};
\node[style=none] (AprR1) at ( \DW,\yEBase) {};
\node[style=none] (AprLab) at (-0.55,1.65) {$\scriptstyle \m{A}'$};

\node[style=none] (AL0) at (1.20-\DW,\yOmxTop) {};
\node[style=none] (AR0) at (1.20+\DW,\yOmxTop) {};
\node[style=none] (AL1) at (1.20-\DW,\yEBase) {};
\node[style=none] (AR1) at (1.20+\DW,\yEBase) {};
\node[style=none] (ALab) at (0.91,2.47) {$\scriptstyle \m{A}$};

\node[style=none] (BL0) at (2.15-\DW,\yOmxTop) {};
\node[style=none] (BR0) at (2.15+\DW,\yOmxTop) {};
\node[style=none] (BL1) at (2.15-\DW,\yTopWires) {};
\node[style=none] (BR1) at (2.15+\DW,\yTopWires) {};
\node[style=none] (BLab) at (2.42,2.47) {$\scriptstyle \m{B}$};

\node[style=none] (BprL0) at (3.10-\DW,\yOm0Top) {};
\node[style=none] (BprR0) at (3.10+\DW,\yOm0Top) {};
\node[style=none] (BprL1) at (3.10-\DW,\yTopWires) {};
\node[style=none] (BprR1) at (3.10+\DW,\yTopWires) {};
\node[style=none] (BprLab) at (3.10+0.45,2.55) {$\scriptstyle \m{B}'$};

\node[style=none] (Eq) at (4.90,2.10) {$=$};
\node[style=none] (p0) at (6.00,2.10) {$p(0)$};

\node[style=none] (ROmTL)  at (7.10,1.80) {};
\node[style=none] (ROmTR)  at (9.10,1.80) {};
\node[style=none] (ROmB)   at (8.10,0.70) {};
\node[style=none] (ROmLab) at (8.10,1.43) {$\omega_x$};

\def\xRB{7.55}
\def\xRBp{8.65}

\node[style=none] (RB_L0) at (\xRB-\DW,1.80) {};
\node[style=none] (RB_R0) at (\xRB+\DW,1.80) {};
\node[style=none] (RB_L1) at (\xRB-\DW,4.10) {};
\node[style=none] (RB_R1) at (\xRB+\DW,4.10) {};
\node[style=none] (RB_Lab) at (\xRB-0.35,2.55) {$\scriptstyle \m{B}$};

\node[style=none] (RBp_L0) at (\xRBp-\DW,1.80) {};
\node[style=none] (RBp_R0) at (\xRBp+\DW,1.80) {};
\node[style=none] (RBp_L1) at (\xRBp-\DW,4.10) {};
\node[style=none] (RBp_R1) at (\xRBp+\DW,4.10) {};
\node[style=none] (RBp_Lab) at (\xRBp+0.35,2.55) {$\scriptstyle \m{B}'$};

\end{pgfonlayer}

\begin{pgfonlayer}{edgelayer}

\draw (Om0TL.center) -- (Om0TR.center) -- (Om0B.center) -- cycle;
\draw (OmxTL.center) -- (OmxTR.center) -- (OmxB.center) -- cycle;
\draw (EBL.center) -- (EBR.center) -- (ET.center) -- cycle;

\draw[WireThick] (AprL0.center) -- (AprL1.center);
\draw[WireThick] (AprR0.center) -- (AprR1.center);

\draw[WireThick] (AL0.center) -- (AL1.center);
\draw[WireThick] (AR0.center) -- (AR1.center);

\draw[WireThick] (BL0.center) -- (BL1.center);
\draw[WireThick] (BR0.center) -- (BR1.center);

\draw[WireThick] (BprL0.center) -- (BprL1.center);
\draw[WireThick] (BprR0.center) -- (BprR1.center);

\draw (ROmTL.center) -- (ROmTR.center) -- (ROmB.center) -- cycle;

\draw[WireThick] (RB_L0.center) -- (RB_L1.center);
\draw[WireThick] (RB_R0.center) -- (RB_R1.center);

\draw[WireThick] (RBp_L0.center) -- (RBp_L1.center);
\draw[WireThick] (RBp_R0.center) -- (RBp_R1.center);

\end{pgfonlayer}
\end{tikzpicture}%
}}
\end{equation}
and
\begin{equation}
\label{eq:LOCC-DecodingwExtraTNLE2}
\vcenter{\hbox{%
\begin{tikzpicture}[baseline={(base.center)}, WireThick/.style={line width=0.6pt}]
\begin{pgfonlayer}{nodelayer}
\node[style=none] (base) at (0,0) {};

\node[style=none] (Om0TL)  at (-0.78,0.35) {};
\node[style=none] (Om0TR)  at ( 3.82,0.35) {};
\node[style=none] (Om0B)   at ( 1.47,-1.30) {};
\node[style=none] (Om0Lab) at ( 1.47,-0.52) {$\omega_0$};

\coordinate (cApr) at (0.00,0);
\coordinate (cA)   at (1.20,0);
\coordinate (cB)   at (2.15,0);
\coordinate (cBpr) at (3.10,0);

\node[style=none] (OmxTL)  at (0.95,1.80) {};
\node[style=none] (OmxTR)  at (2.40,1.80) {};
\node[style=none] (OmxB)   at (1.675,0.70) {};
\node[style=none] (OmxLab) at (1.675,1.34) {$\omega_x$};

\node[style=none] (EBL)  at (-0.60,3.25) {};
\node[style=none] (EBR)  at ( 1.80,3.25) {};
\node[style=none] (ET)   at ( 0.60,4.35) {};
\node[style=none] (ELab) at ( 0.60,3.80) {$e_1$};

\def\DW{0.10}

\def\yOm0Top{0.35}
\def\yOmxTop{1.80}
\def\yEBase{3.25}
\def\yTopWires{4.10}

\node[style=none] (AprL0) at (-\DW,\yOm0Top) {};
\node[style=none] (AprR0) at ( \DW,\yOm0Top) {};
\node[style=none] (AprL1) at (-\DW,\yEBase) {};
\node[style=none] (AprR1) at ( \DW,\yEBase) {};
\node[style=none] (AprLab) at (-0.55,1.65) {$\scriptstyle \m{A}'$};

\node[style=none] (AL0) at (1.20-\DW,\yOmxTop) {};
\node[style=none] (AR0) at (1.20+\DW,\yOmxTop) {};
\node[style=none] (AL1) at (1.20-\DW,\yEBase) {};
\node[style=none] (AR1) at (1.20+\DW,\yEBase) {};
\node[style=none] (ALab) at (0.91,2.47) {$\scriptstyle \m{A}$};

\node[style=none] (BL0) at (2.15-\DW,\yOmxTop) {};
\node[style=none] (BR0) at (2.15+\DW,\yOmxTop) {};
\node[style=none] (BL1) at (2.15-\DW,\yTopWires) {};
\node[style=none] (BR1) at (2.15+\DW,\yTopWires) {};
\node[style=none] (BLab) at (2.42,2.47) {$\scriptstyle \m{B}$};

\node[style=none] (BprL0) at (3.10-\DW,\yOm0Top) {};
\node[style=none] (BprR0) at (3.10+\DW,\yOm0Top) {};
\node[style=none] (BprL1) at (3.10-\DW,\yTopWires) {};
\node[style=none] (BprR1) at (3.10+\DW,\yTopWires) {};
\node[style=none] (BprLab) at (3.10+0.45,2.55) {$\scriptstyle \m{B}'$};

\node[style=none] (Eq) at (4.90,2.10) {$=$};
\node[style=none] (p0) at (6.00,2.10) {$p(1)$};

\node[style=none] (ROmTL)  at (7.10,1.80) {};
\node[style=none] (ROmTR)  at (9.10,1.80) {};
\node[style=none] (ROmB)   at (8.10,0.70) {};
\node[style=none] (ROmLab) at (8.10,1.43) {$\omega_{\neg x}$};

\def\xRB{7.55}
\def\xRBp{8.65}

\node[style=none] (RB_L0) at (\xRB-\DW,1.80) {};
\node[style=none] (RB_R0) at (\xRB+\DW,1.80) {};
\node[style=none] (RB_L1) at (\xRB-\DW,4.10) {};
\node[style=none] (RB_R1) at (\xRB+\DW,4.10) {};
\node[style=none] (RB_Lab) at (\xRB-0.35,2.55) {$\scriptstyle \m{B}$};

\node[style=none] (RBp_L0) at (\xRBp-\DW,1.80) {};
\node[style=none] (RBp_R0) at (\xRBp+\DW,1.80) {};
\node[style=none] (RBp_L1) at (\xRBp-\DW,4.10) {};
\node[style=none] (RBp_R1) at (\xRBp+\DW,4.10) {};
\node[style=none] (RBp_Lab) at (\xRBp+0.35,2.55) {$\scriptstyle \m{B}'$};

\end{pgfonlayer}

\begin{pgfonlayer}{edgelayer}

\draw (Om0TL.center) -- (Om0TR.center) -- (Om0B.center) -- cycle;
\draw (OmxTL.center) -- (OmxTR.center) -- (OmxB.center) -- cycle;
\draw (EBL.center) -- (EBR.center) -- (ET.center) -- cycle;

\draw[WireThick] (AprL0.center) -- (AprL1.center);
\draw[WireThick] (AprR0.center) -- (AprR1.center);

\draw[WireThick] (AL0.center) -- (AL1.center);
\draw[WireThick] (AR0.center) -- (AR1.center);

\draw[WireThick] (BL0.center) -- (BL1.center);
\draw[WireThick] (BR0.center) -- (BR1.center);

\draw[WireThick] (BprL0.center) -- (BprL1.center);
\draw[WireThick] (BprR0.center) -- (BprR1.center);

\draw (ROmTL.center) -- (ROmTR.center) -- (ROmB.center) -- cycle;

\draw[WireThick] (RB_L0.center) -- (RB_L1.center);
\draw[WireThick] (RB_R0.center) -- (RB_R1.center);

\draw[WireThick] (RBp_L0.center) -- (RBp_L1.center);
\draw[WireThick] (RBp_R0.center) -- (RBp_R1.center);

\end{pgfonlayer}
\end{tikzpicture}%
}}\,,
\end{equation}
where $\omega_{\neg x}$ is $\omega_1$ if $x=0$ or $\omega_{0}$ if $x=1$. \blk
Eqs.~\eqref{eq:LOCC-DecodingwExtraTNLE1} and \eqref{eq:LOCC-DecodingwExtraTNLE2} imply that Bob can find out the value of $x$ by implementing the measurement that distinguished $\omega_0$ and $\omega_1$, provided that he knows Alice's outcome.  The formal proof goes as follows.

Suppose the extra rebit pair is in the following state
    \begin{align}
    \omega^{\m{A'B'}} = \frac{1}{4}\left[\mathds{1}+\sigma_y\otimes\sigma_y\right].
    \end{align}
    Then, Alice and Bob can do as follows:
    Alice performs  the measurement  
   $\{e_0,e_1\}$ on her systems $\m{AA'}$ and communicates her outcome $a$ to Bob. Bob then makes an analogous measurement on his systems, $\m{BB'}$, obtaining outcome $b$.   What we will see now is that $x=a\oplus b$, and hence Bob can compute the secret bit $x$ locally from $a$ and $b$.

 The systems that Alice and Bob share are described by the state
    \begin{align}
        \omega^{\m{AA'BB'}} &= \omega^{\m{AB}}_x\otimes\omega^{\m{A'B'}} = \frac{1}{16}\left[\mathds{1}^{AB}+(-1)^{x}\sigma^A_y\otimes\sigma^B_y\right]\left[\mathds{1}^{A'B'}+\sigma^{A'}_y\otimes\sigma^{B'}_y\right]\\ &= \frac{1}{16}\left[\mathds{1}^{ABA'B'} + \mathds{1}^{AB}\otimes\sigma_y^{A'}\otimes\sigma_y^{B'}+(-1)^{x} \sigma_y^{A}\otimes\sigma_y^{B}\otimes\mathds{1}^{A'B'} +(-1)^{x} \sigma_y^{A}\otimes\sigma_y^{B}\otimes\sigma_y^{A'}\otimes\sigma_y^{B'}\right].
    \end{align}

    Now, if Alice implements the measurement 
   $\{e_0^{\m{AA'}},e_1^{\m{AA'}}\}$, she gets outcome $0$ with probability $\frac{1}{2}$ and outcome $1$ with probability $\frac{1}{2}$. The conditional (subnormalized) state on Bob's side associated to $a=0$ is
    \begin{align}
    \label{eq:Assemblage0NTL}
        \sigma^{\m{BB'}}_{a=0}:={\rm Tr}_{\m{AA'}}[e^{\m{AA'}}_0\omega^{\m{ABA'B'}}]= \frac{1}{2}\frac{1}{4}\left[\mathds{1}+(-1)^{x}\sigma_y^B\otimes\sigma_y^{B'}\right],
    \end{align}
    while the conditional (subnormalized) state associated to $a=1$ is
        \begin{align}
            \label{eq:Assemblage1NTL}
        \sigma^{\m{BB'}}_{a=1}:={\rm Tr}_{\m{AA'}}[e^{\m{AA'}}_1\omega^{\m{ABA'B'}}]= \frac{1}{2}\frac{1}{4}\left[\mathds{1}-(-1)^{x}\sigma_y^B\otimes\sigma_y^{B'}\right].
    \end{align}
    Thus, if Bob knows that Alice got outcome $a=0$, Bob knows he has state $\omega^{\m{BB'}}_x$ on his side, while if Alice got outcome $a=1$, he has $\omega^{\m{BB'}}_{\neg x}$. Hence, when Bob performs locally the measurement $\{e_0^{\m{BB'}},e_1^{\m{BB'}}\}$, he obtains $x$ when $a=0$ and $\neg x$ when $a=1$. It follows that $x=a\oplus b$, as we needed to prove. 
    
    An alternative way to see that $x=a\oplus b$, is by calculating the joint probability of Alice and Bob obtaining outcomes $0$ or $1$ when performing the measurements above, namely
    \begin{align}
        p(00)=p(11) = \frac{1}{4}(1+(-1)^{x})\\
        p(01)=p(10) = \frac{1}{4}(1-(-1)^{x}).
    \end{align}
    Hence, if $x=0$, Alice and Bob have perfectly correlated outcomes, and if $x=1$ they have perfectly anti-correlated outcomes. Notice that, anyone who has access to $a$ but not to $b$ cannot know whether the outcomes are correlated or anti-correlated, and thus cannot know $x$.

In summary, if the two parties share an extra pair of rebits in the state $\omega^{\m{A'B'}}$ (which has only TNL entanglement) and if they can communicate a classical bit, then they can also retrieve the secret bit without the need to bring their rebit systems together physically or via trusted rebit channels. Moreover, someone with access to Alices's bit alone cannot recover $x$.
\end{proof}

This effective implementation of the joint measurement using TNL entanglement and LOCC generalizes to a class of GPT systems, as we show next.
\begin{proposition}
    [Generalization to other TNL GPT systems]
Consider a composite GPT system $\m{AB}$ in which:
\begin{enumerate}
    \item[i.] There exists a pair of states $\{\omega^{\m{AB}}_x\}_{x\in\{0,1\}}\subset\Omega_{\m{AB}}$ that carry TNL entanglement alone; 
    \item[ii.] $e^{\m{A}}\boxtimes e^{\m{B}}(\omega^{\m{AB}}_x)$ is independent of $x$ for all $e^{\m{A}}\in E_{\m{A}}$ and $e^{\m{B}}\in E_{\m{B}}$;
    \item[iii.] There exists a joint measurement $\{e^{\m{AB}}_0,e^{\m{AB}}_1\}\subset E_{\m{AB}}$ that perfectly distinguishes $\omega^{\m{AB}}_0$ and $\omega^{\m{AB}}_{1}$;
    \item[iv.] There exist systems $\m{A'}$ isomorphic to $\m{A}$ and $\m{B'}$ isomorphic to $\m{B}$, and state $\omega^{\m{A'B'}} = \omega_0$, such that \blk \\
    $e^{\m{AA'}}_0\boxtimes \mathds{1}^{\m{BB'}}(\omega^{\m{AB}}_x\boxtimes \omega^{\m{A'B'}})=p(0)\omega^{\m{BB'}}_x$ and $e^{\m{AA'}}_1\boxtimes \mathds{1}^{\m{BB'}}(\omega^{\m{AB}}_x\boxtimes \omega^{\m{A'B'}})=p(1)\omega^{\m{BB'}}_{\neg x}$, i.e.,

    \begin{equation}
\label{eq:LOCC-DecodingwExtraTNLE--GPTs1}
\vcenter{\hbox{%
\begin{tikzpicture}[baseline={(base.center)}, WireThick/.style={line width=0.6pt}]
\begin{pgfonlayer}{nodelayer}
\node[style=none] (base) at (0,0) {};

\node[style=none] (Om0TL)  at (-0.78,0.35) {};
\node[style=none] (Om0TR)  at ( 3.82,0.35) {};
\node[style=none] (Om0B)   at ( 1.47,-1.30) {};
\node[style=none] (Om0Lab) at ( 1.47,-0.52) {$\omega_0$};

\coordinate (cApr) at (0.00,0);
\coordinate (cA)   at (1.20,0);
\coordinate (cB)   at (2.15,0);
\coordinate (cBpr) at (3.10,0);

\node[style=none] (OmxTL)  at (0.95,1.80) {};
\node[style=none] (OmxTR)  at (2.40,1.80) {};
\node[style=none] (OmxB)   at (1.675,0.70) {};
\node[style=none] (OmxLab) at (1.675,1.34) {$\omega_x$};

\node[style=none] (EBL)  at (-0.60,3.25) {};
\node[style=none] (EBR)  at ( 1.80,3.25) {};
\node[style=none] (ET)   at ( 0.60,4.35) {};
\node[style=none] (ELab) at ( 0.60,3.80) {$e_0$};

\def\DW{0.10}

\def\yOm0Top{0.35}
\def\yOmxTop{1.80}
\def\yEBase{3.25}
\def\yTopWires{4.10}

\node[style=none] (AprL0) at (-\DW,\yOm0Top) {};
\node[style=none] (AprR0) at ( \DW,\yOm0Top) {};
\node[style=none] (AprL1) at (-\DW,\yEBase) {};
\node[style=none] (AprR1) at ( \DW,\yEBase) {};
\node[style=none] (AprLab) at (-0.55,1.65) {$\scriptstyle \m{A}'$};

\node[style=none] (AL0) at (1.20-\DW,\yOmxTop) {};
\node[style=none] (AR0) at (1.20+\DW,\yOmxTop) {};
\node[style=none] (AL1) at (1.20-\DW,\yEBase) {};
\node[style=none] (AR1) at (1.20+\DW,\yEBase) {};
\node[style=none] (ALab) at (0.91,2.47) {$\scriptstyle \m{A}$};

\node[style=none] (BL0) at (2.15-\DW,\yOmxTop) {};
\node[style=none] (BR0) at (2.15+\DW,\yOmxTop) {};
\node[style=none] (BL1) at (2.15-\DW,\yTopWires) {};
\node[style=none] (BR1) at (2.15+\DW,\yTopWires) {};
\node[style=none] (BLab) at (2.42,2.47) {$\scriptstyle \m{B}$};

\node[style=none] (BprL0) at (3.10-\DW,\yOm0Top) {};
\node[style=none] (BprR0) at (3.10+\DW,\yOm0Top) {};
\node[style=none] (BprL1) at (3.10-\DW,\yTopWires) {};
\node[style=none] (BprR1) at (3.10+\DW,\yTopWires) {};
\node[style=none] (BprLab) at (3.10+0.45,2.55) {$\scriptstyle \m{B}'$};

\node[style=none] (Eq) at (4.90,2.10) {$=$};
\node[style=none] (p0) at (6.00,2.10) {$p(0)$};

\node[style=none] (ROmTL)  at (7.10,1.80) {};
\node[style=none] (ROmTR)  at (9.10,1.80) {};
\node[style=none] (ROmB)   at (8.10,0.70) {};
\node[style=none] (ROmLab) at (8.10,1.43) {$\omega_x$};

\def\xRB{7.55}
\def\xRBp{8.65}

\node[style=none] (RB_L0) at (\xRB-\DW,1.80) {};
\node[style=none] (RB_R0) at (\xRB+\DW,1.80) {};
\node[style=none] (RB_L1) at (\xRB-\DW,4.10) {};
\node[style=none] (RB_R1) at (\xRB+\DW,4.10) {};
\node[style=none] (RB_Lab) at (\xRB-0.35,2.55) {$\scriptstyle \m{B}$};

\node[style=none] (RBp_L0) at (\xRBp-\DW,1.80) {};
\node[style=none] (RBp_R0) at (\xRBp+\DW,1.80) {};
\node[style=none] (RBp_L1) at (\xRBp-\DW,4.10) {};
\node[style=none] (RBp_R1) at (\xRBp+\DW,4.10) {};
\node[style=none] (RBp_Lab) at (\xRBp+0.35,2.55) {$\scriptstyle \m{B}'$};

\end{pgfonlayer}

\begin{pgfonlayer}{edgelayer}

\draw (Om0TL.center) -- (Om0TR.center) -- (Om0B.center) -- cycle;
\draw (OmxTL.center) -- (OmxTR.center) -- (OmxB.center) -- cycle;
\draw (EBL.center) -- (EBR.center) -- (ET.center) -- cycle;

\draw[WireThick] (AprL0.center) -- (AprL1.center);
\draw[WireThick] (AprR0.center) -- (AprR1.center);

\draw[WireThick] (AL0.center) -- (AL1.center);
\draw[WireThick] (AR0.center) -- (AR1.center);

\draw[WireThick] (BL0.center) -- (BL1.center);
\draw[WireThick] (BR0.center) -- (BR1.center);

\draw[WireThick] (BprL0.center) -- (BprL1.center);
\draw[WireThick] (BprR0.center) -- (BprR1.center);

\draw (ROmTL.center) -- (ROmTR.center) -- (ROmB.center) -- cycle;

\draw[WireThick] (RB_L0.center) -- (RB_L1.center);
\draw[WireThick] (RB_R0.center) -- (RB_R1.center);

\draw[WireThick] (RBp_L0.center) -- (RBp_L1.center);
\draw[WireThick] (RBp_R0.center) -- (RBp_R1.center);

\end{pgfonlayer}
\end{tikzpicture}%
}}
\end{equation}
and
\begin{equation}
\label{eq:LOCC-DecodingwExtraTNLE--GPTs2}
\vcenter{\hbox{%
\begin{tikzpicture}[baseline={(base.center)}, WireThick/.style={line width=0.6pt}]
\begin{pgfonlayer}{nodelayer}
\node[style=none] (base) at (0,0) {};

\node[style=none] (Om0TL)  at (-0.78,0.35) {};
\node[style=none] (Om0TR)  at ( 3.82,0.35) {};
\node[style=none] (Om0B)   at ( 1.47,-1.30) {};
\node[style=none] (Om0Lab) at ( 1.47,-0.52) {$\omega_0$};

\coordinate (cApr) at (0.00,0);
\coordinate (cA)   at (1.20,0);
\coordinate (cB)   at (2.15,0);
\coordinate (cBpr) at (3.10,0);

\node[style=none] (OmxTL)  at (0.95,1.80) {};
\node[style=none] (OmxTR)  at (2.40,1.80) {};
\node[style=none] (OmxB)   at (1.675,0.70) {};
\node[style=none] (OmxLab) at (1.675,1.34) {$\omega_x$};

\node[style=none] (EBL)  at (-0.60,3.25) {};
\node[style=none] (EBR)  at ( 1.80,3.25) {};
\node[style=none] (ET)   at ( 0.60,4.35) {};
\node[style=none] (ELab) at ( 0.60,3.80) {$e_1$};

\def\DW{0.10}

\def\yOm0Top{0.35}
\def\yOmxTop{1.80}
\def\yEBase{3.25}
\def\yTopWires{4.10}

\node[style=none] (AprL0) at (-\DW,\yOm0Top) {};
\node[style=none] (AprR0) at ( \DW,\yOm0Top) {};
\node[style=none] (AprL1) at (-\DW,\yEBase) {};
\node[style=none] (AprR1) at ( \DW,\yEBase) {};
\node[style=none] (AprLab) at (-0.55,1.65) {$\scriptstyle \m{A}'$};

\node[style=none] (AL0) at (1.20-\DW,\yOmxTop) {};
\node[style=none] (AR0) at (1.20+\DW,\yOmxTop) {};
\node[style=none] (AL1) at (1.20-\DW,\yEBase) {};
\node[style=none] (AR1) at (1.20+\DW,\yEBase) {};
\node[style=none] (ALab) at (0.91,2.47) {$\scriptstyle \m{A}$};

\node[style=none] (BL0) at (2.15-\DW,\yOmxTop) {};
\node[style=none] (BR0) at (2.15+\DW,\yOmxTop) {};
\node[style=none] (BL1) at (2.15-\DW,\yTopWires) {};
\node[style=none] (BR1) at (2.15+\DW,\yTopWires) {};
\node[style=none] (BLab) at (2.42,2.47) {$\scriptstyle \m{B}$};

\node[style=none] (BprL0) at (3.10-\DW,\yOm0Top) {};
\node[style=none] (BprR0) at (3.10+\DW,\yOm0Top) {};
\node[style=none] (BprL1) at (3.10-\DW,\yTopWires) {};
\node[style=none] (BprR1) at (3.10+\DW,\yTopWires) {};
\node[style=none] (BprLab) at (3.10+0.45,2.55) {$\scriptstyle \m{B}'$};

\node[style=none] (Eq) at (4.90,2.10) {$=$};
\node[style=none] (p0) at (6.00,2.10) {$p(1)$};

\node[style=none] (ROmTL)  at (7.10,1.80) {};
\node[style=none] (ROmTR)  at (9.10,1.80) {};
\node[style=none] (ROmB)   at (8.10,0.70) {};
\node[style=none] (ROmLab) at (8.10,1.43) {$\omega_{\neg x}$};

\def\xRB{7.55}
\def\xRBp{8.65}

\node[style=none] (RB_L0) at (\xRB-\DW,1.80) {};
\node[style=none] (RB_R0) at (\xRB+\DW,1.80) {};
\node[style=none] (RB_L1) at (\xRB-\DW,4.10) {};
\node[style=none] (RB_R1) at (\xRB+\DW,4.10) {};
\node[style=none] (RB_Lab) at (\xRB-0.35,2.55) {$\scriptstyle \m{B}$};

\node[style=none] (RBp_L0) at (\xRBp-\DW,1.80) {};
\node[style=none] (RBp_R0) at (\xRBp+\DW,1.80) {};
\node[style=none] (RBp_L1) at (\xRBp-\DW,4.10) {};
\node[style=none] (RBp_R1) at (\xRBp+\DW,4.10) {};
\node[style=none] (RBp_Lab) at (\xRBp+0.35,2.55) {$\scriptstyle \m{B}'$};

\end{pgfonlayer}

\begin{pgfonlayer}{edgelayer}

\draw (Om0TL.center) -- (Om0TR.center) -- (Om0B.center) -- cycle;
\draw (OmxTL.center) -- (OmxTR.center) -- (OmxB.center) -- cycle;
\draw (EBL.center) -- (EBR.center) -- (ET.center) -- cycle;

\draw[WireThick] (AprL0.center) -- (AprL1.center);
\draw[WireThick] (AprR0.center) -- (AprR1.center);

\draw[WireThick] (AL0.center) -- (AL1.center);
\draw[WireThick] (AR0.center) -- (AR1.center);

\draw[WireThick] (BL0.center) -- (BL1.center);
\draw[WireThick] (BR0.center) -- (BR1.center);

\draw[WireThick] (BprL0.center) -- (BprL1.center);
\draw[WireThick] (BprR0.center) -- (BprR1.center);

\draw (ROmTL.center) -- (ROmTR.center) -- (ROmB.center) -- cycle;

\draw[WireThick] (RB_L0.center) -- (RB_L1.center);
\draw[WireThick] (RB_R0.center) -- (RB_R1.center);

\draw[WireThick] (RBp_L0.center) -- (RBp_L1.center);
\draw[WireThick] (RBp_R0.center) -- (RBp_R1.center);

\end{pgfonlayer}
\end{tikzpicture}%
}}\,.
\end{equation}
\end{enumerate}
Then, the GPT allows for secure secret sharing. 

Notice that the first three conditions ensure that data hiding is possible with TNL alone (recall Prop.~\ref{prop.:ConditionsPerfectDataHiding}). In turn, condition (iv) ensures that one can remotely implement the required joint measurement for securely reading the hidden data.\footnote{Roughly, this fourth condition is akin to demanding that a particular kind of teleportation is possible within the theory, or that a particular kind of entanglement swapping is possible within the theory.}
\end{proposition}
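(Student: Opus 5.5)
The plan is to mirror the rebit protocol of the previous proposition, replacing every explicit Pauli computation by an appeal to conditions (i)–(iv). The protocol I will analyze has four steps.

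\emph{Distribution.} Alice and Bob share $\omega^{\m{AB}}_x$ together with the auxiliary pair $\omega^{\m{A'B'}}=\omega_0$.

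\emph{Alice's measurement.} Alice performs the joint measurement $\{e^{\m{AA'}}_0,e^{\m{AA'}}_1\}$ of condition (iii) on her two subsystems. This measurement is transported to $\m{AA'}$ via the isomorphisms $\m{A'}\cong\m{A}$, $\m{B'}\cong\m{B}$. She then publicly announces her outcome $a$.

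\emph{Bob's measurement.} Bob performs the same measurement on $\m{BB'}$, obtaining $b$.

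\emph{Decoding.} Bob outputs $a\oplus b$.

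Secure secret sharing then requires two things: \emph{correctness}, meaning Bob recovers $x$ with certainty, and \emph{security}, meaning the classical transcript $a$, and more generally any single subsystem or any LOCC measurement on the original pair, carries no information about $x$.

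For correctness I will use condition (iv) directly. Conditioned on $a=0$, Bob's (normalized) state on $\m{BB'}$ is $\omega^{\m{BB'}}_x$. Conditioned on $a=1$, it is $\omega^{\m{BB'}}_{\neg x}$. By condition (iii), Bob's measurement then yields $b=x$ or $b=\neg x$ respectively with probability one, so $a\oplus b=x$. Bob can apply the perfectly distinguishing measurement to $\m{BB'}$ because $\m{BB'}\cong\m{AB}$ as composite systems; I will make this isomorphism explicit so that $e^{\m{BB'}}_i$ is a valid effect.

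For security against an eavesdropper who sees only $a$, I need $p(a)$ to be independent of $x$. Applying the unit effect $u^{\m{BB'}}$ to both sides of the two equations in (iv) gives $p(0)+p(1)=1$. The key observation is that $p(a)$ in (iv) is written without $x$-dependence. Even if that reading is contested, I will argue it independently. By condition (ii), $\omega_0$ and $\omega_1$ have identical statistics on product effects, so Lemma~\ref{lemma:AtensorB+Holistic} and Prop.~\ref{prop.:PerfectDataHidingRequiresTnLE} give $\omega_x=v_{\rm TL}+\tilde h_x$ with $v_{\rm TL}$ independent of $x$. Moreover, $p(a)$ is the value of $e^{\m{AA'}}_a\boxtimes u^{\m{BB'}}$ on $\omega_x\boxtimes\omega_0$. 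Marginalizing over Bob's side, the $u^{\m{B}}$ effect kills $\tilde h_x$, because $u^{\m{B}}$ paired with any $\m{A}$ effect is a product effect. Hence Alice's reduced state on $\m{A}$ is independent of $x$, and so is $p(a)$. Security of the stored bit against local and LOCC attacks on $\m{AB}$ alone follows from (i)–(iii) via Prop.~\ref{prop.:ConditionsPerfectDataHiding}.

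The main obstacle is the marginalization step: showing that "Alice's reduced state of $\omega_x\boxtimes\omega_0$ on $\m{AA'}$ depends only on the reduced state of $\omega_x$ on $\m{A}$." This requires that composing with $u^{\m{B}}\boxtimes u^{\m{B'}}$ factor through the partial-trace/steering maps of Def.~\ref{def: CompositionRequirements} on the four-partite system. I will handle it with the closure-under-steering condition applied twice: first trace out $\m{B'}$ to obtain the fixed state of $\m{A'}$, then trace out $\m{B}$, using that $u^{\m{B}}$ annihilates $\tilde h_x$. If associativity of $\boxtimes$ is not available in this generality, I will fall back on taking the $x$-independence of $p(a)$ as part of the content of condition (iv).
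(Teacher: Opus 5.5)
Your protocol and correctness argument are the paper's: the paper uses (iii) and (iv) to show that $(e_a\boxtimes e_b)(\omega_x\boxtimes\omega_0)$ equals $p(a)$ if $b=a\oplus x$ and $0$ otherwise, so $x=a\oplus b$, which is your conditional-state computation written as joint probabilities. The paper does not spell out why $a$ alone reveals nothing; the $x$-independence of $p(0),p(1)$ is implicit in how (iv) is written, which is your primary reading. If you keep your extra marginalization argument, be aware that factoring the $\m{AA'}$ marginal of $\omega_x\boxtimes\omega_0$ into the product of the $\m{A}$ and $\m{A'}$ marginals comes from the interchange law of the implicitly assumed multipartite (monoidal) composition, not from closure under steering, which only guarantees that such marginals are valid states.
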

\begin{proof}
Conditions $i.-iv.$ imply:
\begin{equation}
\vcenter{\hbox{%
\begin{tikzpicture}[baseline={(base.center)}]
\begin{pgfonlayer}{nodelayer}
\node[style=none] (base) at (0,0) {};


\def\DW{0.10}

\def\yOm0Top{0.35}
\def\yEBase{3.25}
\def\yETop{4.35}

\def\yOmxTop{1.80}

\node[style=none] (Om0TL)  at (-0.78,\yOm0Top) {};
\node[style=none] (Om0TR)  at ( 4.85,\yOm0Top) {};
\node[style=none] (Om0B)   at ( 2.035,-1.60) {};
\node[style=none] (Om0Lab) at ( 2.035,-0.70) {$\omega_0$};

\node[style=none] (OmxTL)  at (0.78,\yOmxTop) {};
\node[style=none] (OmxTR)  at (3.43,\yOmxTop) {};
\node[style=none] (OmxB)   at (2.10,0.70) {};
\node[style=none] (OmxLab) at (2.10,1.30) {$\omega_x$};


\node[style=none] (E1BL)  at (-0.60,\yEBase) {};
\node[style=none] (E1BR)  at ( 1.80,\yEBase) {};
\node[style=none] (E1T)   at ( 0.60,\yETop)  {};
\node[style=none] (E1Lab) at ( 0.60,3.80) {$e_0$};

\node[style=none] (ExBL)  at ( 2.26,\yEBase) {};
\node[style=none] (ExBR)  at ( 4.66,\yEBase) {};
\node[style=none] (ExT)   at ( 3.46,\yETop)  {};
\node[style=none] (ExLab) at ( 3.46,3.80) {$e_x$};

\def\xApr{0.00}
\def\xA{1.20}
\def\xB{3.00}
\def\xBpr{4.20}


\node[style=none] (AprL0) at (\xApr-\DW,\yOm0Top) {};
\node[style=none] (AprR0) at (\xApr+\DW,\yOm0Top) {};
\node[style=none] (AprL1) at (\xApr-\DW,\yEBase)  {};
\node[style=none] (AprR1) at (\xApr+\DW,\yEBase)  {};
\node[style=none] (AprLab) at (\xApr-0.55,1.65) {$\scriptstyle \m{A}'$};

\node[style=none] (AL0) at (\xA-\DW,\yOmxTop) {};
\node[style=none] (AR0) at (\xA+\DW,\yOmxTop) {};
\node[style=none] (AL1) at (\xA-\DW,\yEBase)  {};
\node[style=none] (AR1) at (\xA+\DW,\yEBase)  {};
\node[style=none] (ALab) at (\xA-0.35,2.45) {$\scriptstyle \m{A}$};

\node[style=none] (BL0) at (\xB-\DW,\yOmxTop) {};
\node[style=none] (BR0) at (\xB+\DW,\yOmxTop) {};
\node[style=none] (BL1) at (\xB-\DW,\yEBase)  {};
\node[style=none] (BR1) at (\xB+\DW,\yEBase)  {};
\node[style=none] (BLab) at (\xB+0.35,2.45) {$\scriptstyle \m{B}$};

\node[style=none] (BprL0) at (\xBpr-\DW,\yOm0Top) {};
\node[style=none] (BprR0) at (\xBpr+\DW,\yOm0Top) {};
\node[style=none] (BprL1) at (\xBpr-\DW,\yEBase)  {};
\node[style=none] (BprR1) at (\xBpr+\DW,\yEBase)  {};
\node[style=none] (BprLab) at (\xBpr+0.45,1.65) {$\scriptstyle \m{B}'$};

\end{pgfonlayer}

\begin{pgfonlayer}{edgelayer}

\draw (Om0TL.center) -- (Om0TR.center) -- (Om0B.center) -- cycle;
\draw (OmxTL.center) -- (OmxTR.center) -- (OmxB.center) -- cycle;

\draw (E1BL.center) -- (E1BR.center) -- (E1T.center) -- cycle;
\draw (ExBL.center) -- (ExBR.center) -- (ExT.center) -- cycle;

\draw[line width=0.6pt] (AprL0.center) -- (AprL1.center);
\draw[line width=0.6pt] (AprR0.center) -- (AprR1.center);

\draw[line width=0.6pt] (AL0.center) -- (AL1.center);
\draw[line width=0.6pt] (AR0.center) -- (AR1.center);

\draw[line width=0.6pt] (BL0.center) -- (BL1.center);
\draw[line width=0.6pt] (BR0.center) -- (BR1.center);

\draw[line width=0.6pt] (BprL0.center) -- (BprL1.center);
\draw[line width=0.6pt] (BprR0.center) -- (BprR1.center);

\end{pgfonlayer}
\end{tikzpicture}%
}} = p(0)
\end{equation}
and
\begin{equation}
\vcenter{\hbox{%
\begin{tikzpicture}[baseline={(base.center)}]
\begin{pgfonlayer}{nodelayer}
\node[style=none] (base) at (0,0) {};


\def\DW{0.10}

\def\yOm0Top{0.35}
\def\yEBase{3.25}
\def\yETop{4.35}

\def\yOmxTop{1.80}

\node[style=none] (Om0TL)  at (-0.78,\yOm0Top) {};
\node[style=none] (Om0TR)  at ( 4.85,\yOm0Top) {};
\node[style=none] (Om0B)   at ( 2.035,-1.60) {};
\node[style=none] (Om0Lab) at ( 2.035,-0.70) {$\omega_0$};

\node[style=none] (OmxTL)  at (0.78,\yOmxTop) {};
\node[style=none] (OmxTR)  at (3.43,\yOmxTop) {};
\node[style=none] (OmxB)   at (2.10,0.70) {};
\node[style=none] (OmxLab) at (2.10,1.30) {$\omega_x$};


\node[style=none] (E1BL)  at (-0.60,\yEBase) {};
\node[style=none] (E1BR)  at ( 1.80,\yEBase) {};
\node[style=none] (E1T)   at ( 0.60,\yETop)  {};
\node[style=none] (E1Lab) at ( 0.60,3.80) {$e_1$};

\node[style=none] (ExBL)  at ( 2.26,\yEBase) {};
\node[style=none] (ExBR)  at ( 4.66,\yEBase) {};
\node[style=none] (ExT)   at ( 3.46,\yETop)  {};
\node[style=none] (ExLab) at ( 3.46,3.80) {$e_x$};

\def\xApr{0.00}
\def\xA{1.20}
\def\xB{3.00}
\def\xBpr{4.20}


\node[style=none] (AprL0) at (\xApr-\DW,\yOm0Top) {};
\node[style=none] (AprR0) at (\xApr+\DW,\yOm0Top) {};
\node[style=none] (AprL1) at (\xApr-\DW,\yEBase)  {};
\node[style=none] (AprR1) at (\xApr+\DW,\yEBase)  {};
\node[style=none] (AprLab) at (\xApr-0.55,1.65) {$\scriptstyle \m{A}'$};

\node[style=none] (AL0) at (\xA-\DW,\yOmxTop) {};
\node[style=none] (AR0) at (\xA+\DW,\yOmxTop) {};
\node[style=none] (AL1) at (\xA-\DW,\yEBase)  {};
\node[style=none] (AR1) at (\xA+\DW,\yEBase)  {};
\node[style=none] (ALab) at (\xA-0.35,2.45) {$\scriptstyle \m{A}$};

\node[style=none] (BL0) at (\xB-\DW,\yOmxTop) {};
\node[style=none] (BR0) at (\xB+\DW,\yOmxTop) {};
\node[style=none] (BL1) at (\xB-\DW,\yEBase)  {};
\node[style=none] (BR1) at (\xB+\DW,\yEBase)  {};
\node[style=none] (BLab) at (\xB+0.35,2.45) {$\scriptstyle \m{B}$};

\node[style=none] (BprL0) at (\xBpr-\DW,\yOm0Top) {};
\node[style=none] (BprR0) at (\xBpr+\DW,\yOm0Top) {};
\node[style=none] (BprL1) at (\xBpr-\DW,\yEBase)  {};
\node[style=none] (BprR1) at (\xBpr+\DW,\yEBase)  {};
\node[style=none] (BprLab) at (\xBpr+0.45,1.65) {$\scriptstyle \m{B}'$};

\end{pgfonlayer}

\begin{pgfonlayer}{edgelayer}

\draw (Om0TL.center) -- (Om0TR.center) -- (Om0B.center) -- cycle;
\draw (OmxTL.center) -- (OmxTR.center) -- (OmxB.center) -- cycle;

\draw (E1BL.center) -- (E1BR.center) -- (E1T.center) -- cycle;
\draw (ExBL.center) -- (ExBR.center) -- (ExT.center) -- cycle;

\draw[line width=0.6pt] (AprL0.center) -- (AprL1.center);
\draw[line width=0.6pt] (AprR0.center) -- (AprR1.center);

\draw[line width=0.6pt] (AL0.center) -- (AL1.center);
\draw[line width=0.6pt] (AR0.center) -- (AR1.center);

\draw[line width=0.6pt] (BL0.center) -- (BL1.center);
\draw[line width=0.6pt] (BR0.center) -- (BR1.center);

\draw[line width=0.6pt] (BprL0.center) -- (BprL1.center);
\draw[line width=0.6pt] (BprR0.center) -- (BprR1.center);

\end{pgfonlayer}
\end{tikzpicture}%
}} = 0.
\end{equation}
Furthermore, they imply:
\begin{equation}
\vcenter{\hbox{%
\begin{tikzpicture}[baseline={(base.center)}]
\begin{pgfonlayer}{nodelayer}
\node[style=none] (base) at (0,0) {};


\def\DW{0.10}

\def\yOm0Top{0.35}
\def\yEBase{3.25}
\def\yETop{4.35}

\def\yOmxTop{1.80}

\node[style=none] (Om0TL)  at (-0.78,\yOm0Top) {};
\node[style=none] (Om0TR)  at ( 4.85,\yOm0Top) {};
\node[style=none] (Om0B)   at ( 2.035,-1.60) {};
\node[style=none] (Om0Lab) at ( 2.035,-0.70) {$\omega_0$};

\node[style=none] (OmxTL)  at (0.78,\yOmxTop) {};
\node[style=none] (OmxTR)  at (3.43,\yOmxTop) {};
\node[style=none] (OmxB)   at (2.10,0.70) {};
\node[style=none] (OmxLab) at (2.10,1.30) {$\omega_x$};


\node[style=none] (E1BL)  at (-0.60,\yEBase) {};
\node[style=none] (E1BR)  at ( 1.80,\yEBase) {};
\node[style=none] (E1T)   at ( 0.60,\yETop)  {};
\node[style=none] (E1Lab) at ( 0.60,3.80) {$e_0$};

\node[style=none] (ExBL)  at ( 2.26,\yEBase) {};
\node[style=none] (ExBR)  at ( 4.66,\yEBase) {};
\node[style=none] (ExT)   at ( 3.46,\yETop)  {};
\node[style=none] (ExLab) at ( 3.54,3.80) {$e_{\neg x}$};

\def\xApr{0.00}
\def\xA{1.20}
\def\xB{3.00}
\def\xBpr{4.20}


\node[style=none] (AprL0) at (\xApr-\DW,\yOm0Top) {};
\node[style=none] (AprR0) at (\xApr+\DW,\yOm0Top) {};
\node[style=none] (AprL1) at (\xApr-\DW,\yEBase)  {};
\node[style=none] (AprR1) at (\xApr+\DW,\yEBase)  {};
\node[style=none] (AprLab) at (\xApr-0.55,1.65) {$\scriptstyle \m{A}'$};

\node[style=none] (AL0) at (\xA-\DW,\yOmxTop) {};
\node[style=none] (AR0) at (\xA+\DW,\yOmxTop) {};
\node[style=none] (AL1) at (\xA-\DW,\yEBase)  {};
\node[style=none] (AR1) at (\xA+\DW,\yEBase)  {};
\node[style=none] (ALab) at (\xA-0.35,2.45) {$\scriptstyle \m{A}$};

\node[style=none] (BL0) at (\xB-\DW,\yOmxTop) {};
\node[style=none] (BR0) at (\xB+\DW,\yOmxTop) {};
\node[style=none] (BL1) at (\xB-\DW,\yEBase)  {};
\node[style=none] (BR1) at (\xB+\DW,\yEBase)  {};
\node[style=none] (BLab) at (\xB+0.35,2.45) {$\scriptstyle \m{B}$};

\node[style=none] (BprL0) at (\xBpr-\DW,\yOm0Top) {};
\node[style=none] (BprR0) at (\xBpr+\DW,\yOm0Top) {};
\node[style=none] (BprL1) at (\xBpr-\DW,\yEBase)  {};
\node[style=none] (BprR1) at (\xBpr+\DW,\yEBase)  {};
\node[style=none] (BprLab) at (\xBpr+0.45,1.65) {$\scriptstyle \m{B}'$};

\end{pgfonlayer}

\begin{pgfonlayer}{edgelayer}

\draw (Om0TL.center) -- (Om0TR.center) -- (Om0B.center) -- cycle;
\draw (OmxTL.center) -- (OmxTR.center) -- (OmxB.center) -- cycle;

\draw (E1BL.center) -- (E1BR.center) -- (E1T.center) -- cycle;
\draw (ExBL.center) -- (ExBR.center) -- (ExT.center) -- cycle;

\draw[line width=0.6pt] (AprL0.center) -- (AprL1.center);
\draw[line width=0.6pt] (AprR0.center) -- (AprR1.center);

\draw[line width=0.6pt] (AL0.center) -- (AL1.center);
\draw[line width=0.6pt] (AR0.center) -- (AR1.center);

\draw[line width=0.6pt] (BL0.center) -- (BL1.center);
\draw[line width=0.6pt] (BR0.center) -- (BR1.center);

\draw[line width=0.6pt] (BprL0.center) -- (BprL1.center);
\draw[line width=0.6pt] (BprR0.center) -- (BprR1.center);

\end{pgfonlayer}
\end{tikzpicture}%
}} = 0
\end{equation}
and
\begin{equation}
\vcenter{\hbox{%
\begin{tikzpicture}[baseline={(base.center)}]
\begin{pgfonlayer}{nodelayer}
\node[style=none] (base) at (0,0) {};


\def\DW{0.10}

\def\yOm0Top{0.35}
\def\yEBase{3.25}
\def\yETop{4.35}

\def\yOmxTop{1.80}

\node[style=none] (Om0TL)  at (-0.78,\yOm0Top) {};
\node[style=none] (Om0TR)  at ( 4.85,\yOm0Top) {};
\node[style=none] (Om0B)   at ( 2.035,-1.60) {};
\node[style=none] (Om0Lab) at ( 2.035,-0.70) {$\omega_0$};

\node[style=none] (OmxTL)  at (0.78,\yOmxTop) {};
\node[style=none] (OmxTR)  at (3.43,\yOmxTop) {};
\node[style=none] (OmxB)   at (2.10,0.70) {};
\node[style=none] (OmxLab) at (2.10,1.30) {$\omega_x$};


\node[style=none] (E1BL)  at (-0.60,\yEBase) {};
\node[style=none] (E1BR)  at ( 1.80,\yEBase) {};
\node[style=none] (E1T)   at ( 0.60,\yETop)  {};
\node[style=none] (E1Lab) at ( 0.60,3.80) {$e_1$};

\node[style=none] (ExBL)  at ( 2.26,\yEBase) {};
\node[style=none] (ExBR)  at ( 4.66,\yEBase) {};
\node[style=none] (ExT)   at ( 3.46,\yETop)  {};
\node[style=none] (ExLab) at ( 3.54,3.80) {$e_{\neg x}$};

\def\xApr{0.00}
\def\xA{1.20}
\def\xB{3.00}
\def\xBpr{4.20}


\node[style=none] (AprL0) at (\xApr-\DW,\yOm0Top) {};
\node[style=none] (AprR0) at (\xApr+\DW,\yOm0Top) {};
\node[style=none] (AprL1) at (\xApr-\DW,\yEBase)  {};
\node[style=none] (AprR1) at (\xApr+\DW,\yEBase)  {};
\node[style=none] (AprLab) at (\xApr-0.55,1.65) {$\scriptstyle \m{A}'$};

\node[style=none] (AL0) at (\xA-\DW,\yOmxTop) {};
\node[style=none] (AR0) at (\xA+\DW,\yOmxTop) {};
\node[style=none] (AL1) at (\xA-\DW,\yEBase)  {};
\node[style=none] (AR1) at (\xA+\DW,\yEBase)  {};
\node[style=none] (ALab) at (\xA-0.35,2.45) {$\scriptstyle \m{A}$};

\node[style=none] (BL0) at (\xB-\DW,\yOmxTop) {};
\node[style=none] (BR0) at (\xB+\DW,\yOmxTop) {};
\node[style=none] (BL1) at (\xB-\DW,\yEBase)  {};
\node[style=none] (BR1) at (\xB+\DW,\yEBase)  {};
\node[style=none] (BLab) at (\xB+0.35,2.45) {$\scriptstyle \m{B}$};

\node[style=none] (BprL0) at (\xBpr-\DW,\yOm0Top) {};
\node[style=none] (BprR0) at (\xBpr+\DW,\yOm0Top) {};
\node[style=none] (BprL1) at (\xBpr-\DW,\yEBase)  {};
\node[style=none] (BprR1) at (\xBpr+\DW,\yEBase)  {};
\node[style=none] (BprLab) at (\xBpr+0.45,1.65) {$\scriptstyle \m{B}'$};

\end{pgfonlayer}

\begin{pgfonlayer}{edgelayer}

\draw (Om0TL.center) -- (Om0TR.center) -- (Om0B.center) -- cycle;
\draw (OmxTL.center) -- (OmxTR.center) -- (OmxB.center) -- cycle;

\draw (E1BL.center) -- (E1BR.center) -- (E1T.center) -- cycle;
\draw (ExBL.center) -- (ExBR.center) -- (ExT.center) -- cycle;

\draw[line width=0.6pt] (AprL0.center) -- (AprL1.center);
\draw[line width=0.6pt] (AprR0.center) -- (AprR1.center);

\draw[line width=0.6pt] (AL0.center) -- (AL1.center);
\draw[line width=0.6pt] (AR0.center) -- (AR1.center);

\draw[line width=0.6pt] (BL0.center) -- (BL1.center);
\draw[line width=0.6pt] (BR0.center) -- (BR1.center);

\draw[line width=0.6pt] (BprL0.center) -- (BprL1.center);
\draw[line width=0.6pt] (BprR0.center) -- (BprR1.center);

\end{pgfonlayer}
\end{tikzpicture}%
}} = p(1).
\end{equation}
Therefore, if $x=0$, then $p(00)+p(11)=1$ while if $x=1$, $p(01)+p(10)=1$. This means that, if Alice gets output $a$ for her measurement and Bob has output $b$ for his measurement, $x=a\oplus b$. Thus if either party knows both $a$ and $b$, they can decode $x$.
\end{proof}

A natural question is when a GPT system carries features $i.-iv.$, mainly condition $iv.$. 

\begin{conjecture}
\label{conjectureEmbeddingTLConditioni} TNL-systems of the same type that can be embedded into TL systems and obey conditions $(i)$ to $(iii)$ above, also obey condition $(iv)$. 
\end{conjecture}

The intuition behind this conjecture is that a well-behaved embedding requires that state spaces of systems $\m{C}$ and $\m{D}$ (from the TL theory) have extra dimensions when compared to  $\m{A}$ and $\m{B}$,  so that their product can accommodate the holistic degrees of freedom of $\m{AB}$ -- similarly to what happens in the two-rebits case, in which the $\sigma_y$ plays the role of the additional component. We then expect that $H_S$ will be made of products of such extra dimensions (again, as it happens to $\sigma_y\otimes\sigma_y$ in the two rebits case). If this holds true, the embedding of $\omega^{\m{AB}}_x\boxtimes \omega^{\m{A'B'}}_0$ into $CDC'D' $ will have products of all four extra components, which can be assessed by local measurements in the same way that $\sigma_y^A\otimes\sigma_y^{A'}\otimes\sigma_y^B\otimes\sigma_y^{B'}$ can be assessed by Alice with a measurement on her $\m{AA'}$ two-rebits subsystems. If this holds true, one can prove that many TNL-theories allow for the protocol described in this section.

\section{Future directions}

Our work introduced the notion of tomographically-nonlocal  entanglement, and begun the study of its structural and operational significance. This opens several avenues for future work.
A recurring theme is that several no-go results traditionally phrased as ``task $A$ is impossible with entanglement'' rely---often implicitly---on the assumption of tomographic locality. As we argued above, tomographic locality may fail even in quantum theory (when considering fermions or fundamental superselection rules). Our framework therefore shows that such results need to be formulated more carefully; for instance, by translating them into statements of the form ``task $A$ is impossible \emph{with TL-entanglement}.'' A critical question, then is to  explore whether these tasks become possible when TNL-entanglement is present.  As just one example of this, perfectly secure data hiding does become possible, unlike in unrestricted quantum theory. What other familiar claims about the power and limitations of quantum information protocols need to be reassessed in light of these considerations? 

These questions strengthen the case for answering the critical question: {\em does} tomographic locality fail in our quantum world, a possibility already raised in Ref.~\cite{centeno2024twirledworldssymmetryinducedfailures}? Or can every superselection rule be lifted -- even, for example, the one which forbids superposition of states having different parity in the number of particles of fermionic systems?

The distinction between TL and TNL entanglement also has consequences within entanglement theory itself. As an illustrative example, the existence of \emph{pure} bound-entangled states in phase-shift--twirled bosonic world---whose entanglement may be entirely tomographically nonlocal---suggests that phenomena ruled out in unrestricted quantum theory may reappear once attention is restricted to the absence of \emph{tomographically-local} entanglement rather than to entanglement \emph{simpliciter}. It would be valuable to understand how general this pattern is, and to identify further qualitative effects that arise specifically from tomographic nonlocality.

A third direction is to analyze embeddings of TNL systems into TL systems~\cite{SchmidSimplex,MullerGarner,schmid2024shadowssubsystemsgeneralizedprobabilistic}. At a minimum, this is useful as a representational strategy, as can be seen in Ref.~\cite{centeno2024twirledworldssymmetryinducedfailures}. (When such embeddings exist, one may be able to work within the higher-dimensional TL model and interpret TNL entanglement as a distinguished component of the embedded structure.)

Several open questions arise in the multipartite setting: how TL versus TNL entanglement depends on the chosen partition, whether there exist genuinely multipartite forms of TL or TNL entanglement, and how these notions relate to known holism constraints (e.g., limited holism in real quantum theory). Technically, the simple relation $\Pi_{\rm TNL}=\mathds{1}-\Pi_{\rm TL}$ may be insufficient to isolate the TNL content relevant to a given partition, suggesting a richer family of TNL projections may be required.

It is also natural to further investigate the relationship between TNL entanglement and monogamy. For pairs of rebits, non-monogamous states have {\em only} TNL entanglement, but it remains to determine whether this extends to larger numbers of systems and to arbitrary GPTs. Moreover, it remains to be shown whether TNL entanglement is always non-monogamous. Progress here may require an explicitly operational formulation of monogamy.

On the resource-theoretic side, it could be interesting to develop resource theories for TL and/or TNL entanglement, potentially based on LOSR or LOCC for the former, and on a suitable analogue such as LOQR (local operations and shared quasiprobabilities) for the latter. This has potential implications not just for the methodology of foil theories, but also for real-world applications in the presence of constraints from symmetry or superselection rules.

Further work is needed to explore cryptographic and operational applications of TNL entanglement. In forthcoming work, we study unconditionally secure bit-commitment using states with TNL entanglement, and we expect to find additional connections to reference frames -- e.g., identifying when TNL entangled states function as useful reference-frame resources in general GPTs. We also anticipate further insights from studying TNL entanglement in twirled and swirled worlds, including from resolving Conjecture~\ref{conjectureEmbeddingTLConditioni} and analyzing the information-processing possibilities within these theories.

Just as we defined TL and TNL entanglement, it is natural to wonder if one could define TL contextuality and TNL contextuality. We expect that doing so would require combining the methodology of this paper with that of Refs.~\cite{zhang2025reassessingboundaryclassicalnonclassical,zhang2025quantifiers}, which introduces definitions of nonclassicality (in the sense of the failure of noncontextuality) for states, transformations, and measurements (as opposed to nonclassicality for full experiments or theories).

Finally, it would be useful to understand if and when the tomographically local projection of a theory defines a valid subtheory~\cite{schmid2024shadowssubsystemsgeneralizedprobabilistic} of the given GPT. We conjecture that for theories carrying only TNL entanglement (such as BCT) this is always the case. However, it does not generally happen, since we saw in the main text that the TL projection can take valid states to vectors outside the state space.  
Relatedly, one can ask when the tomographically local projection of a theory defines a valid GPT (although not necessarily a subtheory of the original theory). Note that a simple sufficient condition for both of these to be the case is that the tomographically local projector is a valid (physical) process within the starting GPT.  There are likely also connections between this question and Ref.~\cite{Barnum_2023}, which defines a notion of a tomographically local shadow of a theory.

\section*{Acknowledgments}
We thank Rob Spekkens, Y\`{i}l\`{e} Ying and Lucien Hardy for useful discussions.
R.D.B acknowledges support by the Digital Horizon Europe project FoQaCiA, Foundations of quantum computational advantage, GA No.~101070558, funded by the European Union, NSERC (Canada), and UKRI (UK).
This work is partially carried out under IRA Programme, project no.~FENG.02.01-IP.05-0006/23, financed by the FENG program 2021-2027, Priority FENG.02, Measure FENG.02.01., with the support of the FNP. M.E.~was supported by the National Science Centre, Poland (Opus project,
Categorical Foundations of the Non-Classicality of Nature, project no.~2021/41/B/ST2/03149), and conducted part of this research under the support of the National Science Centre, Poland, Grant Sonata 16 no.~2020/39/D/ST2/01234.
J.H.S.~was funded by the European Commission by the QuantERA project ResourceQ under the grant agreement UMO2023/05/Y/ST2/00143.
A.B.S.~and J.H.S.~conducted part of this research while visiting the Okinawa Institute of Science and Technology (OIST) through the Theoretical Sciences Visiting Program (TSVP). D.S. and R.D.B were supported by the Perimeter Institute for Theoretical Physics. Research at Perimeter Institute is supported
in part by the Government of Canada through the
Department of Innovation, Science and Economic
Development and by the Province of Ontario
through the Ministry of Colleges and Universities.
The diagrams were prepared using TikZit.

\bibliography{bib}

\begin{thebibliography}{68}%
\makeatletter
\providecommand \@ifxundefined [1]{%
 \@ifx{#1\undefined}
}%
\providecommand \@ifnum [1]{%
 \ifnum #1\expandafter \@firstoftwo
 \else \expandafter \@secondoftwo
 \fi
}%
\providecommand \@ifx [1]{%
 \ifx #1\expandafter \@firstoftwo
 \else \expandafter \@secondoftwo
 \fi
}%
\providecommand \natexlab [1]{#1}%
\providecommand \enquote  [1]{``#1''}%
\providecommand \bibnamefont  [1]{#1}%
\providecommand \bibfnamefont [1]{#1}%
\providecommand \citenamefont [1]{#1}%
\providecommand \href@noop [0]{\@secondoftwo}%
\providecommand \href [0]{\begingroup \@sanitize@url \@href}%
\providecommand \@href[1]{\@@startlink{#1}\@@href}%
\providecommand \@@href[1]{\endgroup#1\@@endlink}%
\providecommand \@sanitize@url [0]{\catcode `\\12\catcode `\$12\catcode `\&12\catcode `\#12\catcode `\^12\catcode `\_12\catcode `\%12\relax}%
\providecommand \@@startlink[1]{}%
\providecommand \@@endlink[0]{}%
\providecommand \url  [0]{\begingroup\@sanitize@url \@url }%
\providecommand \@url [1]{\endgroup\@href {#1}{\urlprefix }}%
\providecommand \urlprefix  [0]{URL }%
\providecommand \Eprint [0]{\href }%
\providecommand \doibase [0]{https://doi.org/}%
\providecommand \selectlanguage [0]{\@gobble}%
\providecommand \bibinfo  [0]{\@secondoftwo}%
\providecommand \bibfield  [0]{\@secondoftwo}%
\providecommand \translation [1]{[#1]}%
\providecommand \BibitemOpen [0]{}%
\providecommand \bibitemStop [0]{}%
\providecommand \bibitemNoStop [0]{.\EOS\space}%
\providecommand \EOS [0]{\spacefactor3000\relax}%
\providecommand \BibitemShut  [1]{\csname bibitem#1\endcsname}%
\let\auto@bib@innerbib\@empty
\bibitem [{\citenamefont {Barrett}(2006)}]{barrett_informationGPTs_2006}%
  \BibitemOpen
  \bibfield  {author} {\bibinfo {author} {\bibfnamefont {J.}~\bibnamefont {Barrett}},\ }\href@noop {} {\bibinfo {title} {Information processing in generalized probabilistic theories}} (\bibinfo {year} {2006}),\ \Eprint {https://arxiv.org/abs/quant-ph/0508211} {arXiv:quant-ph/0508211 [quant-ph]} \BibitemShut {NoStop}%
\bibitem [{\citenamefont {Müller}(2021)}]{mullerGPTnotes}%
  \BibitemOpen
  \bibfield  {author} {\bibinfo {author} {\bibfnamefont {M.~P.}\ \bibnamefont {Müller}},\ }\bibfield  {title} {\bibinfo {title} {{Probabilistic theories and reconstructions of quantum theory}},\ }\href {https://doi.org/10.21468/SciPostPhysLectNotes.28} {\bibfield  {journal} {\bibinfo  {journal} {SciPost Phys. Lect. Notes}\ ,\ \bibinfo {pages} {28}} (\bibinfo {year} {2021})}\BibitemShut {NoStop}%
\bibitem [{\citenamefont {Plávala}(2023)}]{Plavala_2023_GPTsIntro}%
  \BibitemOpen
  \bibfield  {author} {\bibinfo {author} {\bibfnamefont {M.}~\bibnamefont {Plávala}},\ }\bibfield  {title} {\bibinfo {title} {General probabilistic theories: An introduction},\ }\href {https://doi.org/10.1016/j.physrep.2023.09.001} {\bibfield  {journal} {\bibinfo  {journal} {Physics Reports}\ }\textbf {\bibinfo {volume} {1033}},\ \bibinfo {pages} {1–64} (\bibinfo {year} {2023})}\BibitemShut {NoStop}%
\bibitem [{\citenamefont {Hardy}(2011)}]{hardy2011reformulatingreconstructingquantumtheory}%
  \BibitemOpen
  \bibfield  {author} {\bibinfo {author} {\bibfnamefont {L.}~\bibnamefont {Hardy}},\ }\href {https://arxiv.org/abs/1104.2066} {\bibinfo {title} {Reformulating and reconstructing quantum theory}} (\bibinfo {year} {2011}),\ \Eprint {https://arxiv.org/abs/1104.2066} {arXiv:1104.2066 [quant-ph]} \BibitemShut {NoStop}%
\bibitem [{\citenamefont {Hardy}(2001)}]{hardy2001quantum}%
  \BibitemOpen
  \bibfield  {author} {\bibinfo {author} {\bibfnamefont {L.}~\bibnamefont {Hardy}},\ }\href@noop {} {\bibinfo {title} {Quantum theory from five reasonable axioms}} (\bibinfo {year} {2001}),\ \Eprint {https://arxiv.org/abs/quant-ph/0101012} {arXiv:quant-ph/0101012 [quant-ph]} \BibitemShut {NoStop}%
\bibitem [{\citenamefont {Chiribella}\ \emph {et~al.}(2016)\citenamefont {Chiribella}, \citenamefont {D'Ariano},\ and\ \citenamefont {Perinotti}}]{Chiribella_QuantumFromPrinciples2016}%
  \BibitemOpen
  \bibfield  {author} {\bibinfo {author} {\bibfnamefont {G.}~\bibnamefont {Chiribella}}, \bibinfo {author} {\bibfnamefont {G.~M.}\ \bibnamefont {D'Ariano}},\ and\ \bibinfo {author} {\bibfnamefont {P.}~\bibnamefont {Perinotti}},\ }\href {https://doi.org/10.1007/978-94-017-7303-4} {\emph {\bibinfo {title} {Quantum Theory: Informational Foundations and Foils}}}\ (\bibinfo  {publisher} {Springer Netherlands},\ \bibinfo {year} {2016})\BibitemShut {NoStop}%
\bibitem [{\citenamefont {Mazurek}\ \emph {et~al.}(2021)\citenamefont {Mazurek}, \citenamefont {Pusey}, \citenamefont {Resch},\ and\ \citenamefont {Spekkens}}]{Mazurek_2021}%
  \BibitemOpen
  \bibfield  {author} {\bibinfo {author} {\bibfnamefont {M.~D.}\ \bibnamefont {Mazurek}}, \bibinfo {author} {\bibfnamefont {M.~F.}\ \bibnamefont {Pusey}}, \bibinfo {author} {\bibfnamefont {K.~J.}\ \bibnamefont {Resch}},\ and\ \bibinfo {author} {\bibfnamefont {R.~W.}\ \bibnamefont {Spekkens}},\ }\bibfield  {title} {\bibinfo {title} {Experimentally bounding deviations from quantum theory in the landscape of generalized probabilistic theories},\ }\bibfield  {journal} {\bibinfo  {journal} {PRX Quantum}\ }\textbf {\bibinfo {volume} {2}},\ \href {https://doi.org/10.1103/prxquantum.2.020302} {10.1103/prxquantum.2.020302} (\bibinfo {year} {2021})\BibitemShut {NoStop}%
\bibitem [{\citenamefont {Schmid}\ \emph {et~al.}(2024{\natexlab{a}})\citenamefont {Schmid}, \citenamefont {Selby}, \citenamefont {Pusey},\ and\ \citenamefont {Spekkens}}]{Schmid2024structuretheorem}%
  \BibitemOpen
  \bibfield  {author} {\bibinfo {author} {\bibfnamefont {D.}~\bibnamefont {Schmid}}, \bibinfo {author} {\bibfnamefont {J.~H.}\ \bibnamefont {Selby}}, \bibinfo {author} {\bibfnamefont {M.~F.}\ \bibnamefont {Pusey}},\ and\ \bibinfo {author} {\bibfnamefont {R.~W.}\ \bibnamefont {Spekkens}},\ }\bibfield  {title} {\bibinfo {title} {A structure theorem for generalized-noncontextual ontological models},\ }\href {https://doi.org/10.22331/q-2024-03-14-1283} {\bibfield  {journal} {\bibinfo  {journal} {{Quantum}}\ }\textbf {\bibinfo {volume} {8}},\ \bibinfo {pages} {1283} (\bibinfo {year} {2024}{\natexlab{a}})}\BibitemShut {NoStop}%
\bibitem [{\citenamefont {Müller}\ and\ \citenamefont {Ududec}(2012)}]{Mueller_2012}%
  \BibitemOpen
  \bibfield  {author} {\bibinfo {author} {\bibfnamefont {M.~P.}\ \bibnamefont {Müller}}\ and\ \bibinfo {author} {\bibfnamefont {C.}~\bibnamefont {Ududec}},\ }\bibfield  {title} {\bibinfo {title} {Structure of reversible computation determines the self-duality of quantum theory},\ }\bibfield  {journal} {\bibinfo  {journal} {Physical Review Letters}\ }\textbf {\bibinfo {volume} {108}},\ \href {https://doi.org/10.1103/physrevlett.108.130401} {10.1103/physrevlett.108.130401} (\bibinfo {year} {2012})\BibitemShut {NoStop}%
\bibitem [{\citenamefont {Chiribella}\ \emph {et~al.}(2011)\citenamefont {Chiribella}, \citenamefont {D’Ariano},\ and\ \citenamefont {Perinotti}}]{Chiribella_2011InfoDerivationQT}%
  \BibitemOpen
  \bibfield  {author} {\bibinfo {author} {\bibfnamefont {G.}~\bibnamefont {Chiribella}}, \bibinfo {author} {\bibfnamefont {G.~M.}\ \bibnamefont {D’Ariano}},\ and\ \bibinfo {author} {\bibfnamefont {P.}~\bibnamefont {Perinotti}},\ }\bibfield  {title} {\bibinfo {title} {Informational derivation of quantum theory},\ }\bibfield  {journal} {\bibinfo  {journal} {Physical Review A}\ }\textbf {\bibinfo {volume} {84}},\ \href {https://doi.org/10.1103/physreva.84.012311} {10.1103/physreva.84.012311} (\bibinfo {year} {2011})\BibitemShut {NoStop}%
\bibitem [{\citenamefont {Schmid}(2024)}]{Schmid2024reviewreformulation}%
  \BibitemOpen
  \bibfield  {author} {\bibinfo {author} {\bibfnamefont {D.}~\bibnamefont {Schmid}},\ }\bibfield  {title} {\bibinfo {title} {A review and reformulation of macroscopic realism: resolving its deficiencies using the framework of generalized probabilistic theories},\ }\href {https://doi.org/10.22331/q-2024-01-03-1217} {\bibfield  {journal} {\bibinfo  {journal} {{Quantum}}\ }\textbf {\bibinfo {volume} {8}},\ \bibinfo {pages} {1217} (\bibinfo {year} {2024})}\BibitemShut {NoStop}%
\bibitem [{\citenamefont {Spekkens}(2007)}]{toytheory}%
  \BibitemOpen
  \bibfield  {author} {\bibinfo {author} {\bibfnamefont {R.~W.}\ \bibnamefont {Spekkens}},\ }\bibfield  {title} {\bibinfo {title} {Evidence for the epistemic view of quantum states: A toy theory},\ }\href {https://doi.org/10.1103/PhysRevA.75.032110} {\bibfield  {journal} {\bibinfo  {journal} {Phys. Rev. A}\ }\textbf {\bibinfo {volume} {75}},\ \bibinfo {pages} {032110} (\bibinfo {year} {2007})}\BibitemShut {NoStop}%
\bibitem [{\citenamefont {Gottesman}(1998)}]{gottesman1998heisenbergrepresentationquantumcomputers}%
  \BibitemOpen
  \bibfield  {author} {\bibinfo {author} {\bibfnamefont {D.}~\bibnamefont {Gottesman}},\ }\href {https://arxiv.org/abs/quant-ph/9807006} {\bibinfo {title} {The heisenberg representation of quantum computers}} (\bibinfo {year} {1998}),\ \Eprint {https://arxiv.org/abs/quant-ph/9807006} {arXiv:quant-ph/9807006 [quant-ph]} \BibitemShut {NoStop}%
\bibitem [{\citenamefont {D'Ariano}\ \emph {et~al.}(2014{\natexlab{a}})\citenamefont {D'Ariano}, \citenamefont {Manessi}, \citenamefont {Perinotti},\ and\ \citenamefont {Tosini}}]{Dariano2014feynmanproblem}%
  \BibitemOpen
  \bibfield  {author} {\bibinfo {author} {\bibfnamefont {G.~M.}\ \bibnamefont {D'Ariano}}, \bibinfo {author} {\bibfnamefont {F.}~\bibnamefont {Manessi}}, \bibinfo {author} {\bibfnamefont {P.}~\bibnamefont {Perinotti}},\ and\ \bibinfo {author} {\bibfnamefont {A.}~\bibnamefont {Tosini}},\ }\bibfield  {title} {\bibinfo {title} {{The Feynman problem and fermionic entanglement: Fermionic theory versus qubit theory}},\ }\href {https://doi.org/10.1142/S0217751X14300257} {\bibfield  {journal} {\bibinfo  {journal} {International Journal of Modern Physics A}\ }\textbf {\bibinfo {volume} {29}},\ \bibinfo {pages} {1430025} (\bibinfo {year} {2014}{\natexlab{a}})}\BibitemShut {NoStop}%
\bibitem [{\citenamefont {Wootters}(1990)}]{Wooter_1990}%
  \BibitemOpen
  \bibfield  {author} {\bibinfo {author} {\bibfnamefont {W.~K.}\ \bibnamefont {Wootters}},\ }\bibinfo {title} {{Local Accessibility of Quantum States}},\ in\ \href@noop {} {\emph {\bibinfo {booktitle} {Complexity, Entropy And The Physics Of Information}}},\ \bibinfo {editor} {edited by\ \bibinfo {editor} {\bibfnamefont {W.~H.}\ \bibnamefont {Zurek}}}\ (\bibinfo  {publisher} {CRC Press},\ \bibinfo {address} {Boca Raton},\ \bibinfo {year} {1990})\ pp.\ \bibinfo {pages} {39--46}\BibitemShut {NoStop}%
\bibitem [{\citenamefont {D'Ariano}\ \emph {et~al.}(2014{\natexlab{b}})\citenamefont {D'Ariano}, \citenamefont {Manessi}, \citenamefont {Perinotti},\ and\ \citenamefont {Tosini}}]{darianoFermionic2014}%
  \BibitemOpen
  \bibfield  {author} {\bibinfo {author} {\bibfnamefont {G.~M.}\ \bibnamefont {D'Ariano}}, \bibinfo {author} {\bibfnamefont {F.}~\bibnamefont {Manessi}}, \bibinfo {author} {\bibfnamefont {P.}~\bibnamefont {Perinotti}},\ and\ \bibinfo {author} {\bibfnamefont {A.}~\bibnamefont {Tosini}},\ }\bibfield  {title} {\bibinfo {title} {Fermionic computation is non-local tomographic and violates monogamy of entanglement},\ }\href {https://doi.org/10.1209/0295-5075/107/20009} {\bibfield  {journal} {\bibinfo  {journal} {Europhysics Letters}\ }\textbf {\bibinfo {volume} {107}},\ \bibinfo {pages} {20009} (\bibinfo {year} {2014}{\natexlab{b}})}\BibitemShut {NoStop}%
\bibitem [{\citenamefont {Centeno}\ \emph {et~al.}(2025)\citenamefont {Centeno}, \citenamefont {Erba}, \citenamefont {Galley}, \citenamefont {Schmid}, \citenamefont {Selby}, \citenamefont {Spekkens}, \citenamefont {Soltani}, \citenamefont {Surace}, \citenamefont {Wilce},\ and\ \citenamefont {Y\ifmmode~\bar{\imath}\else \={\i}\fi{}ng}}]{centeno2024twirledworldssymmetryinducedfailures}%
  \BibitemOpen
  \bibfield  {author} {\bibinfo {author} {\bibfnamefont {D.}~\bibnamefont {Centeno}}, \bibinfo {author} {\bibfnamefont {M.}~\bibnamefont {Erba}}, \bibinfo {author} {\bibfnamefont {T.~D.}\ \bibnamefont {Galley}}, \bibinfo {author} {\bibfnamefont {D.}~\bibnamefont {Schmid}}, \bibinfo {author} {\bibfnamefont {J.~H.}\ \bibnamefont {Selby}}, \bibinfo {author} {\bibfnamefont {R.~W.}\ \bibnamefont {Spekkens}}, \bibinfo {author} {\bibfnamefont {S.}~\bibnamefont {Soltani}}, \bibinfo {author} {\bibfnamefont {J.}~\bibnamefont {Surace}}, \bibinfo {author} {\bibfnamefont {A.}~\bibnamefont {Wilce}},\ and\ \bibinfo {author} {\bibfnamefont {Y.}~\bibnamefont {Y\ifmmode~\bar{\imath}\else \={\i}\fi{}ng}},\ }\bibfield  {title} {\bibinfo {title} {Symmetry-induced failures of tomographic locality: Constructing foil theories by twirling},\ }\href {https://doi.org/10.1103/hpmv-15sf} {\bibfield  {journal} {\bibinfo  {journal} {Phys. Rev. A}\ }\textbf {\bibinfo {volume} {112}},\ \bibinfo {pages} {L030202} (\bibinfo {year}
  {2025})}\BibitemShut {NoStop}%
\bibitem [{\citenamefont {Yīng}\ \emph {et~al.}(2025)\citenamefont {Yīng}, \citenamefont {Alañón}, \citenamefont {Centeno}, \citenamefont {Surace}, \citenamefont {Ansanelli}, \citenamefont {Liu}, \citenamefont {Schmid},\ and\ \citenamefont {Spekkens}}]{ying2025quantumtheoryneedscomplex}%
  \BibitemOpen
  \bibfield  {author} {\bibinfo {author} {\bibfnamefont {Y.}~\bibnamefont {Yīng}}, \bibinfo {author} {\bibfnamefont {M.~C.}\ \bibnamefont {Alañón}}, \bibinfo {author} {\bibfnamefont {D.}~\bibnamefont {Centeno}}, \bibinfo {author} {\bibfnamefont {J.}~\bibnamefont {Surace}}, \bibinfo {author} {\bibfnamefont {M.~M.}\ \bibnamefont {Ansanelli}}, \bibinfo {author} {\bibfnamefont {R.}~\bibnamefont {Liu}}, \bibinfo {author} {\bibfnamefont {D.}~\bibnamefont {Schmid}},\ and\ \bibinfo {author} {\bibfnamefont {R.~W.}\ \bibnamefont {Spekkens}},\ }\href {https://arxiv.org/abs/2506.08091} {\bibinfo {title} {On whether quantum theory needs complex numbers: the foil theories perspective}} (\bibinfo {year} {2025}),\ \Eprint {https://arxiv.org/abs/2506.08091} {arXiv:2506.08091 [quant-ph]} \BibitemShut {NoStop}%
\bibitem [{\citenamefont {D'Ariano}\ \emph {et~al.}(2020{\natexlab{a}})\citenamefont {D'Ariano}, \citenamefont {Erba},\ and\ \citenamefont {Perinotti}}]{PhysRevA.101.042118}%
  \BibitemOpen
  \bibfield  {author} {\bibinfo {author} {\bibfnamefont {G.~M.}\ \bibnamefont {D'Ariano}}, \bibinfo {author} {\bibfnamefont {M.}~\bibnamefont {Erba}},\ and\ \bibinfo {author} {\bibfnamefont {P.}~\bibnamefont {Perinotti}},\ }\bibfield  {title} {\bibinfo {title} {Classical theories with entanglement},\ }\href {https://doi.org/10.1103/PhysRevA.101.042118} {\bibfield  {journal} {\bibinfo  {journal} {Phys. Rev. A}\ }\textbf {\bibinfo {volume} {101}},\ \bibinfo {pages} {042118} (\bibinfo {year} {2020}{\natexlab{a}})}\BibitemShut {NoStop}%
\bibitem [{\citenamefont {D'Ariano}\ \emph {et~al.}(2020{\natexlab{b}})\citenamefont {D'Ariano}, \citenamefont {Erba},\ and\ \citenamefont {Perinotti}}]{d2020classicality}%
  \BibitemOpen
  \bibfield  {author} {\bibinfo {author} {\bibfnamefont {G.~M.}\ \bibnamefont {D'Ariano}}, \bibinfo {author} {\bibfnamefont {M.}~\bibnamefont {Erba}},\ and\ \bibinfo {author} {\bibfnamefont {P.}~\bibnamefont {Perinotti}},\ }\bibfield  {title} {\bibinfo {title} {Classicality without local discriminability: Decoupling entanglement and complementarity},\ }\href {https://doi.org/10.1103/physreva.102.052216} {\bibfield  {journal} {\bibinfo  {journal} {Physical Review A}\ }\textbf {\bibinfo {volume} {102}},\ \bibinfo {pages} {052216} (\bibinfo {year} {2020}{\natexlab{b}})}\BibitemShut {NoStop}%
\bibitem [{\citenamefont {Scandolo}(2019)}]{scandolo2019information}%
  \BibitemOpen
  \bibfield  {author} {\bibinfo {author} {\bibfnamefont {C.~M.}\ \bibnamefont {Scandolo}},\ }\href {https://arxiv.org/abs/1901.08054} {\bibinfo {title} {Information-theoretic foundations of thermodynamics in general probabilistic theories}} (\bibinfo {year} {2019}),\ \Eprint {https://arxiv.org/abs/1901.08054} {arXiv:1901.08054 [quant-ph]} \BibitemShut {NoStop}%
\bibitem [{\citenamefont {Chiribella}\ \emph {et~al.}(2024)\citenamefont {Chiribella}, \citenamefont {Giannelli},\ and\ \citenamefont {Scandolo}}]{Chiribella2024}%
  \BibitemOpen
  \bibfield  {author} {\bibinfo {author} {\bibfnamefont {G.}~\bibnamefont {Chiribella}}, \bibinfo {author} {\bibfnamefont {L.}~\bibnamefont {Giannelli}},\ and\ \bibinfo {author} {\bibfnamefont {C.~M.}\ \bibnamefont {Scandolo}},\ }\bibfield  {title} {\bibinfo {title} {Bell nonlocality in classical systems coexisting with other system types},\ }\href {https://doi.org/10.1103/PhysRevLett.132.190201} {\bibfield  {journal} {\bibinfo  {journal} {Phys. Rev. Lett.}\ }\textbf {\bibinfo {volume} {132}},\ \bibinfo {pages} {190201} (\bibinfo {year} {2024})}\BibitemShut {NoStop}%
\bibitem [{\citenamefont {Rolino}\ \emph {et~al.}(2025)\citenamefont {Rolino}, \citenamefont {Erba}, \citenamefont {Tosini},\ and\ \citenamefont {Perinotti}}]{Rolino_MOPTs_2025}%
  \BibitemOpen
  \bibfield  {author} {\bibinfo {author} {\bibfnamefont {D.}~\bibnamefont {Rolino}}, \bibinfo {author} {\bibfnamefont {M.}~\bibnamefont {Erba}}, \bibinfo {author} {\bibfnamefont {A.}~\bibnamefont {Tosini}},\ and\ \bibinfo {author} {\bibfnamefont {P.}~\bibnamefont {Perinotti}},\ }\bibfield  {title} {\bibinfo {title} {Minimal operational theories: classical theories with quantum features},\ }\href {https://doi.org/10.1088/1367-2630/ada850} {\bibfield  {journal} {\bibinfo  {journal} {New Journal of Physics}\ }\textbf {\bibinfo {volume} {27}},\ \bibinfo {pages} {023004} (\bibinfo {year} {2025})}\BibitemShut {NoStop}%
\bibitem [{\citenamefont {Soltani}\ \emph {et~al.}(2025)\citenamefont {Soltani}, \citenamefont {Erba}, \citenamefont {Schmid},\ and\ \citenamefont {Selby}}]{soltani2025decouplinglocalclassicalityclassical}%
  \BibitemOpen
  \bibfield  {author} {\bibinfo {author} {\bibfnamefont {S.}~\bibnamefont {Soltani}}, \bibinfo {author} {\bibfnamefont {M.}~\bibnamefont {Erba}}, \bibinfo {author} {\bibfnamefont {D.}~\bibnamefont {Schmid}},\ and\ \bibinfo {author} {\bibfnamefont {J.~H.}\ \bibnamefont {Selby}},\ }\href {https://arxiv.org/abs/2511.19266} {\bibinfo {title} {Decoupling local classicality from classical explainability: A noncontextual model for bilocal classical theory and a locally-classical but contextual theory}} (\bibinfo {year} {2025}),\ \Eprint {https://arxiv.org/abs/2511.19266} {arXiv:2511.19266 [quant-ph]} \BibitemShut {NoStop}%
\bibitem [{\citenamefont {Erba}\ and\ \citenamefont {Perinotti}(2025)}]{erba2025compositionrulequantumsystems}%
  \BibitemOpen
  \bibfield  {author} {\bibinfo {author} {\bibfnamefont {M.}~\bibnamefont {Erba}}\ and\ \bibinfo {author} {\bibfnamefont {P.}~\bibnamefont {Perinotti}},\ }\href {https://arxiv.org/abs/2411.15964} {\bibinfo {title} {The composition rule for quantum systems is not the only possible one}} (\bibinfo {year} {2025}),\ \Eprint {https://arxiv.org/abs/2411.15964} {arXiv:2411.15964 [quant-ph]} \BibitemShut {NoStop}%
\bibitem [{\citenamefont {Barnum}\ \emph {et~al.}(2020{\natexlab{a}})\citenamefont {Barnum}, \citenamefont {Graydon},\ and\ \citenamefont {Wilce}}]{barnum2020composites}%
  \BibitemOpen
  \bibfield  {author} {\bibinfo {author} {\bibfnamefont {H.}~\bibnamefont {Barnum}}, \bibinfo {author} {\bibfnamefont {M.~A.}\ \bibnamefont {Graydon}},\ and\ \bibinfo {author} {\bibfnamefont {A.}~\bibnamefont {Wilce}},\ }\bibfield  {title} {\bibinfo {title} {Composites and categories of {Euclidean Jordan} algebras},\ }\href {https://doi.org/10.22331/q-2020-11-08-359} {\bibfield  {journal} {\bibinfo  {journal} {Quantum}\ }\textbf {\bibinfo {volume} {4}},\ \bibinfo {pages} {359} (\bibinfo {year} {2020}{\natexlab{a}})}\BibitemShut {NoStop}%
\bibitem [{\citenamefont {Wick}\ \emph {et~al.}(1952)\citenamefont {Wick}, \citenamefont {Wightman},\ and\ \citenamefont {Wigner}}]{Wick_SymmetrisFundamental}%
  \BibitemOpen
  \bibfield  {author} {\bibinfo {author} {\bibfnamefont {G.~C.}\ \bibnamefont {Wick}}, \bibinfo {author} {\bibfnamefont {A.~S.}\ \bibnamefont {Wightman}},\ and\ \bibinfo {author} {\bibfnamefont {E.~P.}\ \bibnamefont {Wigner}},\ }\bibfield  {title} {\bibinfo {title} {The intrinsic parity of elementary particles},\ }\href {https://doi.org/10.1103/PhysRev.88.101} {\bibfield  {journal} {\bibinfo  {journal} {Phys. Rev.}\ }\textbf {\bibinfo {volume} {88}},\ \bibinfo {pages} {101} (\bibinfo {year} {1952})}\BibitemShut {NoStop}%
\bibitem [{\citenamefont {Giulini}(2009)}]{giulini2009superselectionrulesFundamental}%
  \BibitemOpen
  \bibfield  {author} {\bibinfo {author} {\bibfnamefont {D.}~\bibnamefont {Giulini}},\ }\href {https://arxiv.org/abs/0710.1516} {\bibinfo {title} {Superselection rules}} (\bibinfo {year} {2009}),\ \Eprint {https://arxiv.org/abs/0710.1516} {arXiv:0710.1516 [quant-ph]} \BibitemShut {NoStop}%
\bibitem [{\citenamefont {Piani}\ \emph {et~al.}(2008)\citenamefont {Piani}, \citenamefont {Horodecki},\ and\ \citenamefont {Horodecki}}]{PianiNoLocalBroadcasting_2008}%
  \BibitemOpen
  \bibfield  {author} {\bibinfo {author} {\bibfnamefont {M.}~\bibnamefont {Piani}}, \bibinfo {author} {\bibfnamefont {P.}~\bibnamefont {Horodecki}},\ and\ \bibinfo {author} {\bibfnamefont {R.}~\bibnamefont {Horodecki}},\ }\bibfield  {title} {\bibinfo {title} {No-local-broadcasting theorem for multipartite quantum correlations},\ }\href {https://doi.org/10.1103/PhysRevLett.100.090502} {\bibfield  {journal} {\bibinfo  {journal} {Phys. Rev. Lett.}\ }\textbf {\bibinfo {volume} {100}},\ \bibinfo {pages} {090502} (\bibinfo {year} {2008})}\BibitemShut {NoStop}%
\bibitem [{\citenamefont {Weilenmann}\ \emph {et~al.}(2025)\citenamefont {Weilenmann}, \citenamefont {Gisin},\ and\ \citenamefont {Sekatski}}]{Weilenmann_2025}%
  \BibitemOpen
  \bibfield  {author} {\bibinfo {author} {\bibfnamefont {M.}~\bibnamefont {Weilenmann}}, \bibinfo {author} {\bibfnamefont {N.}~\bibnamefont {Gisin}},\ and\ \bibinfo {author} {\bibfnamefont {P.}~\bibnamefont {Sekatski}},\ }\bibfield  {title} {\bibinfo {title} {Partial independence suffices to rule out real quantum theory experimentally},\ }\bibfield  {journal} {\bibinfo  {journal} {Physical Review Letters}\ }\textbf {\bibinfo {volume} {135}},\ \href {https://doi.org/10.1103/3fv7-p8cs} {10.1103/3fv7-p8cs} (\bibinfo {year} {2025})\BibitemShut {NoStop}%
\bibitem [{\citenamefont {Bell}(1964)}]{Bell1964}%
  \BibitemOpen
  \bibfield  {author} {\bibinfo {author} {\bibfnamefont {J.~S.}\ \bibnamefont {Bell}},\ }\bibfield  {title} {\bibinfo {title} {On the einstein podolsky rosen paradox},\ }\href {https://doi.org/10.1103/PhysicsPhysiqueFizika.1.195} {\bibfield  {journal} {\bibinfo  {journal} {Physics Physique Fizika}\ }\textbf {\bibinfo {volume} {1}},\ \bibinfo {pages} {195} (\bibinfo {year} {1964})}\BibitemShut {NoStop}%
\bibitem [{\citenamefont {Brunner}\ \emph {et~al.}(2014)\citenamefont {Brunner}, \citenamefont {Cavalcanti}, \citenamefont {Pironio}, \citenamefont {Scarani},\ and\ \citenamefont {Wehner}}]{Bellreview2014}%
  \BibitemOpen
  \bibfield  {author} {\bibinfo {author} {\bibfnamefont {N.}~\bibnamefont {Brunner}}, \bibinfo {author} {\bibfnamefont {D.}~\bibnamefont {Cavalcanti}}, \bibinfo {author} {\bibfnamefont {S.}~\bibnamefont {Pironio}}, \bibinfo {author} {\bibfnamefont {V.}~\bibnamefont {Scarani}},\ and\ \bibinfo {author} {\bibfnamefont {S.}~\bibnamefont {Wehner}},\ }\bibfield  {title} {\bibinfo {title} {Bell nonlocality},\ }\href {https://doi.org/10.1103/RevModPhys.86.419} {\bibfield  {journal} {\bibinfo  {journal} {Rev. Mod. Phys.}\ }\textbf {\bibinfo {volume} {86}},\ \bibinfo {pages} {419} (\bibinfo {year} {2014})}\BibitemShut {NoStop}%
\bibitem [{\citenamefont {Wiseman}\ \emph {et~al.}(2007)\citenamefont {Wiseman}, \citenamefont {Jones},\ and\ \citenamefont {Doherty}}]{steeringwiseman}%
  \BibitemOpen
  \bibfield  {author} {\bibinfo {author} {\bibfnamefont {H.~M.}\ \bibnamefont {Wiseman}}, \bibinfo {author} {\bibfnamefont {S.~J.}\ \bibnamefont {Jones}},\ and\ \bibinfo {author} {\bibfnamefont {A.~C.}\ \bibnamefont {Doherty}},\ }\bibfield  {title} {\bibinfo {title} {Steering, entanglement, nonlocality, and the einstein-podolsky-rosen paradox},\ }\href {https://doi.org/10.1103/PhysRevLett.98.140402} {\bibfield  {journal} {\bibinfo  {journal} {Phys. Rev. Lett.}\ }\textbf {\bibinfo {volume} {98}},\ \bibinfo {pages} {140402} (\bibinfo {year} {2007})}\BibitemShut {NoStop}%
\bibitem [{\citenamefont {Cavalcanti}\ \emph {et~al.}(2017)\citenamefont {Cavalcanti}, \citenamefont {Skrzypczyk},\ and\ \citenamefont {\ifmmode \check{S}\else \v{S}\fi{}upi\ifmmode~\acute{c}\else \'{c}\fi{}}}]{PhysRevLett.119.110501}%
  \BibitemOpen
  \bibfield  {author} {\bibinfo {author} {\bibfnamefont {D.}~\bibnamefont {Cavalcanti}}, \bibinfo {author} {\bibfnamefont {P.}~\bibnamefont {Skrzypczyk}},\ and\ \bibinfo {author} {\bibfnamefont {I.}~\bibnamefont {\ifmmode \check{S}\else \v{S}\fi{}upi\ifmmode~\acute{c}\else \'{c}\fi{}}},\ }\bibfield  {title} {\bibinfo {title} {All entangled states can demonstrate nonclassical teleportation},\ }\href {https://doi.org/10.1103/PhysRevLett.119.110501} {\bibfield  {journal} {\bibinfo  {journal} {Phys. Rev. Lett.}\ }\textbf {\bibinfo {volume} {119}},\ \bibinfo {pages} {110501} (\bibinfo {year} {2017})}\BibitemShut {NoStop}%
\bibitem [{\citenamefont {Wootters}(2010)}]{Wootters_2010}%
  \BibitemOpen
  \bibfield  {author} {\bibinfo {author} {\bibfnamefont {W.~K.}\ \bibnamefont {Wootters}},\ }\bibfield  {title} {\bibinfo {title} {Entanglement sharing in real-vector-space quantum theory},\ }\href {https://doi.org/10.1007/s10701-010-9488-1} {\bibfield  {journal} {\bibinfo  {journal} {Foundations of Physics}\ }\textbf {\bibinfo {volume} {42}},\ \bibinfo {pages} {19–28} (\bibinfo {year} {2010})}\BibitemShut {NoStop}%
\bibitem [{\citenamefont {Selby}\ \emph {et~al.}(2023)\citenamefont {Selby}, \citenamefont {Schmid}, \citenamefont {Wolfe}, \citenamefont {Sainz}, \citenamefont {Kunjwal},\ and\ \citenamefont {Spekkens}}]{PhysRevA.107.062203}%
  \BibitemOpen
  \bibfield  {author} {\bibinfo {author} {\bibfnamefont {J.~H.}\ \bibnamefont {Selby}}, \bibinfo {author} {\bibfnamefont {D.}~\bibnamefont {Schmid}}, \bibinfo {author} {\bibfnamefont {E.}~\bibnamefont {Wolfe}}, \bibinfo {author} {\bibfnamefont {A.~B.}\ \bibnamefont {Sainz}}, \bibinfo {author} {\bibfnamefont {R.}~\bibnamefont {Kunjwal}},\ and\ \bibinfo {author} {\bibfnamefont {R.~W.}\ \bibnamefont {Spekkens}},\ }\bibfield  {title} {\bibinfo {title} {Accessible fragments of generalized probabilistic theories, cone equivalence, and applications to witnessing nonclassicality},\ }\href {https://doi.org/10.1103/PhysRevA.107.062203} {\bibfield  {journal} {\bibinfo  {journal} {Phys. Rev. A}\ }\textbf {\bibinfo {volume} {107}},\ \bibinfo {pages} {062203} (\bibinfo {year} {2023})}\BibitemShut {NoStop}%
\bibitem [{\citenamefont {Lami}(2018)}]{lami2018nonclassicalcorrelationsquantummechanics}%
  \BibitemOpen
  \bibfield  {author} {\bibinfo {author} {\bibfnamefont {L.}~\bibnamefont {Lami}},\ }\href {https://arxiv.org/abs/1803.02902} {\bibinfo {title} {Non-classical correlations in quantum mechanics and beyond}} (\bibinfo {year} {2018}),\ \Eprint {https://arxiv.org/abs/1803.02902} {arXiv:1803.02902 [quant-ph]} \BibitemShut {NoStop}%
\bibitem [{\citenamefont {Erba}\ \emph {et~al.}(2026)\citenamefont {Erba}, \citenamefont {Perinotti},\ and\ \citenamefont {D'Ariano}}]{erba2026categorical}%
  \BibitemOpen
  \bibfield  {author} {\bibinfo {author} {\bibfnamefont {M.}~\bibnamefont {Erba}}, \bibinfo {author} {\bibfnamefont {P.}~\bibnamefont {Perinotti}},\ and\ \bibinfo {author} {\bibfnamefont {G.~M.}\ \bibnamefont {D'Ariano}},\ }\bibfield  {title} {\bibinfo {title} {{Categorical Physics I – Review of the Operational Probabilistic Theories framework}},\ }\href@noop {} {\bibfield  {journal} {\bibinfo  {journal} {\emph{forthcoming}}\ } (\bibinfo {year} {2026})}\BibitemShut {NoStop}%
\bibitem [{\citenamefont {Chiribella}\ \emph {et~al.}(2010)\citenamefont {Chiribella}, \citenamefont {D’Ariano},\ and\ \citenamefont {Perinotti}}]{Chiribella_2010}%
  \BibitemOpen
  \bibfield  {author} {\bibinfo {author} {\bibfnamefont {G.}~\bibnamefont {Chiribella}}, \bibinfo {author} {\bibfnamefont {G.~M.}\ \bibnamefont {D’Ariano}},\ and\ \bibinfo {author} {\bibfnamefont {P.}~\bibnamefont {Perinotti}},\ }\bibfield  {title} {\bibinfo {title} {Probabilistic theories with purification},\ }\bibfield  {journal} {\bibinfo  {journal} {Physical Review A}\ }\textbf {\bibinfo {volume} {81}},\ \href {https://doi.org/10.1103/physreva.81.062348} {10.1103/physreva.81.062348} (\bibinfo {year} {2010})\BibitemShut {NoStop}%
\bibitem [{\citenamefont {Schmid}\ \emph {et~al.}(2024{\natexlab{b}})\citenamefont {Schmid}, \citenamefont {Selby}, \citenamefont {Rossi}, \citenamefont {Baldijão},\ and\ \citenamefont {Sainz}}]{schmid2024shadowssubsystemsgeneralizedprobabilistic}%
  \BibitemOpen
  \bibfield  {author} {\bibinfo {author} {\bibfnamefont {D.}~\bibnamefont {Schmid}}, \bibinfo {author} {\bibfnamefont {J.~H.}\ \bibnamefont {Selby}}, \bibinfo {author} {\bibfnamefont {V.~P.}\ \bibnamefont {Rossi}}, \bibinfo {author} {\bibfnamefont {R.~D.}\ \bibnamefont {Baldijão}},\ and\ \bibinfo {author} {\bibfnamefont {A.~B.}\ \bibnamefont {Sainz}},\ }\href {https://arxiv.org/abs/2409.13024} {\bibinfo {title} {Shadows and subsystems of generalized probabilistic theories: when tomographic incompleteness is not a loophole for contextuality proofs}} (\bibinfo {year} {2024}{\natexlab{b}}),\ \Eprint {https://arxiv.org/abs/2409.13024} {arXiv:2409.13024 [quant-ph]} \BibitemShut {NoStop}%
\bibitem [{\citenamefont {Hardy}\ and\ \citenamefont {Wootters}(2011)}]{Hardy_2011}%
  \BibitemOpen
  \bibfield  {author} {\bibinfo {author} {\bibfnamefont {L.}~\bibnamefont {Hardy}}\ and\ \bibinfo {author} {\bibfnamefont {W.~K.}\ \bibnamefont {Wootters}},\ }\bibfield  {title} {\bibinfo {title} {Limited holism and real-vector-space quantum theory},\ }\href {https://doi.org/10.1007/s10701-011-9616-6} {\bibfield  {journal} {\bibinfo  {journal} {Foundations of Physics}\ }\textbf {\bibinfo {volume} {42}},\ \bibinfo {pages} {454–473} (\bibinfo {year} {2011})}\BibitemShut {NoStop}%
\bibitem [{\citenamefont {Baldijao}\ \emph {et~al.}(2022)\citenamefont {Baldijao}, \citenamefont {Krumm}, \citenamefont {Garner},\ and\ \citenamefont {Mueller}}]{Baldijao_QDarwinisminGPTs_2022}%
  \BibitemOpen
  \bibfield  {author} {\bibinfo {author} {\bibfnamefont {R.~D.}\ \bibnamefont {Baldijao}}, \bibinfo {author} {\bibfnamefont {M.}~\bibnamefont {Krumm}}, \bibinfo {author} {\bibfnamefont {A.~J.~P.}\ \bibnamefont {Garner}},\ and\ \bibinfo {author} {\bibfnamefont {M.~P.}\ \bibnamefont {Mueller}},\ }\bibfield  {title} {\bibinfo {title} {Quantum darwinism and the spreading of classical information in non-classical theories},\ }\href {https://doi.org/10.22331/q-2022-01-31-636} {\bibfield  {journal} {\bibinfo  {journal} {Quantum}\ }\textbf {\bibinfo {volume} {6}},\ \bibinfo {pages} {636} (\bibinfo {year} {2022})}\BibitemShut {NoStop}%
\bibitem [{\citenamefont {Barnum}\ \emph {et~al.}(2020{\natexlab{b}})\citenamefont {Barnum}, \citenamefont {Graydon},\ and\ \citenamefont {Wilce}}]{Barnum_CathegoriesJordanAlgebras_2020}%
  \BibitemOpen
  \bibfield  {author} {\bibinfo {author} {\bibfnamefont {H.}~\bibnamefont {Barnum}}, \bibinfo {author} {\bibfnamefont {M.~A.}\ \bibnamefont {Graydon}},\ and\ \bibinfo {author} {\bibfnamefont {A.}~\bibnamefont {Wilce}},\ }\bibfield  {title} {\bibinfo {title} {Composites and categories of euclidean jordan algebras},\ }\href {https://doi.org/10.22331/q-2020-11-08-359} {\bibfield  {journal} {\bibinfo  {journal} {Quantum}\ }\textbf {\bibinfo {volume} {4}},\ \bibinfo {pages} {359} (\bibinfo {year} {2020}{\natexlab{b}})}\BibitemShut {NoStop}%
\bibitem [{\citenamefont {Sainz}\ \emph {et~al.}(2018)\citenamefont {Sainz}, \citenamefont {Guryanova}, \citenamefont {Acín},\ and\ \citenamefont {Navascués}}]{Sainz_2018}%
  \BibitemOpen
  \bibfield  {author} {\bibinfo {author} {\bibfnamefont {A.~B.}\ \bibnamefont {Sainz}}, \bibinfo {author} {\bibfnamefont {Y.}~\bibnamefont {Guryanova}}, \bibinfo {author} {\bibfnamefont {A.}~\bibnamefont {Acín}},\ and\ \bibinfo {author} {\bibfnamefont {M.}~\bibnamefont {Navascués}},\ }\bibfield  {title} {\bibinfo {title} {{Almost-Quantum Correlations Violate the No-Restriction Hypothesis}},\ }\bibfield  {journal} {\bibinfo  {journal} {Physical Review Letters}\ }\textbf {\bibinfo {volume} {120}},\ \href {https://doi.org/10.1103/physrevlett.120.200402} {10.1103/physrevlett.120.200402} (\bibinfo {year} {2018})\BibitemShut {NoStop}%
\bibitem [{\citenamefont {Müller}\ \emph {et~al.}(2012)\citenamefont {Müller}, \citenamefont {Dahlsten},\ and\ \citenamefont {Vedral}}]{Muller_2012}%
  \BibitemOpen
  \bibfield  {author} {\bibinfo {author} {\bibfnamefont {M.~P.}\ \bibnamefont {Müller}}, \bibinfo {author} {\bibfnamefont {O.~C.~O.}\ \bibnamefont {Dahlsten}},\ and\ \bibinfo {author} {\bibfnamefont {V.}~\bibnamefont {Vedral}},\ }\bibfield  {title} {\bibinfo {title} {Unifying typical entanglement and coin tossing: on randomization in probabilistic theories},\ }\href {https://doi.org/10.1007/s00220-012-1605-x} {\bibfield  {journal} {\bibinfo  {journal} {Communications in Mathematical Physics}\ }\textbf {\bibinfo {volume} {316}},\ \bibinfo {pages} {441–487} (\bibinfo {year} {2012})}\BibitemShut {NoStop}%
\bibitem [{\citenamefont {Hardy}(2009)}]{hardy2009foliableoperationalstructuresgeneral}%
  \BibitemOpen
  \bibfield  {author} {\bibinfo {author} {\bibfnamefont {L.}~\bibnamefont {Hardy}},\ }\href {https://arxiv.org/abs/0912.4740} {\bibinfo {title} {Foliable operational structures for general probabilistic theories}} (\bibinfo {year} {2009}),\ \Eprint {https://arxiv.org/abs/0912.4740} {arXiv:0912.4740 [quant-ph]} \BibitemShut {NoStop}%
\bibitem [{\citenamefont {Wolfe}\ \emph {et~al.}(2020)\citenamefont {Wolfe}, \citenamefont {Schmid}, \citenamefont {Sainz}, \citenamefont {Kunjwal},\ and\ \citenamefont {Spekkens}}]{Wolfe2020quantifyingbell}%
  \BibitemOpen
  \bibfield  {author} {\bibinfo {author} {\bibfnamefont {E.}~\bibnamefont {Wolfe}}, \bibinfo {author} {\bibfnamefont {D.}~\bibnamefont {Schmid}}, \bibinfo {author} {\bibfnamefont {A.~B.}\ \bibnamefont {Sainz}}, \bibinfo {author} {\bibfnamefont {R.}~\bibnamefont {Kunjwal}},\ and\ \bibinfo {author} {\bibfnamefont {R.~W.}\ \bibnamefont {Spekkens}},\ }\bibfield  {title} {\bibinfo {title} {Quantifying {B}ell: the {R}esource {T}heory of {N}onclassicality of {C}ommon-{C}ause {B}oxes},\ }\href {https://doi.org/10.22331/q-2020-06-08-280} {\bibfield  {journal} {\bibinfo  {journal} {{Quantum}}\ }\textbf {\bibinfo {volume} {4}},\ \bibinfo {pages} {280} (\bibinfo {year} {2020})}\BibitemShut {NoStop}%
\bibitem [{\citenamefont {Schmid}\ \emph {et~al.}(2020)\citenamefont {Schmid}, \citenamefont {Rosset},\ and\ \citenamefont {Buscemi}}]{Schmid2020typeindependent}%
  \BibitemOpen
  \bibfield  {author} {\bibinfo {author} {\bibfnamefont {D.}~\bibnamefont {Schmid}}, \bibinfo {author} {\bibfnamefont {D.}~\bibnamefont {Rosset}},\ and\ \bibinfo {author} {\bibfnamefont {F.}~\bibnamefont {Buscemi}},\ }\bibfield  {title} {\bibinfo {title} {The type-independent resource theory of local operations and shared randomness},\ }\href {https://doi.org/10.22331/q-2020-04-30-262} {\bibfield  {journal} {\bibinfo  {journal} {{Quantum}}\ }\textbf {\bibinfo {volume} {4}},\ \bibinfo {pages} {262} (\bibinfo {year} {2020})}\BibitemShut {NoStop}%
\bibitem [{\citenamefont {Schmid}\ \emph {et~al.}(2023)\citenamefont {Schmid}, \citenamefont {Fraser}, \citenamefont {Kunjwal}, \citenamefont {Sainz}, \citenamefont {Wolfe},\ and\ \citenamefont {Spekkens}}]{Schmid2023understanding}%
  \BibitemOpen
  \bibfield  {author} {\bibinfo {author} {\bibfnamefont {D.}~\bibnamefont {Schmid}}, \bibinfo {author} {\bibfnamefont {T.~C.}\ \bibnamefont {Fraser}}, \bibinfo {author} {\bibfnamefont {R.}~\bibnamefont {Kunjwal}}, \bibinfo {author} {\bibfnamefont {A.~B.}\ \bibnamefont {Sainz}}, \bibinfo {author} {\bibfnamefont {E.}~\bibnamefont {Wolfe}},\ and\ \bibinfo {author} {\bibfnamefont {R.~W.}\ \bibnamefont {Spekkens}},\ }\bibfield  {title} {\bibinfo {title} {Understanding the interplay of entanglement and nonlocality: motivating and developing a new branch of entanglement theory},\ }\href {https://doi.org/10.22331/q-2023-12-04-1194} {\bibfield  {journal} {\bibinfo  {journal} {{Quantum}}\ }\textbf {\bibinfo {volume} {7}},\ \bibinfo {pages} {1194} (\bibinfo {year} {2023})}\BibitemShut {NoStop}%
\bibitem [{\citenamefont {Nielsen}\ and\ \citenamefont {Chuang}(2010)}]{nielsen2010quantum}%
  \BibitemOpen
  \bibfield  {author} {\bibinfo {author} {\bibfnamefont {M.~A.}\ \bibnamefont {Nielsen}}\ and\ \bibinfo {author} {\bibfnamefont {I.~L.}\ \bibnamefont {Chuang}},\ }\href@noop {} {\emph {\bibinfo {title} {Quantum computation and quantum information}}}\ (\bibinfo  {publisher} {Cambridge university press},\ \bibinfo {year} {2010})\BibitemShut {NoStop}%
\bibitem [{\citenamefont {Caves}\ \emph {et~al.}(2000)\citenamefont {Caves}, \citenamefont {Fuchs},\ and\ \citenamefont {Rungta}}]{caves2000entanglementformationarbitrarystate}%
  \BibitemOpen
  \bibfield  {author} {\bibinfo {author} {\bibfnamefont {C.~M.}\ \bibnamefont {Caves}}, \bibinfo {author} {\bibfnamefont {C.~A.}\ \bibnamefont {Fuchs}},\ and\ \bibinfo {author} {\bibfnamefont {P.}~\bibnamefont {Rungta}},\ }\href {https://arxiv.org/abs/quant-ph/0009063} {\bibinfo {title} {Entanglement of formation of an arbitrary state of two rebits}} (\bibinfo {year} {2000}),\ \Eprint {https://arxiv.org/abs/quant-ph/0009063} {arXiv:quant-ph/0009063 [quant-ph]} \BibitemShut {NoStop}%
\bibitem [{\citenamefont {Piani}(2016)}]{piani2016localbroadcastingquantumcorrelations}%
  \BibitemOpen
  \bibfield  {author} {\bibinfo {author} {\bibfnamefont {M.}~\bibnamefont {Piani}},\ }\href {https://arxiv.org/abs/1608.02650} {\bibinfo {title} {Local broadcasting of quantum correlations}} (\bibinfo {year} {2016}),\ \Eprint {https://arxiv.org/abs/1608.02650} {arXiv:1608.02650 [quant-ph]} \BibitemShut {NoStop}%
\bibitem [{\citenamefont {Zjawin}\ \emph {et~al.}(2023)\citenamefont {Zjawin}, \citenamefont {Schmid}, \citenamefont {Hoban},\ and\ \citenamefont {Sainz}}]{Zjawin2023quantifyingepr}%
  \BibitemOpen
  \bibfield  {author} {\bibinfo {author} {\bibfnamefont {B.}~\bibnamefont {Zjawin}}, \bibinfo {author} {\bibfnamefont {D.}~\bibnamefont {Schmid}}, \bibinfo {author} {\bibfnamefont {M.~J.}\ \bibnamefont {Hoban}},\ and\ \bibinfo {author} {\bibfnamefont {A.~B.}\ \bibnamefont {Sainz}},\ }\bibfield  {title} {\bibinfo {title} {Quantifying {EPR}: the resource theory of nonclassicality of common-cause assemblages},\ }\href {https://doi.org/10.22331/q-2023-02-16-926} {\bibfield  {journal} {\bibinfo  {journal} {{Quantum}}\ }\textbf {\bibinfo {volume} {7}},\ \bibinfo {pages} {926} (\bibinfo {year} {2023})}\BibitemShut {NoStop}%
\bibitem [{\citenamefont {Bennett}\ and\ \citenamefont {Wiesner}(1992)}]{BennettWiesner1992}%
  \BibitemOpen
  \bibfield  {author} {\bibinfo {author} {\bibfnamefont {C.~H.}\ \bibnamefont {Bennett}}\ and\ \bibinfo {author} {\bibfnamefont {S.~J.}\ \bibnamefont {Wiesner}},\ }\bibfield  {title} {\bibinfo {title} {Communication via one- and two-particle operators on einstein--podolsky--rosen states},\ }\href@noop {} {\bibfield  {journal} {\bibinfo  {journal} {Phys. Rev. Lett.}\ }\textbf {\bibinfo {volume} {69}},\ \bibinfo {pages} {2881} (\bibinfo {year} {1992})}\BibitemShut {NoStop}%
\bibitem [{\citenamefont {Horodecki}\ \emph {et~al.}(2001)\citenamefont {Horodecki}, \citenamefont {Horodecki},\ and\ \citenamefont {Horodecki}}]{HorodeckiDenseCoding2001}%
  \BibitemOpen
  \bibfield  {author} {\bibinfo {author} {\bibfnamefont {M.}~\bibnamefont {Horodecki}}, \bibinfo {author} {\bibfnamefont {P.}~\bibnamefont {Horodecki}},\ and\ \bibinfo {author} {\bibfnamefont {R.}~\bibnamefont {Horodecki}},\ }\bibfield  {title} {\bibinfo {title} {Dense coding, quantum teleportation and entanglement},\ }\href@noop {} {\bibfield  {journal} {\bibinfo  {journal} {Phys. Rev. A}\ }\textbf {\bibinfo {volume} {63}},\ \bibinfo {pages} {022310} (\bibinfo {year} {2001})}\BibitemShut {NoStop}%
\bibitem [{\citenamefont {Nayak}\ and\ \citenamefont {Yuen}(2023)}]{Nayak_2023}%
  \BibitemOpen
  \bibfield  {author} {\bibinfo {author} {\bibfnamefont {A.}~\bibnamefont {Nayak}}\ and\ \bibinfo {author} {\bibfnamefont {H.}~\bibnamefont {Yuen}},\ }\bibfield  {title} {\bibinfo {title} {Rigidity of superdense coding},\ }\href {https://doi.org/10.1145/3593593} {\bibfield  {journal} {\bibinfo  {journal} {ACM Transactions on Quantum Computing}\ }\textbf {\bibinfo {volume} {4}},\ \bibinfo {pages} {1–39} (\bibinfo {year} {2023})}\BibitemShut {NoStop}%
\bibitem [{\citenamefont {Erba}\ \emph {et~al.}(2024)\citenamefont {Erba}, \citenamefont {Perinotti}, \citenamefont {Rolino},\ and\ \citenamefont {Tosini}}]{Rolino2024}%
  \BibitemOpen
  \bibfield  {author} {\bibinfo {author} {\bibfnamefont {M.}~\bibnamefont {Erba}}, \bibinfo {author} {\bibfnamefont {P.}~\bibnamefont {Perinotti}}, \bibinfo {author} {\bibfnamefont {D.}~\bibnamefont {Rolino}},\ and\ \bibinfo {author} {\bibfnamefont {A.}~\bibnamefont {Tosini}},\ }\bibfield  {title} {\bibinfo {title} {Measurement incompatibility is strictly stronger than disturbance},\ }\href {https://doi.org/10.1103/PhysRevA.109.022239} {\bibfield  {journal} {\bibinfo  {journal} {Phys. Rev. A}\ }\textbf {\bibinfo {volume} {109}},\ \bibinfo {pages} {022239} (\bibinfo {year} {2024})}\BibitemShut {NoStop}%
\bibitem [{\citenamefont {DiVincenzo}\ \emph {et~al.}(2003)\citenamefont {DiVincenzo}, \citenamefont {Hayden},\ and\ \citenamefont {Terhal}}]{DiVincenzo_2003_DataHidingPerfectCase}%
  \BibitemOpen
  \bibfield  {author} {\bibinfo {author} {\bibfnamefont {D.~P.}\ \bibnamefont {DiVincenzo}}, \bibinfo {author} {\bibfnamefont {P.}~\bibnamefont {Hayden}},\ and\ \bibinfo {author} {\bibfnamefont {B.~M.}\ \bibnamefont {Terhal}},\ }\bibfield  {title} {\bibinfo {title} {Hiding quantum data},\ }\href {https://doi.org/10.1023/a:1026013201376} {\bibfield  {journal} {\bibinfo  {journal} {Foundations of Physics}\ }\textbf {\bibinfo {volume} {33}},\ \bibinfo {pages} {1629–1647} (\bibinfo {year} {2003})}\BibitemShut {NoStop}%
\bibitem [{\citenamefont {Terhal}\ \emph {et~al.}(2001)\citenamefont {Terhal}, \citenamefont {DiVincenzo},\ and\ \citenamefont {Leung}}]{Terhal_DataHidingFirst}%
  \BibitemOpen
  \bibfield  {author} {\bibinfo {author} {\bibfnamefont {B.~M.}\ \bibnamefont {Terhal}}, \bibinfo {author} {\bibfnamefont {D.~P.}\ \bibnamefont {DiVincenzo}},\ and\ \bibinfo {author} {\bibfnamefont {D.~W.}\ \bibnamefont {Leung}},\ }\bibfield  {title} {\bibinfo {title} {Hiding bits in bell states},\ }\href {https://doi.org/10.1103/PhysRevLett.86.5807} {\bibfield  {journal} {\bibinfo  {journal} {Phys. Rev. Lett.}\ }\textbf {\bibinfo {volume} {86}},\ \bibinfo {pages} {5807} (\bibinfo {year} {2001})}\BibitemShut {NoStop}%
\bibitem [{\citenamefont {DiVincenzo}\ \emph {et~al.}(2002)\citenamefont {DiVincenzo}, \citenamefont {Leung},\ and\ \citenamefont {Terhal}}]{DiVincenzo_2002_DataHidingNoquantumPerfectCase}%
  \BibitemOpen
  \bibfield  {author} {\bibinfo {author} {\bibfnamefont {D.}~\bibnamefont {DiVincenzo}}, \bibinfo {author} {\bibfnamefont {D.}~\bibnamefont {Leung}},\ and\ \bibinfo {author} {\bibfnamefont {B.}~\bibnamefont {Terhal}},\ }\bibfield  {title} {\bibinfo {title} {Quantum data hiding},\ }\href {https://doi.org/10.1109/18.985948} {\bibfield  {journal} {\bibinfo  {journal} {IEEE Transactions on Information Theory}\ }\textbf {\bibinfo {volume} {48}},\ \bibinfo {pages} {580–598} (\bibinfo {year} {2002})}\BibitemShut {NoStop}%
\bibitem [{\citenamefont {Matthews}\ \emph {et~al.}(2009)\citenamefont {Matthews}, \citenamefont {Wehner},\ and\ \citenamefont {Winter}}]{Matthews_2009_DistinguishabilityQStatesUnderRestrictedMeasurements}%
  \BibitemOpen
  \bibfield  {author} {\bibinfo {author} {\bibfnamefont {W.}~\bibnamefont {Matthews}}, \bibinfo {author} {\bibfnamefont {S.}~\bibnamefont {Wehner}},\ and\ \bibinfo {author} {\bibfnamefont {A.}~\bibnamefont {Winter}},\ }\bibfield  {title} {\bibinfo {title} {Distinguishability of quantum states under restricted families of measurements with an application to quantum data hiding},\ }\href {https://doi.org/10.1007/s00220-009-0890-5} {\bibfield  {journal} {\bibinfo  {journal} {Communications in Mathematical Physics}\ }\textbf {\bibinfo {volume} {291}},\ \bibinfo {pages} {813–843} (\bibinfo {year} {2009})}\BibitemShut {NoStop}%
\bibitem [{\citenamefont {Lami}\ \emph {et~al.}(2018)\citenamefont {Lami}, \citenamefont {Palazuelos},\ and\ \citenamefont {Winter}}]{Lami_2018_DataHidingGPTs}%
  \BibitemOpen
  \bibfield  {author} {\bibinfo {author} {\bibfnamefont {L.}~\bibnamefont {Lami}}, \bibinfo {author} {\bibfnamefont {C.}~\bibnamefont {Palazuelos}},\ and\ \bibinfo {author} {\bibfnamefont {A.}~\bibnamefont {Winter}},\ }\bibfield  {title} {\bibinfo {title} {Ultimate data hiding in quantum mechanics and beyond},\ }\href {https://doi.org/10.1007/s00220-018-3154-4} {\bibfield  {journal} {\bibinfo  {journal} {Communications in Mathematical Physics}\ }\textbf {\bibinfo {volume} {361}},\ \bibinfo {pages} {661–708} (\bibinfo {year} {2018})}\BibitemShut {NoStop}%
\bibitem [{\citenamefont {Schmid}\ \emph {et~al.}(2021)\citenamefont {Schmid}, \citenamefont {Selby}, \citenamefont {Wolfe}, \citenamefont {Kunjwal},\ and\ \citenamefont {Spekkens}}]{SchmidSimplex}%
  \BibitemOpen
  \bibfield  {author} {\bibinfo {author} {\bibfnamefont {D.}~\bibnamefont {Schmid}}, \bibinfo {author} {\bibfnamefont {J.~H.}\ \bibnamefont {Selby}}, \bibinfo {author} {\bibfnamefont {E.}~\bibnamefont {Wolfe}}, \bibinfo {author} {\bibfnamefont {R.}~\bibnamefont {Kunjwal}},\ and\ \bibinfo {author} {\bibfnamefont {R.~W.}\ \bibnamefont {Spekkens}},\ }\bibfield  {title} {\bibinfo {title} {Characterization of noncontextuality in the framework of generalized probabilistic theories},\ }\href {https://doi.org/10.1103/PRXQuantum.2.010331} {\bibfield  {journal} {\bibinfo  {journal} {PRX Quantum}\ }\textbf {\bibinfo {volume} {2}},\ \bibinfo {pages} {010331} (\bibinfo {year} {2021})}\BibitemShut {NoStop}%
\bibitem [{\citenamefont {M\"uller}\ and\ \citenamefont {Garner}(2023)}]{MullerGarner}%
  \BibitemOpen
  \bibfield  {author} {\bibinfo {author} {\bibfnamefont {M.~P.}\ \bibnamefont {M\"uller}}\ and\ \bibinfo {author} {\bibfnamefont {A.~J.~P.}\ \bibnamefont {Garner}},\ }\bibfield  {title} {\bibinfo {title} {Testing quantum theory by generalizing noncontextuality},\ }\href {https://doi.org/10.1103/PhysRevX.13.041001} {\bibfield  {journal} {\bibinfo  {journal} {Phys. Rev. X}\ }\textbf {\bibinfo {volume} {13}},\ \bibinfo {pages} {041001} (\bibinfo {year} {2023})}\BibitemShut {NoStop}%
\bibitem [{\citenamefont {Zhang}\ \emph {et~al.}(2025{\natexlab{a}})\citenamefont {Zhang}, \citenamefont {Schmid}, \citenamefont {Yīng},\ and\ \citenamefont {Spekkens}}]{zhang2025reassessingboundaryclassicalnonclassical}%
  \BibitemOpen
  \bibfield  {author} {\bibinfo {author} {\bibfnamefont {Y.}~\bibnamefont {Zhang}}, \bibinfo {author} {\bibfnamefont {D.}~\bibnamefont {Schmid}}, \bibinfo {author} {\bibfnamefont {Y.}~\bibnamefont {Yīng}},\ and\ \bibinfo {author} {\bibfnamefont {R.~W.}\ \bibnamefont {Spekkens}},\ }\href {https://arxiv.org/abs/2503.05884} {\bibinfo {title} {Reassessing the boundary between classical and nonclassical for individual quantum processes}} (\bibinfo {year} {2025}{\natexlab{a}}),\ \Eprint {https://arxiv.org/abs/2503.05884} {arXiv:2503.05884 [quant-ph]} \BibitemShut {NoStop}%
\bibitem [{\citenamefont {Zhang}\ \emph {et~al.}(2025{\natexlab{b}})\citenamefont {Zhang}, \citenamefont {Y{\=\i}ng},\ and\ \citenamefont {Schmid}}]{zhang2025quantifiers}%
  \BibitemOpen
  \bibfield  {author} {\bibinfo {author} {\bibfnamefont {Y.}~\bibnamefont {Zhang}}, \bibinfo {author} {\bibfnamefont {Y.}~\bibnamefont {Y{\=\i}ng}},\ and\ \bibinfo {author} {\bibfnamefont {D.}~\bibnamefont {Schmid}},\ }\bibfield  {title} {\bibinfo {title} {Quantifiers and witnesses for the nonclassicality of measurements and of states},\ }\href@noop {} {\bibfield  {journal} {\bibinfo  {journal} {arXiv preprint arXiv:2504.02944}\ } (\bibinfo {year} {2025}{\natexlab{b}})}\BibitemShut {NoStop}%
\bibitem [{\citenamefont {Barnum}\ \emph {et~al.}(2023)\citenamefont {Barnum}, \citenamefont {Graydon},\ and\ \citenamefont {Wilce}}]{Barnum_2023}%
  \BibitemOpen
  \bibfield  {author} {\bibinfo {author} {\bibfnamefont {H.}~\bibnamefont {Barnum}}, \bibinfo {author} {\bibfnamefont {M.~A.}\ \bibnamefont {Graydon}},\ and\ \bibinfo {author} {\bibfnamefont {A.}~\bibnamefont {Wilce}},\ }\bibfield  {title} {\bibinfo {title} {Locally tomographic shadows (extended abstract)},\ }\href {https://doi.org/10.4204/eptcs.384.3} {\bibfield  {journal} {\bibinfo  {journal} {Electronic Proceedings in Theoretical Computer Science}\ }\textbf {\bibinfo {volume} {384}},\ \bibinfo {pages} {47–57} (\bibinfo {year} {2023})}\BibitemShut {NoStop}%
\bibitem [{\citenamefont {Roman}(2008)}]{Roman2008}%
  \BibitemOpen
  \bibfield  {author} {\bibinfo {author} {\bibfnamefont {S.}~\bibnamefont {Roman}},\ }\href {https://doi.org/10.1007/978-0-387-72831-5} {\emph {\bibinfo {title} {Advanced Linear Algebra}}},\ \bibinfo {edition} {3rd}\ ed.,\ \bibinfo {series} {Graduate Texts in Mathematics}, Vol.\ \bibinfo {volume} {135}\ (\bibinfo  {publisher} {Springer},\ \bibinfo {address} {New York},\ \bibinfo {year} {2008})\BibitemShut {NoStop}%
\end{thebibliography}%

\appendix

\section{Consequences of composition requirements}
\label{compositionproofs}

\subsection{ Proof of Lemma~\texorpdfstring{\ref{lemma:AtBplusC}}{1}}
\label{app: proofOfLemmaAtBplusC}

Here we present a proof of Lemma~\ref{lemma:AtBplusC}, which let us recall it states the following: \\

Lemma~\ref{lemma:AtBplusC}:  in any composition of GPT systems $\mathcal{A}$ and $\mathcal{B}$ that satisfies Def.~\ref{def: CompositionRequirements},  the following isomorphisms hold:
\begin{equation}
\label{eq:TPStates_app}
        \mathrm{Span}{\lbrace\omega^{\m{A}}\boxtimes\nu^{\m{B}}\mid\omega^{\m{A}}\in S_{\mathcal{A}},\nu^{\m{B}}\in S_{\mathcal{B}}\rbrace}
        \cong
        A\otimes B
    \end{equation}
    and 
    \begin{equation}
    \label{eq:TPEffects_app}
        \mathrm{Span}{\lbrace e^{\m{A}}\boxtimes f^{\m{B}}\mid e^{\m{A}}\in E_{\mathcal{A}},f^{\m{B}}\in E_{\mathcal{B}}\rbrace}
        \cong
        A^*\otimes B^*
        ,
    \end{equation}
and the finite real vector space $AB$ has the form
\begin{align}
    AB \cong A\otimes B\oplus C.
    \label{eq:AssumpComposition_App}
\end{align}
A similar decomposition holds for $AB^*$ and the composition is tomographically local iff $C=\{0\}$.

\begin{proof}
    (Adapted from Ref.~\cite{Barnum_CathegoriesJordanAlgebras_2020}.) Some basic facts about (finite dimensional) linear algebra are useful here\footnote{We believe that a similar claim as Lemma~\ref{lemma:AtBplusC} is also valid for GPTs in which $A$ or $B$ are infinite dimensional. Since we, however, focus on the finite dimensional case, we provide the proof for the case of interest here.}. 

\begin{enumerate} 
\item[(I)] The universal property of the tensor product tells us that given any bilinear map $B:U\times V\to W$ we can factor this (uniquely up to unique isomorphism) through the tensor product, i.e., there exists  a linear map $\widetilde{B}:U\otimes V\to W$ such that $B(u,v)=\tilde{B}(u\otimes v)$ for all $u\in U$ and $v\in V$.
\item[(II)] That any subspace $W\subseteq V$ has a complement $\overline{W}$ such that $W\oplus \overline{W}=V$. Hence, for any injective linear map $L:U\to V$ we have that $U\cong\mathsf{Im}(L)\subseteq V$ and, hence, that $V= \mathsf{Im}(L)\oplus \overline{\mathsf{Im}(L)}\cong U\oplus C$ for some $C\cong\overline{\mathsf{Im}(L)}$.
\end{enumerate}

Now, let us apply these to the situation at hand. We note that $\boxtimes:A\times B\to AB$ is a bilinear map, and hence, following (I), we can consider the map $\widetilde{\boxtimes}:A\otimes B\to AB$.
If we assume that this is injective, then, following (II) we have that $AB\cong (A\otimes B)\oplus C$ which is our desired result. Moreover, since  $a\boxtimes b=\widetilde{\boxtimes}(a\otimes b)$ for all $a\in A$ and $b\in B$, we get $\mathsf{Span}\{\omega^{\m{A}}\boxtimes \omega^{\m{B}}\mid \omega^{\m{A}}\in \m{S_A},\omega^{\m{B}}\in \m{S_B}\}=\mathsf{Span}\{\widetilde{\boxtimes}(\omega^{\m{A}}\otimes \omega^{\m{B}})\mid \omega^{\m{A}}\in \m{S_A},\omega^{\m{B}}\in \m{S_B}\}=\mathsf{Im}[\widetilde{\boxtimes}]$. Again, provided $\widetilde{\boxtimes}$ is injective, $\mathsf{Im}[\widetilde{\boxtimes}]\cong A\otimes B$, which implies Expression~\eqref{eq:TPStates_app}.   

All that remains, is to show that $\widetilde{\boxtimes}$ is indeed injective as we assumed above. To begin recall from item 1 of Def.~\ref{def: CompositionRequirements}  that $e^\mathcal{A}\boxtimes e^\mathcal{B}(\omega^\mathcal{A}\boxtimes\omega^\mathcal{B}) = e^\mathcal{A}(\omega^\mathcal{A})e^\mathcal{B}(\omega^\mathcal{B})$. Using the definition of $\widetilde{\boxtimes}$ we can write this as $e^\mathcal{A}\boxtimes e^\mathcal{B}(\widetilde{\boxtimes}(\omega^\mathcal{A}\otimes\omega^\mathcal{B})) = e^\mathcal{A}(\omega^\mathcal{A})e^\mathcal{B}(\omega^\mathcal{B}) $. Similarly, for the standard tensor product we have that $e^\mathcal{A}\otimes e^\mathcal{B}(\omega^\mathcal{A}\otimes\omega^\mathcal{B}) = e^\mathcal{A}(\omega^\mathcal{A})e^\mathcal{B}(\omega^\mathcal{B})$. Putting these together we therefore have that $e^\mathcal{A}\otimes e^\mathcal{B}(\omega^\mathcal{A}\otimes\omega^\mathcal{B})=e^\mathcal{A}(\omega^\mathcal{A})e^\mathcal{B}(\omega^\mathcal{B}) = e^\mathcal{A}\boxtimes^*e^\mathcal{B}(\widetilde{\boxtimes}(\omega^\mathcal{A}\otimes\omega^\mathcal{B}))$.
As this is true for all $\omega^\mathcal{A}$ and $\omega^\mathcal{B}$ and these span $A\otimes B$, it is therefore the case that \begin{equation}\label{eq:Lem1}
e^\mathcal{A}\boxtimes e^\mathcal{B}(\widetilde{\boxtimes}(\_))  =e^\mathcal{A}\otimes e^\mathcal{B}(\_).
\end{equation}

Next, consider some $X$ such that $\widetilde{\boxtimes}(X)=0$. Substituting this $X$ into Eq.~\eqref{eq:Lem1} gives us that
\begin{eqnarray}
    e^\mathcal{A}\otimes e^\mathcal{B}(X)&=&e^\mathcal{A}\boxtimes e^\mathcal{B}(\widetilde{\boxtimes}(X)) \nonumber\\
    &=&0
\end{eqnarray}
for all $e^\mathcal{A}$ and $e^\mathcal{B}$ and as the $e^\mathcal{A}\otimes e^\mathcal{B}$ span $(A\otimes B)^*$ this can only be the case if $X=0$. Hence, we have shown that $\widetilde{\boxtimes}(X)=0 \iff X=0$ which is to say that $\widetilde{\boxtimes}$ is injective as we assumed. A completely analogous argument leads to the analogous decomposition for $AB^*$ and also implies Expression~\eqref{eq:TPEffects_app}.
\end{proof}
\blk

\subsection{Proof of Lemma \texorpdfstring{\ref{lemma:AtensorB+Holistic}}{2}} \label{lemma2proof}

 Here we present a proof of Lemma~\ref{lemma:AtensorB+Holistic}, which let is recall it states the following: \\

Lemma~\ref{lemma:AtensorB+Holistic}: 
The real vector space associated with any composite $AB$ of two GPT systems $\mathcal{A}$ and $\mathcal{B}$ has the form
    \begin{align}
        AB = (AB_{\otimes})\oplus H_S,
    \end{align}
    and
    \begin{align}
        (AB)^* = (AB^*_{\otimes})\oplus H_E.
    \end{align}
\begin{proof}
We begin by recalling some basic facts from linear algebra and fixing notation. Let $V$ be a real vector space and let $U,W\subseteq V$ be subspaces. The sum $U+W$ is defined as
\begin{align}
U+W := \{u+w \mid u\in U,\ w\in W\}.
\end{align}
Thus, $V=U+W$ means that every vector $v\in V$ can be written as $v=u+w$ for some $u\in U$ and $w\in W$.
In general, such a decomposition needs not be unique. By contrast, $V=U\oplus W$ means that
\begin{align}
\label{eq:DirectsumCondtion}
V=U+W
\quad\text{and}\quad
U\cap W=\{0_V\},
\end{align}
where $0_V$ denotes the zero vector of $V$. In the direct sum case, the decomposition $v=u+w$ is unique.

For a subspace $W\subseteq V$, its annihilator, denoted with a $0$ superscript, $W^0\subseteq V^*$ is defined by
\begin{align}
W^0 := \{\varphi\in V^* \mid \varphi(w)=0 \ \text{for all } w\in W\}.
\end{align}
We will use the following standard identities for annihilators~\cite{Roman2008}:
\begin{enumerate}
    \item[(i)] $\{0_V\}^0 = V^*$  (holds in  finite and infinite dimension);
    \item[(ii)] $(W\cap U)^0 = W^0 + U^0$ (holds in finite and infinite dimension, \cite[Thm.~3.14]{Roman2008}):\footnote{
 Note that throughout this work, $V^*$ is the \emph{algebraic dual} of a vector space $V$, i.e., the vector space of all linear functionals on $V$. If one were instead to consider a restricted dual (for instance, the continuous dual associated with a chosen topology), the relation $(W\cap U)^0 = W^0 + U^0$ need not hold in general. If $V$ is finite-dimensional (as for all GPT vector spaces we consider), we can disregard this distinction.}
    \item[(iii)] $(V^0)^0 = V$ (requires $\dim V < \infty$).
\end{enumerate}

We now prove the statement of the lemma. To establish $AB = AB_{\otimes}\oplus H_S,$ it suffices to show that $AB = AB_{\otimes} + H_S$ and $AB_{\otimes}\cap H_S=\{0_{AB}\}$ (recall Eq.~\eqref{eq:DirectsumCondtion}).
We first prove the sum condition, assuming for the moment that the intersection is trivial.

By definition of the holistic subspaces, $H_E = (AB_{\otimes})^0$, and $H_S = \big((AB)^*_{\otimes}\big)^0$. Applying identity (ii) with $W=AB_{\otimes}$ and $U=H_S$, we obtain
\begin{align}
(AB_{\otimes})^0 + H_S^0 = (AB_{\otimes}\cap H_S)^0.
\end{align}
Assuming $AB_{\otimes}\cap H_S=\{0_{AB}\}$, identity (i) yields
\begin{align}
(AB_{\otimes})^0 + H_S^0 = (AB)^*.
\end{align}
Using the identification $(AB_{\otimes})^0=H_E$ and, by finite dimensionality and identity (iii),
\begin{align}
H_S^0 = \big((AB)^*_{\otimes}\big)^{00} = (AB)^*_{\otimes},
\end{align}
we conclude that
\begin{align}
(AB)^* = (AB)^*_{\otimes} + H_E.
\end{align}
An analogous argument, exchanging the roles of $AB$ and $(AB)^*$, yields $AB = AB_{\otimes} + H_S$.

It remains to prove that $AB_{\otimes}\cap H_S=\{0_{AB}\}$. Let $v\in AB_{\otimes}\cap H_S$.
Since $v\in H_S$, by definition it is annihilated by all product effects:
\begin{align}
(e\boxtimes f)(v)=0
\qquad
\forall\, e\in E_A,\ \forall\, f\in E_B.
\end{align}
By Lemma~\ref{lemma:AtBplusC} and the composition requirements (Def.~\ref{def: CompositionRequirements}), we have $AB_{\otimes}\cong A\otimes B$
and $AB_{\otimes}^*=\mathrm{span}\{e\boxtimes f\}\cong A^*\otimes B^*$.
In particular, in finite dimension\footnote{The proof that product effects separate vectors in $AB_{\otimes}$ we provide here requires finite dimensions, which is the case we are interested in this work. The separation property, however,  holds in infinite dimensional systems as well. Nevertheless,   Lemma~\ref{lemma:AtensorB+Holistic} will not in general be true for systems with infinite-dimensional vector spaces, due to the requirement $(V^0)^0=V$.} product effects separate points of $AB_{\otimes}$: for any two distinct vectors
$v,v'\in AB_{\otimes}$ there exist $e\in E_A$ and $f\in E_B$ such that
$(e\boxtimes f)(v)\neq(e\boxtimes f)(v')$.
Applying this separation property with $v'=0_{AB}$, we conclude that if
$(e\boxtimes f)(v)=0$ for all product effects, then necessarily $v=0_{AB}$.
Hence,$AB_{\otimes}\cap H_S=\{0_{AB}\}$.

The proof that $(AB)^*_{\otimes}\cap H_E=\{0_{(AB)^*}\}$ is entirely analogous. Therefore,
\begin{align}
AB = AB_{\otimes}\oplus H_S,
\qquad
(AB)^* = (AB)^*_{\otimes}\oplus H_E,
\end{align}
as claimed.
\end{proof}

\section{Facts about holistic subspaces and the projections \texorpdfstring{$\Pi_{\rm TL}$}{PiTL} and \texorpdfstring{$\Pi_{\rm TNL}$}{PiTNL}}

\subsection{Relationship between $H_S$ and $H_E$}\label{Hrels}

Any inner product $\langle\cdot\,,\cdot\rangle$ on $AB$ which makes  the subspaces $AB_{\otimes} := {\rm Span}[\{\omega^{\m{A}}\otimes\omega^B\mid \omega^{\m{A}}\in \m{S_A},\omega^{\m{B}}\in\m{S_B}\}]$ and $H_S$ orthogonal
induces a  basis-independent isomorphism between $H_E$ and $H_S^*$, as we now show.

Suppose that $AB$ carries an inner product $\langle\,\cdot\,,\,\cdot\rangle$ satisfying
\begin{align}
\label{eq:innerproductOrthogonalizesHSAotimesB}
    \langle v_{TL},h\rangle=0\,\quad \forall v_{TL}\in AB_{\otimes}\,\, \&\,\,\forall h\in H_S.
\end{align}
For any $v\in AB$, the functional $\langle v\,,\,\cdot\,\rangle$ belongs to $(AB)^*$. We now show that 
\begin{align}
\label{eq:HSandHEDual}
    \langle u\,,\,\cdot\rangle\in H_E \iff u\in H_S.
\end{align}
First note that, due to \eqref{eq:innerproductOrthogonalizesHSAotimesB},  for any $h\in H_S$ the functional $\langle h\,,\,\cdot\rangle$ belongs to the annihilator of $A\otimes B$, and thus to $H_E$. This implies that $\{\langle\, h\,,\,\cdot\,\rangle\}_{h\in H_S}\subseteq H_E$. Now, take a vector $h'\in H_E\subset (AB)^*$. The Riesz representation theorem tells us that there is a unique vector $u_{h'}\in AB$ such that
$\langle u_{h'},v\rangle=h'(v)$ for all $v\in AB$. Since $h'\in H_E$, $\langle u_{h'},v\rangle=0$ for all $v\in AB_{\otimes}$, so $u_{h'}$ must have component zero in the subspace $AB_{\otimes}$. Since $AB=AB_{\otimes}\oplus H_S$, $v_{h'}\in H_S$. This implies that $H_E\subseteq \{\langle\,h\,,\,\cdot\,\rangle\}_{h\in H_S}$ which in turn gives  $H_E=\{\langle\,h\,,\,\cdot\,\rangle\}_{h\in H_S}$. 

The above shows that the subspace $H_E\subset (AB)^*$ gets mapped to $H_S\subset AB$ via the Riesz map $u\mapsto v_{u}$, which is a basis-independent map between an inner product space and its dual. We can thus identify $H_E$, as a subspace of $(AB)^*$, with $H_S^*$ (defined as the dual to $H_S$ seen as a vector space on its own, instead of a subspace on $AB$).

This means that, by introducing this extra structure to the GPT system description (some inner product on $AB$ that treats $AB_{\otimes}$ and $H_S$ as orthogonal), one can essentially identify $H_E$ as the dual to $H_S$ and vice-versa. 

Note that it is always possible to define such an inner product, since we are assuming finite-dimensional vector spaces.\footnote{For instance: choose a basis for $A\otimes B$, $\{w_i\}_{i=1}^m$, and a basis for $H_S$,$\{h_i\}_{i=m+1}^n$. The union of both forms a basis for $AB$, $\{w_i\}\bigcup\{h_i\}$. Now, for any two vectors $v=\sum a_iw_i +\sum b_i h_i $ and $u=\sum a'_iw_i +\sum b'_i h_i $ define $\langle v,u\rangle:= \sum a_ia'_i+\sum b_ib'_i$.}
Thus, one is always free to introduce such an inner product and consequently to think of $H_E$ as the dual of $H_S$.

For different choices of inner product on $AB$, the identification above might not hold.  However,  inner products that treat $AB_{\otimes}$ as orthogonal to $H_S$ capture the splitting of the state space between the tomographically local part and the holistic one, since any holistic degree of freedom has zero component on the non-holistic subspace. Thus, the specific choice of inner product just discussed seems to be the natural one in this context.

\subsection{Representations and Properties of \texorpdfstring{$\Pi_{\rm TL}$}{PiTL} and \texorpdfstring{$\Pi_{\rm TNL}$}{PinTL}}\label{projprop}
Let us begin by proving Prop.~\ref{propPiTL}, which says that $\Pi_{\rm TL}$ can be written as
\begin{equation}
\label{eq:Pi_TL_decomposition_2}
\vcenter{\hbox{%
\begin{tikzpicture}
  \begin{pgfonlayer}{nodelayer}

    \def\dw{0.08}   
    \def\gap{0.16}  

    \def\xLA{-4.6}
    \def\xLB{-3.2}

    \def\yTop{ 1.5}
    \def\yMidTop{ 0.6}
    \def\yMidBot{-0.6}
    \def\yBot{-1.5}

    \node[style=none] (LTA1) at ({\xLA-\dw}, \yTop) {};
    \node[style=none] (LTA2) at ({\xLA-\dw}, \yMidTop) {};
    \node[style=none] (LBA1) at ({\xLA-\dw}, \yMidBot) {};
    \node[style=none] (LBA2) at ({\xLA-\dw}, \yBot) {};

    \node[style=none] (LTA3) at ({\xLA+\dw}, \yTop) {};
    \node[style=none] (LTA4) at ({\xLA+\dw}, \yMidTop) {};
    \node[style=none] (LBA3) at ({\xLA+\dw}, \yMidBot) {};
    \node[style=none] (LBA4) at ({\xLA+\dw}, \yBot) {};

    \node[style=none] (LTB1) at ({\xLB-\dw}, \yTop) {};
    \node[style=none] (LTB2) at ({\xLB-\dw}, \yMidTop) {};
    \node[style=none] (LBB1) at ({\xLB-\dw}, \yMidBot) {};
    \node[style=none] (LBB2) at ({\xLB-\dw}, \yBot) {};

    \node[style=none] (LTB3) at ({\xLB+\dw}, \yTop) {};
    \node[style=none] (LTB4) at ({\xLB+\dw}, \yMidTop) {};
    \node[style=none] (LBB3) at ({\xLB+\dw}, \yMidBot) {};
    \node[style=none] (LBB4) at ({\xLB+\dw}, \yBot) {};

    \node[style=none] (LLabA_top) at ({\xLA+0.30},  1.00) {$\scriptstyle \m{A}$};
    \node[style=none] (LLabB_top) at ({\xLB+0.30},  1.00) {$\scriptstyle \m{B}$};
    \node[style=none] (LLabA_bot) at ({\xLA+0.30}, -1.00) {$\scriptstyle \m{A}$};
    \node[style=none] (LLabB_bot) at ({\xLB+0.30}, -1.00) {$\scriptstyle \m{B}$};

    \def\boxPad{0.35}
    \node[style=none] (LR1) at ({\xLA-\boxPad}, \yMidTop) {};
    \node[style=none] (LR2) at ({\xLB+\boxPad}, \yMidTop) {};
    \node[style=none] (LR3) at ({\xLB+\boxPad}, \yMidBot) {};
    \node[style=none] (LR4) at ({\xLA-\boxPad}, \yMidBot) {};
    \node[style=none] (LBoxLab) at ({(\xLA+\xLB)/2}, 0) {$\Pi_{TL}$};

    \node[style=none] (Eq)  at (-1.6,0) {$=$};
    \node[style=none] (Sum) at (-0.3,0) {$\sum_{i,j}$};

    \def\xRA{ 1.2}
    \def\xRB{ 3.0}

    \def\wTri{0.55}
    \def\yBaseTop{ 1.05}  
    \def\yBaseBot{-1.05}  
    \def\yWireTop{ 1.55}
    \def\yWireBot{-1.55}

    \node[style=none] (ATipTop) at (\xRA, { \gap/2}) {};
    \node[style=none] (ATipBot) at (\xRA, {- \gap/2}) {};

    \node[style=none] (BTipTop) at (\xRB, { \gap/2}) {};
    \node[style=none] (BTipBot) at (\xRB, {- \gap/2}) {};

    \node[style=none] (AeL) at ({\xRA-\wTri}, \yBaseTop) {};
    \node[style=none] (AeR) at ({\xRA+\wTri}, \yBaseTop) {};
    \node[style=none] (AeLab) at (\xRA, 0.62) {$v_i$};

    \node[style=none] (AtL) at ({\xRA-\wTri}, \yBaseBot) {};
    \node[style=none] (AtR) at ({\xRA+\wTri}, \yBaseBot) {};
    \node[style=none] (AtLab) at (\xRA,-0.62) {$t_i$};

    \node[style=none] (AUpL)  at ({\xRA-\dw}, \yWireTop) {};
    \node[style=none] (AUpR)  at ({\xRA+\dw}, \yWireTop) {};
    \node[style=none] (AUpL2) at ({\xRA-\dw}, \yBaseTop) {};
    \node[style=none] (AUpR2) at ({\xRA+\dw}, \yBaseTop) {};

    \node[style=none] (ADnL)  at ({\xRA-\dw}, \yBaseBot) {};
    \node[style=none] (ADnR)  at ({\xRA+\dw}, \yBaseBot) {};
    \node[style=none] (ADnL2) at ({\xRA-\dw}, \yWireBot) {};
    \node[style=none] (ADnR2) at ({\xRA+\dw}, \yWireBot) {};

    \node[style=none] (RLabA_top) at ({\xRA+0.35},  1.35) {$\scriptstyle \m{A}$};
    \node[style=none] (RLabA_bot) at ({\xRA+0.35}, -1.35) {$\scriptstyle \m{A}$};

    \node[style=none] (BeL) at ({\xRB-\wTri}, \yBaseTop) {};
    \node[style=none] (BeR) at ({\xRB+\wTri}, \yBaseTop) {};
    \node[style=none] (BeLab) at (\xRB, 0.62) {$v_j$};

    \node[style=none] (BtL) at ({\xRB-\wTri}, \yBaseBot) {};
    \node[style=none] (BtR) at ({\xRB+\wTri}, \yBaseBot) {};
    \node[style=none] (BtLab) at (\xRB,-0.62) {$t_j$};

    \node[style=none] (BUpL)  at ({\xRB-\dw}, \yWireTop) {};
    \node[style=none] (BUpR)  at ({\xRB+\dw}, \yWireTop) {};
    \node[style=none] (BUpL2) at ({\xRB-\dw}, \yBaseTop) {};
    \node[style=none] (BUpR2) at ({\xRB+\dw}, \yBaseTop) {};

    \node[style=none] (BDnL)  at ({\xRB-\dw}, \yBaseBot) {};
    \node[style=none] (BDnR)  at ({\xRB+\dw}, \yBaseBot) {};
    \node[style=none] (BDnL2) at ({\xRB-\dw}, \yWireBot) {};
    \node[style=none] (BDnR2) at ({\xRB+\dw}, \yWireBot) {};

    \node[style=none] (RLabB_top) at ({\xRB+0.35},  1.35) {$\scriptstyle \m{B}$};
    \node[style=none] (RLabB_bot) at ({\xRB+0.35}, -1.35) {$\scriptstyle \m{B}$};

  \end{pgfonlayer}

  \begin{pgfonlayer}{edgelayer}

    \draw (LTA1.center) -- (LTA2.center);
    \draw (LTA3.center) -- (LTA4.center);
    \draw (LBA1.center) -- (LBA2.center);
    \draw (LBA3.center) -- (LBA4.center);

    \draw (LTB1.center) -- (LTB2.center);
    \draw (LTB3.center) -- (LTB4.center);
    \draw (LBB1.center) -- (LBB2.center);
    \draw (LBB3.center) -- (LBB4.center);

    \draw (LR1.center) -- (LR2.center) -- (LR3.center) -- (LR4.center) -- cycle;

    \draw (AeL.center) -- (AeR.center) -- (ATipTop.center) -- cycle;
    \draw (AtL.center) -- (AtR.center) -- (ATipBot.center) -- cycle;

    \draw (BeL.center) -- (BeR.center) -- (BTipTop.center) -- cycle;
    \draw (BtL.center) -- (BtR.center) -- (BTipBot.center) -- cycle;

    \draw (AUpL2.center) -- (AUpL.center);
    \draw (AUpR2.center) -- (AUpR.center);
    \draw (ADnL.center) -- (ADnL2.center);
    \draw (ADnR.center) -- (ADnR2.center);

    \draw (BUpL2.center) -- (BUpL.center);
    \draw (BUpR2.center) -- (BUpR.center);
    \draw (BDnL.center) -- (BDnL2.center);
    \draw (BDnR.center) -- (BDnR2.center);

  \end{pgfonlayer}
\end{tikzpicture}%
}}
\end{equation}
where $\{v_{i}^A\}\subset{A}$ is a basis for the local vector space $A$ and $\{v^B_j\}\subset B$ is a basis for the local vector space $B$, and $\{t^A_i\}\subset A^*$ and $\{t^B_j\}\subset B^*$ are their dual bases, respectively. Moreover, $\Pi_{\rm TL}$
can equivalently be written as 

  \begin{equation}
\label{eq:Pi_TL_coeff_decomposition_2}
\vcenter{\hbox{%
\begin{tikzpicture}
  \begin{pgfonlayer}{nodelayer}

    \def\dw{0.08}    
    \def\gap{0.16}   

    \def\xLA{-4.6}
    \def\xLB{-3.2}

    \def\yTop{ 1.5}
    \def\yMidTop{ 0.6}
    \def\yMidBot{-0.6}
    \def\yBot{-1.5}

    \node[style=none] (LTA1) at ({\xLA-\dw}, \yTop) {};
    \node[style=none] (LTA2) at ({\xLA-\dw}, \yMidTop) {};
    \node[style=none] (LBA1) at ({\xLA-\dw}, \yMidBot) {};
    \node[style=none] (LBA2) at ({\xLA-\dw}, \yBot) {};

    \node[style=none] (LTA3) at ({\xLA+\dw}, \yTop) {};
    \node[style=none] (LTA4) at ({\xLA+\dw}, \yMidTop) {};
    \node[style=none] (LBA3) at ({\xLA+\dw}, \yMidBot) {};
    \node[style=none] (LBA4) at ({\xLA+\dw}, \yBot) {};

    \node[style=none] (LTB1) at ({\xLB-\dw}, \yTop) {};
    \node[style=none] (LTB2) at ({\xLB-\dw}, \yMidTop) {};
    \node[style=none] (LBB1) at ({\xLB-\dw}, \yMidBot) {};
    \node[style=none] (LBB2) at ({\xLB-\dw}, \yBot) {};

    \node[style=none] (LTB3) at ({\xLB+\dw}, \yTop) {};
    \node[style=none] (LTB4) at ({\xLB+\dw}, \yMidTop) {};
    \node[style=none] (LBB3) at ({\xLB+\dw}, \yMidBot) {};
    \node[style=none] (LBB4) at ({\xLB+\dw}, \yBot) {};

    \node[style=none] (LLabA_top) at ({\xLA+0.30},  1.00) {$\scriptstyle \m{A}$};
    \node[style=none] (LLabB_top) at ({\xLB+0.30},  1.00) {$\scriptstyle \m{B}$};
    \node[style=none] (LLabA_bot) at ({\xLA+0.30}, -1.00) {$\scriptstyle \m{A}$};
    \node[style=none] (LLabB_bot) at ({\xLB+0.30}, -1.00) {$\scriptstyle \m{B}$};

    \def\boxPad{0.35}
    \node[style=none] (LR1) at ({\xLA-\boxPad}, \yMidTop) {};
    \node[style=none] (LR2) at ({\xLB+\boxPad}, \yMidTop) {};
    \node[style=none] (LR3) at ({\xLB+\boxPad}, \yMidBot) {};
    \node[style=none] (LR4) at ({\xLA-\boxPad}, \yMidBot) {};
    \node[style=none] (LBoxLab) at ({(\xLA+\xLB)/2}, 0) {$\Pi_{TL}$};

    \node[style=none] (Eq)   at (-1.6,0) {$=$};
    \node[style=none] (Sum)  at (-0.55,0) {$\sum$};
    \node[style=none] (Coef) at (0.70,0.0) {$c^{\m{A}}_{ij}\,c^{\m{B}}_{k\ell}$};

    \def\xRA{ 2.3}   
    \def\xRB{ 4.1}   

    \def\wTri{0.65}
    \def\yBaseTop{ 1.05}
    \def\yBaseBot{-1.05}
    \def\yWireTop{ 1.55}
    \def\yWireBot{-1.55}

    \node[style=none] (ATipTop) at (\xRA, { \gap/2}) {};
    \node[style=none] (ATipBot) at (\xRA, {- \gap/2}) {};

    \node[style=none] (BTipTop) at (\xRB, { \gap/2}) {};
    \node[style=none] (BTipBot) at (\xRB, {- \gap/2}) {};

    \node[style=none] (AomL) at ({\xRA-\wTri}, \yBaseTop) {};
    \node[style=none] (AomR) at ({\xRA+\wTri}, \yBaseTop) {};
    \node[style=none] (AomLab) at (\xRA, 0.65) {$\omega_i$};

    \node[style=none] (AeL) at ({\xRA-\wTri}, \yBaseBot) {};
    \node[style=none] (AeR) at ({\xRA+\wTri}, \yBaseBot) {};
    \node[style=none] (AeLab) at (\xRA,-0.65) {$e_j$};

    \node[style=none] (AUpL)  at ({\xRA-\dw}, \yWireTop) {};
    \node[style=none] (AUpR)  at ({\xRA+\dw}, \yWireTop) {};
    \node[style=none] (AUpL2) at ({\xRA-\dw}, \yBaseTop) {};
    \node[style=none] (AUpR2) at ({\xRA+\dw}, \yBaseTop) {};

    \node[style=none] (ADnL)  at ({\xRA-\dw}, \yBaseBot) {};
    \node[style=none] (ADnR)  at ({\xRA+\dw}, \yBaseBot) {};
    \node[style=none] (ADnL2) at ({\xRA-\dw}, \yWireBot) {};
    \node[style=none] (ADnR2) at ({\xRA+\dw}, \yWireBot) {};

    \node[style=none] (RLabA_top) at ({\xRA+0.35},  1.25) {$\scriptstyle \m{A}$};
    \node[style=none] (RLabA_bot) at ({\xRA+0.35}, -1.25) {$\scriptstyle \m{A}$};

    \node[style=none] (BomL) at ({\xRB-\wTri}, \yBaseTop) {};
    \node[style=none] (BomR) at ({\xRB+\wTri}, \yBaseTop) {};
    \node[style=none] (BomLab) at (\xRB, 0.65) {$\omega_k$};

    \node[style=none] (BeL) at ({\xRB-\wTri}, \yBaseBot) {};
    \node[style=none] (BeR) at ({\xRB+\wTri}, \yBaseBot) {};
    \node[style=none] (BeLab) at (\xRB,-0.65) {$e_\ell$};

    \node[style=none] (BUpL)  at ({\xRB-\dw}, \yWireTop) {};
    \node[style=none] (BUpR)  at ({\xRB+\dw}, \yWireTop) {};
    \node[style=none] (BUpL2) at ({\xRB-\dw}, \yBaseTop) {};
    \node[style=none] (BUpR2) at ({\xRB+\dw}, \yBaseTop) {};

    \node[style=none] (BDnL)  at ({\xRB-\dw}, \yBaseBot) {};
    \node[style=none] (BDnR)  at ({\xRB+\dw}, \yBaseBot) {};
    \node[style=none] (BDnL2) at ({\xRB-\dw}, \yWireBot) {};
    \node[style=none] (BDnR2) at ({\xRB+\dw}, \yWireBot) {};

    \node[style=none] (RLabB_top) at ({\xRB+0.35},  1.25) {$\scriptstyle \m{B}$};
    \node[style=none] (RLabB_bot) at ({\xRB+0.35}, -1.25) {$\scriptstyle \m{B}$};

  \end{pgfonlayer}

  \begin{pgfonlayer}{edgelayer}

    \draw (LTA1.center) -- (LTA2.center);
    \draw (LTA3.center) -- (LTA4.center);
    \draw (LBA1.center) -- (LBA2.center);
    \draw (LBA3.center) -- (LBA4.center);

    \draw (LTB1.center) -- (LTB2.center);
    \draw (LTB3.center) -- (LTB4.center);
    \draw (LBB1.center) -- (LBB2.center);
    \draw (LBB3.center) -- (LBB4.center);

    \draw (LR1.center) -- (LR2.center) -- (LR3.center) -- (LR4.center) -- cycle;

    \draw (AomL.center) -- (AomR.center) -- (ATipTop.center) -- cycle;
    \draw (AeL.center)  -- (AeR.center)  -- (ATipBot.center) -- cycle;

    \draw (BomL.center) -- (BomR.center) -- (BTipTop.center) -- cycle;
    \draw (BeL.center)  -- (BeR.center)  -- (BTipBot.center) -- cycle;

    \draw (AUpL2.center) -- (AUpL.center);
    \draw (AUpR2.center) -- (AUpR.center);
    \draw (ADnL.center) -- (ADnL2.center);
    \draw (ADnR.center) -- (ADnR2.center);

    \draw (BUpL2.center) -- (BUpL.center);
    \draw (BUpR2.center) -- (BUpR.center);
    \draw (BDnL.center) -- (BDnL2.center);
    \draw (BDnR.center) -- (BDnR2.center);

  \end{pgfonlayer}
\end{tikzpicture}%
,}}
\end{equation}
where $\{\omega_i^{\m{A}}\}$ and $\{\omega_k^{\m{B}}\}$, besides being bases for $A$ and $B$, belong to the local state spaces $\m{S_A}$ and $\m{S_B}$; similarly, $\{e_j^{\m{A}}\}_j\subset E_{\m{A}}$ and $\{e_l^{\m{B}}\}\subset{E_{\m{B}}}$ are valid effects forming  bases for $A^*$ and $B^*$, respectively.

\begin{proof}
We prove Eq.~\eqref{eq:Pi_TL_decomposition_2} first.  
Fix bases $\{v_i^A\}\subset A$, $\{v_j^B\}\subset B$ and their dual bases
$\{t_i^A\}\subset A^\ast$, $\{t_j^B\}\subset B^\ast$, i.e.
$t_i^A(v_{i'}^A)=\delta_{ii'}$ and $t_j^B(v_{j'}^B)=\delta_{jj'}$.
Consider the linear map
\begin{align}
\label{eq:MapP}
P(\,\cdot\,)
:=\sum_{i,j} (v_i^A\boxtimes v_j^B)\circ
[\,t_i^A\boxtimes t_j^B\,](\,\cdot\,).
\end{align}

Since $AB = AB_{\otimes} \oplus H_S$, a basis of $AB$ can be chosen as the union of
a basis of $AB_{\otimes}$ (given by the products $v_i^A\boxtimes v_j^B$)
and a basis $\{h_r\}_r$ of the holistic subspace $H_S$.
For any product basis element $v_{i'}^A\boxtimes v_{j'}^B$ we have
\begin{align}
P(v_{i'}^A\boxtimes v_{j'}^B)
&=\sum_{i,j} (v_i^A\boxtimes v_j^B)\,
t_i^A(v_{i'}^A)\, t_j^B(v_{j'}^B) \\
&=\sum_{i,j} (v_i^A\boxtimes v_j^B)\,
\delta_{ii'}\delta_{jj'}
= v_{i'}^A\boxtimes v_{j'}^B,
\end{align}
where we used the defining property of dual bases.
Thus $P$ fixes all product components.

Moreover, since $\mathrm{Span}\{t_i^A\boxtimes t_j^B\}_{i,j}
= (AB_{\otimes})^\ast$, every vector $h\in H_S$ is annihilated by
$t_i^A\boxtimes t_j^B$, i.e.
\begin{align}
[\,t_i^A\boxtimes t_j^B\,](h)=0
\qquad \forall\, i,j.
\end{align}
Hence $P(h)=0$ for all $h\in H_S$.
Therefore $P$ acts as the identity on $AB_{\otimes}$ and vanishes on $H_S$,
so $P=\Pi_{\rm TL}$ as linear maps on $AB$.
This proves Eq.~\eqref{eq:Pi_TL_decomposition_2}.

\vspace{0.3cm}

For Eq.~\eqref{eq:Pi_TL_coeff_decomposition_2}, we apply
Eq.~\eqref{eq:Pi_TL_decomposition_2} with a particular choice of bases:
take $\{v_i^A\}=\{\omega_i^A\}$ and $\{v_j^B\}=\{\omega_j^B\}$,
where the $\omega$'s are bases of $A$ and $B$ consisting of (valid)
local states, and let $\{t_i^A\}$, $\{t_j^B\}$ be the corresponding dual bases.
Let $\{e_\alpha^A\}\subset A^\ast$ and $\{e_\beta^B\}\subset B^\ast$
be bases of (valid) local effects.
Since the $e$'s span the dual spaces, each dual basis element expands as
\begin{align}
t_i^A=\sum_{\alpha} c^{A}_{i\alpha}\, e_\alpha^A,
\qquad
t_j^B=\sum_{\beta} c^{B}_{j\beta}\, e_\beta^B.
\end{align}
Substituting these expansions into Eq.~\eqref{eq:MapP}
gives $\Pi_{\rm TL}$ as a linear combination of effect--state channels.

To identify the coefficients, define
\begin{align}
(\mathbb{M}^{\m A})_{\alpha i}
:= e_\alpha^A(\omega_i^A),
\qquad
(\mathbb{M}^{\m B})_{\beta j}
:= e_\beta^B(\omega_j^B).
\end{align}
Using duality and evaluating the above expansion of $t_i^A$ on
$\omega_{i'}^A$ yields
\begin{align}
\delta_{ii'}
&= t_i^A(\omega_{i'}^A)
= \sum_\alpha c^{A}_{i\alpha}\,
e_\alpha^A(\omega_{i'}^A)
= \sum_\alpha c^{A}_{i\alpha}\,
(\mathbb{M}^{\m A})_{\alpha i'}.
\end{align}
In matrix form this reads
\begin{align}
C^A \mathbb{M}^{\m A} = I,
\end{align}
hence $C^A=(\mathbb{M}^{\m A})^{-1}$, i.e.
\begin{align}
c^{A}_{i\alpha}
=
\big[(\mathbb{M}^{\m A})^{-1}\big]_{i\alpha},
\end{align}
and similarly
$c^{B}_{j\beta}
=
\big[(\mathbb{M}^{\m B})^{-1}\big]_{j\beta}$.
This proves Eq.~\eqref{eq:Pi_TL_coeff_decomposition_2}.
\end{proof}

\medskip

Notice that, as a linear map from $AB$ to $AB$, $\Pi_{\rm TL}$
admits different tensor decompositions.
For instance, the map
\begin{align}
P'(\,\cdot\,)
:=\sum_i (v_i^A\circ t_i^A)\boxtimes \mathcal{I}^B(\,\cdot\,)
\end{align}
coincides with $P$ as a linear operator on $AB$.
However, when extended to a tripartite system,
$P\boxtimes \mathcal{I}^C$ and
$P'\boxtimes \mathcal{I}^C$ generally act differently:
the former erases all holistic components of that subsystems $\m{A}$ and $\m{B}$ might have, while the latter need not annihilate holistic components
between $B$ and $C$.
Since we restrict attention here to the bipartite case,
this distinction is unimportant, but it may become relevant
in multipartite extensions.

\begin{proposition}\label{prop16}The projection $\Pi_{\rm TL}$ onto $(AB)_\otimes$ has the following properties:
\begin{enumerate}
\item $\Pi_{\rm TL}$ acts trivially on $AB_{\otimes}$ and (by pre-composition) on $(AB)^*_\otimes$. In particular, $\Pi_{\rm TL}(\omega^{\m{AB}})=\omega^{\m{AB}}$ for all states $\omega^{\m{AB}}\in AB_\otimes$ and $e^{\m{AB}}\circ\Pi_{\rm TL}=e^{\m{AB}}$ for all effects $e\in (AB)^*_\otimes$.

\item  $\Pi_{\rm TL}$ preserves normalization: $u^{\m{AB}}\circ \Pi_{\rm TL} = u^{\m{AB}}$.

\item The only state $\omega^{\m{AB}}\in \Omega_{\m{AB}}$ for which $\Pi_{\rm TL}(\omega^{\m{AB}})=0^{AB}$ is $\omega^{\m{AB}}=0^{AB}$. 

\item If $e^{\m{AB}}\left[\Pi_{\rm TL}\left(\omega^{\m{AB}}\right)\right]\neq e^{\m{AB}}[\omega^{\m{AB}}]$, then both the effect $e^{\m{AB}}$ and the state $\omega^{\m{AB}}$ have holistic components; i.e., $e^{\m{AB}}\not\in (AB)^*_\otimes$ and  $\omega^{\m{AB}}\not\in AB_\otimes$. 
\end{enumerate}
\end{proposition}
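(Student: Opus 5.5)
The plan is to derive all four items from the direct-sum decomposition $AB=AB_\otimes\oplus H_S$ (Lemma~\ref{lemma:AtensorB+Holistic}), the definition of $\Pi_{\rm TL}$ as the projection along $H_S$ (Def.~\ref{def:PiTL}), and Prop.~\ref{prop:DualPiTL}, which identifies $\Pi_{\rm TL}^*$ with the projection of $(AB)^*$ onto $(AB)^*_\otimes$ along $H_E$. Throughout, states and effects are written as $\omega=v_{\rm TL}+h$ with $h\in H_S$, and $e=w_{\rm TL}+h'$ with $h'\in H_E$.

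\emph{Item 1.} If $\omega\in AB_\otimes$, its decomposition has $h=0$, so $\Pi_{\rm TL}(\omega)=\omega$ by definition. If $e\in(AB)^*_\otimes$, then $h'=0$, and Prop.~\ref{prop:DualPiTL} gives $e\circ\Pi_{\rm TL}=\Pi_{\rm TL}^*(e)=e$.

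\emph{Item 2.} By composition requirement 2 of Def.~\ref{def: CompositionRequirements}, $u^{\m{AB}}=u^{\m A}\boxtimes u^{\m B}$ is a product effect. Hence $u^{\m{AB}}\in(AB)^*_\otimes$, and item 1 yields $u^{\m{AB}}\circ\Pi_{\rm TL}=u^{\m{AB}}$.

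\emph{Item 3.} Suppose $\Pi_{\rm TL}(\omega)=0$. By item 2, $u^{\m{AB}}(\omega)=u^{\m{AB}}(\Pi_{\rm TL}(\omega))=0$. Every state can be written as $p\,\bar\omega$ with $\bar\omega$ normalized and $p\in[0,1]$, so $p=0$ and $\omega=0^{AB}$. One remark is needed here: the statement quantifies over $\Omega_{\m{AB}}$, which contains only normalized states, so the hypothesis is never satisfied there and the claim holds vacuously. The honest content is the statement for $S_{\m{AB}}$, and I will prove that version.

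\emph{Item 4.} I argue by contrapositive. Expanding bilinearly,
\[
e(\Pi_{\rm TL}\omega)-e(\omega)=-e(h)=-w_{\rm TL}(h)-h'(h).
\]
The term $w_{\rm TL}(h)$ vanishes because $H_S$ annihilates $(AB)^*_\otimes$ by definition. Therefore $e(\Pi_{\rm TL}\omega)-e(\omega)=-h'(h)$, and if this is nonzero then both $h\neq0$ and $h'\neq0$. That is, $\omega\notin AB_\otimes$ and $e\notin(AB)^*_\otimes$.

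There is no genuine obstacle in this argument; the only delicate point is the normalization subtlety in item 3.
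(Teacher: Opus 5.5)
Your proof is correct and matches the paper's essentially step for step: items 1--3 are argued identically (the paper's proof of item 3 likewise works with $\omega^{\m{AB}}\in S_{\m{AB}}$). Your remark on item 3 is apt, but the $\Omega_{\m{AB}}$ version is not really vacuous: it says no $\omega\in\Omega_{\m{AB}}$ is sent to $0^{AB}$, which is exactly what your normalization argument establishes. For item 4 the paper proves the contrapositive by invoking item 1 once for the state and once for the effect, while your bilinear expansion yields the slightly sharper identity $e(\Pi_{\rm TL}\omega)-e(\omega)=-h'(h)$, i.e.\ the discrepancy is precisely the pairing of the two holistic components.
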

\begin{proof}
Let us start with the first item. Recall that any vector in $v\in AB$ can be written as $v=v_{\rm TL}+ h$ where $v_{\rm TL}\in (AB)^*_\otimes$ and $h\in H_S$ are unique (consequence of Lemma~\ref{lemma:AtensorB+Holistic}). For all $v\in AB_\otimes$, we have that $h=0$. Therefore, Def.~\ref{def:PiTL} implies $\Pi_{\rm TL}(v)=v_{\rm TL}=v$ for all $v\in AB_{\otimes}$. In particular, $\Pi_{\rm TL}(\omega^{\m{AB}})=\omega^{\m{AB}}_{\rm TL}=\omega^{\m{AB}}$ for all states $\omega^{\m{AB}}\in S_{\m{AB}}\bigcap AB_\otimes$. A similar argument for $w\in (AB)^*_\otimes$ follows, using Prop.~\ref{prop:DualPiTL} which ensures $w\circ\Pi_{\rm TL}=w$ for all $w\in (AB)^*_\otimes$ -- and again $e\in E_{\m{AB}}\bigcap (AB)^*_\otimes$ is a special case.

Now consider preservation of normalization. Since $u^{\m{AB}}=u^{\m{A}}\otimes u^B$, $u^{\m{AB}}\in (AB)^*_{\otimes}$ and, therefore, item $1$ immediately implies
        \begin{align}
        u^{\m{AB}}\circ\Pi_{\rm TL}= u^{\m{AB}}.
    \end{align}

Now consider a state $\omega^{\m{AB}}\in S_{\rm AB}$ such that $\Pi_{\rm TL}(\omega^{\m{AB}})=0^{AB}$. Then,
\begin{align}
    u^{\m{AB}}[\Pi_{\rm TL}(\omega^{\m{AB}})]=0 \implies (u^{\m{AB}}\circ\Pi_{\rm TL})[\omega^{\m{AB}}]=0\implies u^{\m{AB}}(\omega^{\m{AB}})=0;
\end{align}
where in the second implication we used that $\Pi_{\rm TL}$ preserves normalization. But $u^{\m{AB}}(\omega^{\m{AB}})=0$ implies $\omega^{\m{AB}}=0^{AB}$, which proves item 3.

Now, consider any pair of state and effect, $\omega^{\m{AB}}\in S_{\m{AB}}$ and $e^{\m{AB}}\in E_{\m{AB}}$. Suppose $\omega^{\m{AB}}$ has null component in the holistic subspace, ie $\omega\in AB_\otimes$. Then, item $1$ implies $\Pi_{\rm TL}(\omega^{\m{AB}})=\omega^{\m{AB}}$ which implies
\begin{align}
    e^{\m{AB}}(\Pi_{\rm TL}(\omega^{\m{AB}}))=e^{\m{AB}}(\omega^{\m{AB}})\,\,\forall e^{\m{AB}}\in E_{\m{AB}}.
\end{align}
If $e^{\m{AB}}\in (AB)^*_\otimes$, then again item $1$ implies $e^{\m{AB}}\circ\Pi_{TL}=e^{\m{AB}}$ and, therefore,
\begin{align}
    e^{\m{AB}}(\Pi_{TL}(\omega^{\m{AB}}))=(e^{\m{AB}}\circ\Pi_{\rm TL})[\omega^{\m{AB}}]=e^{\m{AB}}(\omega^{\m{AB}})\,\,\forall \omega^{\m{AB}}\in S_{\m{AB}},
\end{align} 
which proves item 4.
\end{proof}

\begin{proposition}\label{prop17} The projection $\Pi_{\rm TNL}$ has the following properties:
\begin{enumerate}
    \item $\Pi_{\rm TNL}=\mathbf{0}$ if and only if the GPT system is tomographically local;
    \item $\Pi_{\rm TNL}$ does not preserve normalization. Vectors in its image have normalization 0, since $u^{\m{AB}}\circ\Pi_{\rm TNL}=0^{(AB)^*}$;
    \item $\mathsf{ Im}(\Pi_{\rm TNL})\bigcap S_{\m{AB}}=\{0^{AB}\}$. 
    \item For a state $\omega$, $\Pi_{\rm TNL}(\omega)=0\iff \omega\in AB_{\otimes}$;  similarly for effects, $e\circ \Pi_{\rm TNL}=0\iff e\in AB^*_{\otimes}$. As a consequence of property $3$, for any normalized state $\omega$, $\Pi_{\rm TNL}(\omega)\not\in S_{\m{AB}}\iff \omega\not\in AB_{\otimes}$.
\end{enumerate}
\end{proposition}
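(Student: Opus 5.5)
The plan is to derive each item from the decomposition $AB = AB_\otimes\oplus H_S$ of Lemma~\ref{lemma:AtensorB+Holistic}, the definition $\Pi_{\rm TNL}=\mathds{1}^{AB}-\Pi_{\rm TL}$, and the already-established properties of $\Pi_{\rm TL}$ in Prop.~\ref{prop16} and Prop.~\ref{prop:DualPiTL}. Throughout, I write any $v\in AB$ uniquely as $v=v_{\rm TL}+h$, so that $\Pi_{\rm TNL}(v)=h$. Consequently $\mathsf{Im}(\Pi_{\rm TNL})=H_S$ and $\ker(\Pi_{\rm TNL})=AB_\otimes$.

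For item 1, I note that $\Pi_{\rm TNL}=\mathbf{0}$ iff $H_S=\{0\}$, since $H_S$ is the image. By Lemma~\ref{lemma:AtensorB+Holistic} this holds iff $AB=AB_\otimes$. Lemma~\ref{lemma:AtBplusC} then shows this is equivalent to tomographic locality, with $C$ there playing the role of $H_S$.

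For item 2, I use that $u^{\m{AB}}=u^{\m A}\boxtimes u^{\m B}\in(AB)^*_\otimes$. Then item 2 of Prop.~\ref{prop16} gives
\[
u^{\m{AB}}\circ\Pi_{\rm TNL}=u^{\m{AB}}-u^{\m{AB}}\circ\Pi_{\rm TL}=0 .
\]
Equivalently, $u^{\m{AB}}$ annihilates $H_S$ by the definition of $H_S$. Non-preservation of normalization follows whenever the system has a nonzero normalized state, since that state is mapped to a vector of norm $0\neq 1$.

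For item 3, take $s\in\mathsf{Im}(\Pi_{\rm TNL})\cap S_{\m{AB}}$. Item 2 gives $u^{\m{AB}}(s)=0$. Since every state is of the form $p\omega$ with $\omega$ normalized, this forces $p=0$, i.e.\ $s=0^{AB}$. This is the one step that appeals to the GPT structure rather than pure linear algebra, and it is the point I would check most carefully. It uses $S_{\m{AB}}=\mathsf{ConvHull}[0,\Omega_{\m{AB}}]$, exactly as in item 3 of Prop.~\ref{prop16}.

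For item 4, the state part is just $\ker\Pi_{\rm TNL}=AB_\otimes$. For effects, Prop.~\ref{prop:DualPinTL} identifies $\Pi^*_{\rm TNL}$ as the projector onto $H_E$ along $(AB)^*_\otimes$, so $e\circ\Pi_{\rm TNL}=0$ iff $e\in(AB)^*_\otimes$. For the final consequence, consider first $\omega\notin AB_\otimes$: then $\Pi_{\rm TNL}(\omega)\neq 0$, and since it lies in $\mathsf{Im}(\Pi_{\rm TNL})$, item 3 shows it is not in $S_{\m{AB}}$. Conversely, if $\omega\in AB_\otimes$ then $\Pi_{\rm TNL}(\omega)=0^{AB}\in S_{\m{AB}}$.

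I expect no real obstacle; the only delicate point is the normalization argument in item 3.
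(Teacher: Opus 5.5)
Your proposal is correct and follows essentially the same route as the paper's proof. It reads off $\mathsf{Im}(\Pi_{\rm TNL})=H_S$ and $\ker(\Pi_{\rm TNL})=AB_\otimes$ from the direct-sum decomposition, uses $u^{\m{AB}}\in(AB)^*_\otimes$ for item~2 and $S_{\m{AB}}=\mathsf{ConvHull}[0^{AB},\Omega_{\m{AB}}]$ for item~3, and applies item~3 again for the final claim of item~4. There are two minor differences: you obtain item~2 from $u^{\m{AB}}\circ\Pi_{\rm TL}=u^{\m{AB}}$ rather than from Prop.~\ref{prop:DualPinTL}, and you explicitly argue the converse direction of item~1 via Lemmas~\ref{lemma:AtBplusC} and~\ref{lemma:AtensorB+Holistic}, which the paper's proof leaves implicit.
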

\begin{proof}
    Recall that the action of $\Pi_{\rm TNL}(v)=h$, given any vector $v=v_{\rm TL}+h$ in $AB$, with $v_{TL}\in AB_\otimes$ and $h\in H_S$.  \blk If the GPT system has no holistic components, $H_S={0}$, which implies $\Pi_{\rm TNL}=\mathbf{0}$.

    Now consider item $2$. Recall that $u^{\m{AB}}\in AB^*_{\otimes}$, so it has no holistic component. Therefore, due to Prop.~\ref{prop:DualPinTL} $u^{\m{AB}}\circ\Pi_{\rm TNL}=0^{(AB)^*}$. 

    Since $\Pi_{\rm TNL}$ is a projection onto $H_S$, its image is $H_S$, and the only state which is contained in $H_S$ is $0^{AB}$ -- which follows from item $2$ and the fact that $\m{S_{AB}}=\mathsf{ConvHull}[0^{AB},\Omega_{\m{AB}}]$. This proves item $3$.

    For the proof of item $4$, recall that the action of the projection onto any state $v\in AB$ is given by  $\Pi_{\rm TNL}(v)=h$, where $h$ is the holistic component of $v$. Thus, $\Pi_{\rm TNL}(v)=0$ implies that $v$ has no holistic component, which is equivalent to $v\in AB_{\otimes}$.  This still holds true for the particular case of a state $\omega\in\m{S_{AB}}$. Conversely, consider a state $\omega\in AB_{\otimes}$; it has null holistic component, which implies $\Pi_{\rm TNL}(v)=0$. This proves the first part of item $4$. That a similar holds for effects is completely analogous, using Prop.~\ref{prop:DualPinTL}. Consider now that, for a given state $\omega\in\m{S_{AB}}$, $\Pi_{\rm TNL}(\omega)\not\in\m{S_AB}$; then, $\Pi_{\rm TNL}(\omega)\neq 0^{AB}$, which (given the first part of item $4$) implies $\omega\not\in AB_\otimes$. Conversely, suppose that the state is not contained in the tomographically local subspace, $\omega\not\in AB_{\otimes}$. Then, $\Pi_{\rm TNL}(\omega)\neq 0$, which, due to item $3$, implies $\Pi_{\rm TNL}(\omega)\not\in {\m{S_{AB}}}$.
\end{proof}

 \section{Proof of Proposition \texorpdfstring{~\ref{prop:TwoRebitLackingTLEntanglement}}{7}}
 \label{sec:ProofTwoRebitsLackingTLEProposition}

Here we present a proof of Prop.~\ref{prop:TwoRebitLackingTLEntanglement}, which let us recall it states the following: \\

Prop.~\ref{prop:TwoRebitLackingTLEntanglement}:  the following conditions hold for a two-rebit state $\rho^{\m{AB}}$:
\begin{enumerate}
    \item $\rho^{\m{AB}}$ lacks TL-entanglement if and only if
    $\Pi_{\rm TL}(\rho^{\m{AB}})$ is a valid two-rebit state, i.e.,
    $\Pi_{\rm TL}(\rho^{\m{AB}})\in\Omega_{\m{AB}}$.
    \item Let $\iota$ denote the natural inclusion of two-rebit states into
    two-qubit states in complex quantum theory. If $\iota(\rho^{\m{AB}})$
    is separable in complex quantum theory, then $\rho^{\m{AB}}$ lacks
    TL-entanglement.
\end{enumerate}

\begin{proof}
We begin with item~1.

\smallskip
\noindent\emph{($\Leftarrow$)} 
Assume that $\Pi_{\rm TL}(\rho^{\m{AB}})$ is a valid two-rebit state.
By definition, $\Pi_{\rm TL}(\rho^{\m{AB}})$ lies in the tomographically
local subspace $AB_{\otimes}$, and therefore it has no TNL entanglement.
For two-rebit systems, Proposition~\ref{prop:TwoRebitsNonSepIffTNLEntangled}
implies that every valid state in $AB_{\otimes}$ is separable.
Hence, $\Pi_{\rm TL}(\rho^{\m{AB}})$ is separable.

Since TL-entanglement depends only on the component of a state in
$AB_{\otimes}$, the separability of $\Pi_{\rm TL}(\rho^{\m{AB}})$ implies
that $\rho^{\m{AB}}$ lacks TL-entanglement.

\smallskip
\noindent\emph{($\Rightarrow$)} 
Conversely, if $\rho^{\m{AB}}$ lacks TL-entanglement, then by definition
its tomographically local component is separable, i.e.,
$\Pi_{\rm TL}(\rho^{\m{AB}})\in\mathsf{Sep}(\Omega_{\m{AB}})$.
In particular, $\Pi_{\rm TL}(\rho^{\m{AB}})$ is a valid two-rebit state.

\smallskip
This completes the proof of item~1.

\medskip
We now turn to item~2.

Let $\iota:\mathsf{Span}[\Omega_{\m{AB}}]\to\mathsf{Span}[\Omega_{\m{CD}}]$
denote the natural inclusion of two-rebit states into two-qubit states in
complex quantum theory, which embeds real density matrices as complex
density matrices with no imaginary components.

In complex quantum theory, the inclusion of the tomographically local
projection $\Pi_{\rm TL}(\rho^{\m{AB}})$ can be obtained by first embedding
$\rho^{\m{AB}}$ via $\iota$ and then applying the linear map 
\begin{align}
\label{eq:PiTLExtension}
    \mathcal{O}^{\m{CD}}
    :=\frac{1}{2}(\mathcal{I}^{\m{C}}+K^{\m{C}})\otimes \frac{1}{2}(\mathcal{I}^{\m{D}}+K^{\m{D}})
\end{align}
where $\mathcal{I}^{\m{C}}$ represents the identity transformation, $K^{\m{C}}$ denotes complex conjugation acting on subsystem $\m{C}$ and similarly for $\m{D}$.
Explicitly,\footnote{In fact, there are many possible extensions of $\Pi_{\rm TL}$ as operators acting on $CD$, and the proof goes through with any particular choice. We chose~\eqref{eq:PiTLExtension} because then $\m{O^{CD}}$ can be thought of as kind of local \emph{swirling} for two-qubit systems, which is closely related to a known way of defining RQT from standard QT~\cite{ying2025quantumtheoryneedscomplex}.}
\begin{align}
    \iota\!\left[\Pi_{\rm TL}(\rho^{\m{AB}})\right]
    =
    \mathcal{O}^{\m{CD}}\!\left[\iota(\rho^{\m{AB}})\right].
\end{align}
Operationally, $\mathcal{O}^{\m{CD}}$ removes all components that are odd
under complex conjugation, and in particular eliminates the
$\sigma_y\otimes\sigma_y$ term.

Now suppose that $\iota(\rho^{\m{AB}})$ is separable in complex quantum
theory. Since local complex conjugation maps separable states to separable
states,  $K^{\m{C}}\otimes\mathcal{I}^{\m{D}}[\iota(\rho^{\m{AB}})]$,
$\mathcal{I}^{\m{C}}\otimes K^{\m{D}}[\iota(\rho^{\m{AB}})]$ and $K^{\m{C}}\otimes K^{\m{D}}[\iota(\rho^{\m{AB}})]$ are separable states.
Therefore, their convex combination with $\iota(\rho^{\m{AB}})$ in the expression for
$\mathcal{O}^{\m{CD}}[\iota(\rho^{\m{AB}})]$ is also separable in complex
quantum theory.

Moreover, $\mathcal{O}^{\m{CD}}[\iota(\rho^{\m{AB}})]$ has no
$\sigma_y$ components and hence corresponds to a valid two-rebit state.
That is,
\begin{align}
    \Pi_{\rm TL}(\rho^{\m{AB}})\in\Omega_{\m{AB}}.
\end{align}
By item~1, this implies that $\rho^{\m{AB}}$ lacks TL-entanglement.
\end{proof}

\end{document}